\title{The Complexity of NISQ}
\newcommand{\Ibal}{I_{\mathsf{bal}}}
\newcommand{\Iimbal}{I_{\mathsf{imbal}}}
\newcommand{\Igood}{I_{\mathsf{good}}}
\newcommand{\Ibad}{I_{\mathsf{bad}}}
\newcommand{\Iconc}{I_{\mathsf{conc}}}
\newcommand{\Idiv}{I_{\mathsf{anticonc}}}
\newcommand{\tr}{\mathsf{tr}}
\newcommand{\ketbra}[2]{\lvert #1 \rangle \! \langle #2 \rvert}
\renewcommand{\epsilon}{\varepsilon}
\author{Sitan Chen\thanks{Email: \texttt{sitanc@berkeley.edu} This work is supported by NSF Award 2103300.} \\
UC Berkeley
 \and 
Jordan Cotler\thanks{Email: \texttt{jcotler@fas.harvard.edu}. This work is supported by
a Junior Fellowship from the Harvard Society of Fellows, as well as in part by the Department of
Energy under grant DE-SC0007870.} \\
Harvard University
\and
Hsin-Yuan Huang\thanks{Email: \texttt{hsinyuan@caltech.edu}. This work is supported by a Google Ph.D. fellowship.} \\
Caltech
\and
Jerry Li\thanks{Email: \texttt{jerrl@microsoft.com}.} \\
Microsoft Research
}
\newcommand{\NISQ}{\mathsf{NISQ}}
\newcommand{\BQP}{\mathsf{BQP}}
\newcommand{\BPP}{\mathsf{BPP}}
\newcommand{\QNC}{\mathsf{QNC}}
\newcommand{\SSP}[1]{#1\text{-}\mathsf{SSP}}
\newcommand{\id}{\mathrm{id}}
\renewcommand{\Id}{I}
\begin{document}
\pagestyle{empty}
{
  \renewcommand{\thispagestyle}[1]{}
  \maketitle

    \begin{abstract}
    The recent proliferation of NISQ devices has made it imperative to understand their computational power.
    In this work, we define and study the complexity class $\NISQ$, which is intended to encapsulate problems that can be efficiently solved by a classical computer with access to a NISQ device.
    To model existing devices, we assume the device can (1) noisily initialize all qubits, (2) apply many noisy quantum gates, and (3) perform a noisy measurement on all qubits.
    We first give evidence that $\BPP \subsetneq \NISQ \subsetneq \BQP$, by demonstrating super-polynomial oracle separations among the three classes, based on modifications of Simon's problem.
    We then consider the power of $\NISQ$ for three well-studied problems.
    For unstructured search, we prove that $\NISQ$ cannot achieve a Grover-like quadratic speedup over $\BPP$.
    For the Bernstein-Vazirani problem, we show that $\NISQ$ only needs a number of queries logarithmic in what is required for $\BPP$.
    Finally, for a quantum state learning problem, we prove that $\NISQ$ is exponentially weaker than classical computation with access to noiseless constant-depth quantum circuits.
    
    \end{abstract}
    
    \newpage

    \tableofcontents
}

\clearpage
\pagestyle{plain}
\pagenumbering{arabic}

\section{Introduction}

Fault-tolerant quantum computing promises to offer speed-ups to various computational problems, including simulating quantum systems \cite{lloyd1996universal,georgescu2014quantum,o2016scalable,babbush2018low,low2019hamiltonian}, factoring large numbers \cite{shor1994algorithms,shor1999polynomial}, performing optimization \cite{crosson2016simulated,farhi2016quantum,brandao2017quantum1,brandao2017quantum2}, and solving linear systems of equations \cite{harrow2009quantum}.
While a fault-tolerant quantum computer has not yet been built, recent technological advancements have resulted in the development of new quantum devices that can outperform the best classical supercomputers for certain artificial computational tasks \cite{arute2019quantum}.
The opportunities offered by these devices and the challenges they engender have invigorated theorists and experimentalists alike, ushering in a new age of research into quantum computation and information often referred to as the NISQ (noisy intermediate-scale quantum) era \cite{preskill2018quantum}.

Computation in the NISQ era is modeled by hybrid computation consisting of a classical computer and a noisy quantum device.
Most currently available noisy quantum devices, such as those built from superconducting qubits \cite{arute2019quantum, clarke2008superconducting, reagor2018demonstration, havlivcek2019supervised}, trapped ions \cite{debnath2016demonstration,zhang2017observation,friis2018observation,wright2019benchmarking}, nuclear spins in silicon \cite{Noiri2022Silicon, Xue2022Silicon, Madzik2022Silicon}, or other solid-state systems \cite{weber2010quantum,doherty2013nitrogen,choi2017observation}, are restricted to preparing a noisy initial state, performing noisy quantum gates, and executing a noisy measurement on all qubits to generate a random bitstring.
In most circumstances these noisy quantum devices are too weak to perform useful computation on their own.
Generally, useful computation is facilitated by a classical computer that repeatedly runs the noisy quantum device with various gate sequences to obtain different classical output bit strings, and then performs classical post-processing on those strings.
Algorithms within this computational model are referred to as hybrid quantum-classical algorithms \cite{mcclean2016theory,endo2021hybrid} or NISQ algorithms \cite{preskill2018quantum, RevModPhys.94.015004}.

The excitement around NISQ algorithms has led to a plethora of new near-term algorithms targeting different applications, including quantum chemistry \cite{peruzzo2014variational,o2016scalable,mcclean2016theory,kandala2017hardware,huggins2022unbiasing}, machine learning \cite{havlivcek2019supervised,farhi2018classification,schuld2019quantum,huang2021power,huang2021information,caro2022generalization,cerezo2022QML}, combinatorial optimization \cite{farhi2014quantum, hadfield2019quantum, zhou2020quantum, basso2022performance}, linear system solvers \cite{bravo2019variational,xu2021variational,huang2019near}, and experimental data analysis \cite{aharonov2022quantum,chen2021exponential,huang2022foundations,huang2022quantum, schuster2022learning, cotler2022information}.
However, to the best of our knowledge, no existing works have rigorously examined the class of \emph{all possible} NISQ algorithms and studied their inherent computational power.
As a result, many important and fundamental questions remain unanswered.
In particular, how powerful are NISQ algorithms compared to classical algorithms?
Are NISQ algorithms inherently weaker than fault-tolerant quantum algorithms?

\paragraph{Defining the $\NISQ$ complexity class.}
In this work, we formalize and study these basic questions through the lens of computational complexity theory.
To do so rigorously, we define a complexity class which we believe encapsulates what is possible on the vast majority of existing quantum devices.
Our definition is intended to capture the following capabilities of noisy quantum devices, which we alluded to above:
\begin{enumerate}[leftmargin=*,itemsep=0pt]
    \item {\bf Noisy quantum gates.} The device can execute noisy two-qubit logic gates. Using quantum logic gates (as opposed to, say, more general non-unitary CPTP maps) is standard in existing quantum devices, and it is well-understood that in real-world settings, they will be subject to noise. For concreteness, we consider the standard model of local depolarizing noise per qubit.
    However, our results extend to more general noise models; see Remarks~\ref{remark:noise}, \ref{remark:superpoly} and~\ref{remark:BV}.
    \item {\bf Noisy state preparation at the start.}
    The quantum devices have a fixed number of qubits and as such cannot bring in fresh qubits during the computation.  This means that the device must prepare all qubits at the start.
    Notably, since we assume all quantum gates are subject to noise, this means all qubits will accrue entropy throughout the computation.
    \item {\bf Noisy measurement at the end.} 
    The quantum devices are limited to perform noisy measurements only at the end of the computation, which means the measurement is performed on all qubits simultaneously.
    From a physical perspective, this constraint arises due to the difficulty of isolating subsets of qubits and measuring them without decohering the residual qubits.
\end{enumerate}
Finally, we consider a classical computer that can repeatedly run the noisy quantum device and analyze the output from the noisy quantum device.

These constraints are chosen to encapsulate the gap between the physical limitations of what we can achieve with existing quantum computers, and general quantum computation.
We note these considerations preclude the implementation of all known general fault-tolerant quantum computation schemes \cite{aharonov1997fault,preskill1998fault,bacon2000universal,raussendorf2007fault,ben2013quantum,fawzi2020constant,divincenzo2007effective}, but that removing any one of these constraints would already allow for some form of nontrivial quantum fault tolerance \cite{aharonov1997fault, ben2013quantum, fawzi2020constant}.
The obstruction to fault tolerance can be understood intuitively.
The noisy quantum gates cause all qubits to accrue entropy, which cannot be pumped out until the measurement at the end.
Since too much entropy would destroy all useful quantum correlations, it is not possible for the noisy quantum devices under the above constraints to perform an arbitrarily long quantum computation.

Motivated by the above considerations, in Section~\ref{sec:def-NISQ-intro} we formally define the $\NISQ$ complexity class to be the set of all problems that can be efficiently solved by a classical computer with access to a noisy quantum device that can (i) prepare a noisy $\mathrm{poly}(n)$-qubit all-zero state, (ii) execute noisy quantum gates, and (iii) perform a noisy measurement on all of the $\mathrm{poly}(n)$ qubits.

\begin{figure*}[t]
\centering
\includegraphics[width=1.0\textwidth]{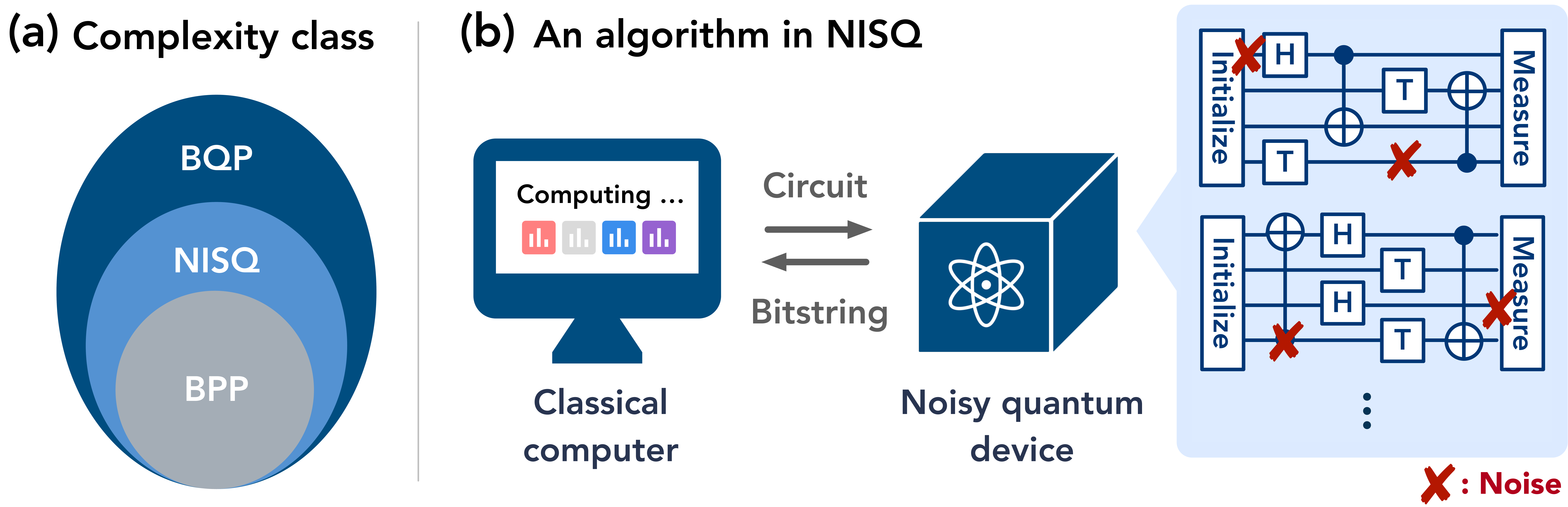}
    \caption{
    \emph{Illustration of the $\NISQ$ complexity class:} (a) Complexity classes: $\NISQ$ contains problems that can be solved by classical computation ($\BPP$), and some problems that can be solved by quantum computation ($\BQP$). (b) $\NISQ$ algorithm: An algorithm in the complexity class $\NISQ$ is modeled by a hybrid quantum-classical algorithm, where a classical computer can specify the circuit to run on a noisy quantum device and the device would run a noisy version of the circuit and return a random classical bitstring obtained from noisy measurement.
    \label{fig:NISQ}}
\end{figure*}

\section{Main Results}

In Section~\ref{sec:def-NISQ-intro} we give an overview of the definition of $\NISQ$.
Then, in Section~\ref{sec:super-poly-intro}, we give two modifications of Simon's problem which respectively yield a super-polynomial separation between $\BPP$ and $\NISQ$, and an exponential separation between $\NISQ$ and $\BQP$.
In Section~\ref{sec:three-intro}, we study the NISQ complexity of three well-known problems: unstructured search, Bernstein-Vazirani problem, and shadow tomography.
We defer all technical details to the appendix.

\subsection{Definition of \texorpdfstring{$\NISQ$}{NISQ}}
\label{sec:def-NISQ-intro}
We begin with the definition of the $\NISQ$ complexity class.
The formal mathematical definition is given in Appendix~\ref{subsec:nisq}.
In Appendix~\ref{subsec:oracle}, we recall the standard definition of oracle access in classical and quantum computation \--- when the oracle is classical, we extend the definition to $\NISQ$ by giving oracle access to both the classical computer and the noisy quantum computer.

\begin{definition}[$\NISQ$ complexity class, informal]
$\NISQ$ contains all problems that can be solved by a polynomial-time probabilistic classical algorithm with access to a noisy quantum device.
To solve a problem of size $n$, the classical algorithm can access a noisy quantum device that can:
\begin{enumerate}
    \setlength\itemsep{0em}
    \item Prepare a noisy $\mathrm{poly}(n)$-qubit all-zero state;
    \item Apply arbitrarily many layers of noisy two-qubit gates;
    \item Perform a noisy computational basis measurements on all the qubits simultaneously.
\end{enumerate}
All quantum operations are subject to a constant amount of depolarizing noise per qubit.
\end{definition}

\noindent The definition of a noisy quantum device given above forbids the implementation of all known fault-tolerant quantum computation schemes \cite{aharonov1997fault,preskill1998fault,bacon2000universal,raussendorf2007fault,divincenzo2007effective}.
Hence, it is plausible that there are problems that could be solved efficiently by a fault-tolerant quantum algorithm but not by a $\NISQ$ algorithm, i.e.~that $\NISQ \subsetneq \BQP$.

The definition of $\NISQ$ immediately gives us that
\begin{equation}
\BPP \subseteq \NISQ \subseteq \BQP\,.
\end{equation}
The inclusion $\BPP \subseteq \NISQ$ follows from the fact that a $\NISQ$ algorithm is a hybrid quantum-classical algorithm that can also run any classical computation.
The latter inclusion $\NISQ \subseteq \BQP$ holds because a quantum computer can simulate any noisy quantum device.
However, it remains an open question whether $\BPP \subsetneq \NISQ$ and also if $\NISQ \subsetneq \BQP$.
The strict inclusion $\BPP \subsetneq \NISQ$ would imply that $\NISQ$ algorithms have a super-polynomial speedup over classical algorithms.
The second strict inclusion $\NISQ \subsetneq \BQP$ would imply that $\NISQ$ algorithms are not as powerful as fault-tolerant quantum algorithms.
Establishing either one of the strict inclusions would imply that $\BPP \subsetneq \BQP$, which is a long-standing open problem in computational complexity theory.



\subsection{Super-polynomial separations}
\label{sec:super-poly-intro}

We begin with a set of results involving two modified versions of Simon's problem \cite{simon1997power}.
Simon's problem is one of the earliest examples of a computational problem demonstrating exponential quantum advantage in query complexity: given a function $f: \{0, 1\}^n \mapsto \{0, 1\}^n$, the goal is to  decide if $f$ is $2$-to-$1$ (with a promise that $f(x) = f(x \oplus s)$ for a secret string $s \in \{0, 1\}^n$) or $1$-to-$1$.
In the language of computational complexity theory, it exhibits a classical oracle $O$ for which there is a relativized complexity separation between $\BPP$ and $\BQP$, denoted by~$\BPP^O \subsetneq \BQP^O$.

In this work, we give two modifications of Simon's problem that yield relativized separations among $\BPP$, $\NISQ$, and $\BQP$.
We first construct a modification of Simon's problem that requires at least a super-polynomial number of oracle queries for $\BPP$ and only a linear number for $\NISQ$.
We then give another modification of Simon's problem that requires at least an exponential number of queries for $\NISQ$ and only a linear number of queries for $\BQP$.
These two results can be summarized as follows:

\begin{theorem}
\label{thm:superpoly1}
There is a classical oracle $O_1$ such that $\BPP^{O_1} \subsetneq \NISQ^{O_1}$.
\end{theorem}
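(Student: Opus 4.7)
The plan is to take $O_1$ to encode a ``sparse Simon's'' problem: the oracle implements $f \colon \{0,1\}^n \to \{0,1\}^n$ that is either $1$-to-$1$ or $2$-to-$1$ with $f(x) = f(x \oplus s)$, where the hidden secret is promised to have Hamming weight $|s| = t$ for some $t(n) = \omega(1)$, say $t = \Theta(\log n)$, and the task is to distinguish the two cases. Sparsity is what makes the modification work on the NISQ side: I would have the classical controller run the usual constant-depth Simon circuit $H^{\otimes n}\cdot U_f \cdot H^{\otimes n}$ on the noisy device. Because the depth is constant, each of the $n$ qubits picks up only a constant effective depolarization rate $\epsilon$, and a short Fourier calculation then shows that the output distribution $D$ satisfies $\hat D(s') = \mathbf{1}[s' \in \{0,s\}]\,(1-\epsilon)^{|s'|}$, so $\Pr[Y \cdot s = 0] = \tfrac{1}{2} + \tfrac{1}{2}(1-\epsilon)^{t} = \tfrac{1}{2} + 1/\mathrm{poly}(n)$. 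With polynomially (and, for the right choice of $t$, linearly) many noisy oracle calls this produces biased samples, from which a classical post-processing step decides the problem---ideally via a global Krawtchouk-weighted U-statistic such as $\tfrac{1}{R(R-1)} \sum_{r \neq r'} K_t(|y_r \oplus y_{r'}|; n)$, whose expectation separates the $1$-to-$1$ and $2$-to-$1$ cases by $(1-\epsilon)^{2t} = 1/\mathrm{poly}(n)$.

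For the $\BPP^{O_1}$ lower bound I would sample $s$ uniformly among weight-$t$ strings and run a Simon-style adversary argument. A $Q$-query classical algorithm can only witness that $f$ is $2$-to-$1$ by querying a pair $x, x'$ with $x \oplus x' = s$, and the probability that any of the $\binom{Q}{2}$ pairwise XORs among its queries equals a uniformly random weight-$t$ vector is at most $\binom{Q}{2}/\binom{n}{t}$. Forcing constant success probability therefore requires $Q = \Omega\!\bigl(\sqrt{\binom{n}{t}}\bigr) = n^{\Omega(t)} = n^{\Omega(\log n)}$, which is super-polynomial in $n$.

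The hardest part is making the $\NISQ$-side classical post-processing genuinely polynomial-time. Full recovery of $s$ from biased parity samples is a form of sparse LPN, which is not known to be tractable in the regime $t = \Theta(\log n)$ with $1/\mathrm{poly}(n)$ bias, so I expect the proof has to work with the decision version and exploit the sparsity promise to carefully bound the variance of the Krawtchouk statistic; crucially, the promise concentrates all nontrivial Fourier mass of $D$ on the single frequency $s$, which I would use to obtain a polynomial variance bound in spite of the worst-case magnitude $\binom{n}{t}$ of the $K_t$ being super-polynomial. A secondary technical point is to verify that the paper's NISQ noise model, once the oracle $U_f$ is modeled as a single noiseless unitary embedded between the Hadamard layers, really does produce the clean independent-per-qubit depolarization structure assumed in the Fourier formula above; any cross-qubit correlations introduced by the noise model would need to be absorbed into the bias calculation.
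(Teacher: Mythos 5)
There is a genuine gap in the $\NISQ$ upper bound, and it is exactly the one you flag as ``the hardest part.'' Your oracle and noisy-circuit analysis are fine: running the plain Simon circuit on a $t$-sparse secret does yield samples $y$ with $\Pr[y\cdot s = 0] = \tfrac12 + \tfrac12(1-\lambda)^{\Theta(t)} = \tfrac12 + n^{-O(1)}$. But the classical post-processing step does not go through. The Krawtchouk U-statistic $\tfrac{1}{R(R-1)}\sum_{r\neq r'}K_t(|y_r\oplus y_{r'}|;n)$ has mean gap $\hat D(s)^2 = \delta^2$ with $\delta = n^{-O(1)}$, but under the null (1-to-1) distribution its variance is $\approx 2\binom{n}{t}/R^2$, since $K_t(|y\oplus y'|) = \sum_{|s'|=t}\chi_{s'}(y)\chi_{s'}(y')$ and \emph{all} $\binom{n}{t}$ frequencies contribute to the second moment even though only $s$ contributes to the signal. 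The sparsity promise concentrates the mean on one frequency but does nothing to the null variance of the statistic, so you need $R \gtrsim \binom{n}{t}^{1/2}\delta^{-2} = n^{\Omega(\log n)}$ samples \--- superpolynomial. More fundamentally, what you have produced is the decision version of $\Theta(\log n)$-sparse learning parity with noise at inverse-polynomial bias, for which no polynomial-time algorithm is known (the best known, BKW-style, runs in time $\mathrm{poly}(n)\cdot(1-2\eta)^{-\sqrt{\log n}}$, which is superpolynomial at bias $n^{-\Omega(1)}$). The paper discusses precisely your construction in its Appendix~\ref{app:bkw} and presents it only as a \emph{conditional} separation, contingent on a breakthrough improving the noise dependence for sparse noisy parity. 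Your classical lower bound, by contrast, is essentially correct.

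The paper's actual proof takes a different route that avoids this obstruction entirely: rather than tolerating an exponentially-in-$t$ damped signal and pushing the difficulty onto the classical decoder, it makes the quantum computation itself fault-tolerant. The oracle is a ``robustified'' Simon's function built from a recursively concatenated CSS code (each logical input/output bit is represented by a ball of $m^r$-bit strings around a codeword, and the oracle's answer is constant on each ball), which interfaces with the Aharonov--Ben-Or fault-tolerance scheme; a majority-vote gadget supplies the fresh ancillas that the $\NISQ$ model cannot initialize mid-circuit, at the cost of working over only $n' = 2^{\Theta(\log^c n)}$ logical bits so that the $3^{\mathrm{polylog}(n')}$ qubit blowup stays polynomial. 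The encoded Simon's algorithm then outputs essentially \emph{noiseless} samples, so the classical post-processing is just Gaussian elimination, and the classical lower bound is the standard $2^{\Omega(n')} = n^{\omega(1)}$ Simon's bound. If you want to salvage your route, you would either need to prove a $\mathrm{poly}(n,(1-2\eta)^{-1})$-time algorithm for sparse noisy parity (open, and widely believed hard) or introduce error correction into the oracle as the paper does.
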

\begin{theorem}
\label{thm:superpoly2}
There is a classical oracle $O_2$ such that $\NISQ^{O_2} \subsetneq \BQP^{O_2}$.
\end{theorem}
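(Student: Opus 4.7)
The plan is to design a classical oracle $O_2$ encoding a Simon-style problem whose solution intrinsically requires a single coherent quantum computation of depth $\omega(\log n)$. A $\BQP$ algorithm can afford such depth; a $\NISQ$ algorithm cannot, because constant-rate per-gate depolarizing noise drives the device's output to be exponentially close in trace distance to a fixed product state after only logarithmic depth. Consequently every invocation of the NISQ device effectively performs at most $O(\log n)$ coherent oracle queries, while the oracle will be engineered so that extracting the hidden secret requires $\omega(\log n)$ coherent queries per run or else exponentially many independent runs.

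\paragraph{Construction and $\BQP$ upper bound.}
A natural candidate is an ``iterated Simon'' oracle. Fix $k = \Theta(n)$ and sample independent secrets $s_1,\ldots,s_k \in \{0,1\}^n$. The oracle encodes a function $f : \{0,1\}^{kn} \to \{0,1\}^{kn}$ that is $2$-to-$1$ with period $(s_1,\ldots,s_k)$, but arranged so that Simon-style Fourier sampling at level $i$ depends on coherent access to the outputs of levels $1,\ldots,i-1$. The decision problem is a global predicate of $(s_1,\ldots,s_k)$, for instance the XOR of fixed linear functions of each $s_i$. A $\BQP^{O_2}$ algorithm prepares a uniform superposition on $kn$ qubits, invokes $f$ coherently once, applies Hadamards on all registers, and measures a uniform element of $(s_1,\ldots,s_k)^{\perp}$; after $O(nk)$ queries, i.e.\ linear in the oracle input size, Gaussian elimination recovers all $s_i$ and hence the predicate.

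\paragraph{$\NISQ$ lower bound and main obstacle.}
The hard part is the matching lower bound, which I would split into two ingredients. \emph{Noise kills depth:} by standard contraction estimates for local depolarizing noise, the output distribution of any $L$-layer NISQ run is within TVD $\epsilon$ of the output distribution of some noiseless circuit of effective depth $O(\log(1/\epsilon))$ with $O(\log(1/\epsilon))$ coherent queries to $O_2$; hence up to $1/\mathrm{poly}(n)$ error, every NISQ run can be simulated by an $O(\log n)$-depth noiseless oracle computation. \emph{Shallow-query lower bound:} any classical algorithm augmented with such a shallow noiseless quantum subroutine needs $2^{\Omega(n)}$ total invocations to compute the predicate. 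I would prove the latter via a light-cone hybrid argument: because the oracle's levels are interlocked, the light cone of any $O(\log n)$-depth query circuit touches only $O(\log n)$ of the $k$ levels, so each run leaks at most $O(\log n)$ bits of information about $(s_1,\ldots,s_k)$, whereas the global predicate depends on $\Omega(kn)$ bits of secret entropy, forcing $2^{\Omega(n)}$ runs. The main obstacle is making this second step fully rigorous: the interlocking of levels in $O_2$ must be engineered so that shallow circuits cannot short-circuit the iterated structure, which likely requires a compressed-oracle or BBBV-style hybrid argument that carefully bounds adaptive classical postprocessing across runs and rules out clever amplification tricks that combine partial information from many shallow subroutines.
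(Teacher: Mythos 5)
Your proposal has a genuine gap, and in fact the construction you describe is internally inconsistent with the lower-bound strategy you propose. You want the hardness for $\NISQ$ to come from \emph{depth}: noise reduces every device invocation to an effectively $O(\log n)$-depth coherent computation, and the oracle is supposed to defeat all shallow circuits. But your $\BQP$ upper bound solves the problem with a \emph{single} coherent call to one function $f$ followed by one layer of Hadamards. Any problem solvable that way lies in $\BPP^{\QNC^0}$ relative to the oracle, i.e.\ it is solved by a noiseless constant-depth circuit \--- so no lower bound against shallow noiseless circuits can possibly hold for your oracle, and step two of your plan is false as stated. To genuinely force depth you must split the oracle into $d$ sub-oracles $f_0,\dots,f_d$ that can only be unraveled sequentially (the recursive/shuffling Simon's construction of Chia\--Chung\--Lai and Coudron\--Menda); the paper does exactly this in Appendix~\ref{app:shuffle}, and proving that shallow circuits cannot ``short-circuit'' the composition is the substantial technical content of those works \--- it is not delivered by a light-cone argument, since each oracle call acts globally on all registers and there is no geometric light cone to exploit. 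Your own text flags this as ``the main obstacle,'' which is accurate: it is the entire proof.

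The paper's actual proof of Theorem~\ref{thm:superpoly2} needs none of this depth machinery. It lifts Simon's function to $\widetilde{f}:\{0,1\}^{2n}\to\{0,1\}^n$, defined to equal $f$ only when the last $n$ input bits are all zero and to output $0$ otherwise. The point is that the oracle is ``active'' only on an exponentially small, noise-unstable subset of the domain: a single layer of depolarizing noise flips each of the $n$ flag bits independently with probability $\lambda/2$, so any state's overlap with the active subspace is at most $e^{-\Omega(\lambda n)}$ (Lemma~\ref{lem:proj}), and hence each noisy oracle call is within trace distance $e^{-\Omega(\lambda n)}$ of the identity oracle (Lemma~\ref{lem:test_slowdown}). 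A hybrid argument then shows \emph{every} noisy circuit invocation \--- even a depth-one one \--- yields essentially no quantum information about $f$, the $\NISQ$ algorithm collapses to a classical query algorithm, and the classical $\Omega(2^{n/2})$ lower bound for Simon's problem finishes the job; meanwhile a noiseless constant-depth circuit runs Simon's algorithm with the flag bits pinned to $\ket{0}$. This is both simpler and quantitatively stronger than the depth-reduction route (which, even done correctly on the shuffling oracle, only bounds the number of queries by roughly $\exp(\Omega(\lambda d))$ rather than $\exp(\Omega(\lambda d n))$). If you want to pursue your route, the correct skeleton is: (i) the information-contraction lemma (Lemma~\ref{lem:info}) to truncate deep noisy circuits, as in the paper's Grover lower bound, plus (ii) the $\BPP^{\QNC}$-vs-$\BQP$ oracle lower bound of \cite{chia2020need} as a black box \--- but you must then inherit their $d$-level oracle, not the single-function oracle you wrote down.
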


\noindent Theorem~\ref{thm:superpoly1} is established by constructing a robust version of the classical function $f: \{0, 1\}^n \mapsto \{0, 1\}^n$ in Simon's problem.
We denote the robust version as $\wt{f}: \{0, 1\}^{n'} \mapsto \{0, 1\}^n$ with more input bits: $n' \gg n$.
Each input $x \in \{0, 1\}^n$ corresponds to a large set of inputs $A_x \subseteq \{0, 1\}^{n'}$, such that every $z \in A_x$ produces the same output $\wt{f}(z) = f(x)$.
The new function $\wt{f}$ is robust to noise, which allows a $\NISQ$ algorithm to achieve a super-polynomial speed-up.

The proof of Theorem~\ref{thm:superpoly2} is essentially the opposite of that of Theorem~\ref{thm:superpoly1}. We construct a highly non-robust version $\wt{f}$ of the classical function $f$ in Simon's problem.
Any noisy access to $\wt{f}$ provides exponentially little information, hence any $\NISQ$ algorithm would require exponentially more queries than a noiseless quantum algorithm. In fact, the separation we show is even stronger: not only is $\NISQ$ exponentially weaker than $\BQP$ relative to this oracle, but it is in fact even exponentially weaker than $\BPP^{\QNC^0}$, that is, classical computation assisted by noiseless bounded-depth quantum computation (see Section~\ref{sec:boundeddepth} for definitions).

Our findings give evidence that the computational power of $\NISQ$ lies somewhere strictly between $\BPP$ and $\BQP$.

\subsection{\texorpdfstring{$\NISQ$}{NISQ} in three well-studied problems}
\label{sec:three-intro}

Since the modified Simon's problems in Theorems~\ref{thm:superpoly1} and~\ref{thm:superpoly2} are tailored for proving super-polynomial separations, we would also like to study $\NISQ$ for more natural problems.
We explore three well-known problems in quantum computing: unstructured search, the Bernstein-Vazirani problem, and shadow tomography.

For unstructured search, it is well-known that Grover's algorithm \cite{grover1996fast} can achieve a quadratic quantum speedup over any classical algorithm.
To find a single marked element among $N$ elements, Grover's algorithm only requires $\mathcal{O}(\sqrt{N})$ queries, whereas any classical algorithm requires at least $\Omega(N)$ queries in the worst case.
A natural open question is whether a $\NISQ$ algorithm can also achieve such a quadratic speedup.
We resolve this open question in the negative by proving the following theorem.
\begin{theorem}[Unstructured search]\label{thm:grover_informal}
    To find a single marked element among $N$ elements,
    any $\NISQ$ algorithm with access to $\mathrm{poly}(N)$ qubits requires at least $\wt{\Omega}(N)$ queries.
\end{theorem}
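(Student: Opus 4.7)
The plan is to prove a strengthened BBBV-style adversary bound that exploits the mixedness of $\NISQ$ states. Fix a uniformly random marked element $x \in [N]$ and let $O_x$ denote the marking oracle for $x$ and $O_\bot$ the null oracle; locating $x$ with constant probability requires distinguishing $O_x$ from $O_\bot$ with constant probability in expectation over $x$. By the standard hybrid argument, the trace distance between the two final states of the noisy algorithm is at most $\sum_t \|O_x(\rho_t) - O_\bot(\rho_t)\|_1$, where $\rho_t$ is the state just before the $t$-th query under $O_\bot$.

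The first step is to derive an improved per-query perturbation bound that generalizes BBBV to mixed states. Since $O_x - O_\bot = |x\rangle\langle x| \otimes (X-I) = -2|x,-\rangle\langle x,-|$ is a rank-one operator, a direct expansion of $O_x \rho_t O_x^\dagger - O_\bot \rho_t O_\bot^\dagger$ combined with the inequality $\|\rho_t|x,-\rangle\|_2^2 \leq \|\rho_t\|_{\mathrm{op}} \cdot \langle x,-|\rho_t|x,-\rangle \leq \|\rho_t\|_{\mathrm{op}} \cdot p_t(x)$, where $p_t(x) := \langle x|\tr_{\mathrm{anc}} \rho_t|x\rangle$, yields
\begin{equation}
\|O_x \rho_t O_x^\dagger - O_\bot \rho_t O_\bot^\dagger\|_1 \;\leq\; C\bigl(\sqrt{\|\rho_t\|_{\mathrm{op}} \cdot p_t(x)} + p_t(x)\bigr).
\end{equation}
This recovers the noiseless $O(\sqrt{p_t(x)})$ BBBV bound for pure $\rho_t$ but is quadratically stronger in the highly-mixed regime $\|\rho_t\|_{\mathrm{op}} \ll 1$, and is what ultimately lets us go from the $\Omega(\sqrt{N})$ noiseless lower bound to $\widetilde{\Omega}(N)$.

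The crux of the argument, and the main technical obstacle, is a structural lemma showing $\|\rho_t\|_{\mathrm{op}} \leq \widetilde{O}(1/N)$ for every query $t$ preceded by $\Omega(\log N)$ noise layers. To prove it, expand $\rho_t = 2^{-n}(I + \sum_{P \neq I}\beta_P P)$ in the Pauli basis. Each all-qubit depolarizing layer of constant rate $\eta$ shrinks every non-identity Pauli coefficient by a factor of at most $1-\eta$ (since any $P \neq I$ is non-identity on some qubit), while intervening unitaries and oracle queries preserve $\sum_P \beta_P^2$. Iterating over $L$ noise layers yields $\|\rho_t - I/2^n\|_F \leq (1-\eta)^L$, hence $\|\rho_t\|_{\mathrm{op}} \leq 1/2^n + (1-\eta)^L$. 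Choosing $L = \Theta(\log N)$ gives $\|\rho_t\|_{\mathrm{op}} \leq \widetilde{O}(1/N)$, using $n \geq \log N$. The only subtlety is handling the $O(\log N)$ ``early'' queries occurring before enough noise has accumulated, but these contribute at most $\widetilde{O}(1/\sqrt{N}) = o(1)$ to the distinguishing probability via the standard noiseless BBBV bound, so they can be absorbed without hurting the final lower bound.

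Putting the pieces together and averaging the per-query bound over a uniformly random $x \in [N]$, using $\mathbb{E}_x p_t(x) = 1/N$ together with Jensen's inequality,
\begin{equation}
\mathbb{E}_x \sum_t \bigl(\sqrt{\|\rho_t\|_{\mathrm{op}} \cdot p_t(x)} + p_t(x)\bigr) \;\leq\; \widetilde{O}(T/N),
\end{equation}
so the expected distinguishing probability between $O_x$ and $O_\bot$ is $\widetilde{O}(T/N)$. Constant distinguishing probability (required to locate the marked $x$) therefore forces $T = \widetilde{\Omega}(N)$. The principal difficulties beyond the operator-norm lemma are ensuring that the Pauli-shrinkage estimate holds robustly across arbitrary interleavings of queries, noisy gates, and the noise channel, and ruling out the possibility that the $\mathrm{poly}(N)$ available ancillas allow the algorithm to encode the query register into some noise-protected subspace that evades the operator-norm bound.
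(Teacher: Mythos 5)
Your proposal has two gaps, each of which is fatal on its own.

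First, the per-query perturbation bound $\lVert O_x \rho_t O_x^\dagger - O_\bot \rho_t O_\bot^\dagger\rVert_1 \leq C\bigl(\sqrt{\lVert\rho_t\rVert_{\mathrm{op}}\, p_t(x)} + p_t(x)\bigr)$ is false once ancillas are present, and the failure is not the exotic ``noise-protected subspace'' scenario you flag at the end but something much more mundane. Take $\rho_t = \lvert\psi\rangle\langle\psi\rvert \otimes (I/2^{n'-n})$ with $\lvert\psi\rangle = \sqrt{1-\epsilon}\,\lvert y\rangle + \sqrt{\epsilon}\,\lvert x\rangle$ on the query register and a maximally mixed ancilla register of $n'-n$ qubits. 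Then $\lVert\rho_t\rVert_{\mathrm{op}} = 2^{-(n'-n)}$ is exponentially small and $p_t(x) = \epsilon$, so your right-hand side is $\approx C\epsilon$, yet the phase oracle flips the sign of the $\lvert x\rangle$ component and the true trace distance is $\approx 2\sqrt{\epsilon}$. Global mixedness says nothing about the coherence of the query register, so the quantity you would actually need to control is (at least) the operator norm of the reduced state on the query register; but your structural lemma only gives Frobenius-norm closeness of the \emph{global} state to $I/2^{n'}$, and partial trace costs a factor $2^{(n'-n)/2}$ in Frobenius norm, which is vacuous with $\mathrm{poly}(N)$ ancillas. Second, your argument analyzes a single monolithic noisy circuit, but a $\NISQ$ algorithm is a hybrid algorithm that may adaptively invoke polynomially many \emph{fresh, short} noisy circuits, measuring and classically post-processing between invocations. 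In that regime no query is ``preceded by $\Omega(\log N)$ noise layers,'' so your structural lemma never fires, and your fallback --- the standard BBBV bound for the early queries --- applies to \emph{every} query and only yields a total distinguishing advantage of $O(T/\sqrt{N})$, i.e.\ the trivial $\Omega(\sqrt{N})$ bound. Any first-moment argument that sums per-query trace distances over the transcript is stuck at $\sqrt{N}$ here; this adaptive short-circuit case is exactly where the difficulty of the theorem lives.

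For contrast, the paper's proof (Theorem~\ref{thm:main_grover}) splits differently: (i) it uses the entropy-decay lemma of \cite{aharonov1996limitations} (Lemmas~\ref{lem:info} and~\ref{lem:info_POVM}) to argue that any circuit invocation of depth exceeding $\overline{T} = O(\lambda^{-1}(\log\log d + \log N))$ outputs a state indistinguishable from maximally mixed --- this is the rigorous cousin of your Pauli-shrinkage lemma, applied to the measurement output rather than per query --- thereby reducing to noiseless hybrid computation where every circuit has depth at most $\overline{T}$; and (ii) it proves via a likelihood-ratio martingale analysis over the learning tree (Theorem~\ref{thm:depth}), powered by Zalka's second-moment inequality $\sum_i \lVert\,\lvert\phi_i\rangle - \lvert\phi_0\rangle\rVert^2 \le 4T^2$ and Freedman's inequality, that noiseless depth-$T$ hybrid algorithms need $\Omega(d/T)$ total queries. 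It is this second-moment treatment of the adaptive transcript, not an improved per-query trace-distance bound, that beats $\sqrt{N}$. If you want to salvage your route, you would need both a per-query bound phrased in terms of the reduced state of the query register and a composition argument across adaptively chosen circuits that does not reduce to summing first-moment trace distances.
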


\noindent The $\wt{\Omega}(\cdot)$ neglects logarithmic factors.
We stress that the above theorem implies not only that noisy implementations of Grover's algorithm itself fail to achieve a quadratic speedup, but in fact that \emph{any} $\NISQ$ algorithm will fail to do so. 
The intuition behind the proof is that noisy quantum devices can only run for so long before noise overwhelms the system, so it would suffice to prove the lower bound for hybrid quantum-classical algorithms with access to a noiseless quantum device that can run any bounded-depth circuit.
To analyze such algorithms, we draw upon tools developed recently in the context of lower bounds for learning quantum states and processes using unentangled measurements \cite{aharonov2022quantum, chen2021exponential, huang2022foundations, huang2022quantum, chen2021hierarchy, bubeck2020entanglement,chen2022tight}.

For the second task, the Bernstein-Vazirani problem, we find that a large quantum advantage still remains for $\NISQ$ algorithms.
Given an unknown $n$ bit string $s \in \{0, 1\}^n$, the Bernstein-Vazirani problem asks how many queries to a function $f(x) = x \cdot s$ are required to learn $s$.
Any classical algorithm requires at least $\Omega(n)$ queries to learn $s$.
However, the Bernstein-Vazirani quantum algorithm can learn the unknown bit string $s$ with just $1$ query.
We show that the query complexity of this problem in $\NISQ$ remains much smaller than the classical query complexity:

\begin{theorem}[Bernstein-Vazirani]\label{thm:BV_informal}
    There is a $\NISQ$ algorithm that solves the Bernstein-Vazirani problem over $n$ bits in at most $\mathcal{O}(\log n)$ queries.
\end{theorem}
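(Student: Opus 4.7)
The plan is to take the textbook one-query Bernstein--Vazirani (BV) circuit, run its noisy $\NISQ$ analog $T = \Theta(\log n)$ times, and recover each bit of $s$ by majority vote across runs. The textbook BV circuit has constant depth: prepare $|0\rangle^{\otimes n}|1\rangle$, apply $H^{\otimes (n+1)}$, invoke the phase oracle for $f(x) = s \cdot x$ once, apply $H^{\otimes n}$ to the first $n$ qubits, and measure in the computational basis. In the $\NISQ$ model, each individual qubit is therefore touched by only a universal constant number $c$ of noisy operations (noisy state preparation, $O(1)$ noisy Clifford layers, one noisy oracle call, and a noisy terminal measurement).

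The key claim is that, in this noisy implementation, each output bit $i$ agrees with the secret bit $s_i$ with bias bounded below by a constant $\varepsilon > 0$ depending only on the noise rate. I would prove this in the Heisenberg picture, exploiting two facts: (i) every gate in the noiseless BV circuit is Clifford, so backward-conjugation by the circuit sends the output observable $Z_i$ to $(-1)^{s_i}$ times a Pauli on the input state with $O(1)$ support; and (ii) single-qubit depolarizing noise $\mathcal{N}_p(\rho) = (1-p)\rho + p\,\Id/2$ multiplies the expectation of every nontrivial single-qubit Pauli by exactly $1-p$. Threading $Z_i$ back through the constant number of interleaved noisy layers then yields $\langle Z_i\rangle_{\mathrm{noisy}} = (-1)^{s_i}(1-p)^c$, so the $i$-th output bit equals $s_i$ with probability $\tfrac12 + \tfrac12(1-p)^c = \tfrac12 + \Omega(1)$.

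Given a constant per-bit bias, boosting is routine. Repeat the noisy circuit $T = \Theta(\log n)$ times independently, spending a total of $T$ oracle queries, and for each index $i \in [n]$ take the majority of the $i$-th output bit across the $T$ runs. Because the runs are independent, a Chernoff bound gives per-bit failure probability at most $1/(3n)$, and a union bound over the $n$ bits yields overall success probability $\geq 2/3$. Crucially, this argument never uses joint independence of the $n$ output bits inside a single run \--- majority voting per bit only relies on the single-qubit marginal, which is exactly what the Heisenberg calculation controls.

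The main obstacle is the second step: certifying that the constant $c$ really is independent of $n$. The $O(1)$ accounting for state preparation, the Hadamard layers, and the terminal measurement is immediate from the $\NISQ$ definition. The potentially delicate factor is the oracle call itself, which acts on $n{+}1$ qubits, so one has to verify that the $\NISQ$ oracle-access model only charges each qubit $O(1)$ depolarizing events per query; under the standard formalization of noisy classical oracle access (cf.\ the definition in the appendix), this is the case, and the Pauli-propagation bound goes through.
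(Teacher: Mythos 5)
Your proposal is correct, and the algorithm is the same as the paper's (run the standard Bernstein--Vazirani circuit noisily $\Theta(\log n)$ times and take a per-bit majority vote, finishing with Chernoff plus a union bound over the $n$ bits). Where you differ is in how you certify the constant per-bit bias. The paper works in the Schr\"odinger picture: it conditions on the event that neither the ancilla nor qubit $i$ is ever corrupted across the $O(1)$ noise layers (probability $(1-\lambda)^{6}$), and shows that conditioned on this event the $i$-th output bit equals $s_i$ with probability one; this argument is deliberately agnostic to the form of the corruption, so it extends to the adversarial local-noise model of Remark~\ref{remark:BV}, but it only yields a bias above $1/2$ when $\lambda$ is below a constant threshold (the theorem is stated for $\lambda < 1/24$). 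Your Heisenberg-picture argument instead exploits the specific structure of depolarizing noise \--- the backward-propagated observable is a Pauli of weight at most $2$ (qubit $i$ and the ancilla) at every stage, and each depolarizing layer attenuates it by exactly $(1-\lambda)^{\mathrm{weight}}$ \--- giving $\langle Z_i\rangle = (-1)^{s_i}(1-\lambda)^{c}$ and hence a strictly positive bias for \emph{every} $\lambda < 1$. This is strictly stronger for the depolarizing model in which $\NISQ$ is actually defined (and matches the paper's footnote observing that a more careful bookkeeping handles any $\lambda$ bounded away from $1$), at the cost of not transferring to adversarially chosen local channels, where a corrupted qubit can be flipped deterministically rather than merely attenuated. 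Your observation that the per-bit majority vote needs only the single-qubit marginal of each run, together with independence across runs, is exactly the right point, and your concern about the oracle call is resolved by the paper's definition of $\mathrm{NQC}^O_\lambda$, in which each oracle invocation is followed by a single layer of $D_\lambda^{\otimes n'}$, so each query charges each qubit $O(1)$ depolarizing events.
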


\noindent Perhaps surprisingly, our analysis shows that a simple modification of the original Bernstein-Vazirani algorithm is already sufficiently noise-robust.
This result provides optimism
that there
may be other natural problems for which quantum advantage can be obtained in the NISQ era.

Finally, we consider the problem of predicting many properties in an unknown quantum system, also known as shadow tomography \cite{aaronson2018shadow, aaronson2019gentle, huang2020predicting, buadescu2021improved}.
In particular, we restrict to predicting absolute values of Pauli observables $\{I, X, Y, Z\}^{\otimes n}$:
Given many copies of an unknown $n$-qubit state $\rho$, the goal is to learn $|\Tr(P \rho)|$ for all $P \in \{I, X, Y, Z\}^{\otimes n}$ up to a constant error by processing the quantum state copies.
This task has received considerable attention in recent works \cite{huang2021information,chen2021exponential,huang2022quantum,huang2020predicting} which have demonstrated, both theoretically and experimentally, that very simple $\BQP$ algorithms can solve this task using only $O(n)$ copies, while any classical algorithm that can obtain classical data by performing measurements on individual copies of $\rho$ requires $\Omega(2^n)$ copies.
Here we show that this large quantum advantage is damped by the presence of noise. 
Specifically, we show the following exponential separation between $\NISQ$ and $\BQP$.

\begin{theorem}[Shadow tomography]\label{thm:shadow_informal}
Any $\NISQ$ algorithm with noise rate $\lambda$ per qubit requires at least $\Omega((1-\lambda)^{-n})$ copies of $\rho$ to learn $|\Tr(P \rho)|$, for all $P \in \{I, X, Y, Z\}^{\otimes n}$ up to a constant error.
In contrast, at most $\mathcal{O}(n)$ copies are needed for $\BQP$.
\end{theorem}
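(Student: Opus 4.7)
The plan is to reduce shadow tomography to a two-hypothesis distinguishing problem and then exploit the exponential suppression of Pauli signals under local depolarizing noise. Fix a small constant $\epsilon > 0$, let $\rho_0 := I/2^n$ be the null, and for each full-weight Pauli $P \in \{X,Y,Z\}^{\otimes n}$ set $\rho_P := (I + \epsilon P)/2^n$. Under $\rho_0$ every target $|\Tr(P'\rho)|$ vanishes; under $\rho_P$ we have $|\Tr(P\rho_P)| = \epsilon$. Thus any algorithm estimating all $|\Tr(P'\rho)|$ to additive error $\epsilon/3$ induces a constant-advantage tester between $\rho_0$ and the uniform mixture over $\{\rho_P\}_P$.

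The technical core is a per-copy total-variation bound. Fix one run of the NISQ device that consumes a single copy of $\rho$, and let $\{E_x\}$ be the effective POVM induced on the $n$-qubit register receiving $\rho$ (obtained by tracing out ancillae and absorbing the noise from state preparation, every gate, and measurement). A Pauli-transfer analysis of the noisy circuit shows that the overlap of $E_x$ with any full-weight Pauli $P$ obeys $\sum_x |\Tr(E_x P)| \le O((1-\lambda)^n \cdot 2^n)$: each of the $n$ qubits carrying $\rho$ encounters at least one local depolarizing layer on the Heisenberg-evolved trajectory of $P$, and the single-qubit identity $\mathcal{N}(Q) = (1-\lambda)^{|Q|}Q$ contributes a multiplicative factor of at least $(1-\lambda)^n$ to the $P$-component of $E_x$. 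Combined with $E_x\succeq 0$ and $\sum_x E_x = I$, this yields
\begin{equation*}
\mathrm{TV}\bigl(\Pr[\,\cdot\,\mid\rho_P],\,\Pr[\,\cdot\,\mid\rho_0]\bigr) \;=\; \tfrac{\epsilon}{2^{n+1}}\sum_x \bigl|\Tr(E_xP)\bigr| \;\le\; O\bigl(\epsilon\,(1-\lambda)^n\bigr).
\end{equation*}

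Assembling the lower bound: a $T$-copy algorithm uses at most $T$ such runs (possibly adaptively), so by the chain rule for total variation the TV between the final transcript distributions under $\rho_0$ and under the uniform mixture over $\{\rho_P\}$ is bounded by $T\cdot O(\epsilon(1-\lambda)^n)$. Constant-advantage discrimination forces this to be $\Omega(1)$, yielding $T = \Omega((1-\lambda)^{-n})$. The matching $\mathcal{O}(n)$ upper bound for $\BQP$ is the standard two-copy Bell-basis shadow tomography for Pauli observables, which simultaneously produces unbiased estimators of $\Tr(P\rho)^2$ for every $P \in \{I,X,Y,Z\}^{\otimes n}$ and concentrates via median-of-means plus a union bound over the $4^n$ Paulis after $\mathcal{O}(n)$ independent pairs.

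The principal technical obstacle is justifying the per-copy Pauli-attenuation estimate in full generality, where copies of $\rho$ may be loaded interleaved with noisy gates, entangled with ancillae and previously loaded copies, and processed under circuits chosen adaptively from past outcomes. The argument is robust to all of this: the cumulative $(1-\lambda)^n$ suppression on the $P$-component of $E_x$ depends only on the depolarizing layers experienced by the $n$ qubits receiving the fresh copy of $\rho$, and is insensitive to what happens on the rest of the device; adaptivity is absorbed cleanly by the TV chain rule across runs; and runs consuming several copies of $\rho$ jointly are handled by applying the same Pauli-attenuation bound on the combined register, which does not improve the order of magnitude of the final sample complexity.
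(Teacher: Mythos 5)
Your proposal is correct and follows essentially the same route as the paper: both reduce to distinguishing $\rho_0 = I/2^n$ from $(I+\epsilon P)/2^n$ for a full-weight Pauli $P$, observe that the local depolarizing layer following each state-loading attenuates the $P$-component by $(1-\lambda)^{|P|}$ (your Heisenberg-picture POVM bound $\sum_x|\Tr(E_xP)|\le (1-\lambda)^n 2^n$ is the dual of the paper's Schr\"odinger-picture bound $\|D_\lambda^{\otimes n}[P/2^n]\|_{\tr}=(1-\lambda)^{|P|}$), and then telescope over the $T$ oracle calls via a hybrid/TV chain-rule argument, with the same Bell-basis two-copy upper bound for $\BQP$. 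The only cosmetic difference is that the paper runs the hybrid argument directly on the full circuit state (its Lemma on the basic hybrid argument), which sidesteps the per-copy effective-POVM bookkeeping you flag in your final paragraph.
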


\noindent Theorem~\ref{thm:shadow_informal} demonstrates that a fault-tolerant quantum algorithm can be exponentially more powerful than any $\NISQ$ algorithm in learning quantum systems.
On the other hand, for small noise rate $\lambda$, the exponential scaling in the lower bound for $\NISQ$ algorithms has a base which is close to one. In \cite{huang2022foundations} it was shown that $\NISQ$ algorithms can achieve $(1 - \lambda)^{-\Theta(n)}$ even for more general noise models.
This suggests that $\NISQ$ algorithms can still perform well for learning quantum systems with a few hundred qubits \cite{huang2022quantum}.

\section{Related Work}
\label{sec:related}

\paragraph{Faulty oracle models.}

A number of works have studied the effect of \emph{imperfect oracles} on quantum speedups. For instance, \cite{muthukrishnan2019sensitivity} studied whether the exponential speedup achieved by the quantum annealing algorithm for the welded tree problem persists when the oracle is subjected to various kinds of noise. Relevant to our Theorem~\ref{thm:grover_informal}, \cite{shenvi2003effects,long2000dominant} considered the performance of Grover's algorithm when the phase oracle is subject to small phase fluctuations, and \cite{regev2008impossibility} showed that under this faulty oracle model speedups are not possible for any quantum algorithm; see also \cite{ambainis2012grover} for a different oracle noise model. Relevant to our Theorem~\ref{thm:BV_informal}, \cite{cross2015quantum} showed that noise in an oracle for subset parity does not affect the computational complexity of quantum learning algorithms in the same way conjectured for classical learning algorithms.

These results assume that noise occurs inside the oracle, but that the quantum computation leveraging the faulty oracle is noiseless. Moreover, the lower bounds in~\cite{muthukrishnan2019sensitivity,regev2008impossibility,ambainis2012grover} assume a global noise on the oracle as opposed to local qubit-wise noise considered in this work.
Accordingly, we view these results as conceptually orthogonal to the thrust of our work: whereas they focus on the effects of imperfections in implementing the underlying oracle, we study the effects of imperfections in the quantum computation due to local noise.

\paragraph{Noiseless hybrid quantum-classical models.}

A number of works have studied the power and limitations of hybrid quantum-classical models when the quantum computation is assumed to be noiseless. For example, the recent work~\cite{rosmanis2022hybrid} studies unstructured search in a noiseless hybrid setting where the algorithm can make queries to both classical and quantum versions of the search oracle. They show that any algorithm with constant success probability must make either $\Omega(\sqrt{N})$ queries to the quantum oracle or $\Omega(N)$ queries to the classical oracle. Earlier works \cite{chia2020need,coudron2020computations} showed that relative to various oracles, namely the recursive Simon's and welded tree problem, $\BQP$ is strictly more powerful than classical computation which is assisted by a noiseless bounded-depth quantum device. The oracle we use in Theorem~\ref{thm:superpoly2} is essentially a simplified version of the recursive Simon's problem, and in Appendix~\ref{app:shuffle}, we show how recursive Simon's itself can be used to simultaneously separate $\NISQ$ and the complexity classes considered in \cite{chia2020need,coudron2020computations} from $\BQP$.

\paragraph{Noise resilience of specific algorithms.}

A number of works have studied how existing algorithms perform under various forms of noise in their implementation. For unstructured search, \cite{shapira2003effect,pablo1999noise} studied whether Grover's algorithm is robust to various deviations like noisy Hadamard gates and Gaussian noise between iterations, while \cite{magniez2007search} demonstrated that recursive amplitude amplification is robust to noisy reflection operators.
\cite{stilck2021limitations,sharma2020noise,fontana2021evaluating,lavrijsen2020classical,wang2021noise}, among others, studied the resilience of specific quantum optimization algorithms like QAOA and VQE under models similar to our definition of $\lambda$-noisy circuits.
\cite{wang2021noise} found that in various regimes, these algorithms suffer significant slowdown when implemented on noisy quantum devices due to the flattening of the cost landscape.
\cite{bouland2022noise} shows that estimating the output probability of random quantum circuits to exponentially small additive error remains $\#\mathsf{P}$-hard even under the presence of small noise.
Furthermore, \cite{stilck2021limitations} showed that the presence of noise can make these quantum algorithms easy to simulate on classical computer.
In contrast, in the present work we study the capabilities and limitations for $\NISQ$ without necessarily focusing on any particular algorithm.

\paragraph{Complexity of noisy quantum circuits.} To our knowledge, the only other paper to study the relation of noisy quantum circuits and existing complexity classes is that of \cite{aharonov1996limitations}, who considered a generalization of our notion of $\lambda$-noisy circuits in which a random fraction of qubits at every layer are adversarially corrupted. Notably, they showed that polynomial-size noisy quantum circuits are no stronger than the complexity class $\QNC^1$ of logarithmic-depth noiseless circuits, whereas quasipolynomial-size noisy quantum circuits can compute any function in $\QNC^1$. Recall that in the present work, rather than study $\lambda$-noisy circuits in isolation, we consider the power of classical computation augmented by such circuits.

\section{Outlook}
\label{sec:outlook}

By abstracting hybrid quantum-classical computation in the NISQ era into a computational complexity class, our work offers a mathematical framework for reasoning about the potential for noisy quantum advantage.
We used tools from quantum query complexity to characterize how $\NISQ$ lies between $\BPP$ and $\BQP$.  On the one hand, the fact that $\NISQ$ can be more powerful than $\BPP$ provides optimism for the NISQ era and the computational advances it may precipitate.
On the other hard, $\NISQ$ being less powerful than $\BQP$ portends that we will have to wait until the advent of fault-tolerant devices to harness the richest features of quantum computation.
Our results for the $\NISQ$ complexity of three well-known problems in quantum computing punctuate our outlook by demonstrating specific promises and pitfalls of computation in the NISQ era.

There are many future directions to pursue with the $\NISQ$ complexity class.
A main open problem is to understand if we could show a separation between $\BPP, \NISQ,$ and $\BQP$ under a standard complexity-theoretic assumption.
It would be desirable to have an exponential oracle separation between $\BPP$ and $\NISQ$, as opposed to one that is merely super-polynomial.
Moreover, one could ask if a similar oracle separation exists between $\BPP$ and $\NISQ$ with the additional restriction that our quantum gates are spatially local, e.g., the gates are restricted to a two-dimensional geometry. 
Perhaps under this additional restriction of geometric locality on $\NISQ$, it is possible to better understand the computational complexity of $\NISQ$ without relying on oracle separations. 
Additionally, it would be valuable to analyze the $\NISQ$ complexity of other natural quantum algorithms, beyond the ones we studied. 
Natural targets include the original Simon's problem (as opposed to our variations thereof, see Appendix~\ref{app:bkw} for one approach in this direction), Forrelation~\cite{aaronson2015forrelation}, Shor's algorithm~\cite{shor1994algorithms,shor1999polynomial}, linear system solving~\cite{harrow2009quantum}, the recent random oracle result of~\cite{yamakawa2022verifiable}, and topological data analysis \cite{akhalwaya2022exponential}, among many others.

Taking a broader view, our work suggests a roadmap for investigating future quantum devices which may go beyond the NISQ era, but fall short of fault-tolerance.  The approach is to formulate a computational complexity class which encapsulates the salient features of whichever quantum devices are contemporary, and then to study that complexity class to make statements about the capabilities of those devices with great generality.  In such a future, there should be complexity classes beyond $\NISQ$, but still intermediate to $\BQP$.  
Whichever generalizations of $\NISQ$ prove fruitful in the future, they will have much to tell us about what computation is possible in that future world.

\paragraph{Acknowledgments}

The authors would like to thank Isaac Chuang for valuable conversations on complexity classes for learning theory, John Preskill for inspiring discussions on $\NISQ$ and its complexity class formulation, and John Wright for bringing~\cite{regev2008impossibility} and related works to our attention.








\bibliographystyle{unsrt}
\bibliography{ref}

\newpage

\appendix


\section{The \texorpdfstring{$\NISQ$}{NISQ} Complexity Class}
\label{sec:nisq}

In this section, we formally define the complexity class $\NISQ$. Then we recall the notion of classical oracles in classical ($\BPP$) and quantum computation $(\BQP)$ and generalize this to $\NISQ$.

\subsection{Definition of the complexity class}
\label{subsec:nisq}

We begin by recalling the single-qubit depolarizing channel $D_\lambda$.

\begin{definition}[Single-qubit depolarizing channel]
    Given $\lambda \in [0, 1]$.
    We define the \emph{single-qubit depolarizing channel} to be $D_\lambda[\rho] \triangleq (1-\lambda) \rho + \lambda (I / 2)$, where $\rho$ is a single-qubit density matrix.
\end{definition}

\begin{definition}[Depth-$1$ unitary]
    Given $n > 0$. An $n$-qubit unitary $U$ is a \emph{depth-$1$ unitary} if $U$ can be written as a tensor product of two-qubit unitaries.
\end{definition}

\noindent We consider noisy quantum circuits with noise level $\lambda$ to be defined as follows.

\begin{definition}[Output of a noisy quantum circuit]
    Let $\lambda \in [0, 1]$ and $n \in \mathbb{N}$. 
    Given $T \in\mathbb{N}$ and a sequence of $T$ depth-$1$ unitaries $U_1, \ldots, U_T$, the output of the corresponding \emph{$\lambda$-noisy depth-$T$ quantum circuit} is a random $n$-bit string $s \in \{0, 1\}^n$ sampled from the distribution
    \begin{equation} \label{eq:prob-NQC}
        p(s) = \bra{s} D_\lambda^{\otimes n}\bigl[U_T \ldots D_\lambda^{\otimes n}\bigl[U_2 D_\lambda^{\otimes n}\bigl[U_1 D_\lambda^{\otimes n}[\ketbra{0^n}{0^n}] U_1^\dagger\bigr] U_2^\dagger \bigr] \ldots U_T^\dagger \bigr] \ket{s} \, ,
    \end{equation}
    where every quantum operation is followed by a layer of single-qubit depolarizing channel. When $\lambda = 0$, we say that this circuit is \emph{noiseless}.
\end{definition}

\begin{remark}\label{remark:noise}
    We work with the single-qubit depolarizing channel as it is the most standard model for local noise. One could also consider stronger noise models, e.g. every qubit is randomly corrupted with probability $\lambda$ by an adversary rather than randomly decohered. Tautologically, the lower bounds we prove in this work will translate to such stronger models. We also prove our upper bounds, namely Theorem~\ref{thm:superpoly1} and \ref{thm:BV_informal}, under this stronger model (see Remarks~\ref{remark:superpoly} and~\ref{remark:BV}).
\end{remark}

\begin{definition}[Noisy quantum circuit oracle]
    We define $\mathrm{NQC}_{\lambda}$ to be an oracle that takes in an integer $n$ and a sequence of depth-$1$ $n$-qubit unitary $\{U_k\}_{k=1,\ldots,T}$ for any $T\in\mathbb{N}$ and outputs a random $n$-bit string $s$ according to Eq.~\eqref{eq:prob-NQC}.
    

    We define the time to query $\mathrm{NQC}_{\lambda}$ with $T$ depth-$1$ $n$-qubit unitaries to be $\Theta(nT)$, which is linear in the time to write down the input to the query.
\end{definition}

\noindent We now define $\NISQ$ algorithms, which are classical algorithms with access to the noisy quantum circuit oracle.
This provides a formal definition for hybrid noisy quantum-classical computation.

\begin{definition}[$\NISQ$ algorithm]
    A \emph{$\NISQ_\lambda$ algorithm with access to $\lambda$-noisy quantum circuits} is defined as a probabilistic Turing machine $M$ that can query $\mathrm{NQC}_{\lambda}$ to obtain an output bitstring $s$ for any number of times, and is denoted as $A_\lambda \triangleq M^{\mathrm{NQC}_\lambda}$. The runtime of $A_\lambda$ is given by the classical runtime of $M$ plus the sum of the times to query $\mathrm{NQC}_\lambda$.
\end{definition}

\noindent The $\NISQ$ complexity class for decision problems is defined as follows. Observe that the following recovers the definition for $\BPP$ when $M^{\mathrm{NQC}_{\lambda}}$ in the definition of $A_\lambda$ above is replaced by $M$.

\begin{definition}[$\NISQ$ complexity]
    A language $L \subseteq \{0, 1\}^*$ is in $\NISQ$ if there exists a $\NISQ_\lambda$ algorithm $A_\lambda$ for some constant $\lambda > 0$ that decides $L$ in polynomial time, that is, such that
    \begin{itemize}
        \item for all $x \in \{0, 1\}^*$, $A_\lambda$ produces an output in time $\mathrm{poly}(|x|)$, where $|x|$ is the length of $x$;
        \item for all $x \in L$, $A_\lambda$ outputs $1$ with probability at least $2/3$;
        \item for all $x \not\in L$, $A_\lambda$ outputs $0$ with probability at least $2/3$.
    \end{itemize}
\end{definition}

\subsection{Algorithms with oracle access}
\label{subsec:oracle}

In this work we study the complexity of learning a classical oracle or testing a property thereof. For instance, the unstructured search problem considers learning a classical oracle that highlights an element among $N$ elements.
We recall the following definition of a classical oracle $O$, as well as definitions of classical/quantum algorithms with access to the classical oracle $O$.

\begin{definition}[Classical oracle $O$]
    A \emph{classical oracle} $O$ is a function from $\{0, 1\}^n$ to $\{0, 1\}^m$ for some $n, m \in \mathbb{N}$.
    The $(n+m)$-qubit unitary $U_O$ corresponding to the classical oracle $O$ is given by $U_O \ket{x}\ket{y} = \ket{x} \ket{y \oplus O(x)}$ for all $x \in \{0, 1\}^n, y \in \{0, 1\}^m$.
\end{definition}

\begin{definition}[Classical algorithm with access to $O$]
    A \emph{classical algorithm $M^O$ with access to $O$} is a probabilistic Turing machine $M$ that can query $O$ by choosing an $n$-bit input $x$ and obtaining the $m$-bit output $O(x)$.
\end{definition}    

\begin{definition}[Quantum algorithm with access to $O$]
    A \emph{quantum algorithm $Q^O$ with access to $O$} is a uniform family of quantum circuits $\{U_n\}_n$, where $U_n$ is an $n'$-qubit quantum circuit given by
    \begin{equation}
        U_n \triangleq V_{n, k} (U_O \otimes \Id) \cdots (U_O \otimes \Id) V_{n, 2} (U_O \otimes \Id) V_{n, 1} \, ,
    \end{equation}
    for some integer $k \in \mathbb{N}$ and $n'$-qubit unitaries $V_{n, 1}, \ldots, V_{n, k}$ given as the product of many depth-$1$ unitaries. Here, $\Id$ denotes the identity matrix over $n' - n$ qubits.
\end{definition}

\noindent We now present the definition of $\NISQ$ algorithms with access to the classical oracle $O$, which requires first defining noisy quantum circuit oracles with access to $O$.

\begin{definition}[Noisy quantum circuit oracle with access to $O$]
    We define $\mathrm{NQC}^O_{\lambda}$ to be an oracle that takes in an integer $n'$ and a sequence of $n'$-qubit unitaries $\{U_k\}_{k=1,\ldots,T}$ for any $T\in\N$, where $U_k$ can either be a depth-$1$ unitary or $U_O \otimes I$, to a random $n$-bit string $s$ sampled according to the distribution
    \begin{equation} \label{eq:prob-NQC-O}
        p(s) = \bra{s} D_\lambda^{\otimes n'}\bigl[U_T \ldots D_\lambda^{\otimes n'}\bigl[U_2 D_\lambda^{\otimes n'}\bigl[U_1 D_\lambda^{\otimes n'}[\ketbra{0^{n'}}{0^{n'}}] U_1^\dagger\bigr] U_2^\dagger \bigr] \ldots U_T^\dagger \bigr] \ket{s} \, .
    \end{equation}
\end{definition}

\begin{definition}[$\NISQ$ algorithm with access to $O$]
    Let $\lambda\in[0,1]$. A $\NISQ_\lambda$ algorithm $A_\lambda^O = (M^{\mathrm{NQC}_{\lambda}})^O$ with access to $O$ is a probabilistic Turing machine $M$ that has the ability to classically query $O$ by choosing the $n$-bit input $x$ to obtain the $m$-bit output $O(x)$, as well as the ability to query $\mathrm{NQC}^O_{\lambda}$ by choosing $n'$ and $\{U_k\}_{k=1,\ldots,T}$ to obtain a random $n'$-bit string $s$. The runtime of $A^O_\lambda$ is given by the sum of the classical runtime of $M$, the number of classical queries to $O$, and the sum of the times to query $\mathrm{NQC}^O_\lambda$.
\end{definition}

\noindent With this definition in hand, we can extend the usual notions of relativized complexity to $\NISQ$:

\begin{definition}[Relativized $\NISQ$]
    Given a sequence of oracles $O: \brc{0,1}^n \to \brc{0,1}^{m(n)}$ parametrized by $n\in\mathbb{N}$, a language $L\subseteq\brc{0,1}^*$ is in $\NISQ^O$ if there exists a constant $\lambda > 0$ and a $\NISQ_\lambda$ algorithm $A^O_\lambda$ with access to $O$ that decides $L$ in polynomial time.
\end{definition}

\subsection{Algorithms of bounded depth}
\label{sec:boundeddepth}

In parts of this work we leverage the well-known connection \cite{aharonov1996limitations} between noisy quantum circuits and noiseless bounded-depth circuits. Here we briefly recall some standard notions regarding the latter, presented in the language of Sections~\ref{sec:nisq}.

\begin{definition}[Noiseless hybrid quantum-classical computation of bounded depth]
    A \emph{noiseless depth-$T$ algorithm} is a $\NISQ_0$ algorithm $A$ that only queries $\mathrm{NQC}_0$ on sequences of depth-1 $n$-qubit unitaries $\brc{U_k}_{k = 1,\ldots,T'}$ for $1\le T' \le T$.
\end{definition}

\begin{definition}[$\BPP^{\QNC}$]
    Let $f:\mathbb{N}\to\mathbb{N}$ be a nondecreasing function. A language $L\subseteq\brc{0,1}^*$ is in $\BPP^{\QNC[f(n)]}$ if there is a noiseless depth-$f(n)$ algorithm $A$ that decides $L$ in polynomial time. When $f(n) = O(\log^i(n))$, we denote this class by $\BPP^{\QNC^i}$. We also define $\BPP^{\QNC} \triangleq \cup_{i\ge 0} \BPP^{\QNC^i}$.
\end{definition}

\noindent Note that $\BPP^{\QNC}$ is contained in the class $\BQP$, as $\BQP$ can implement arbitrary \emph{polynomial}-depth quantum computation.

We can also define noiseless depth-$T$ algorithms with access to a classical oracle, as well as relativized versions of $\BPP^{\QNC^i}$ which we denote by $(\BPP^{\QNC^i})^O$, completely analogously to what is done in Section~\ref{subsec:oracle}.






\section{Preliminaries}




\subsection{Learning tree formalism and Le Cam's method}

We begin by recalling the learning tree formalism of \cite{chen2021exponential}, adapted here to the setting of $\NISQ$. This formalism will feature heavily in the proofs of our lower bounds against $\NISQ$.

\begin{definition}[Tree representation for $\NISQ$ algorithms] \label{def:tree_nisq}
    Given oracle $O: \brc{0,1}^n\to\brc{0,1}^m$, a $\NISQ_\lambda$ algorithm with access to $O$ can be associated with a pair $(\mathcal{T},\mathcal{A})$ as follows. The \emph{learning tree} $\mathcal{T}$ is a rooted tree, where each node in the tree encodes the transcript of all classical query and noisy quantum circuit results the algorithm has seen so far. The tree satisfies the following properties:
    \begin{itemize}[leftmargin=*,itemsep=0pt]
        \item Each node $u$ is associated with a value $p_O(u)$ corresponding to the probability that the transcript observed so far is given by the path from the root $r$ to $u$. In this way, $\calT$ naturally induces a distribution over its leaves. For the root $r$, $p_O(r) = 1$.
        \item At each non-leaf node $u$, we either classically query the oracle $O$ at an input $x\in\brc{0,1}^n$, or run a $\lambda$-noisy quantum circuit $A$ with access to $O$.
        \begin{enumerate}[itemsep=0pt,label=(\roman*)]
            \item \underline{Classical query}: $u$ has a single child node $v$ connected via an edge $(u,x,O(x))$, and we define 
            \begin{equation}
                p_O(v) = p_O(u) \, .
            \end{equation}
            \item \underline{Noisy circuit query}: The children $v$ of $u$ are indexed by the possible $s\in\brc{0,1}^{n'}$ that could be obtained as a result. We refer to the edge between $u$ and $v$ as $(u,A,s)$. We denote by $\ket{\phi_O(A)}$ the output state of the circuit so that the probability of traversing $(u,A,s)$ from node $u$ to child $v$ is given by $|\braket{s|\phi_O(A)}|^2$. We define 
            \begin{equation}
                p_O(v) = p_O(u) \cdot |\braket{s | \phi_O(A)}|^2 \, .
            \end{equation}
        \end{enumerate}
        \item If the total number of classical/quantum queries to $O$ made along any root-to-leaf path is at most $N$, we say that the \emph{query complexity} of the algorithm is at most $N$.
    \end{itemize}    
    $\mathcal{A}$ is any classical algorithm that takes as input a transcript corresponding to any leaf node $\ell$ and attempts to determine the underlying oracle or predict some property thereof.
\end{definition}


\begin{figure}
    \centering
    \includegraphics[width=0.99\textwidth]{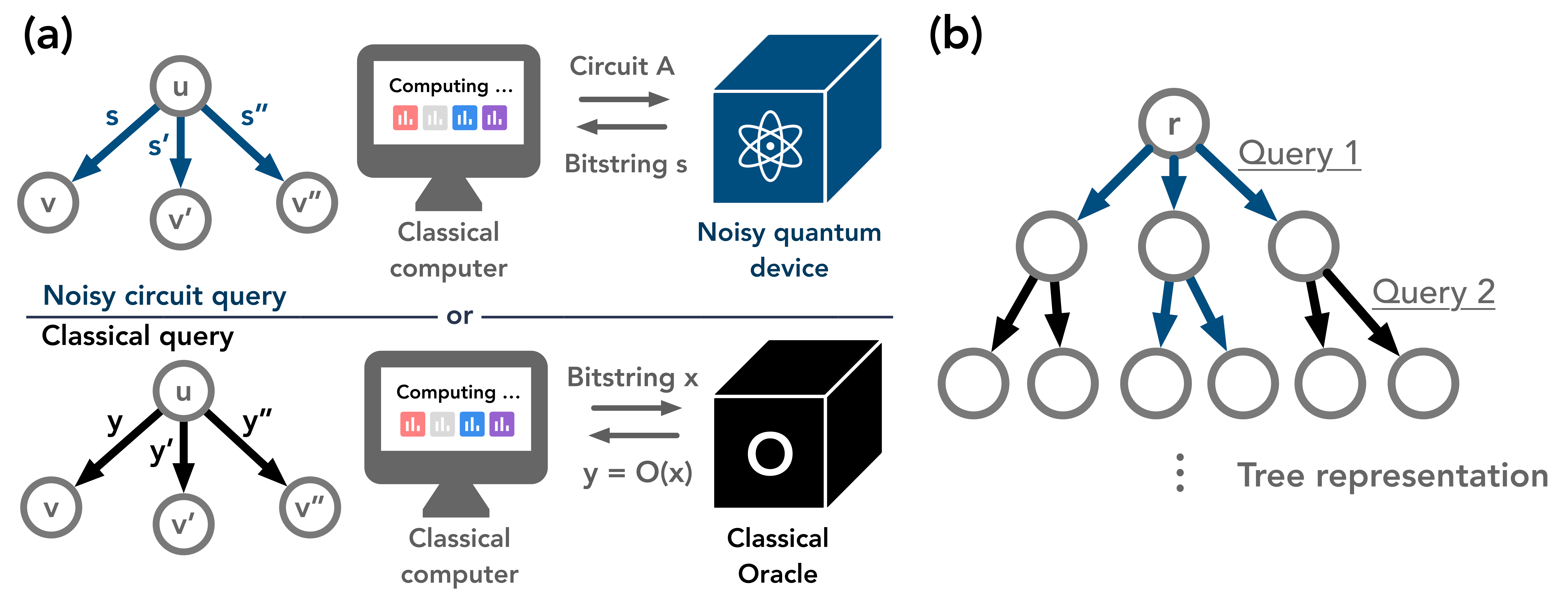}
    \caption{Illustration of the tree representation for $\NISQ$ algorithms. (a) At every memory state $u$ of the classical computer/algorithm, it could either make a noisy circuit query or a classical query. (b) The tree representation with a mix of noisy circuit queries and classical queries.}
    \label{fig:tree}
\end{figure}

\noindent The following lemma shows that slight perturbations to the distributions over children for each node do not change the overall distribution over leaves of $\calT$ by too much.

\begin{lemma}\label{lem:perturbchildren}
    Given learning tree $\calT$ corresponding to a $\NISQ_\lambda$ algorithm with query complexity $N$, suppose $\calT'$ is a learning tree obtained from $\calT$ as follows. For every node $u$ at which a noisy quantum circuit $A$ is run, replace $A$ by another circuit $A'$ such that the new induced distribution over children of $u$ is at most $\epsilon$-far from the original distribution in total variation. Then the distributions over leaves of $\calT$ and $\calT'$ are at most $\epsilon N$-far in total variation.
\end{lemma}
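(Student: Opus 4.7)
The plan is to prove this via a standard coupling argument. Let $p$ and $p'$ denote the distributions over leaves induced by $\calT$ and $\calT'$, respectively. Since $\calT'$ is obtained from $\calT$ by replacing the circuit run at each noisy-query node but keeping the underlying tree structure otherwise intact, the set of possible transcripts is the same in both trees; only the transition probabilities at noisy-query nodes are perturbed. This is exactly the setting where a chain-rule-style bound on total variation applies.

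First, I would construct a coupling $\pi$ of $p$ and $p'$ by jointly simulating the two random root-to-leaf paths node by node. Starting at the root, at each node $u$ reached by both processes, do the following: if $u$ is a classical query node, then by Definition~\ref{def:tree_nisq} the child is deterministic given the transcript, so both processes advance to the same child; if $u$ is a noisy circuit query node, then use the maximal coupling between the original child distribution and the perturbed child distribution, which by hypothesis are at TV distance at most $\epsilon$, so the two processes agree on the next child with probability at least $1 - \epsilon$. As soon as the two paths select different children, let the remaining steps be sampled independently from their respective conditional distributions.

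Under $\pi$, the only way the two coupled paths can ever diverge is at a noisy circuit query node that both processes reach, and at each such node the probability of divergence is at most $\epsilon$. Since the query complexity along any root-to-leaf path is at most $N$, there are at most $N$ such opportunities, and a union bound gives
\begin{equation}
\Pr_{\pi}[\text{the two sampled leaves differ}] \le \epsilon N \, .
\end{equation}
The coupling characterization of total variation, $d_{TV}(p, p') \le \Pr_{\pi}[\text{outputs differ}]$, then yields $d_{TV}(p, p') \le \epsilon N$ as desired.

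The one subtle point to keep in mind—though it does not affect the final bound—is that once the coupled paths diverge, they enter different subtrees of $\calT$ and $\calT'$, so the per-node coupling can no longer be defined at subsequent nodes. The argument sidesteps this issue by bounding only the probability of \emph{ever} diverging, which is determined entirely by the (at most $N$) noisy circuit queries along the common prefix before the first disagreement. Equivalently, one could phrase the same argument as a telescoping sum: writing $p(\ell)$ and $p'(\ell)$ as products of edge probabilities along the path to $\ell$ and bounding $|p(\ell) - p'(\ell)|$ one factor at a time via $|ab - a'b'| \le |a - a'| b + a'|b - b'|$, then summing over leaves, recovers the bound $\epsilon N$ without any additional difficulty.
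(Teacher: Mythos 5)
Your proof is correct, but it takes a different route from the paper. The paper proceeds by a hybrid argument: it defines a sequence of intermediate trees $\calT^{(0)} = \calT, \calT^{(1)}, \ldots, \calT^{(N)} = \calT'$, where $\calT^{(i)}$ replaces the circuits at layer $i$ of $\calT^{(i-1)}$, and then bounds $\tvd(p^{(i)}, p^{(i-1)}) \le \epsilon$ by viewing both leaf distributions as mixtures of the same conditional distributions $p_v$ (over leaves given that node $v$ in layer $i$ is reached) with mixing weights that differ by at most $\epsilon$ in total variation; the triangle inequality over the $N$ swaps then gives $\epsilon N$. You instead run the two processes jointly on the shared tree structure, use the maximal coupling at each noisy-query node, and union-bound the probability of ever diverging by $\epsilon N$, invoking the coupling characterization of total variation. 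Both arguments are standard proofs of subadditivity of TV distance under sequential composition and yield the identical bound; your telescoping remark at the end is essentially the paper's hybrid argument in product form, so you have in effect supplied both proofs. One shared, minor imprecision in both your argument and the paper's: the bound really counts the number of \emph{nodes} (i.e., rounds of interaction) along a root-to-leaf path rather than the number of oracle queries, and the two coincide only under the natural reading that each node contributes at least one query; this does not affect the intended conclusion.
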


\begin{proof}
    Consider the sequence of trees $\calT^{(i)}$ where $\calT^{(0)} = \calT$ and $\calT^{(i)}$ is given by taking all $u$ in layer $i$ of $\calT^{(i-1)}$ that run some noisy quantum circuit $A$ and replacing them with the corresponding circuit $A'$ from $\calT'$. By design, $\calT^{(N)} = \calT'$. Let $p^{(i)}$ denote the distribution over leaves of $\calT^{(i)}$. It suffices to show that $\tvd(p^{(i)}, p^{(i-1)}) \le \epsilon$. 
    
    Note that $p^{(i-1)}$ specifies some mixture over distributions $p_v$, where $p_v$ is the distribution over leaves conditioned on reaching node $v$ in the $i$-th layer. In particular, in this mixture, $v$ is sampled by sampling parent node $u$ by running the $\NISQ$ algorithm corresponding to $\calT'$ for $i-1$ steps and then running the corresponding quantum circuit $A$ from $\calT$. In contrast, $p^{(i)}$ is a mixture over the same distributions $p_v$, but $v$ is sampled by running the $\NISQ$ algorithm corresponding to $\calT'$ for $i$ steps and then running the corresponding quantum circuit $A'$ from $\calT'$. These two distributions over $v$ are at most $\epsilon$-far in total variation, so the two mixture distributions are also at most $\epsilon$-far in total variation as claimed.
\end{proof}

Our lower bounds will be based on Le Cam's method\--- see Section 4.3 of \cite{chen2021exponential} for an overview in the context of the tree formalism of Definition~\ref{def:tree_nisq}. In every case we will reduce to some \emph{distinguishing task} in which the algorithm must discern whether the oracle it has access to comes from one family of oracles or from another. For example, for unstructured search, the distinguishing task will be whether the oracle corresponds to some element in the search domain or whether the oracle is the identity channel.

More concretely, given two disjoint sets of oracles $S_0, S_1$, we will design distributions $D_0, D_1$ over $S_0, S_1$. Given any algorithm specified by some $(\mathcal{T},\mathcal{A})$, we will upper bound the total variation distance between the following two distributions. We consider the mixture of distributions $p_{O_i}$ over leaves of the learning tree when the underlying oracle $O_i$ is sampled according to $D_0$ at the outset, as well as the mixture when the oracle is sampled according to $D_1$. The following lemma shows that upper bounding $\tvd(\mathbb{E}_{i\sim D_0}[p_{O_i}],\mathbb{E}_{i\sim D_1}[p_{O_i}])$ suffices to show a query complexity lower bound for the distinguishing task:

\begin{lemma}[Le Cam's two-point method, see e.g. Lemma 4.14 from \cite{chen2021exponential}]\label{lem:lecam}
    Let $\brc{O_i}_{i\in S_0}$ and $\brc{O_i}_{i\in S_1}$ be two disjoint sets of oracles. Given a tree $\calT$ as in Definition~\ref{def:tree_nisq} corresponding to a $\NISQ$ algorithm that makes $N$ oracle queries, let $p_i$ denote the induced distribution over leaves when the algorithm has access to $O_i$. If $\tvd(\mathbb{E}_{i\sim D_0}[p_0],\mathbb{E}_{i\sim D_1}[p_1]) < 1/3$, there is no algorithm $\calA$ that maps transcripts $T$ corresponding to leaves of $\calT$ to $\brc{0,1}$ which can distinguish between $S_0$ and $S_1$ with advantage $1/3$.\footnote{By this, we mean that for all $a\in\brc{0,1}$ and $i\in S_a$, $\Pr[T\sim p_i]{\calA(T) = \bone{i \in S_0}} \ge 2/3$.}
\end{lemma}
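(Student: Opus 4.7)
The plan is to establish the contrapositive by a standard data-processing argument for total variation distance. Suppose for contradiction there were a classical post-processing algorithm $\mathcal{A}$ distinguishing $S_0$ from $S_1$ with the claimed advantage, i.e., for each $a \in \{0,1\}$ and $i \in S_a$, $\Pr_{T \sim p_i}[\mathcal{A}(T) = a] \ge 2/3$. Averaging over $i \sim D_a$, and writing $\bar p_a \triangleq \mathbb{E}_{i \sim D_a}[p_i]$ for the mixture distribution over leaves of $\calT$ induced by drawing the oracle from $D_a$, we get
\begin{equation}
\Pr_{T \sim \bar p_0}[\mathcal{A}(T) = 0] \;\ge\; 2/3 \qquad \text{and} \qquad \Pr_{T \sim \bar p_1}[\mathcal{A}(T) = 0] \;\le\; 1/3.
\end{equation}

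Next I would use the well-known fact that total variation distance is the maximal advantage achievable by a binary test, or equivalently its monotonicity under post-processing: for any measurable event $E$ (here $E = \{T : \mathcal{A}(T) = 0\}$),
\begin{equation}
\bigl|\Pr_{T \sim \bar p_0}[E] - \Pr_{T \sim \bar p_1}[E]\bigr| \;\le\; \tvd(\bar p_0, \bar p_1).
\end{equation}
Combining the two displays yields $\tvd(\bar p_0, \bar p_1) \ge 2/3 - 1/3 = 1/3$, contradicting the hypothesis $\tvd(\bar p_0, \bar p_1) < 1/3$.

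To justify the monotonicity step cleanly inside the tree formalism of Definition~\ref{def:tree_nisq}, I would note that $\bar p_a$ is a well-defined probability distribution on the leaves of $\calT$ (since the branching probabilities at each node only depend on the oracle through $p_O(\cdot)$, and integrating against $D_a$ produces a valid distribution over leaves), and then invoke the standard variational formula $\tvd(\mu,\nu) = \sup_E |\mu(E) - \nu(E)|$. No properties specific to $\NISQ$ are needed beyond the fact that each oracle induces a distribution over leaves, which is exactly what the tree representation provides.

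The proof is essentially a routine specialization of the classical Le Cam two-point inequality to the leaf distribution over $\calT$, so no real obstacle is anticipated; the only subtlety worth flagging is making sure the mixture $\bar p_a$ is treated as a single distribution over leaves (rather than as a random distribution indexed by the oracle) so that the data-processing bound applies directly.
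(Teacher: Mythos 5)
Your proof is correct and is exactly the standard argument behind Le Cam's two-point method; the paper does not prove this lemma itself but cites it (Lemma 4.14 of \cite{chen2021exponential}), and your contrapositive-plus-data-processing derivation is the canonical proof of that cited result. The only point worth noting is that if $\calA$ is randomized one should appeal to the data-processing inequality for total variation (or condition on the internal randomness) rather than a single deterministic event, but this is an immediate extension and does not affect correctness.
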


\subsection{Basic hybrid argument}

Here we describe a standard template for showing quantum query complexity lower bounds via a hybrid argument.

\begin{lemma}\label{lem:hybrid_basic}
    Let $\calE_0, \calE_1$ be quantum channels on $n$ qubits such that for all pure states $\sigma$, we have $\norm{(\calE_0 - \calE_1)[\sigma]}_{\tr} \le \epsilon$. Let $A$ be any depth-$T$ quantum circuit with access to one of the two channels, and let $s \in \brc{0,1}^n$ be the random string output by the circuit. Let $p_0, p_1$ denote the distribution over $s$ when $A$ has access to $\calE_0, \calE_1$ respectively. Then $\tvd(p_0,p_1) \le \epsilon T$.
\end{lemma}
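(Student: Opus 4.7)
The plan is to proceed via a standard hybrid argument. Let $T'\le T$ be the number of channel queries that $A$ actually makes, and for each $k\in\{0,1,\ldots,T'\}$, let $q_k$ denote the distribution over the output string when the first $k$ queries are answered by $\calE_1$ and the remaining $T'-k$ queries are answered by $\calE_0$. By construction $q_0 = p_0$ and $q_{T'} = p_1$, so the triangle inequality yields
\begin{equation}
    \tvd(p_0,p_1) \;\le\; \sum_{k=1}^{T'} \tvd(q_{k-1},q_k)\,,
\end{equation}
and it will suffice to show $\tvd(q_{k-1},q_k)\le\epsilon$ for each $k$, since summing and using $T'\le T$ gives the claimed bound.

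For a fixed $k$, I would let $\rho_k$ denote the $n$-qubit state of $A$ immediately before the $k$-th channel application. Because every operation preceding this query is identical in the two hybrids producing $q_{k-1}$ and $q_k$, the state $\rho_k$ is the same in both. After this query, the hybrids differ only in whether $\calE_0$ or $\calE_1$ was applied to $\rho_k$, and the remaining suffix---a sequence of unitaries followed by the final computational-basis measurement---is a common CPTP map composed with a measurement. Monotonicity of trace distance under quantum operations then gives
\begin{equation}
    \tvd(q_{k-1},q_k) \;\le\; \norm{(\calE_0 - \calE_1)[\rho_k]}_{\tr}\,.
\end{equation}

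The main subtlety is that $\rho_k$ may be mixed, while the hypothesis only bounds $\norm{(\calE_0-\calE_1)[\sigma]}_{\tr}$ for pure $\sigma$. To close this gap, I would decompose $\rho_k = \sum_j \alpha_j \ketbra{\psi_j}{\psi_j}$ as a convex combination of pure states and invoke linearity of $\calE_0-\calE_1$ together with the triangle inequality for the trace norm:
\begin{equation}
    \norm{(\calE_0 - \calE_1)[\rho_k]}_{\tr} \;\le\; \sum_j \alpha_j \norm{(\calE_0 - \calE_1)[\ketbra{\psi_j}{\psi_j}]}_{\tr} \;\le\; \epsilon\,.
\end{equation}
Substituting this into the telescoping sum above completes the proof. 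The entire argument is routine; the only nontrivial ingredient is this convexity reduction from mixed to pure inputs, which could alternatively be handled by purifying $\rho_k$ on an ancilla and extending the two channels trivially to the ancilla register.
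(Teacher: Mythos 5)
Your proposal is correct and follows essentially the same hybrid argument as the paper: telescoping over the query at which the two channels are swapped, bounding each increment via monotonicity of the trace norm under the common suffix, and reducing from mixed intermediate states to pure ones by convexity. The only cosmetic difference is that you telescope at the level of output distributions while the paper telescopes at the level of the final quantum states before invoking the diagonal/$L_1$ bound, which is equivalent.
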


\begin{proof}
    Let $\calE = \calE_s$ for $s\in\brc{0,1}$, and define the channel $\mathcal{U}_i$ which acts by $\mathcal{U}_i(\sigma) = U_i \sigma U_i^\dagger$ where $U_i$ is an associated unitary operator. We proceed via a hybrid argument. The output state of the circuit is given by
    \begin{equation}
        \sigma^s = \mathcal{U}_T \circ \calE \circ \cdots \circ \mathcal{U}_2 \circ \calE \circ \mathcal{U}_1[\ketbra{0^n}{0^n}]
    \end{equation}
    for some unitaries $U_1,\ldots,U_T$.
    For $s' = 1 - s$ and $1 \le i \le T$ define
    \begin{equation}
        \sigma^{(i)} \triangleq \mathcal{U}_T \circ \calE_s \circ \cdots \circ \mathcal{U}_{i+1} \circ \calE_s \circ \mathcal{U}_i \circ \calE \circ \cdots \circ \mathcal{U}_2 \circ \calE \circ \mathcal{U}_1[|0^n\rangle\langle 0^n|]\,.
    \end{equation}
    Then
    \begin{align}
        \norm{\sigma^s - \sigma^{s'}}_{\tr} &= \biggl\|\sum^T_{i=1}\sigma^{(i)} - \sigma^{(i-1)}\biggr\|_\tr \le \sum^T_{i=1} \|\sigma^{(i)} - \sigma^{(i-1)}\|_\tr \\
        &\le \sum^T_{i=1} \norm{(\calE - \calE_s) \circ \mathcal{U}_{i-1} \circ \calE \circ \cdots \circ \mathcal{U}_2\circ \calE \circ \mathcal{U}_1[|0^n\rangle \langle 0^n|]}_\tr \le T\sup_\sigma \norm{(\calE - \calE_s)[\sigma]}_\tr,
    \end{align}
    where the supremum is over all density matrices. By convexity of the trace norm, this bound still holds when the supremum is restricted to pure states $\sigma$. By assumption, the above quantity is $\epsilon T$. The total variation distance between $p_1$ and $p_2$ as defined in lemma statement is simply the $L_1$ distance between the diagonals of $\sigma^s$ and $\sigma^{s'}$, which is upper bounded by $\norm{\sigma^s - \sigma^{s'}}_\tr \le \epsilon T$.
\end{proof}


\section{Super-Polynomial Oracle Separations}
\label{sec:super}

\subsection{\texorpdfstring{$\NISQ$}{NISQ} vs. \texorpdfstring{$\BPP$}{BPP}}
\label{sec:robust_simon}

This section is devoted to proving the following theorem.
\begin{theorem}[Restatement of Theorem~\ref{thm:superpoly1}]
$\BPP^{O_1} \subsetneq \NISQ^{O_1}$ relative to a classical oracle ${O_1}$.
\end{theorem}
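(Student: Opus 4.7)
The plan is to construct a noise-robust variant of Simon's problem as the oracle $O_1$, show that $\NISQ$ can solve it in polynomial time via a noise-tolerant Simon's subroutine, and reduce the classical version to standard Simon's to import the classical super-polynomial query lower bound.

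First I would set up the oracle. Let $n = n(n')$ grow super-logarithmically in the problem size $n'$, pick a classical error-correcting code $C : \{0,1\}^n \to \{0,1\}^{n'}$ of constant rate and relative distance safely exceeding $2\lambda$, and a Simon function $f : \{0,1\}^n \to \{0,1\}^n$ that is either $2$-to-$1$ (with secret XOR mask $s$) or $1$-to-$1$. Define $\wt{f}(z) \triangleq f(\mathrm{decode}_C(z))$, so that each level set $A_x = \{z : \mathrm{decode}_C(z) = x\}$ is a Hamming ball around $C(x)$ of radius large enough to absorb the typical bit-flip count from a single layer of per-qubit depolarizing noise on $n'$ qubits.

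For the $\NISQ$ upper bound I would run Simon's in encoded form: prepare $\sum_x |C(x)\rangle$ on the $n'$-qubit register via a low-depth encoder, query $\wt{f}$ into an $n$-qubit output register, measure the output to collapse the input register onto $|C(x_0)\rangle + |C(x_0 \oplus s)\rangle$, apply the inverse encoder, and then Hadamard and measure the small $n$-qubit logical register. The robustness of $\wt{f}$ is the engine of the argument: a hybrid argument in the spirit of Lemma~\ref{lem:hybrid_basic} replaces each depolarizing layer on the encoding qubits by an equivalent channel that the oracle annihilates, because any noise pattern inside the decoding radius leaves $\wt{f}(z)$ unchanged. The residual decoherence that survives the oracle then acts on a much smaller effective register, so with the parameters tuned appropriately the per-run success probability stays inverse-polynomial in $n'$; polynomially many iterations plus Gaussian elimination (or a BKW-style solver, if the Simon constraints are only mildly noisy) then recover $s$ in polynomial time.

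For the $\BPP$ lower bound I would use the obvious black-box reduction: any classical algorithm $A$ making $q$ queries to $\wt{f}$ yields a classical algorithm $A'$ for standard Simon's on $n$ bits with $q$ queries to $f$, since $A'$ can answer each $\wt{f}(z)$ from a single query to $f$ at the classically computable value $\mathrm{decode}_C(z)$. The standard classical query lower bound $\Omega(2^{n/2})$ for Simon's then gives $q = \Omega(2^{n/2})$, which is super-polynomial in $n'$ by the choice $n = \omega(\log n')$.

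The main obstacle will be the rigorous ``noise absorption'' step in the $\NISQ$ upper bound. The depolarizing noise acts coherently on the superposition $\sum_x |C(x)\rangle$ rather than on a fixed codeword, and it is not obvious a priori which off-diagonal density-matrix contributions survive the noise layers, the oracle query, and the output measurement. Making this precise requires carefully commuting each noise layer past the oracle using the decoding structure and bounding the resulting trace-distance error across the full sequence of layers. Tuning $n$ simultaneously so that the code's decoding radius absorbs the typical noise burst, the quantum subroutine remains polynomial time, and $2^{n/2}$ stays super-polynomial in $n'$ is the final delicate balancing act --- and is precisely why the separation produced is super-polynomial rather than fully exponential.
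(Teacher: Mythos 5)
Your high-level idea --- robustify the Simon oracle by composing $f$ with the decoder of an error-correcting code, so that the level sets $A_x$ absorb noise on the input register --- is the same starting point as the paper's Section~\ref{sec:robust_simon}. However, there are two genuine gaps in the way you propose to exploit it, and together they are exactly the reason the paper ends up importing the full fault-tolerance machinery of \cite{aharonov1997fault,aharonov1996limitations} rather than using a single classical code.

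First, a classical code only protects the \emph{oracle input} against bit flips, but depolarizing noise also applies phase errors to the superposition $\sum_x \ket{C(x)}$, and Simon's algorithm lives or dies by the coherence of that superposition. No amount of ``commuting noise layers past the oracle using the decoding structure'' repairs $Z$-type errors on $\Theta(\lambda n')$ physical qubits, and your inverse encoder will propagate them onto the logical register. This is why the paper uses a CSS quantum computation code (correcting both error types), concatenates it recursively so that errors stay $(r,k)$-sparse at every level, implements the final Hadamard transversally in encoded form, and proves a dedicated stability lemma (Lemma~\ref{lem:stability1}) showing the classical oracle \emph{preserves} the deviation metric rather than annihilating noise. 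Crucially, the paper never decodes inside the quantum circuit: the measurement itself is fault-tolerant and the decoding is done classically by recursive majority vote on the outcomes (Lemma~\ref{lem:endofprotocol}), and even the encoded state preparation needs the $3^t$-ancilla distillation trick of Theorem~\ref{thm:majority} because fresh qubits cannot be injected mid-circuit.

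Second, your parameters do not close. A constant-rate code forces $n = \Theta(n')$, and any $n = \omega(\log n')$ makes the probability that your bare $n$-qubit logical register survives the inverse-encode/Hadamard/measure layers equal to $(1-\lambda)^{\Omega(n)} = (n')^{-\omega(1)}$, i.e.\ super-polynomially small, not inverse-polynomial; whereas $n = O(\log n')$ kills the classical lower bound. What you are left with is a noisy-parity instance with bias $1-2\eta \le 1/\mathrm{poly}(n')$, and your ``BKW-style solver'' fallback is precisely the \emph{conditional} route the paper describes in Appendix~\ref{app:bkw}: Theorem~\ref{thm:bkw} only gives $(1-2\eta)^{-\sqrt{\log n}}$, which is super-polynomial here, so this path yields a separation only under an unproven strengthening of BKW. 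Your classical lower bound via the decode-and-simulate reduction is fine; it is the $\NISQ$ upper bound that needs the full encoded, never-decoded, fault-tolerantly-measured pipeline to go through unconditionally.
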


\noindent Our basic strategy is to modify the Simon's oracle into a new classical oracle such that the new oracle is robust to noise.
We note that a $\NISQ$ algorithm is unable to implement known fault-tolerant quantum computation schemes that can run for any arbitrary quantum circuit with a polynomial number of gates.
However, we will still take inspiration from a fault-tolerant quantum computation scheme \cite{aharonov1997fault} to define a certain ``robustified Simon's oracle'' relative to which we obtain a super-polynomial separation between $\BPP$ and $\NISQ$. 
As we will show, because the fault-tolerant scheme of \cite{aharonov1997fault} is robust not just to local depolarizing noise but to arbitrary local noise occurring with sufficiently small constant rate, the $\NISQ$ algorithm that we give will ultimately be robust under this stronger noise model as well (see Remark~\ref{remark:superpoly}).

\subsubsection{Recursively-defined concatenated code}
\label{subsec:CSS}

We consider a Calderbank-Shor-Steane (CSS) code built from two classical linear codes $C_1, C_2$, where $C_1 \triangleq C$ is a punctured doubly-even self-dual code and $C_2 \triangleq C^\perp$ (we refer the reader to \cite{aharonov1997fault} for background on these notions).
We consider $C_1, C_2$ to be over $m$ classical bits.
The corresponding CSS code encodes a single logical qubit into $m$ physical qubits.
Let $\vec{1}_m$ denote the all-ones vector of length $m$ (when the subscript is clear from context, we will omit it). The two code words in the CSS code are given by
\begin{equation}
\label{E:S0S1}
    \ket{S_0} = \frac{1}{\sqrt{|C^\perp|}} \sum_{w \in C^\perp} \ket{w}, \quad \ket{S_1} = \frac{1}{\sqrt{|C^\perp|}} \sum_{w \in C^\perp} \ket{w \oplus \vec{1}_m},
\end{equation}
where $\oplus$ denotes addition over $\mathbb{Z}_2^m$ (i.e., it is the bit-wise XOR).
Denote by $d$ the number of errors that can be corrected by the CSS code.
The two parameters $m$ and $d$ are both considered to be constant.
We define
\begin{align}
\label{E:A0}
A_0 &\triangleq \left\{ w \oplus x \, \Big| \, w \in C^\perp, x \in \{0, 1\}^m, |x| \leq d \right\} \\
\label{E:A1}
A_1 &\triangleq \left\{ w \oplus x \, \Big| \, w \in C^\perp \oplus \vec{1}, x \in \{0, 1\}^m, |x| \leq d \right\},
\end{align}
where $C^\perp \oplus \vec{1}$ denotes the set $\{x \oplus \vec{1} \,|\, x \in C^\perp\}$ and $|x|$ is the number of $1$'s in $x$.

\begin{lemma}[Disjointness of $A_0$ and $A_1$] \label{lem:disjoint-A01}
With the above definitions, we have
\begin{equation} \label{eq:disjoint-faults}
    A_0 \cap A_1 = \emptyset.
\end{equation}
\end{lemma}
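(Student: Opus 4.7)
The plan is to argue by contradiction: suppose $y \in A_0 \cap A_1$, and derive a contradiction using the distance properties of the CSS code that guarantee it corrects $d$ errors.

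Concretely, if $y \in A_0 \cap A_1$, then by the definitions in \eqref{E:A0} and \eqref{E:A1} we can write
\begin{equation}
y = w_0 \oplus x_0 = w_1 \oplus \vec{1} \oplus x_1
\end{equation}
for some $w_0, w_1 \in C^\perp$ and some $x_0, x_1 \in \{0,1\}^m$ with $|x_0|, |x_1| \le d$. Rearranging,
\begin{equation}
(w_0 \oplus w_1) \oplus \vec{1} = x_0 \oplus x_1.
\end{equation}
Since $C^\perp$ is a linear code, $w_0 \oplus w_1 \in C^\perp$, so the left-hand side is an element of the coset $C^\perp \oplus \vec{1}$. The right-hand side has Hamming weight at most $|x_0| + |x_1| \le 2d$.

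The key input is then the statement that every element of $C^\perp \oplus \vec{1}$ has Hamming weight strictly greater than $2d$. This is exactly the distance property that makes the CSS code defined via \eqref{E:S0S1} capable of correcting $d$ errors: the vector $\vec{1}$ plays the role of (a representative of) the logical $X$ coset, and correctability of $d$ errors requires the minimum weight of this coset to be at least $2d+1$. I would quote this as the defining property of the punctured doubly-even self-dual code construction of \cite{aharonov1997fault}, which is precisely what the parameter $d$ is chosen to guarantee. Applied here, it contradicts the fact that $x_0 \oplus x_1$ has weight at most $2d$, completing the proof.

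The only genuine content is step involving the coset-weight bound; once that is in place, the argument is a two-line linear-algebra manipulation. The ``obstacle,'' such as it is, is simply making precise the appeal to the error-correction property of the underlying CSS code, which is a standard fact about CSS codes built from nested classical codes with appropriate minimum distance and has been established in the references we are citing.
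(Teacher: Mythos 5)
Your proof is correct and follows essentially the same route as the paper's: both start from the two representations $w_0 \oplus x_0 = w_1 \oplus \vec{1} \oplus x_1$ of a hypothetical common element and derive a contradiction with the code's ability to correct $d$ errors. The only difference is in how that final appeal is packaged: the paper phrases it quantumly, showing that the weight-$\le d$ Pauli errors $E_1, E_2$ built from $x_0, x_1$ satisfy $E_1\ket{S_0} = E_2\ket{S_1}$ and hence $\bra{S_1}E_2 E_1\ket{S_0} = 1$, contradicting "the definition of $d$," whereas you state the equivalent classical fact directly \--- that the coset $C^\perp \oplus \vec{1}$ has minimum weight at least $2d+1$ \--- which is arguably cleaner and is in fact the form the paper itself extracts from this lemma later (in the proof of Lemma~\ref{lem:disjointAr}).
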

\begin{proof}
This lemma follows from the definition of $d$.
Assume that the intersection is non-empty, i.e., $A_0 \cap A_1 \neq \emptyset$.
Hence $w_1 \oplus x_1 = w_2 \oplus \vec{1} \oplus x_2$ for some $w_1, w_2 \in C^\perp$, $x_1, x_2 \in \{0, 1\}^m$ with $|x_1|, |x_2| \leq d$.
Let us define $E_1 = \bigotimes_{i=1}^m X^{x_{1 i}}$, $E_2 = \bigotimes_{i=1}^m X^{x_{2 i}}$ for single-qubit Pauli $X$.
Then,
\begin{align}
    E_1 \ket{S_0} &= \frac{1}{\sqrt{|C^\perp|}} \sum_{w \in C^\perp} \ket{w \oplus x_1} = \frac{1}{\sqrt{|C^\perp|}} \sum_{w \in C^\perp} \ket{w \oplus w_1 \oplus x_1}\\
    &= \frac{1}{\sqrt{|C^\perp|}} \sum_{w \in C^\perp} \ket{w \oplus w_2 \oplus x_2 \oplus \vec{1}} = \frac{1}{\sqrt{|C^\perp|}} \sum_{w \in C^\perp} \ket{w \oplus \vec{1} \oplus x_2} = E_2 \ket{S_1}.
\end{align}
Hence $\bra{S_1} E_2 E_1 \ket{S_0} = 1$, where $E_1$ and $E_2$ are Pauli operators with weight at most $d$.
This contradicts the definition of $d$.
\end{proof}

\noindent We now recall the recursive concatentation construction from \cite{aharonov1997fault}. 
Given an integer $r > 0$, we define the following code va $r+1$ levels of recursion.
Each level encodes a single qubit using the above CSS code over $m$ qubits from the previous level.
Hence, a single qubit in the top level is encoded by a total of $m^{r}$ qubits in the bottom level.
More formally, for each $r$, we will define two sets $B^{(r)}_0, B^{(r)}_1$ over $m^r$-bit strings as follows.
These sets contain the computational basis states that are used to span the recursively-defined concatenated code.

\begin{definition}[Basis of the concatenated code]
\label{def:basisconcat1}
For $r = 1$, $B^{(1)}_0 \triangleq C^\perp$ and $B^{(1)}_1 \triangleq C^\perp \oplus \vec{1}$.
For $r > 1$, we define $B^{(r)}_0, B^{(r)}_1$ recursively,
\begin{align}
    B^{(r)}_0 &\triangleq \left\{ (v_1, \ldots, v_m) \in \{0, 1\}^{m^r} \, \Big| \, w \in C^\perp, v_i \in B^{(r-1)}_{w_i}, \forall i = 1, \ldots, m \right\}, \\
    B^{(r)}_1 &\triangleq \left\{ (v_1, \ldots, v_m) \in \{0, 1\}^{m^r} \, \Big| \, w \in C^\perp \oplus \vec{1}, v_i \in B^{(r-1)}_{w_i}, \forall i = 1, \ldots, m \right\}.
\end{align}
\end{definition}
\noindent The two code words in the recursively-defined concatenated code are then given by
\begin{equation}
\label{E:codewords1}
    \ket{R_b} = \frac{1}{\sqrt{|B^{(r)}_b|}} \sum_{x \in B^{(r)}_b} \ket{x}, \quad b \in \{0, 1\}.
\end{equation}
For each $r$, we also define two sets $A^{(r)}_0, A^{(r)}_1$ over $m^r$-bit strings that correspond to the neighborhoods around $B^{(r)}_0, B^{(r)}_1$ induced by errors.

\begin{definition}[Neighborhood of $B^{(r)}_0, B^{(r)}_1$]
For $r = 1$, $A^{(1)}_0 \triangleq A_0$ and $A^{(1)}_1 \triangleq A_1$.
By Eq.~\eqref{eq:disjoint-faults}, $A^{(r)}_0 \cap A^{(r)}_1 = \emptyset$.
For $r > 1$, we define $A^{(r)}_0, A^{(r)}_1$ recursively,
\begin{align}
    A^{(r)}_0 &\triangleq \left\{ (v_1, \ldots, v_m) \in \{0, 1\}^{m^r} \, \Big| \, w_0 \in C^\perp, x_0 \in \{0, 1\}^m, |x_0| \leq d, v_i \in A^{(r-1)}_{w_{0i}} \ \forall i \ \text{s.t.} \  x_{0i} = 0 \right\}, \\
    A^{(r)}_1 &\triangleq \left\{ (v_1, \ldots, v_m) \in \{0, 1\}^{m^r} \, \Big| \, w_1 \in C^\perp \oplus \vec{1}, x_1 \in \{0, 1\}^m, |x_1| \leq d, v_i \in A^{(r-1)}_{w_{1i}} \ \forall i \ \text{s.t.} \  x_{1i} = 0 \right\}.
\end{align}
\end{definition}

\noindent We can prove the following two lemmas.

\begin{lemma}[Structure of $A^{(r)}_0$ and $A^{(r)}_1$] \label{lem:strucAr}
For all $r \geq 1$, we have
\begin{equation}
    A^{(r)}_0 \oplus \vec{1} = A^{(r)}_1.
\end{equation}
\end{lemma}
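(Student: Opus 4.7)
The plan is to prove this by straightforward induction on $r$. The key observation is that the recursive definition of $A^{(r)}_b$ has a natural symmetry: flipping $w_0 \in C^\perp$ to $w_0 \oplus \vec{1}_m \in C^\perp \oplus \vec{1}_m$ toggles the label on each ``sub-codeword constraint,'' while the error vector $x$ (which marks which coordinates are unconstrained) is left alone. Combined with the inductive hypothesis that complementation swaps $A^{(r-1)}_0$ and $A^{(r-1)}_1$, this will produce the desired swap at level $r$.

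For the base case $r=1$, the claim is immediate from the explicit definitions \eqref{E:A0} and \eqref{E:A1}: since $A_0 = \{w \oplus x : w \in C^\perp, |x|\le d\}$, XOR-ing with $\vec{1}_m$ pushes $C^\perp$ to $C^\perp \oplus \vec{1}_m$ while leaving the error part $x$ untouched, yielding $A_1$.

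For the inductive step, suppose $A^{(r-1)}_0 \oplus \vec{1}_{m^{r-1}} = A^{(r-1)}_1$. Take $(v_1,\ldots,v_m) \in A^{(r)}_0$ with witnesses $w_0 \in C^\perp$ and $x_0 \in \{0,1\}^m$, $|x_0|\le d$, so that $v_i \in A^{(r-1)}_{w_{0i}}$ whenever $x_{0i}=0$. Note that
\begin{equation}
(v_1,\ldots,v_m) \oplus \vec{1}_{m^r} = (v_1 \oplus \vec{1}_{m^{r-1}}, \ldots, v_m \oplus \vec{1}_{m^{r-1}}),
\end{equation}
and by the inductive hypothesis $v_i \oplus \vec{1}_{m^{r-1}} \in A^{(r-1)}_{1 - w_{0i}}$ whenever $x_{0i}=0$ (when $x_{0i}=1$ there is no constraint on $v_i$, so none on $v_i \oplus \vec{1}_{m^{r-1}}$ either). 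Choose $w_1 \triangleq w_0 \oplus \vec{1}_m \in C^\perp \oplus \vec{1}_m$ and $x_1 \triangleq x_0$. Since $w_{1i} = 1 - w_{0i}$ and $x_{1i} = x_{0i}$, these witnesses certify that $(v_1,\ldots,v_m) \oplus \vec{1}_{m^r} \in A^{(r)}_1$. Hence $A^{(r)}_0 \oplus \vec{1}_{m^r} \subseteq A^{(r)}_1$. The reverse inclusion follows symmetrically (the same argument with the roles of $0$ and $1$ swapped, using that $w_1 \oplus \vec{1}_m \in C^\perp$), or equivalently by noting $\oplus \vec{1}_{m^r}$ is an involution.

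There is no real obstacle here; the only thing to be careful about is correctly bookkeeping the recursive structure, namely verifying that the ``unconstrained'' coordinates (those $i$ with $x_{0i}=1$) remain unconstrained after XOR-ing with $\vec{1}_{m^r}$, and that the ``constrained'' coordinates have their labels flipped in a way that matches the shift from $C^\perp$ to $C^\perp \oplus \vec{1}_m$. The disjointness statement $A^{(r)}_0 \cap A^{(r)}_1 = \emptyset$ mentioned in the definition is not needed for this lemma (it is used elsewhere), and in any case would itself follow by induction combined with Lemma~\ref{lem:disjoint-A01}.
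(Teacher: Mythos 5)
Your proof is correct and follows essentially the same route as the paper's: induction on $r$, with the base case read off from the definitions of $A_0$ and $A_1$, and the inductive step carried out by taking the witnesses $w_1 = w_0 \oplus \vec{1}$ and $x_1 = x_0$ together with the inductive hypothesis applied blockwise. The only cosmetic difference is that you obtain the reverse inclusion by noting that XOR with $\vec{1}$ is an involution, whereas the paper repeats the symmetric argument; both are fine.
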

\begin{proof}
    We consider a proof by induction on $r \geq 1$.
    By definition of $A_0$ and $A_1$, we have $A^{(1)}_0 \oplus \vec{1} = A^{(1)}_1$, which establishes the base case of $r = 1$.
    For $r > 1$, we show that for any $(v_1, \ldots, v_m) \in A^{(r)}_0$, we have $(v_1, \ldots, v_m) \oplus \vec{1} \in A^{(r)}_1$.
    Consider $w_0, x_0$ corresponding to $(v_1, \ldots, v_m)$.
    Using $v_i \in A^{(r-1)}_{w_{0 i}}$ for all $i$ with $x_{0 i} = 0$ and the inductive hypothesis that $A^{(r-1)}_{0} \oplus \vec{1} = A^{(r-1)}_{1}$,
    we have $v_i \oplus \vec{1} \in A^{(r-1)}_{w_{0 i} \oplus 1}$ for all $i$ with $x_{0 i} = 0$.
    Hence, by considering $w_1 = w_0 \oplus \vec{1}$ and $x_1 = x_0$, we have $(v_1, \ldots, v_m) \oplus \vec{1} \in A^{(r)}_1$.
    Similarly, we can show that for any $(v_1, \ldots, v_m) \in A^{(r)}_1$, we have $(v_1, \ldots, v_m) \oplus \vec{1} \in A^{(r)}_0$. Therefore, we have shown that $A^{(r)}_0 \oplus \vec{1} = A^{(r)}_1$.
\end{proof}

\begin{lemma}[Disjointness of $A^{(r)}_0$ and $A^{(r)}_1$] \label{lem:disjointAr} \label{lem:disjoint-Ar01}
For all $r \geq 1$, we have
\begin{equation}
    A^{(r)}_0 \cap A^{(r)}_1 = \emptyset.
\end{equation}
\end{lemma}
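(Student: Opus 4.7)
The plan is to induct on $r$, with the base case $r = 1$ being exactly Lemma~\ref{lem:disjoint-A01} (since $A^{(1)}_0 = A_0$ and $A^{(1)}_1 = A_1$). For the inductive step, assume $A^{(r-1)}_0 \cap A^{(r-1)}_1 = \emptyset$ and suppose for contradiction that there is some $(v_1, \ldots, v_m) \in A^{(r)}_0 \cap A^{(r)}_1$. By the definitions of $A^{(r)}_0$ and $A^{(r)}_1$, this yields witnesses $w_0 \in C^\perp$, $w_1 \in C^\perp \oplus \vec{1}$, and $x_0, x_1 \in \{0,1\}^m$ with $|x_0|, |x_1| \le d$, such that $v_i \in A^{(r-1)}_{w_{0i}}$ whenever $x_{0i} = 0$ and $v_i \in A^{(r-1)}_{w_{1i}}$ whenever $x_{1i} = 0$.

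The key observation is that for every index $i$ with $x_{0i} = x_{1i} = 0$, the element $v_i$ belongs to both $A^{(r-1)}_{w_{0i}}$ and $A^{(r-1)}_{w_{1i}}$. By the inductive hypothesis, this forces $w_{0i} = w_{1i}$. Consequently, $w_0$ and $w_1$ can disagree only on the at most $|x_0| + |x_1| \le 2d$ coordinates where $x_{0i} = 1$ or $x_{1i} = 1$. Hence the Hamming weight of $y \triangleq w_0 \oplus w_1$ is at most $2d$. At the same time, since $w_0 \in C^\perp$ and $w_1 \in C^\perp \oplus \vec{1}$, we have $y \in C^\perp \oplus \vec{1}$.

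To reach a contradiction, I will reduce back to the base level. Since $|y| \le 2d$, I can split the support of $y$ into two disjoint sets of size at most $d$, giving a decomposition $y = x_1' \oplus x_2'$ with $|x_1'|, |x_2'| \le d$. Then
\begin{equation}
    w_0 \oplus x_1' \;=\; w_1 \oplus x_2',
\end{equation}
and the left-hand side lies in $A_0$ while the right-hand side lies in $A_1$ by definition (Eqs.~\eqref{E:A0}--\eqref{E:A1}). This contradicts Lemma~\ref{lem:disjoint-A01}, completing the induction.

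The only step that requires care is the bookkeeping argument that $w_0$ and $w_1$ must agree on all coordinates outside the supports of $x_0$ and $x_1$; once this is pinned down, the reduction to the base case is essentially immediate. I do not anticipate any substantial technical obstacles: the lemma is a clean structural fact that rides entirely on the inductive hypothesis together with the already-established base case.
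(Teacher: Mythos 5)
Your proof is correct and follows essentially the same route as the paper: induction on $r$ with base case Lemma~\ref{lem:disjoint-A01}, and an inductive step that exploits the fact that the inductive hypothesis forces $w_0$ and $w_1$ to agree on all coordinates outside the supports of $x_0$ and $x_1$, which is incompatible with the $\geq 2d+1$ separation between $C^\perp$ and $C^\perp \oplus \vec{1}$. The only cosmetic difference is that you argue the contrapositive and explicitly derive the distance bound from $A_0 \cap A_1 = \emptyset$ via the splitting $y = x_1' \oplus x_2'$, whereas the paper states that bound directly and locates a disagreeing coordinate outside the supports.
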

\begin{proof}
The base case is given in Lemma~\ref{lem:disjoint-A01}.
Now for the induction step, assume $A^{(r-1)}_0 \cap A^{(r-1)}_1 = \emptyset$.
Eq.~\eqref{eq:disjoint-faults} implies that the minimum Hamming distance between any two bitstrings in $C^\perp$ and $C^\perp \oplus \vec{1}$ is at least $2d + 1$.
Hence, for any $w_1 \in C^\perp$ and $w_2 \in C^\perp \oplus \vec{1}$, after removing at most $2d$ bits (the bits with $x_{1i} = 1$ or $x_{2i} = 1$), there still exists an index $i$ among the rest of the bits (i.e., the bits with $x_{1i} = 0$ and $x_{2i} = 0$) such that $w_{1i} \neq w_{2i}$.
Because $A^{(r-1)}_{w_{1i}} \cap A^{(r-1)}_{w_{2i}} = \emptyset$ by the induction hypothesis, we have $A^{(r)}_0 \cap A^{(r)}_1 = \emptyset$.
\end{proof}

\subsubsection{Robustified Simon's problem}

Given a large enough integer $n$, 
we consider Simon's problem over $n' = 2^{\Theta(\log(n)^{c})}$ bits for a constant $0 < c < 1$.
Here $1/c$ corresponds to the constant $c_2$ from Theorem 10 of \cite{aharonov1997fault}.
We consider $r = \Theta(\log \log (n'))$ and encode each of the $n'$ bits using $m^r$ bits.
Because $m = \mathcal{O}(1)$, we have $m^r n' = 2^{\Theta(\log(n)^{c})} < n$ for large enough $n$.

Given a classical function $f_s: \{0, 1\}^{n'} \rightarrow \{0, 1\}^{n'}$ from Simon's problem with secret string $s \in \{0, 1\}^{n'}$, we define a classical function $\wt{f}_s: \{0, 1\}^n \rightarrow \{0, 1\}^{m^r n'}$ as follows.
Let $x$ be an $n$-bit string. We focus on the first $m^r n'$ bits of $x$ and divide them into $n'$ $m^r$-bit strings as $x_1, \ldots, x_{n'}$.
We first define $\wt{f}^0_s : \{0, 1\}^{n} \rightarrow \{0, 1\}^{n'}$ as follows,
\begin{equation}
\label{E:tildefsdef0}
    \wt{f}^0_s(x) \triangleq \begin{cases}
    f_s(b_1 \ldots b_{n'}), & \mathrm{if} \,\, \exists! \, b_1, \ldots, b_{n'} \in \{0, 1\}, \,\, \mathrm{s.t.} \,\, x_i \in A^{(r)}_{b_i}, \forall i = 1, \ldots, n',\\
    0^{n'}, & \mathrm{otherwise}
    \end{cases}\,.
\end{equation}
We use $\exists!$ to denote ``there exists a unique choice''.
Because $A^{(r)}_{0}$ and $A^{(r)}_{1}$ are disjoint, there either exists a unique choice of $b_1, \ldots, b_{n'}$ or does not exist any choice of $b_1, \ldots, b_{n'}$ that satisfies $x_i \in A^{(r)}_{b_i}, \forall i = 1, \ldots, n'$.  Letting $[\wt{f}_s^0(x)]_k$ denote the $k$th bit of $\wt{f}_s^0(x)$, we define the function $\wt{f}_s : \{0,1\}^{n} \to \{0,1\}^{m^r n'}$
by
\begin{equation}
\label{E:tildefsdef1}
    \wt{f}_s(x) \triangleq \left( \underbrace{[\wt{f}^0_s(x)]_1,\ldots\,,[\wt{f}^0_s(x)]_1}_{m^r\text{ times}}\,, \ldots, \underbrace{[\wt{f}^0_s(x)]_{n'},\ldots\,,[\wt{f}^0_s(x)]_{n'}}_{m^r\text{ times}} \right) \in \{0, 1\}^{m^r n'}.
\end{equation}
The function $\wt{f}_s$ can be considered as the robust version of $f_s$, where the output bitstring is stable over a large number of bitstrings.

Let $U_{\wt{f_s}}$ be the unitary from Eq.~\eqref{eq:oracleaction1}.
\begin{equation}\label{eq:oracleaction1}
    U_{\wt{f_s}} \ket{x}\ket{y} = \ket{x} \ket{y \oplus {\wt{f_s}}(x)}, \quad \forall x \in \{0, 1\}^n, y \in \{0, 1\}^{m^r n'}\,.
\end{equation}
We denote by $O_{\wt{f}_s}$ the oracle which applies this unitary.  Then we have the following theorem, which is the main result of this section and implies Theorem~\ref{thm:superpoly1}:
\begin{theorem}\label{thm:superspeedup}
    For $\lambda$ sufficiently small, there is a $\NISQ_\lambda$ algorithm which, given oracle access to $O_{\wt{f}_s}$, can determine whether $f_s$ is 2-to-1 or 1-to-1 with constant advantage in time at most $\mathcal{O}(\text{\rm poly}(n))$. By contrast, any classical algorithm with access to $O_{\wt{f}_s}$ requires at least $\Omega(\text{\rm superpoly}(n))$ time, to determine whether $f_s$ is 2-to-1 or 1-to-1 with constant advantage.  Thus, relative to oracles $O$ of this form, $\BPP^O \subsetneq \NISQ^O$. 
\end{theorem}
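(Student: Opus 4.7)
The plan is to separately establish the $\NISQ$ upper bound and the classical lower bound, leveraging the code structure from Section~\ref{subsec:CSS}.

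For the $\NISQ$ upper bound, the strategy is to simulate Simon's algorithm on the $n'$ logical qubits using the recursively-concatenated CSS code. Each of the $n'$ logical qubits is encoded into $m^r$ physical qubits, consuming $m^r n' = 2^{\Theta(\log^c n)} \le n$ physical qubits overall. The single-shot $\NISQ$ circuit proceeds by: (i) preparing an (imperfect) encoded $|0^{n'}\rangle$ in the first register using the fault-tolerant $|0\rangle$-preparation gadget from~\cite{aharonov1997fault}; (ii) applying transversal Hadamards to the first register (valid because the CSS code is built from a doubly-even self-dual code, so logical H is transversal); (iii) querying the oracle $U_{\wt f_s}$; (iv) applying transversal Hadamards again; (v) measuring all physical qubits at the end. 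The crucial design choice is that, by Lemmas~\ref{lem:strucAr} and~\ref{lem:disjointAr}, $\wt f_s$ outputs $f_s(b_1\ldots b_{n'})$ on any $x$ whose blocks lie in $A^{(r)}_{b_1}\times\cdots\times A^{(r)}_{b_{n'}}$, i.e.\ on any encoded input perturbed by a correctable error pattern per block, so the oracle acts as the intended logical Simon's oracle without requiring prior syndrome extraction. A key enabling observation is that Simon's algorithm can be run \emph{without} the intermediate measurement of the output register: the statistics of measuring the first register after the second round of logical Hadamards are identical whether or not the second register is observed, since the two Hadamard layers and the final measurement act on a disjoint register from the output. With $r = \Theta(\log\log n')$ levels of concatenation and constant noise rate $\lambda$ below the threshold of~\cite{aharonov1997fault}, standard fault-tolerance analysis tracking error propagation through $O(\mathrm{polylog}(n'))$ layers of transversal gates and one oracle call pushes the logical failure probability per shot to $1/\mathrm{poly}(n')$. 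After running the circuit $O(n')$ times and classically decoding each block by $A^{(r)}_b$-membership, standard Gaussian elimination over $\mathbb{F}_2$ completes Simon's post-processing.

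For the classical lower bound, the idea is that classical oracle access to $\wt f_s$ is no more useful than classical oracle access to the underlying Simon's function $f_s$. Given any classical query $x \in \{0,1\}^n$ to $\wt f_s$, one can in polynomial time test whether a (necessarily unique by Lemma~\ref{lem:disjointAr}) decoding $b_1,\ldots,b_{n'}$ with $x_i \in A^{(r)}_{b_i}$ exists; if not, the output is $0^{m^r n'}$, and if so, the output is completely determined by a single query to $f_s$ at $b_1\ldots b_{n'}$, padded by $m^r$-fold repetition. Hence any $Q$-query classical algorithm for the robustified problem converts into a $Q$-query classical algorithm for standard Simon's on $n'$ bits. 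Since the randomized classical query complexity of Simon's problem is $2^{\Omega(n')}$ by the standard birthday lower bound, and $n' = 2^{\Theta(\log^c n)}$ grows superpolynomially in $n$ for any fixed $c \in (0,1)$, any classical algorithm for the robustified problem requires $2^{\Omega(n')} = \mathrm{superpoly}(n)$ queries.

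The main obstacle is the noise analysis for the $\NISQ$ upper bound: the Aharonov-Ben-Or threshold theorem in its standard form relies heavily on periodic syndrome extraction and correction via mid-circuit measurements, which are forbidden in the $\NISQ$ model. The resolution is that Simon's circuit, modulo the single oracle call, has only $O(1)$ layers of logical gates after the preparation gadget, so errors cannot compound catastrophically before the single terminal measurement, and the neighborhood structure $A^{(r)}_b$ hard-coded into the oracle absorbs up to the code's correction radius of physical errors per block at the precise moment the oracle is invoked. What remains is a careful but standard bookkeeping showing that, with $r = \Theta(\log\log n')$ levels of concatenation, the probability of exceeding the correctable error fraction in any block throughout the entire noisy circuit (including the final noisy measurement) is inverse polynomial in $n'$. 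Because this argument only uses the local, constant-rate nature of the noise, it extends verbatim to more general local noise models, matching the strengthening claimed in Remark~\ref{remark:superpoly}.
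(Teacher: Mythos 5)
Your overall architecture matches the paper's: encode Simon's algorithm in the recursively concatenated CSS code of \cite{aharonov1997fault}, rely on the disjoint neighborhoods $A^{(r)}_0, A^{(r)}_1$ so that the oracle itself tolerates correctable error patterns, skip the measurement of the output register, and reduce the classical lower bound to the standard birthday bound for Simon's on $n' = 2^{\Theta(\log^c n)}$ bits (your classical-lower-bound reduction is correct and is essentially what the paper uses). However, there is a genuine gap in your $\NISQ$ upper bound, and you have misdiagnosed the central obstacle. You state that the difficulty with the threshold theorem is its reliance on ``mid-circuit measurements,'' and you resolve it by noting that only $O(1)$ logical layers follow the preparation gadget. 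But the actual obstruction in the $\NISQ$ model is the inability to \emph{initialize fresh ancilla qubits} after time zero: the fault-tolerant $\ket{0}$-preparation gadget (and the fault-tolerant measurement/decoding gadget) of \cite{aharonov1997fault} each have depth $\mathrm{polylog}(n')$ and consume fresh low-entropy ancillas throughout that depth, and in the $\NISQ$ model any qubit sitting idle since time zero has already accrued constant entropy per layer by the time it is needed. Your ``standard bookkeeping'' therefore does not go through as stated: without a mechanism for supplying clean ancillas mid-circuit, even the preparation stage fails. The paper's resolution is Theorem~4 of \cite{aharonov1996limitations} (stated as Theorem~\ref{thm:majority}): a qubit needed clean at time $t$ is distilled by a depth-$t$ majority-vote circuit acting on $3^t$ qubits all initialized at time zero. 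This exponential-in-depth qubit overhead is precisely what forces the problem size to be $n' = 2^{\Theta(\log^c n)}$, so that $3^{\mathrm{polylog}(n')} \cdot n'\,\mathrm{polylog}(n') \le \mathrm{poly}(n)$; your proposal inherits this choice of $n'$ from the setup but never uses it, which is a sign the key mechanism is missing.

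A secondary, smaller gap: you treat the claim that the oracle ``absorbs up to the code's correction radius of physical errors per block'' as immediate from Lemmas~\ref{lem:strucAr} and~\ref{lem:disjointAr}. In the paper this is Lemma~\ref{lem:stability1}, whose proof requires showing that an $(r,k)$-deviated \emph{density matrix} (not just a corrupted basis state) has support confined to $S_{A^c}$ on the undeviated qubits, and then tracking how the phase kickback interacts with the $m^r$-fold repeated output encoding and the ancilla register being traced out. The statement is true but is one of the substantive lemmas of the section rather than a bookkeeping step.
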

\noindent Our strategy for proving this theorem is to combine the following two ingredients. First, we draw on tools underlying the error correction scheme in Theorem 10 of~\cite{aharonov1997fault}. This scheme requires the algorithm to be able to initialize constant-error noisy zero states in the middle of the circuit, which our computational model does not allow for. Our second proof ingredient is then to leverage the noisy majority vote approach in Theorem 4 of~\cite{aharonov1996limitations} for distilling constant-error noisy zero states in the middle of the circuit from constant-error noisy zero states prepared at the beginning of the circuit. Putting these together, we can utilize a noisy quantum machine with $n' \log(n')^{c_1} \times 2^{\Theta(\log(n')^c)} = \mathrm{poly}(n)$ qubits (and a similar number of gates) and noise rate $\lambda < \lambda_0$ (for a small constant $\lambda_0$) to run an encoded version of Simon's algorithm on the oracle $O_{\wt{f}_s}$.
We will find that our classical oracle, which essentially implements a classical error correction code, interfaces nicely with the quantum error correction scheme of~\cite{aharonov1997fault}, enabling our algorithm to work.

\subsubsection{Simulating fault-tolerant Simon's in \texorpdfstring{$\NISQ$}{NISQ} with exponential overhead}

Recall that Simon's algorithm consists of three steps: (i) prepare the all-plus state on the first half of the qubits and the all-zero state on the second half of the qubits, (ii) query the oracle, and (iii) apply the Hadamards to the first half of the qubits followed by measuring them in the computational basis. In this section, we describe how to implement encoded versions of the first and last steps. The main guarantees for these steps are respectively given by Lemmas~\ref{lem:stateprep}-\ref{lem:stateprep2} and Lemma~\ref{lem:endofprotocol} below.

To set up the proof, we require some definitions from~\cite{aharonov1997fault} so that we can state the needed theorems from~\cite{aharonov1997fault} and~\cite{aharonov1996limitations}.  Accordingly, let us recall the following notions of $(r, k)$-sparse sets and $(r, k)$-deviated states from \cite{aharonov1997fault}.
\begin{definition}[$(r, k)$-sparse set]
An $(r, k)$-sparse set of qubits over many blocks of the $m^r$ qubits is defined recursively as follows.
A set $A$ of qubits over many blocks of $m$ qubits is $(1, k)$-sparse if and only if every block has at most $k$ qubits that are in $A$.
A set $A$ of qubits over many blocks of $m^r$ qubits is $(r, k)$-sparse if and only if for every block, by treating the $m^r$ qubits as $m$ sub-blocks of $m^{r-1}$ qubits, there are at most $k$ sub-blocks that are not $(r-1, k)$-sparse.
\end{definition}
\begin{definition}[$(r, k)$-deviate]
    A state $\rho$ is said to be $(r, k)$-deviated from $\rho'$ if $k$ is the minimum integer such that there exists an $(r, k)$-sparse set of qubits $A$, such that $\rho_{A^c} = \rho'_{A^c}$.
    Here, we denote $\rho_{A^c}$ to be the reduced density matrix of $\rho$ on the qubits not in set $A$.
\end{definition}

\noindent We will also need the definitions of location and quantum computation code from~\cite{aharonov1997fault}.

\begin{definition}[Location]
    A set $(q_1,\ldots,q_a,t)$ is a \emph{location} in a quantum circuit $Q$ if the qubits $q_1,\ldots,q_a$ participated in the same gate in $Q$ at layer $t$, and no other qubit participated in that gate. If a qubit $q$ did not participate in any gate at layer $t$, then $(q,t)$ is a location in $Q$ as well.
\end{definition}

\begin{definition}[Quantum computation code]
A quantum code $C$ encoding a single logical qubit into $m$ physical qubits is called a \emph{quantum computation code} if it is accompanied with a universal set of gates $\mathcal{G}$ with fault tolerant procedures, including fault tolerant encoding, decoding, and error correction procedures. Moreover, we require that \emph{(i)} all procedures use only gates from $\mathcal{G}$, and \emph{(ii)} the error correction procedure takes any $m$-qubit density matrix to a state in the code space.

Apart from state encoding and decoding, the quantum computation code $C$ also encodes any gate $g \in \mathcal{G}$ into a circuit $P(g)$ such that for any pure state $\ket{\psi}$, $P(g)$ maps the state encoding of $\psi$ under $C$ to the state encoding of $g\ket{\psi}$.
If $g$ is a $k$-qubit gate, then the circuit $P(g)$ acts on $k$ blocks of $m$ qubits (each block encoding one logical qubit) possibly with other ancilla qubits.
\end{definition}



\noindent We are now prepared to recall the threshold theorem of \cite{aharonov1997fault}, which we state in the context of probabilistic qubit-wise noise (with probability $\lambda$, an arbitrary single-qubit quantum channel is applied on the qubit). In the rest of the section, we consider any failure probability $\delta > 0$, and a (noiseless) quantum circuit $Q$ with $2 n'$ input qubits, depth $t$, and $v$ locations. 
Let $C$ be a quantum computation code with gates $\mathcal{G}$ that corrects $d$ errors.
Let
\begin{equation}
V: \mathbb{C}^2 \to (\mathbb{C}^2)^{\otimes m^r}    
\end{equation}
be the encoding map for the code given by recursively concatenating $C$  a total of $r = \mathcal{O}(\log\log(v / \delta))$ times.
Using key lemmas for establishing the threshold theorem from \cite{aharonov1997fault} (Theorem 10 therein), we can extract the following two results:

\begin{theorem}[Lemma 8 and 10 from \cite{aharonov1997fault}]
\label{thm:Doritfault1}
    There is an absolute constant $\lambda_c \in[0,1]$ such that for any $\lambda < \lambda_c$, there exists a $\lambda$-noisy quantum circuit $Q'$ which can initialize ancillary qubits at any time (these ancillary qubits are also subject to qubit-wise noise of $\lambda$) during the computation and satisfies the following.
    $Q'$ operates on $m^r n'$ qubits and has depth $\mathcal{O}(t\, \mathrm{polylog}(v/\delta))$,
    and the output state $\rho$ of $Q'$ is $(r, d)$-deviated from
    \begin{equation}
        V^{\otimes n'} Q \ketbra{0^{n'}}{0^{n'}} Q^\dagger (V^{\otimes n'})^\dagger
    \end{equation}
    with probability $1 - \delta$ over the local noise.
\end{theorem}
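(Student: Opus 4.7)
} The strategy is to directly invoke the fault-tolerant simulation construction of \cite{aharonov1997fault}, applied to the code obtained by recursively concatenating the quantum computation code $C$ a total of $r = \Theta(\log\log(v/\delta))$ times. Specifically, I would build $Q'$ by traversing the circuit $Q$ location-by-location: replace each single-location gate $g\in\mathcal{G}$ at location $(q_1,\ldots,q_a,t)$ with the encoded fault-tolerant implementation $P(g)$ under the $r$-fold concatenated code, and insert the fault-tolerant error correction procedure of the code after every encoded gate on every block. The initial state preparation is replaced by the fault-tolerant encoding subroutine applied to fresh ancilla blocks initialized at depolarizing rate $\lambda$, which is permitted because the statement allows $Q'$ to initialize ancillary qubits mid-circuit (each new ancilla is subject to the same local noise). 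Since each gate is simulated by a sub-circuit of depth $\mathrm{polylog}(v/\delta)$ after the $r$-level concatenation, the total depth of $Q'$ is $\mathcal{O}(t \cdot \mathrm{polylog}(v/\delta))$ and the total number of qubits used in parallel is $m^r n'$ after accounting for ancillas absorbed into the block structure.

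The correctness analysis proceeds by induction on the level of concatenation, exactly as in Lemma~8 of \cite{aharonov1997fault}. At the base level, single-qubit depolarizing noise at rate $\lambda < \lambda_c$ (for $\lambda_c$ the threshold constant of the inner code) means that between two consecutive error correction rounds, the probability that more than $d$ qubits in a single $m$-qubit block are corrupted is bounded by some $\lambda' < \lambda_c$, where $\lambda'$ depends polynomially on $\lambda$. Lemma~8 of \cite{aharonov1997fault} then lifts this to the inductive statement: conditioned on the event that at level $k$ the state is $(k,d)$-deviated from the ideal encoded state, the probability that at level $k+1$ it fails to be $(k+1,d)$-deviated is again suppressed by another polynomial. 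Iterating $r$ levels and using the doubly exponential improvement of recursive concatenation, the total probability of a level-$r$ failure over all $v\cdot \mathrm{polylog}(v/\delta)$ locations can be union-bounded by $\delta$, provided $r = \Theta(\log\log(v/\delta))$ is chosen large enough. This is precisely the content of Lemma~10 of \cite{aharonov1997fault}.

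The step I expect to be the most delicate to reproduce in our setting is verifying that the noise model we use (single-qubit depolarizing on every qubit after every gate, including on freshly initialized ancillas) fits into the "local stochastic noise" model assumed by \cite{aharonov1997fault}. This is essentially routine: depolarizing noise is a convex combination of Pauli errors, so one can couple it to an i.i.d.\ Bernoulli-$\lambda$ process of adversarial single-qubit channels and apply the AB97 analysis to the coupled process. A second subtlety is that our construction will later need to interface cleanly with the classical robustified Simon's oracle, which forces us to encode \emph{computational-basis} input registers; however at this stage the theorem is stated for the $|0^{n'}\rangle$ input, so no extra argument is needed here, and I would simply record that the encoder $V^{\otimes n'}$ is applied to prepared ancilla blocks. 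Putting the fault-tolerant simulation together with the probabilistic level-by-level bound gives the claimed $(r,d)$-deviation of the output from $V^{\otimes n'} Q\ketbra{0^{n'}}{0^{n'}}Q^\dagger (V^{\otimes n'})^\dagger$ with probability at least $1-\delta$, at the stated depth and qubit count.
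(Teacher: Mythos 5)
Your proposal is correct and follows essentially the same route as the paper: the paper does not prove this statement itself but imports it directly as an extraction of Lemmas 8 and 10 from \cite{aharonov1997fault} (noting only that the recursively concatenated CSS code of Section~\ref{subsec:CSS} is a suitable quantum computation code for that machinery), and your sketch simply unpacks the internal structure of that cited argument\---the gate-by-gate encoded simulation, the level-by-level $(k,d)$-deviation induction, the union bound over locations with $r=\Theta(\log\log(v/\delta))$, and the observation that single-qubit depolarizing noise is an instance of the probabilistic local noise model assumed there. Nothing in your plan conflicts with the paper's treatment.
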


\begin{theorem}[Lemma 8, 9, and 10 from \cite{aharonov1997fault}]
\label{thm:Doritfault2}
    There is an absolute constant $\lambda_c \in[0,1]$ such that for any $\lambda < \lambda_c$, there exists a $\lambda$-noisy quantum circuit $Q'$ which can initialize ancillary qubits at any time (these ancillary qubits are also subject to qubit-wise noise of $\lambda$) during the computation, and a classical postprocessing algorithm $\mathcal{A}$ based on recursive majority vote, that satisfies the following.
    $Q'$ operates on $m^r n'$ qubits and has depth $\mathcal{O}(t \, \mathrm{polylog}(v/\delta))$.
    Let $\sigma$ be any $n'$-qubit state.
    Let $\mathcal{D}$ be the $n'$-bit string distribution generated by measuring $Q \sigma Q^\dagger$ in the computational basis.
    For any state $\rho$ that is $(r, d)$-deviated from
    \begin{equation}
        V^{\otimes n'} \sigma (V^{\otimes n'})^\dagger,
    \end{equation}
    applying a $\lambda$-noisy computational basis measurements on the output state of $Q'$ given input state $\rho$, followed by the classical algorithm $\mathcal{A}$, produces a distribution $\mathcal{D}'$ equal to $\mathcal{D}$ with probability $1 - \delta$ over the local noise.
\end{theorem}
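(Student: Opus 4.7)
The plan is to build $Q'$ by running the fault-tolerant simulation of $Q$ from Theorem~\ref{thm:Doritfault1} on the input state $\rho$, then performing a $\lambda$-noisy computational-basis measurement on all $m^r n'$ qubits, and finally feeding the resulting bitstring into a classical postprocessing procedure $\mathcal{A}$ that applies the minimum-distance decoder for $C^\perp$ recursively $r$ times, one block of $m$ bits at a time per recursion level. This is the ``recursive majority vote'' decoder tailored to the concatenated classical code sitting inside the CSS code, and its output is a string in $\{0,1\}^{n'}$ whose distribution I will call $\mathcal{D}'$. The claim to establish is that $\mathcal{D}' = \mathcal{D}$ with probability $1-\delta$ over the local noise.

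First I would upgrade the conclusion of Theorem~\ref{thm:Doritfault1} to allow an arbitrary encoded input: the inductive, layer-by-layer argument in \cite{aharonov1997fault} that shows output $(r,d)$-deviation from $V^{\otimes n'}Q\ketbra{0^{n'}}{0^{n'}}Q^\dagger(V^{\otimes n'})^\dagger$ never uses the specific form of the starting state. The fault-tolerant gadgets simply propagate whatever $(r,d)$-deviation is present through each layer while introducing only $(r,d)$-sparse new faults from gates and freshly-initialized ancillas, so feeding in any $\rho$ that is $(r,d)$-deviated from $V^{\otimes n'}\sigma(V^{\otimes n'})^\dagger$ yields, with probability at least $1-\delta/2$, an output state $\rho^{\mathrm{out}}$ that is $(r,d)$-deviated from $V^{\otimes n'}Q\sigma Q^\dagger(V^{\otimes n'})^\dagger$. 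Second, I would absorb the final noisy measurement into the state by the standard identity that a $\lambda$-noisy computational-basis measurement equals a noiseless one preceded by $D_\lambda^{\otimes m^r n'}$. Thus after the absorbed noise the pre-measurement state is, with probability at least $1-\delta/2$, still block-wise close to the ideal encoded target: except on a sparsity-violation event, inside each bottom-level block of $m$ qubits the number of corrupted sites stays below the correction radius of $C^\perp$, and analogously at every higher level of the concatenation. At that point the noiseless measurement outcome differs from the encoding of a sample from $\mathcal{D}$ by an $(r,d)$-sparse bit pattern, and applying the classical recursive decoder $\mathcal{A}$ corrects it exactly, producing a sample from $\mathcal{D}$.

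The hard part is controlling how the pre-existing quantum $(r,d)$-deviation of $\rho^{\mathrm{out}}$ combines with the independent measurement noise so as to remain $(r,d)$-sparse, without incurring a union-bound blowup across all $m^r n'$ qubits. The right tool is the recursive sparsity argument from \cite{aharonov1997fault}: under i.i.d.\ single-qubit noise of rate $\lambda < \lambda_c$ combined with an already-$(r,d)$-sparse pattern, the probability that more than $d$ of a level-$\ell$ block's sub-blocks fail to be $(\ell-1,d)$-sparse decreases doubly exponentially in $\ell$. Choosing $r = \Theta(\log\log(v/\delta))$ therefore forces the total failure probability, summed over all $n'$ logical outputs and all $r$ levels of the decoder, below $\delta/2$. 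Combined with the $\delta/2$ budget from the fault-tolerant simulation step, this gives the stated $1-\delta$ success probability. The depth and qubit count of $Q'$ are inherited directly from Theorem~\ref{thm:Doritfault1}, and $\mathcal{A}$ takes only classical time $\mathcal{O}(n' m^r) = \mathcal{O}(n'\,\mathrm{polylog}(v/\delta))$, so $Q'$ has depth $\mathcal{O}(t\,\mathrm{polylog}(v/\delta))$ and acts on $m^r n'$ qubits as required.
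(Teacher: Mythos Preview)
The paper does not give a proof of this statement: it is recorded as a black-box combination of Lemmas~8, 9, and 10 from \cite{aharonov1997fault}, with no argument beyond the citation. Your proposal is a faithful reconstruction of how those lemmas assemble: you (i) observe that the layer-by-layer deviation-propagation argument underlying Theorem~\ref{thm:Doritfault1} is agnostic to the input state and hence applies to any $\rho$ that is $(r,d)$-deviated from $V^{\otimes n'}\sigma(V^{\otimes n'})^\dagger$; (ii) absorb the noisy final measurement as one more layer of local noise before a noiseless measurement; and (iii) apply the recursive classical decoder and control its failure via the doubly-exponential sparsity bound, with $r=\Theta(\log\log(v/\delta))$ making the total failure $\le\delta$. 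Since the paper offers no proof of its own, there is nothing to contrast; your outline matches the intended content of the cited lemmas.
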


\noindent Our CSS code defined at the beginning of Subsection~\ref{subsec:CSS} in Eqs.~\eqref{E:S0S1},~\eqref{E:A0},~\eqref{E:A1} and the surrounding text, which corrects $d$ errors 
is a suitable quantum computational code for the purposes of the above Theorem.
Indeed, the construction in the proof of Theorem~\ref{thm:Doritfault1}~and~\ref{thm:Doritfault2} is to build the concatenated code described in Definition~\ref{def:basisconcat1} and Eq.~\eqref{E:codewords1} and the surrounding text, and then to show that it has suitable fault-tolerant quantum error correction properties.

As mentioned above, the key difference between our model for noisy quantum circuits versus the one considered in that work is that the latter allows the circuit to initialize ancillary qubits at any point in the computation, whereas in our setting all qubits are present at time zero. In \cite{aharonov1996limitations} it was shown how to pass from the latter setting to the former with a blowup in total number of qubits that is exponential in the depth of the original circuit.

\begin{theorem}[Theorem 4 from \cite{aharonov1996limitations}]\label{thm:majority}
    There exists an absolute constant $\lambda_c \in [0,1]$ such that for any non-negative $\lambda < \lambda_c$ and any $t \in\mathbb{N}$, there exists a $\lambda$-noisy quantum circuit of depth $t$ operating on $3^t$ qubits initialized to the all-zero state such that the output state's first qubit is in the state $\ket{0}$ with probability at least $1 - \lambda$.
\end{theorem}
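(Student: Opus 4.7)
The plan is to construct $C_t$ by recursive classical majority voting, in the spirit of von Neumann's noisy computation. First I would exhibit a constant-depth gadget $M$ on three input qubits (plus constantly many ancilla qubits initialized to $\ket{0}$) that writes the classical majority of the three inputs into its first output qubit, implemented via CNOTs together with a Toffoli decomposed into two-qubit gates. In the noiseless setting and for any diagonal product input, the first output qubit of $M$ is in $\ket{0}$ whenever at most one of the three input qubits is in $\ket{1}$. I would then define $C_t$ recursively: $C_0$ is the identity on a single qubit, and for $t\geq 1$, $C_t$ runs $C_{t-1}$ in parallel on three disjoint blocks of $3^{t-1}$ qubits and afterwards applies $M$ to the first qubits of the three blocks (drawing the $M$-ancillas from unused qubits of one of the blocks). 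Each level of recursion adds an absolute-constant number of layers, and after absorbing this constant into $\lambda_c$ the total depth matches $t$.

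Next, let $p_t$ denote the probability that the first qubit of the output of $C_t$ is not in $\ket{0}$, starting from the all-zero initialization subject to $D_\lambda$ on every qubit at every time step. Because $D_\lambda^{\otimes n}$ is a tensor product and the three sub-circuits $C_{t-1}$ act on pairwise disjoint qubits, the three input qubits fed into $M$ are independent. Consequently the output of $M$ is corrupted only if either at least two of its three inputs were already in $\ket{1}$ or at least one of the constantly many gates of $M$ itself suffered a noise event. Tracking these two contributions yields a recursion
\begin{equation}
p_t \leq 3\,p_{t-1}^2 + c\,\lambda
\end{equation}
for an absolute constant $c$, with base case $p_0 \leq \lambda/2$ coming from the single $D_\lambda$ acting on a freshly prepared $\ket{0}$.

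To finish, I would invoke a standard threshold argument. The map $f(p) = 3p^2 + c\lambda$ has fixed points $p = (1 \pm \sqrt{1 - 12c\lambda})/6$, so for $\lambda$ below an absolute constant $\lambda_c$ the smaller fixed point $p^*$ is $O(\lambda)$ and in particular satisfies $p^* \leq \lambda$. Since $p_0 \leq \lambda/2 \leq p^*$, induction on $t$ gives $p_t \leq p^* \leq \lambda$ for every $t$. The main obstacle I expect is justifying the quadratic step in the recursion: one must verify that faults occurring inside each sub-circuit $C_{t-1}$ remain confined to that sub-circuit before $M$ is applied, so that the three inputs to $M$ are genuinely independent. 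This follows from the tensor structure of the depolarizing noise together with the disjointness of gate supports across blocks, but it must be spelled out carefully via a union bound over possible ``fault paths'' within each sub-circuit. A secondary subtlety is that the additive $c\lambda$ must be uniform in $t$, which forces $M$ to be implemented with a fixed constant number of two-qubit gates and constant depth, independent of the recursion level.
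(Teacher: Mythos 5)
The paper does not prove this statement; it imports it verbatim as Theorem~4 of \cite{aharonov1996limitations} and uses it as a black box. Your reconstruction via recursive three-way majority is exactly the mechanism behind the cited result, and most of the skeleton is sound: the product structure of the initial state, of the per-layer noise $D_\lambda^{\otimes n}$, and of the gate supports across the three disjoint blocks does give genuine independence of the three inputs to $M$ (no fault-path union bound is needed for that part --- the joint state of the blocks is literally a tensor product, and it stays diagonal outside the gadget since the initial state is diagonal and the noise is depolarizing), the base case $p_0 \le \lambda/2$ is right, and the fixed-point analysis of $p \mapsto 3p^2 + c\lambda$ is the standard way to close the induction.

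The genuine gap is your sourcing of the gadget's ancillas: you propose to draw them ``from unused qubits of one of the blocks.'' Any qubit that has sat idle for the $\Omega(t)$ layers preceding level $t$ has been hit by $D_\lambda$ that many times and is exponentially close to $\Id/2$, not to $\ket{0}$. If the Toffoli targets of $M$ land on such a qubit, its output bit is wrong with probability close to $1/2$ independently of the inputs, the additive term in your recursion becomes $\Theta(1)$ rather than $c\lambda$, and $p_t$ never contracts. The fix is to make $M$ ancilla-free: compute $b \mapsto a\oplus b$, $c\mapsto a\oplus c$, then a Toffoli with controls $a\oplus b$, $a\oplus c$ and target $a$, which leaves $\mathrm{MAJ}(a,b,c) = a \oplus (a\oplus b)(a\oplus c)$ in place on wire $a$ using only the three block-output wires (the standard ancilla-free Toffoli decomposition then keeps the gadget at constant size). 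Two smaller points: (i) ``absorbing the gadget depth into $\lambda_c$'' is not the right bookkeeping --- a depth-$c_0$ gadget means $t$ layers support only $t/c_0$ levels of recursion, which is fine because that still uses at most $3^t$ qubits and the error bound is depth-independent, but the constant goes into the qubit count, not into $\lambda_c$; and (ii) with a realistic gadget the additive constant satisfies $c > 1$, so the fixed point is $p^* \approx c\lambda > \lambda$ and your argument as written yields $1 - O(\lambda)$ rather than the stated $1-\lambda$; this suffices for every use of the theorem in this paper, but matching the literal constant requires a more careful count of the fault locations in $M$ (or one final cooling round).
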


\noindent We now use Theorems~\ref{thm:Doritfault1},~\ref{thm:Doritfault2}, and \ref{thm:majority} to argue in Lemma~\ref{lem:stateprep} (resp. Lemma~\ref{lem:stateprep2}) that we can prepare an encoding of the all-plus state (resp. all-zero state) over $n'$ qubits using $\poly(n)$ total qubits including ancillas, thus realizing an encoded version of the first step in Simon's algorithm.

\begin{lemma}\label{lem:stateprep}
    Suppose $n' \le \exp(\log^c n)$ for $0 < c < 1$ a sufficiently small constant. There exists an absolute constant $\lambda_c \in [0,1]$ such that for any non-negative $\lambda < \lambda_c$, there exists a $\lambda$-noisy quantum circuit which operates on $n'$ input qubits and $\poly(n)$ ancillary qubits and has $\mathrm{polylog}(n)$ layers, such that with probability at least $1 - \mathcal{O}(1/n')$ over the local noise, the output state is $(r,d/2)$-deviated from the state
    \begin{equation}
        V^{\otimes n'} H^{\otimes n'} \ketbra{0^{n'}}{0^{n'}} H^{\otimes n'} (V^{\otimes n'})^\dagger \label{eq:state1}
    \end{equation}
    for $r = \log\log(n')$.
\end{lemma}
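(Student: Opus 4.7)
The plan is to realize an encoded version of the initial Hadamard layer of Simon's algorithm by combining Theorem~\ref{thm:Doritfault1} with the noisy majority-vote gadget of Theorem~\ref{thm:majority}. Concretely, take the noiseless circuit $Q = H^{\otimes n'}$ of depth $t = 1$ with $v = n'$ locations. Setting the failure parameter $\delta = \Theta(1/n')$ and $r = \Theta(\log\log(v/\delta)) = \Theta(\log\log n')$, Theorem~\ref{thm:Doritfault1} produces a $\lambda$-noisy circuit $\widetilde Q$ of depth $\mathrm{polylog}(n')$ acting on $m^r n'$ qubits whose output is $(r,d)$-deviated from $V^{\otimes n'} H^{\otimes n'}\ketbra{0^{n'}}{0^{n'}} H^{\otimes n'}(V^{\otimes n'})^\dagger$ with probability at least $1 - \delta$. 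The obstruction is that $\widetilde Q$ is allowed to inject fresh $\ket{0}$ ancillas at arbitrary layers, whereas the NISQ model requires every qubit to be present at time $0$.

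The second step is to simulate each such mid-circuit ancilla initialization via Theorem~\ref{thm:majority}. For every ancilla that $\widetilde Q$ would introduce at layer $s$, reserve $3^s$ physical qubits at time $0$ and run the depth-$s$ noisy majority-vote circuit on them, so that at layer $s$ the designated output qubit is in $\ket{0}$ with probability at least $1-\lambda$. Splicing these gadgets into $\widetilde Q$ yields a fully front-loaded noisy circuit whose layer count still matches that of $\widetilde Q$ up to a constant factor.

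For the qubit count, the depth of $\widetilde Q$ is $t' = \mathrm{polylog}(n')$, so the total number of physical qubits used across all majority-vote gadgets is bounded by $m^r n' \cdot 3^{t'}$. Under the hypothesis $n' \le \exp(\log^c n)$, we have $t' \le (\log n')^k \le \log^{ck}(n)$ for some integer $k$ depending only on the polylog exponent in Theorem~\ref{thm:Doritfault1}; choosing $c < 1/k$ then yields $3^{t'} \le \mathrm{poly}(n)$, and the overall ancilla count is $\mathrm{poly}(n)$ as required. The error analysis is straightforward: failure of any majority-vote gadget introduces a single-qubit error no worse than one already absorbed into the $\lambda$-depolarizing layers in Theorem~\ref{thm:Doritfault1}, so after possibly rescaling $\lambda_c$ by a constant, the $(r,d)$-deviation bound of Theorem~\ref{thm:Doritfault1} continues to hold with probability at least $1 - \mathcal{O}(1/n')$ by a union bound over the $\mathcal{O}(t' \cdot m^r n')$ gadget events.

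The final step is to tighten the deviation from $(r,d)$ to $(r,d/2)$. The cleanest route is to instantiate the concatenated code with parameter $2d$ in place of $d$ (equivalently, choose the base CSS code $C$ to correct twice as many errors), at which point Theorem~\ref{thm:Doritfault1} automatically yields $(r,d/2)$-deviation; alternatively, one may append a single round of the fault-tolerant error-correction procedure accompanying the code, adding only $\mathrm{polylog}(n')$ depth. The main obstacle is the qubit accounting: the $3^{\text{depth}}$ blow-up from majority voting is exponential, so establishing a genuine $\mathrm{poly}(n)$ ancilla budget is delicate and is precisely what forces the restriction $n' \le \exp(\log^c n)$ with $c$ small. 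Once that accounting is carried out, correctness follows from the two cited theorems and the union bound above.
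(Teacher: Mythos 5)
Your proposal is correct and follows essentially the same route as the paper: apply Theorem~\ref{thm:Doritfault1} to the depth-one circuit $H^{\otimes n'}$, front-load each mid-circuit ancilla initialization using the depth-$t$, $3^t$-qubit majority-vote gadget of Theorem~\ref{thm:majority}, and observe that the resulting $3^{\mathrm{polylog}(n')}\cdot n'\,\mathrm{polylog}(n')$ qubit count is $\mathrm{poly}(n)$ once $c$ is small enough. Your explicit step for tightening $(r,d)$ to $(r,d/2)$ (by instantiating the code to correct $2d$ errors) actually addresses a detail the paper's own proof glosses over, since the paper invokes Theorem~\ref{thm:Doritfault1} and directly asserts $(r,d/2)$-deviation without comment.
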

\begin{proof}
    By Theorem~\ref{thm:Doritfault1}, there is a $\lambda$-noisy circuit with $\mathrm{polylog}(n')$ layers and $n'\mathrm{polylog}(n')$ qubits, which can initialize qubits at any time, such that with probability at least $1 - \mathcal{O}(1/n')$ over the local noise, the output state is $(r,d/2)$-deviated from the state in Eq.~\eqref{eq:state1}. By Theorem~\ref{thm:majority}, for each qubit that is initialized at some arbitrary time $t \le \mathrm{polylog}(n')$, we can append to $Q'$ a $\lambda$-noisy circuit of depth $t$ operating on $3^t$ qubits initialized at time zero, such that the first qubit of the output state of this circuit can play the role of the qubit initialized at time $t$ in $Q'$. Altogether, we obtain a circuit whose depth is the same as $Q'$ but which now operates on at most $3^{\mathrm{polylog}(n')}\cdot n'\mathrm{polylog}(n')$ qubits. This is at most $\poly(n)$ provided that the constant $c$ in the in the Lemma is sufficiently small.
\end{proof}

\noindent Using the same proof as the above lemma, we can establish the following.

\begin{lemma}\label{lem:stateprep2}
    Suppose $n' \le \exp(\log^c n)$ for $0 < c < 1$ a sufficiently small constant. There exists an absolute constant $\lambda_c \in [0,1]$ such that for any non-negative $\lambda < \lambda_c$, there exists a $\lambda$-noisy quantum circuit which operates on $n'$ input qubits and $\poly(n)$ ancillary qubits and has $\mathrm{polylog}(n)$ layers, such that with probability at least $1 - \mathcal{O}(1/n')$ over the local noise, the output state is $(r,d/2)$-deviated from the state
    \begin{equation}
        V^{\otimes n'} \ketbra{0^{n'}}{0^{n'}} (V^{\otimes n'})^\dagger \label{eq:state2}
    \end{equation}
    for $r = \log\log(n')$.
\end{lemma}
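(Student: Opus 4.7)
The plan is to mirror the proof of Lemma~\ref{lem:stateprep} almost verbatim, swapping out the target state. The only reason Lemma~\ref{lem:stateprep} targeted the encoded all-plus state was that it needed to apply $H^{\otimes n'}$ to an initial $\ket{0^{n'}}$ before the oracle call in Simon's algorithm; here we simply target the encoded all-zero state, which is even simpler since no Hadamards are needed. Concretely, I would invoke Theorem~\ref{thm:Doritfault1} with the logical circuit $Q$ taken to be the identity acting on $n'$ qubits (so $t = O(1)$, $v = O(n')$, and the logical output is $\ketbra{0^{n'}}{0^{n'}}$). This yields a $\lambda$-noisy circuit $Q'$, allowing mid-circuit ancilla initialization, that operates on $n' \cdot \mathrm{polylog}(n')$ qubits, has depth $\mathrm{polylog}(n')$, and whose output is $(r,d)$-deviated\footnote{The bound of $d/2$ rather than $d$ in the conclusion can be obtained by taking $r$ one level deeper in the recursion or by running the error-correction subroutine as a final step of $Q'$; this is exactly the adjustment used in Lemma~\ref{lem:stateprep}.} from $V^{\otimes n'} \ketbra{0^{n'}}{0^{n'}} (V^{\otimes n'})^\dagger$ with probability at least $1 - O(1/n')$ over the local noise, where $r = \log\log(n')$ and $V$ is the encoding map of the recursively concatenated code.

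Next, I would apply Theorem~\ref{thm:majority} to eliminate the mid-circuit ancilla initializations, which our $\NISQ$ model does not permit. For every ancilla qubit introduced at some layer $t \le \mathrm{polylog}(n')$ of $Q'$, I would prepend a $\lambda$-noisy ``distillation gadget'' on $3^t$ fresh qubits initialized at time zero, using the first qubit of its output (in state $\ket{0}$ with probability at least $1 - \lambda$) as the substitute. The depth of the combined circuit remains $\mathrm{polylog}(n')$, and the total qubit count blows up by a factor of at most $3^{\mathrm{polylog}(n')}$, giving a final qubit budget of $3^{\mathrm{polylog}(n')} \cdot n' \cdot \mathrm{polylog}(n')$. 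Under the hypothesis $n' \le \exp(\log^c n)$ with $c$ sufficiently small, this is $\mathrm{poly}(n)$, as desired.

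The only nontrivial step to double-check is that the ``noisy zero'' prepared by Theorem~\ref{thm:majority} can be freely substituted for a true mid-circuit initialization without degrading the $(r,d/2)$-deviation guarantee of Theorem~\ref{thm:Doritfault1}. This is immediate because Theorem~\ref{thm:Doritfault1} already tolerates arbitrary qubit-wise noise on the freshly initialized ancillas, and the distillation gadget produces a state that is, qubit-wise, within the same noise model. I do not expect any genuine obstacle here\--- the argument is essentially identical to that of Lemma~\ref{lem:stateprep}, and the change from the all-plus to the all-zero logical target only simplifies the logical circuit $Q$ one feeds into Theorem~\ref{thm:Doritfault1}.
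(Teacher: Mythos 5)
Your proposal is correct and matches the paper's approach exactly: the paper proves this lemma simply by noting it follows from the same argument as Lemma~\ref{lem:stateprep}, with the logical target changed from the encoded all-plus state to the encoded all-zero state. Your additional care regarding the $d$ versus $d/2$ deviation bound is a reasonable (and slightly more explicit) treatment of a point the paper glosses over.
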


\noindent Next, we use Theorems~\ref{thm:Doritfault1} and~\ref{thm:majority} to argue that we can take any state on $m^r n'$ qubits which is not too deviated from a codeword, noisily apply Hadamard transform, and apply fault-tolerant measurement to the result. This realizes an encoded version of the last step of Simon's algorithm. The noisy quantum circuit for this uses $\poly(n)$ total qubits including ancillas.

\begin{lemma}
\label{lem:endofprotocol}
    Suppose $n' \le \exp(\log^c n)$ for $0 < c < 1$ a sufficiently small constant and for $r = \mathrm{polylog}(n')$. There exists an absolute constant $\lambda_c\in[0,1]$ such that for any non-negative $\lambda < \lambda_c$, there exists a $\lambda$-noisy quantum circuit $Q'$ which operates on $m^r n'$ input qubits and $\poly(n)$ ancillary qubits and has $\mathrm{polylog}(n')$ layers, such that the following holds. Let $\mathcal{A}$ be the classical post-processing procedure based on recursive majority vote from Theorem~\ref{thm:Doritfault2}.

    For $k$ satisfying $k \le d$, let input state $\rho$ be $(r,k)$-deviated from the state
    \begin{equation}
        V^{\otimes n'} \sigma (V^{\otimes n'})^\dagger.
    \end{equation}
    Let $\mathcal{D}$ be the classical distribution over $\brc{0,1}^{n'}$ generated by measuring
    \begin{equation}
        H^{\otimes n'} \sigma H^{\otimes n'} \label{eq:state2b}
    \end{equation}
    in the computational basis. Then if one applies $Q'$ to $\rho$, noisily measures the output state under the computational basis, and applies $\mathcal{A}$ to the classical outcome, then the resulting distribution over $\brc{0,1}^{n'}$ is identical to $\mathcal{D}$ with probability at least $1 - \mathcal{O}(1/n')$ over the local noise.
\end{lemma}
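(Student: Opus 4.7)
The plan is to mirror the proofs of Lemmas~\ref{lem:stateprep} and~\ref{lem:stateprep2}, but instantiating Theorem~\ref{thm:Doritfault2} in place of Theorem~\ref{thm:Doritfault1}. First, set the noiseless target circuit to $Q = H^{\otimes n'}$, which has depth $t = 1$ and $v = n'$ locations, and take failure probability $\delta = \mathcal{O}(1/n')$. Theorem~\ref{thm:Doritfault2} then furnishes a $\lambda$-noisy circuit $Q'_0$ acting on $m^r n'$ qubits with depth $\mathrm{polylog}(n')$, which is allowed to initialize ancillary qubits at arbitrary times during the computation, together with the classical recursive-majority-vote post-processing algorithm $\mathcal{A}$. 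For any state that is $(r,d)$-deviated from $V^{\otimes n'}\sigma (V^{\otimes n'})^\dagger$, noisily measuring the output of $Q'_0$ in the computational basis and applying $\mathcal{A}$ produces the target distribution $\mathcal{D}$ with probability at least $1 - \mathcal{O}(1/n')$ over the local noise. Since the lemma hypothesizes $k \le d$ and any $(r,k)$-sparse set is automatically $(r,d)$-sparse, the input $\rho$ satisfies this precondition.

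Second, I would eliminate the mid-circuit initializations, which are not permitted by the NISQ model, by invoking Theorem~\ref{thm:majority} exactly as in Lemma~\ref{lem:stateprep}. For each ancilla introduced by $Q'_0$ at some time $t \le \mathrm{polylog}(n')$, prepend a $\lambda$-noisy depth-$t$ circuit on $3^t$ fresh qubits all initialized at time zero, and use the first output qubit in place of the mid-circuit ancilla; Theorem~\ref{thm:majority} guarantees this replacement qubit is in the state $\ket{0}$ with probability at least $1 - \lambda$, which matches the noisy-fresh-ancilla model assumed by Theorem~\ref{thm:Doritfault2}. The resulting composite circuit $Q'$ has the same depth $\mathrm{polylog}(n')$, but all qubits are now initialized at time zero. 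The total qubit count is at most $3^{\mathrm{polylog}(n')} \cdot m^r n' \cdot \mathrm{polylog}(n')$, and since $r = \log\log(n')$, $m = \mathcal{O}(1)$, and $n' \le \exp(\log^c n)$ for sufficiently small $c$, this is $\poly(n)$.

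Finally, applying $Q'$ to $\rho$ padded with $\ket{0}$ ancillas, performing a noisy computational basis measurement on all qubits at the end of the computation (as the NISQ model requires), and feeding the substring corresponding to the $n'$ logical output qubits into $\mathcal{A}$ yields a distribution that, by construction, agrees with the one produced by the mid-initialization circuit $Q'_0$ on the same input. By Theorem~\ref{thm:Doritfault2} this equals $\mathcal{D}$ with probability at least $1 - \mathcal{O}(1/n')$. The main technical obstacle I foresee is verifying that the majority-vote substitution does not violate the precondition of Theorem~\ref{thm:Doritfault2}: each simulated ancilla injected into the computation must be close enough to $\ket{0}$ to be absorbable into Theorem~\ref{thm:Doritfault2}'s tolerance of $\lambda$-rate qubit-wise noise on freshly initialized ancillas. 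This is exactly the same step carried out in Lemmas~\ref{lem:stateprep}--\ref{lem:stateprep2}, and the argument transfers directly, with the roles of encoded state preparation and encoded measurement interchanged.
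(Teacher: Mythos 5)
Your proposal is correct and follows essentially the same route as the paper's proof: invoke Theorem~\ref{thm:Doritfault2} to obtain a noisy circuit (with mid-circuit ancilla initialization) realizing the encoded Hadamard-and-measure step together with the recursive majority-vote post-processing, then remove the mid-circuit initializations via Theorem~\ref{thm:majority} exactly as in Lemma~\ref{lem:stateprep}, incurring the $3^{\mathrm{polylog}(n')}$ qubit blowup that stays within $\poly(n)$. The additional details you supply (the explicit choice $Q = H^{\otimes n'}$ and the observation that $(r,k)$-deviation with $k \le d$ suffices for the theorem's precondition) are consistent with, and slightly more explicit than, the paper's argument.
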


\begin{proof}
    Because $\rho$ is $(r,k)$-deviated from $V^{\otimes n'} \sigma (V^{\otimes n'})^\dagger$ for $k\le d$, we can apply Theorem~\ref{thm:Doritfault2} to obtain a $\lambda$-noisy quantum circuit $Q''$ operating on $m^r n'$ qubits with $\mathrm{polylog}(n')$ layers which satisfies the desiderata of the lemma. The caveat is that the circuit has to be able to initialize ancilla qubits at any time.
    
    We can address this similarly to in the proof of Lemma~\ref{lem:stateprep}. By Theorem~\ref{thm:majority}, for each qubit that is initialized at some arbitrary time $t \le \mathrm{polylog}(n')$, we can append to $Q''$ a $\lambda$-noisy circuit of depth $t$ operating on $3^t$ qubits initialized at time zero, such that the first qubit of the output state of this circuit can play the role of the qubit initialized at time $t$ in $Q''$. Altogether, we obtain a circuit $Q'$ whose depth is the same as $Q''$ but which now operates on at most $3^{\mathrm{polylog}(n')} \cdot n' \mathrm{polylog}(n') \le \poly(n)$ qubits.
\end{proof}

\subsubsection{Stability against deviation in the robustified Simon's oracle}

So far, Lemma~\ref{lem:stateprep} implies that we can realize the first step of Simon's in an encoded fashion: noisily prepare a state, call it $\rho_1$, which is only slightly deviated from an encoding of the all-plus state. And Lemma~\ref{lem:endofprotocol} implies that given a state, call it $\rho_2$, which is only slightly deviated from the output of the robustified Simon's oracle, we can simulate the last step of Simon's algorithm in the presence of noise. In order to apply Lemma~\ref{lem:endofprotocol} however, it remains to verify that when we go from $\rho_1$ to $\rho_2$ by invoking the robustified oracle in the second step of Simon's, the sparsity of the deviations in $\rho_1$ is preserved. We show this in Lemma~\ref{lem:stability1} below.

First, recall the unitary $U_{\wt{f_s}}$ from Eq.~\eqref{eq:oracleaction1}.
Note that by construction, $\wt{f}_s$ only depends non-trivially on the first $m^r n' < n$ bits of its input.
By defining $\wt{f}_s^{*}: \{0, 1\}^{m^r n'} \to \{0, 1\}^{m^r n'}$ to be the function $\wt{f}_s$ restricted to the first $m^r n'$ bits, we can rewrite~\eqref{eq:oracleaction1} as
\begin{equation}
\label{E:oracleaction2}
    U_{\wt{f_s}} \ket{x_1} \ket{x_2}\ket{y} = \ket{x_1} \ket{x_2} \ket{y \oplus {\wt{f}_s^{*}}(x_1)}, \quad \forall x_1 \in \{0, 1\}^{m^r n'},\,x_2 \in \{0,1\}^{n - m^r n'},\,y \in \{0, 1\}^{m^r n'}.
\end{equation}
We see from the above equation that $U_{\wt{f_s}}$ acts trivially on the $\ket{x_2}$ part of the input state.  So let us define $U_{\wt{f}_s^*}$ as the restriction of $U_{\wt{f}_s}$ to its $\ket{x_1}$ and $\ket{y}$, subsystems, namely
\begin{equation}
\label{E:oracleaction3}
    U_{\wt{f}_s^*} \ket{x_1} \ket{y} = \ket{x_1} \ket{y \oplus {\wt{f}_s^{*}}(x_1)}, \quad \forall x_1 \in \{0, 1\}^{m^r n'},\,y \in \{0, 1\}^{m^r n'}.
\end{equation}
With the above notations for the oracle, we can now prove the following lemma showing the classical function $\wt{f}_s^*$ preserves the deviation metric.
We note that, on the other hand, the ordinary Simon's function $f_s$ does not have the same property.

\begin{lemma}[Stability of the robustified classical oracle]
\label{lem:stability1}
Consider a Hilbert space $\mathcal{H}$ which decomposes into subsystems as $\mathcal{H} \simeq \mathcal{H}_{\text{\rm main},\,1} \otimes \mathcal{H}_{\text{\rm anc},\,1}$ where $\mathcal{H}_{\text{\rm main},\,1} \simeq (\mathbb{C}^2)^{\otimes(m^r n')}$ and $\mathcal{H}_{\text{\rm anc},\,1} \simeq (\mathbb{C}^2)^{\otimes (m^r n')}$.  Further let $\rho^0 = V^{\otimes n'} H^{\otimes n'} |0^{n'}\rangle\langle 0^{n'}| H^{\otimes n'} (V^{\otimes n'})^\dagger$ and $\sigma^0 = V^{\otimes n'} |0^{n'}\rangle\langle 0^{n'}| (V^{\otimes n'})^\dagger.$

Given any $k \leq d$, if $\rho \otimes \sigma$ is $(r, k)$-deviated from $\rho^0 \otimes \sigma^0$, then
$\text{\rm tr}_{\mathcal{H}_{\text{\rm anc},\,1}}\!\!\left\{U_{\wt{f}_s^*} (\rho \otimes \sigma) U_{\wt{f}_s^*}^\dagger\right\}$ is $(r, k)$-deviated from $\text{\rm tr}_{\mathcal{H}_{\text{\rm anc},\,1}}\!\!\left\{U_{\wt{f}_s^*} (\rho^0 \otimes \sigma^0) U_{\wt{f}_s^*}^\dagger\right\}$.
\end{lemma}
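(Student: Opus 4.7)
The candidate sparse set for the output deviation is $A' := A \cap \mathcal{H}_{\mathrm{main},\,1}$, where $A$ is the $(r,k)$-sparse set witnessing the input deviation. Since the restriction of an $(r,k)$-sparse set to a subregister remains $(r,k)$-sparse, $A'$ has the required sparsity. Using the product structure of $\rho \otimes \sigma$, the hypothesis $(\rho \otimes \sigma)_{A^c} = (\rho^0 \otimes \sigma^0)_{A^c}$ factors into $\rho_{(A')^c} = \rho^0_{(A')^c}$ and $\sigma_{(A'')^c} = \sigma^0_{(A'')^c}$, where $A'' := A \cap \mathcal{H}_{\mathrm{anc},\,1}$ is $(r,k)$-sparse on the ancilla.

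To compare $\tau := \tr_{\mathcal{H}_{\mathrm{anc},\,1}}\!\{U_{\wt{f}_s^*}(\rho \otimes \sigma)U_{\wt{f}_s^*}^\dagger\}$ with $\tau^0$ on $(A')^c$, I exploit the controlled structure $U_{\wt{f}_s^*} = \sum_x |x\rangle\langle x| \otimes X^{\wt{f}_s^*(x)}$, which yields
\begin{equation}
    \tau = \sum_{x,x'} \rho_{x,x'}\,\tr\!\bigl(\sigma\, X^{h(x,x')}\bigr)\,|x\rangle\langle x'|, \qquad h(x,x') \triangleq \wt{f}_s^*(x) \oplus \wt{f}_s^*(x'),
\end{equation}
and an analogous formula for $\tau^0$. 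Because of the $m^r$-fold repetition in the definition of $\wt{f}_s$ in Eq.~\eqref{E:tildefsdef1}, $h(x,x')$ is block-constant on the ancilla\--- on each $m^r$-qubit block it is either $0^{m^r}$ or $1^{m^r}$\--- so $X^{h(x,x')}$ acts as a tensor product of encoded $\bar X$ operators on some subset of ancilla blocks.

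The heart of the proof is two ``logical stability'' facts that leverage the disjointness $A^{(r)}_0 \cap A^{(r)}_1 = \emptyset$ from Lemma~\ref{lem:disjointAr} together with the fact that the concatenated code corrects $d$ errors. \textbf{Main-side stability:} by induction on $r$ using the recursive structure of $(r,k)$-sparse sets and the definition of $A^{(r)}_b$ (in which bad sub-blocks are absorbed into the $x_0$-slack while good sub-blocks inherit $(r-1,k)$-sparse perturbations), if $x_{(A')^c}$ lies in the support of $\rho^0_{(A')^c}$, then every extension $(x_{A'}, x_{(A')^c})$ belongs to $A^{(r)}_{b_1}\times\cdots\times A^{(r)}_{b_{n'}}$ for a unique logical bitstring $b$ determined by $x_{(A')^c}$; consequently $\wt{f}_s^*(x_{A'}, x_{(A')^c})$ is independent of $x_{A'}$. \textbf{Ancilla-side stability:} symmetrically, for any block-constant $h$, the Pauli expectation $\tr(\sigma X^h)$ is determined by $\sigma_{(A'')^c}$, because the block-constant Paulis act as logical operators on the concatenated code whose expectations are recoverable from the reduced state on any $(r,d)$-good qubit set.

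Combining the two stabilities, the partial trace $\tr_{A'}$ applied to $\tau$ collapses the $x_{A'}$-sum using only the main-marginal match $\rho_{(A')^c} = \rho^0_{(A')^c}$, while the Pauli moment $\tr(\sigma X^h) = \tr(\sigma^0 X^h)$ is justified by the ancilla-marginal match $\sigma_{(A'')^c} = \sigma^0_{(A'')^c}$. Running the same steps in reverse for $\tau^0$ yields $\tau_{(A')^c} = \tau^0_{(A')^c}$, which is exactly the $(r,k)$-deviation conclusion. The main obstacle will be the ancilla-side stability: the marginal condition $\sigma_{(A'')^c} = \sigma^0_{(A'')^c}$ does not \emph{a priori} determine the Pauli expectation of block-spanning operators, so formalizing this step requires leveraging the structure of the CSS encoding\--- likely via a logical-plus-syndrome decomposition of $\sigma$\--- and then recursively propagating the $(r-1,k)$-sparsity down the concatenated-code tree in lockstep with the recursion already carried out in Lemmas~\ref{lem:strucAr} and~\ref{lem:disjointAr}.
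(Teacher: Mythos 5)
Your skeleton matches the paper's in several respects: you exploit the controlled structure of $U_{\wt{f}_s^*}$, reduce the ancilla trace to expectations of block-constant $X$-strings, and invoke the sparsity of $A$ together with Lemmas~\ref{lem:strucAr} and~\ref{lem:disjointAr}. But there are two genuine gaps. First, the step where you ``collapse the $x_{A'}$-sum using only the main-marginal match'' does not go through as stated. The condition $\rho_{(A')^c} = \rho^0_{(A')^c}$ only tells you that $\sum_u \rho_{(u,t),(u,t')}$ equals the corresponding coefficient of $\rho^0$; since the weight $\Tr(\sigma X^{h(x,x')})$ can genuinely depend on $x_{A'}$ when $x_{(A')^c}$ is \emph{not} a restriction of a codeword string, you must first show that all such terms are absent from $\rho$. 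The paper does this with a positivity argument: the marginal equality forces the diagonal entries $\rho_{(u,t),(u,t)}$ to vanish for $t$ outside the codeword-restriction set $S_{(A')^c}$, and nonnegativity of the $2\times 2$ principal minors of $\rho$ then kills every off-diagonal block involving an out-of-support string. The same argument is what pins down the support of $\sigma$ inside $(A^{(r)}_0)^{n'}$; your factorization of the deviation into main and ancilla parts is fine, but without this support analysis neither of your two ``stability'' facts has anything to act on.

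Second, your ancilla-side stability is the crux and is left unproven, and the mechanism you propose\---that logical Pauli expectations are recoverable from the reduced state on an $(r,d)$-good set\---is not what makes the lemma true: for a state that is merely $(r,k)$-deviated from a code state rather than lying in the code space, no such recoverability principle is available, and the expectation of an operator supported partly on $A''$ is simply not a function of $\sigma_{(A'')^c}$. What the paper actually shows is that once $\sigma$ is known (via the positivity argument) to be supported on $(A^{(r)}_0)^{n'}$, Lemma~\ref{lem:strucAr} gives $y\oplus\vec{1}\in A^{(r)}_1$ for $y\in A^{(r)}_0$, and Lemma~\ref{lem:disjointAr} then forces $\Tr(\sigma X^{h})=0$ whenever $h\neq 0$, while trivially $\Tr(\sigma X^{0})=1$. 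Hence both $\Tr(\sigma X^{h})$ and $\Tr(\sigma^0 X^{h})$ equal the indicator $\bone{h=0}$, and there is no need to compare them through the marginals at all. With the positivity/support argument supplied and the ancilla step replaced by this disjointness computation, your outline becomes the paper's proof.
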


\begin{proof}
Consider the set $S \subset \{0,1\}^{m^r n'}$ defined by
\begin{equation}
S \triangleq \{t \in \{0,1\}^{m^r n'} \, \big| \, t = q_1 \cdots q_{n'}\,,\, q_i \in B_{0}^{(r)} \cup B_{1}^{(r)}\}\,.
\end{equation}
Moreover, if $A = \{a_1, a_2, ..., a_{|A|}\}$ is a subset of $\{1,2,...,m^r n'\}$ where $a_1 < a_2 < \cdots < a_{|A|}$, then further define
\begin{equation}
S_A \triangleq \{u \in \{0,1\}^{|A|} \, \big| \, u = t|_{A}\,\text{ for some } t \in S\}\,.
\end{equation}
Here $t|_A$ means the restriction of $t$ to its bits in locations $a_1,a_2,...,a_{|A|}$.

Because $\rho$ is $(r,k)$-deviated from $\rho^0$, there exists an $(r, k)$-sparse subset $A$ of qubits, such that $\rho_{A^c} = \rho^0_{A^c}$. Decomposing the Hilbert space $\mathcal{H}_{\text{main},\,1}$ as $\mathcal{H}_{\text{main},\,1} \simeq \mathcal{H}_A \otimes \mathcal{H}_{A^c}$, then using our notations above we can write
\begin{equation}
\rho_{A^c}^0 = \rho_{A^c} = \sum_{t, t' \in S_{A^c}} c_{t, t'}|t\rangle \langle t'|
\end{equation}
for some constants $c_{t, t'} \in \mathbb{C}$.  We further know, by hypothesis, that we can write
\begin{equation}
\rho^0 = \sum_{t, t'\in S_{A^c}} O_{t, t'}^{A} \otimes |t\rangle \langle t'|
\end{equation}
for some operators $O_{t, t'}^{A}$ such that $\text{tr}(O_{t, t'}^{A^c}) = c_{t, t'}$.  If all we know about $\rho$ is that it is $(r,k)$-deviated from $\rho^0$, then \emph{a priori} we only have the more general decomposition
\begin{equation}
\label{E:rhodecomp1}
\rho = \sum_{t, t'\in S_{A^c}} \widetilde{O}_{t, t'}^{A} \otimes |t\rangle \langle t'| + \sum_{\substack{t, t'\in \{0,1\}^{|A^c|} \\ \text{one of }t, t'\text{ is not in }S_{A^c}}} Q_{t, t'}^{A} \otimes |t\rangle \langle t'|\,,
\end{equation}
where $\text{tr}(\widetilde{O}_{t, t'}^{A}) = c_{t, t'}$ such that $t, t' \in S_{A^c}$, and $\text{tr}(Q_{t, t'}^A) = 0$ for all $t, t' \in \{0,1\}^{|A^c|}$ such that one of $t, t'$ is not in $S_{A^c}$.
But observe that because
\begin{equation}
    \sum_{\substack{t, t'\in \{0,1\}^{|A^c|} \\ t, t'\not\in S_{A^c}}} Q_{t, t'}^{A} \otimes |t\rangle \langle t'|
\end{equation}
has trace zero and is positive semi-definite (as it is obtained by left- and right-multiplying $\rho$ by the projector $\sum_{t\not\in S_{A^c}} |t\rangle \langle t|$\,), we must have $Q^A_{t, t'} = 0$ for all $t, t'\not\in S_{A^c}$. On the other hand, if $\ket{1},\ldots,\ket{|A|}$ is an orthonormal basis for $\mathcal{H}_A$, then the fact that the $2\times 2$ principal minors of $\rho$ must be nonnegative implies that for any $1\le i, j \le |A|$ and any $t\in S_{A^c}$, $t'\not\in S_{A^c}$ we have
\begin{equation}
    (\bra{i}\otimes\bra{t})\rho(\ket{i}\otimes\ket{t}) \cdot (\bra{j}\otimes\bra{t'})\rho(\ket{j}\otimes\ket{t'}) \ge (\bra{i}\otimes\bra{t})\rho(\ket{j}\otimes\ket{t'}) \cdot (\bra{j}\otimes\bra{t'})\rho(\ket{i}\otimes\ket{t})\,.
\end{equation}
Because we have already shown that $Q^A_{t',t'} = 0$, the left-hand side is zero. On the other hand, because $\rho$ is Hermitian, the right-hand side is nonnegative. The right-hand side is the squared magnitude of $(Q^A_{t, t'})_{ij}$, so we conclude that $Q^A_{t, t'} = 0$ for all $t, t'$ for which at least one of $t, t'$ is not in $S_{A^c}$. In other words, $\rho$ actually has the decomposition
\begin{equation}
    \rho = \sum_{t, t'\in S_{A^c}} \widetilde{O}_{t, t'}^{A} \otimes |t\rangle \langle t'|\,.
\end{equation}
From the definition of $\sigma^0$, we have
\begin{equation}
    \sigma^0 = \sum_{y, y' \in (B^{(r)}_0)^{n'}} \sigma^0_{y, y'} |y\rangle \langle y'|\,.
\end{equation}
We can then use the fact that $\sigma$ is $(r, k)$-deviated from $\sigma^0$ to show that
\begin{equation}
    \sigma = \sum_{y, y' \in (A^{(r)}_0)^{n'}} \sigma_{y, y'} |y\rangle \langle y'|\,.
\end{equation}
This is because the definition of $A^{(r)}_0$ ensures that it contains all bitstrings that can be generated by taking any bitstring in $B^{(r)}_0$ and flipping the bits in an $(r, d)$-sparse set of bits.
We are now ready to study the effect under the oracle $U_{\tilde{f}_s^*}$.

Writing $\wt{O}_{t, t'}^A = \sum_{u,u' \in \{0, 1\}^{|A|}} \wt{b}_{t,t',u,u'} |u\rangle \langle u'|$, we then have
\begin{equation}
\label{E:rdecomp1}
U_{\tilde{f}_s^*}(\rho \otimes \sigma)U_{\tilde{f}_s^*}^\dagger  = \sum_{\substack{u,u' \in \{0, 1\}^{|A|} \\ t,t'\in S_{A^c} \\ y, y' \in (A^{(r)}_0)^{n'}}} \wt{b}_{t,t',u,u'} \sigma_{y, y'} |u\rangle \langle u'| \otimes |t\rangle \langle t'| \otimes |y \oplus \wt{f}_s^*(ut)\rangle \langle y' \oplus \wt{f}_s^*(u't')|\,.
\end{equation}
Recall that $A$ is $(r,k)$-sparse. Since $k \leq d$, on account of~\eqref{E:tildefsdef0}, \eqref{E:tildefsdef1}, and the definition of $(r, k)$-sparseness, we have that $\wt{f}_s^*(ut)$ only depends on $t$ (and likewise $\wt{f}_s^*(u't')$ only depends on $t'$), and so in a slight abuse of notation we write $\wt{f}_s^*(ut) = \wt{f}_s^*(t)$ so that~\eqref{E:rdecomp1} becomes
\begin{equation}
\label{E:rdecomp2}
U_{\tilde{f}_s^*}(\rho \otimes \sigma)U_{\tilde{f}_s^*}^\dagger  = \sum_{\substack{u,u' \in \{0, 1\}^{|A|} \\ t,t'\in S_{A^c} \\ y, y' \in (A^{(r)}_0)^{n'}}} \wt{b}_{t,t',u,u'} \sigma_{y, y'} |u\rangle \langle u'| \otimes |t\rangle \langle t'| \otimes |y \oplus \wt{f}_s^*(t)\rangle \langle y' \oplus \wt{f}_s^*(t')|\,.
\end{equation}
We similarly write $\wt{f}_s^0(ut) = \wt{f}_s^0(t)$, and define $a_{t,t'} \triangleq \langle \wt{f}_s^0(t')| \wt{f}_s^0(t)\rangle$.
Consider the following two cases that cover all possibilities of $a_{t,t'}$.
\begin{enumerate}
    \item $a_{t, t'} = 0$: In this case,  $\wt{f}_s^0(t')$ and $\wt{f}_s^0(t)$ have at least one bit which differs. Say their $j$-th bits differ, i.e.~$[\wt{f}_s^0(t)]_j \neq [\wt{f}_s^0(t')]_j$. Let $[y]_{i,...,j}$ for $i \leq j$ denote the bitstring $(y_i, y_{i+1},...,y_{j})$, with similar notation for $y'$. For the two bitstrings $[y]_{j m^r+1,...,(j+1) m^r}, [y']_{j m^r+1,...,(j+1) m^r} \in A^{(r)}_0$, Lemma~\ref{lem:strucAr} gives
    \begin{align}
        [y]_{j m^r+1,...,(j+1) m^r} \oplus (\underbrace{[\wt{f}_s^0(t)]_j,...,[\wt{f}_s^0(t)]_j}_{m^r\text{ times}}) &\in A^{(r)}_{\wt{f}_s^0(t)_j} \\ [y']_{j m^r+1,...,(j+1) m^r} \oplus (\underbrace{[\wt{f}_s^0(t')]_j,...,[\wt{f}_s^0(t')]_j}_{m^r\text{ times}}) &\in A^{(r)}_{\wt{f}_s^0(t')_j}.
    \end{align}
    Using Lemma~\ref{lem:disjointAr} on the disjointness of $A^{(r)}_0$ and $A^{(r)}_1$, we have
    \begin{equation}
        [y \oplus \wt{f}_s^*(t)]_{jm^r+1 ,...,(j+1) m^r} \neq [y' \oplus \wt{f}_s^*(t')]_{j m^r+1,..., (j+1)m^r}\,.
    \end{equation}
    Thus, we have $\langle y' \oplus \wt{f}_s^*(t')| y \oplus \wt{f}_s^*(t)\rangle = 0$.
    \item $a_{t, t'} = 1$: All bits in $\wt{f}_s^0(t'), \wt{f}_s^0(t)$ are the same. Hence, $\langle y' \oplus \wt{f}_s^*(t')| y \oplus \wt{f}_s^*(t)\rangle = \langle y' | y \rangle$.
\end{enumerate}
Tracing out the $\mathcal{H}_{\text{anc},\,1}$ subsystem above gives
\begin{equation}
\label{E:toagree1}
\text{tr}_{\mathcal{H}_{\text{anc},\,1}}\!\left\{U_{\tilde{f}_s^*}(\rho \otimes \sigma)U_{\tilde{f}_s^*}^\dagger \right\} = \sum_{\substack{t, t'\in S_{A^c}}} a_{t, t'}\,\wt{O}_{t, t'}^A \otimes |t\rangle \langle t'| \sum_{y \in (A^{(r)}_0)^{\otimes n'}} \sigma_{y, y} = \sum_{\substack{t,t'\in S_{A^c}}} a_{t,t'}\,\wt{O}_{t,t'}^A \otimes |t\rangle \langle t'| \,.
\end{equation}
Since we likewise have
\begin{equation}
\label{E:toagree2}
\text{tr}_{\mathcal{H}_{\text{anc},\,1}}\!\left\{U_{\tilde{f}_s^*}(\rho^0 \otimes \sigma^0)U_{\tilde{f}_s^*}^\dagger \right\} = \sum_{\substack{t,t'\in S_{A^c}}} a_{t,t'}\,O_{t,t'}^A \otimes |t\rangle \langle t'| \,,
\end{equation}
we find that~\eqref{E:toagree1} and~\eqref{E:toagree2} agree upon taking the partial trace of each over $\mathcal{H}_A$.  Thus we find that $\text{\rm tr}_{\mathcal{H}_{\text{\rm anc},\,1}}\!\!\left\{U_{\wt{f}_s^*} (\rho \otimes \sigma) U_{\wt{f}_s^*}^\dagger\right\}$ is $(r, k)$-deviated from $\text{\rm tr}_{\mathcal{H}_{\text{\rm anc},\,1}}\!\!\left\{U_{\wt{f}_s^*} (\rho^0 \otimes \sigma^0) U_{\wt{f}_s^*}^\dagger\right\}$, as claimed.
\end{proof}

\subsubsection{Proof of super-polynomial separation between \texorpdfstring{$\NISQ$}{NISQ} and \texorpdfstring{$\BPP$}{BPP}}

We are now ready to complete the proof of the oracle separation between $\NISQ$ and $\BPP$.

\begin{proof}[Proof of Theorem~\ref{thm:superpoly1}]
Let us decompose our total Hilbert space $\mathcal{H}$ as $\mathcal{H} \simeq \mathcal{H}_{\text{main},\,1} \otimes \mathcal{H}_{\text{main},\,2} \otimes \mathcal{H}_{\text{anc},\,1} \otimes \mathcal{H}_{\text{anc},\,2}$ where
\begin{equation}
\mathcal{H}_{\text{main},\,1} \simeq (\mathbb{C}^2)^{\otimes (m^r n')}, \,\, \mathcal{H}_{\text{main},\,2} \simeq (\mathbb{C}^2)^{\otimes (n - m^r n')},\,\, \mathcal{H}_{\text{anc},\,1} \simeq (\mathbb{C}^2)^{\otimes (m^r n')},\,\, \mathcal{H}_{\text{anc},\,2} \simeq (\mathbb{C}^2)^{\otimes \mathcal{O}(\text{poly}(n))}.
\end{equation}
We begin with a state on $\mathcal{H}$ initialized in the all-zero state.  By Lemma~\ref{lem:stateprep}, we can prepare a state $\rho$ on $\mathcal{H}_{\text{main},\,1}$, using the ancillas on $\mathcal{H}_{\text{anc},\,2}$, such that $\rho$ is $(r,d/2)$-deviated from $\rho^0 = V^{\otimes n'} H^{\otimes n'}\ketbra{0^{n'}}{0^{n'}} H^{\otimes n'} (V^{\otimes n'})^\dagger$.
By Lemma~\ref{lem:stateprep2}, we can prepare a state $\sigma$ on $\mathcal{H}_{\text{anc},\,1}$, using the ancillas on $\mathcal{H}_{\text{anc},\,2}$, such that $\sigma$ is $(r,d/2)$-deviated from $\sigma^0 = V^{\otimes n'}\ketbra{0^{n'}}{0^{n'}} (V^{\otimes n'})^\dagger$.

At this point in the algorithm, our qubits on $\mathcal{H}_{\text{main},\,2}$ are no longer in the all-zero state due to the local noise.
We do not care what the state is and suppose that the state is given by $\rho^{(2)}$.
We proceed by applying our oracle unitary $U_{\wt{f}_s}$ to $(\rho \otimes \rho^{(2)}) \otimes \sigma$ on $\mathcal{H}_{\text{main},\,1} \otimes \mathcal{H}_{\text{main},\,2} \otimes \mathcal{H}_{\text{anc},\,1}$.
Since the oracle unitary acts as the identity on $\mathcal{H}_{\text{main},\,2}$ by construction, we can equivalently just apply $U_{\wt{f}_s^*}$ to $\rho \otimes \sigma$ on $\mathcal{H}_{\text{main},\,1} \otimes \mathcal{H}_{\text{anc},\,1}$. Doing so and subsequently neglecting the $\mathcal{H}_{\text{anc},\,1}$ register (corresponding to tracing out the qubits), we obtain
\begin{equation}
\rho' = \text{tr}_{\mathcal{H}_{\text{anc},\,1}}\!\left\{U_{\tilde{f}_s^*}(\rho \otimes \sigma)U_{\tilde{f}_s^*}^\dagger \right\}\,.
\end{equation}
But by Lemma~\ref{lem:stability1}, this state is only $(r,d/2)$-deviated from
\begin{equation}
\rho^1 = \text{tr}_{\mathcal{H}_{\text{anc},\,1}}\!\left\{U_{\tilde{f}_s^*}(\rho^0 \otimes \sigma^0)U_{\tilde{f}_s^*}^\dagger \right\}\,.
\end{equation}

If $f_s$ is a 1-to-1 function, then
\begin{equation}
\rho^1 = V^{\otimes n'}\left(\frac{1}{2^{n'}}\sum_{z \in \{0,1\}^{n'}} |z\rangle \langle z|\right) (V^{\otimes n'})^\dagger\,,
\end{equation}
whereas if $f_s$ is a 2-to-1 function we have
\begin{equation}
\rho^1 = V^{\otimes n'}\left(\frac{1}{2^{n'}}\sum_{z \in \{0,1\}^{n'}} \frac{1}{\sqrt{2}}\big(|z\rangle + |z \oplus s\rangle\big) \cdot \frac{1}{\sqrt{2}} \big(\langle z| + \langle z \oplus s|\big)\right) (V^{\otimes n'})^\dagger
\end{equation}
where $s$ is the hidden string. Applying Hadamards to the encoded qubits of $\rho'$, measuring in the computational basis, and applying classical post-processing via recursive majority vote as per Lemma~\ref{lem:endofprotocol}, we will obtain an $n'$ bit string $z_0$ which with probability $1 - \mathcal{O}(1/n')$ is sampled from the distribution $\mathcal{D}$ defined as follows.  If $f_s$ is 1-to-1 function then $\mathcal{D}$ will be the uniform distribution over $n'$ bit strings, whereas if $f_s$ is a 2-to-1 function then $\mathcal{D}$ will be the uniform distribution over $n'$ bit strings subject to the constraint $z_0 \cdot s = 0\,\,(\text{mod }2)$.  


If we repeat the entire procedure $n'$ times, then with probability $(1 - \mathcal{O}(1/n'))^{n'} = \Omega(1)$ we obtain $n'$ such bit strings $z_0, z_1,...,z_{n'-1}$.  If this event, call it $\mathcal{E}$, happens, then by solving the $n'$ linear equations $z_i \cdot s = 0\,\,(\text{mod }2)$ for $i=0,1,...,n'-1$, we can determine whether $s$ is the all-zero string meaning $f_s$ is 1-to-1, or some non-trivial string in which case $f_s$ is 2-to-1. In general, if $\mathcal{E}$ does not happen and we have obtained some arbitrary string $s$, we can check that this situation is the case by querying the classical oracle at $f_s(0)$ and $f_s(s)$. So by repeating the entire procedure $\mathcal{O}(\log(1/\delta))$ times, with probability at least $1 - \delta$ the event $\calE$ will happen at least once, and we will be able to determine if $f_s$ is 1-to-1 or 2-to-1.
\end{proof}

\begin{remark}\label{remark:superpoly}
    As alluded to at the beginning of this section, the proof above applies verbatim to the stronger noise model where at every layer, every qubit is independently corrupted with probability $\lambda$ by an adversary. As a result, although our definition of $\NISQ$ pertains to local depolarizing noise, the oracle separation between $\BPP$ and $\NISQ$ holds even when the local noise could be adversarially chosen.
\end{remark}

\subsection{\texorpdfstring{$\NISQ$}{NISQ} vs. \texorpdfstring{$\BQP$}{BQP}}
\label{subsec:lift_lower}


In this section we show an oracle separation between $\NISQ$ and $\BQP$ via a simple ``lifting'' of Simon's problem. In fact, we will actually be able to separate $\NISQ$ and $\BPP^{\QNC^0}$ relative to this oracle.

We begin by describing the modification of Simon's problem we will consider. For $n\in\mathbb{N}$, given a function $f: \brc{0,1}^n\to\brc{0,1}^n$, we define the \emph{lift} of $f$ to be the function $\wt{f}: \brc{0,1}^{2n}\to\brc{0,1}^n$ given by
\begin{equation}
    \wt{f}(x) \triangleq \begin{cases}
        f(x_1,\ldots,x_n) & \ \text{if} \ x_{n+1},\ldots,x_{2n} = 0 \\
        0 & \text{otherwise}
    \end{cases}.
\end{equation}
Given lifted function $\wt{f}$, we will abuse notation and let $O_{\wt{f}}$ denote both the classical oracle given by evaluating $\wt{f}$ as well as the quantum oracle
\begin{equation}
    O_{\wt{f}}: \ket{x}\ket{y} \mapsto \ket{x} \ket{y\oplus\wt{f}(x)}.
\end{equation}

It is not hard to see that in the absence of depolarizing noise, a minor modification of Simon's algorithm, which can be implemented in $\BPP^{\QNC^0}$, still works under this lifting. In contrast, for $\NISQ$ algorithms, we show the following:

\begin{theorem}\label{thm:generic_slowdown}
    Let $\lambda\in[0,1]$. Any $\NISQ_\lambda$ algorithm which, given oracle access to $O_{\wt{f}}$ for any lift of a function $f:\brc{0,1}^n\to\brc{0,1}^n$ which is either 2-to-1 or 1-to-1, can determine whether $f$ is 2-to-1 or 1-to-1 with constant advantage must have query complexity at least $\exp(\Omega(\lambda n))$. Thus, relative to oracles $O$ of this form, $\NISQ^O \subsetneq \BQP^O$. 
\end{theorem}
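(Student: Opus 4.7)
The plan is to apply Le Cam's two-point method (Lemma~\ref{lem:lecam}) with $D_0$ atomic on a fixed $1$-to-$1$ function $f_0$ and $D_1$ uniform over $2$-to-$1$ Simon's functions $\{f_s\}_{s \in \{0,1\}^n \setminus \{0^n\}}$. The goal is to show that the induced leaf distributions $p_{f_0}$ and $\mathbb{E}_{s \sim D_1}[p_{f_s}]$ are within $1/3$ in total variation for any $\NISQ_\lambda$ tree whose total query complexity is $N = \exp(o(\lambda n))$.

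The key structural observation is that $U_{\wt{f}}$ depends on $f$ only on the subspace where the second half of the $2n$-qubit input register is $|0^n\rangle$: letting $P$ denote the projector onto this subspace, we have $U_{\wt{f}}(I - P) = U_{\wt{g}}(I - P)$ for all $f, g$. A direct Heisenberg-picture computation gives $D_\lambda^{\otimes n}[|0^n\rangle\langle 0^n|] \preceq (1-\lambda/2)^n \cdot I$, so any state $\sigma$ whose second-half input qubits have been subjected to the initial-layer depolarizing noise satisfies $\tr(P\sigma) \le (1 - \lambda/2)^n$. Decomposing $U_{\wt{f}} = U_{\wt{f}} P + U_{\wt{f}}(I - P)$ and bounding cross terms by Cauchy--Schwarz yields the per-application estimate
\begin{equation}
    \bigl\|U_{\wt{f}_0} \sigma U_{\wt{f}_0}^\dagger - U_{\wt{f}_s} \sigma U_{\wt{f}_s}^\dagger\bigr\|_\tr = O\bigl((1 - \lambda/2)^{n/2}\bigr) = \exp(-\Omega(\lambda n))
\end{equation}
valid for every noisy state $\sigma$ and every $s$.

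A hybrid argument in the spirit of Lemma~\ref{lem:hybrid_basic}, applied oracle-application by oracle-application within every noisy quantum circuit query and combined with Lemma~\ref{lem:perturbchildren} to propagate per-query perturbations into leaf-distribution perturbations, then shows that the noisy queries contribute at most $N_q \exp(-\Omega(\lambda n))$ to the TVD, where $N_q$ is the total number of $U_{\wt{f}}$ applications inside noisy circuits. Classical queries at inputs $x \| y$ with $y \neq 0^n$ are entirely uninformative about $f$, while those at $x \| 0^n$ reduce to classical queries to the original Simon's function; a standard birthday-paradox argument then gives that $N_c = o(2^{n/2})$ such queries keep the posterior on $s$ under $D_1$ close to uniform, contributing only $O(N_c^2/2^n)$ to the TVD. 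Summing the two contributions yields $\tvd(p_{f_0}, \mathbb{E}_s[p_{f_s}]) < 1/3$ whenever $N = N_c + N_q = \exp(o(\lambda n))$, and Le Cam's method finishes the argument.

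The main obstacle I expect is fusing the noisy-query hybrid with the classical Simon analysis cleanly: classical query responses modify the structure of the tree itself rather than merely perturbing transition probabilities, so Lemma~\ref{lem:perturbchildren} does not apply to them verbatim. I would address this by first conditioning the Le Cam calculation on the full classical transcript---paying the birthday-paradox cost on the posterior over $s$---and then running the noisy-query hybrid relative to the remaining near-uniform conditional distribution, on which the per-application trace-distance bound continues to hold on average.
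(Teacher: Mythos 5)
Your proposal is essentially correct and rests on the same two pillars as the paper's proof: the observation that $U_{\wt{f}}$ acts nontrivially only on the subspace where the last $n$ input bits are zero, combined with the fact that per-qubit depolarizing noise leaves at most $(1-\lambda/2)^n$ mass on that subspace (this is exactly Lemma~\ref{lem:proj} in the paper; note your operator inequality $D_\lambda^{\otimes n}[\ketbra{0^n}{0^n}] \preceq (1-\lambda/2)^n I$ only justifies the bound for the all-zeros input, whereas the oracle is invoked repeatedly on arbitrary post-noise states, so you do need the general probabilistic form of that lemma). The one structural difference is the choice of reference in the hybrid: you compare $O_{\wt{f}_0}$ to $O_{\wt{f}_s}$ directly, whereas the paper compares every $O_{\wt{f}}$ to the \emph{identity oracle} (Lemma~\ref{lem:test_slowdown}). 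The paper's choice is what dissolves the obstacle you correctly identify at the end. Once every noisy circuit in the tree is replaced by one querying the identity, the quantum parts of the algorithm are independent of $f$, so the entire tree collapses to a randomized classical query algorithm; the quantum measurement outcomes become internal randomness $r$, and the standard birthday-paradox bound for Simon's problem applies verbatim to $\sup_r \tvd(\E[f\ 1\text{-to-}1]{p'^r_f}, \E[f\ 2\text{-to-}1]{p'^r_f})$. Your direct $f_0$-versus-$f_s$ comparison keeps the classical queries $f$-dependent throughout, which is why Lemma~\ref{lem:perturbchildren} does not apply to them and why your proposed fix (conditioning on the classical transcript and running the hybrid on the near-uniform posterior) becomes delicate: classical and quantum queries are adaptively interleaved, so there is no clean ``classical transcript first'' to condition on. The simplest repair is a triangle inequality through the $f$-independent identity oracle, which is precisely the paper's route; with that substitution your argument and the paper's coincide.
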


\noindent Our lifting operation is reminiscent of the shuffling Simon's problem introduced in \cite{chia2020need} to give an oracle separation between $\BPP^{\QNC^0_d}$ and $\BPP^{\QNC^0_{2d+1}}$. As we show in Appendix~\ref{app:shuffle}, the shuffling Simon's problem can also be used to separate $\NISQ$ from bounded-depth noiseless quantum computation. For instance, this implies the existence of an oracle relative to which $\NISQ\cup \BPP^{\QNC} \subsetneq \BQP$.


We now proceed to the proof of Theorem~\ref{thm:generic_slowdown}. We begin by recording the following basic fact about local depolarizing noise, whose proof we defer to Appendix~\ref{app:defer_lemproj}.

\begin{lemma}\label{lem:proj}
    Given $n'\in\N$, let $\Omega$ denote some subset of $\brc{0,1}^{n'}$, and let $\Pi$ denote the projection to the span of $\brc{\ket{x}}_{x\in\Omega}$. Then for any $\lambda\in[0,1]$ and any $n'$-qubit state $\ket{\psi}$,
    \begin{equation}
        \Tr(\Pi D^{\otimes n'}_\lambda[\ket{\psi}\bra{\psi}]) \le \sup_D \Pr[a\sim D, \tilde{a}]{\tilde{a} \in\Omega}, \label{eq:probflip}
    \end{equation}
    where the supremum is over probability distributions over $\brc{0,1}^{n'}$, and $\tilde{a}$ is the random string obtained by flipping each of the bits of $a$ independently with probability $\lambda/2$.
\end{lemma}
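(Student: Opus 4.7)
\medskip

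\noindent\textbf{Proof plan for Lemma~\ref{lem:proj}.}

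The plan is to show that, in fact, the left-hand side equals $\Pr_{a\sim p_\psi,\tilde a}[\tilde a\in\Omega]$ for the specific distribution $p_\psi(x) \triangleq |\braket{x|\psi}|^2$, from which the stated inequality is immediate by taking the supremum. The core observation is that the depolarizing channel, while not diagonal-preserving in general, has the property that the \emph{diagonal} of $D^{\otimes n'}_\lambda[\rho]$ depends only on the diagonal of $\rho$, and depends on it in a classical way.

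First I would handle the single-qubit case. From $D_\lambda[\rho] = (1-\lambda)\rho + \lambda I/2$ and $\rho_{00} + \rho_{11} = 1$, a direct computation gives $\bra{0}D_\lambda[\rho]\ket{0} = (1-\lambda/2)\rho_{00} + (\lambda/2)\rho_{11}$ and symmetrically for $\bra{1}D_\lambda[\rho]\ket{1}$. Thus, on diagonals, $D_\lambda$ acts identically to the classical symmetric bit-flip channel with flip probability $\lambda/2$.

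Next I would lift this to $n'$ qubits. For any qubit $i$, the partial channel $D_\lambda^{(i)}$ sends $\sigma$ to $(1-\lambda)\sigma + \lambda(I/2)\otimes\tr_i(\sigma)$, so
\begin{equation}
    \bra{x} D_\lambda^{(i)}[\sigma] \ket{x} = (1-\lambda/2)\sigma_{x,x} + (\lambda/2)\sigma_{x\oplus e_i,\,x\oplus e_i}
\end{equation}
depends only on two diagonal entries of $\sigma$. Iterating this across all $n'$ qubits shows, by an easy induction, that $\bra{x}D_\lambda^{\otimes n'}[\rho]\ket{x} = \sum_{y\in\brc{0,1}^{n'}} q_\lambda(x\mid y)\,\rho_{y,y}$, where $q_\lambda(\cdot\mid y)$ is exactly the distribution of the random string obtained from $y$ by flipping each bit independently with probability $\lambda/2$.

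Finally I would sum over $x\in\Omega$. Taking $\rho = \ket{\psi}\bra{\psi}$ gives
\begin{equation}
    \Tr(\Pi D^{\otimes n'}_\lambda[\ket{\psi}\bra{\psi}]) = \sum_{x\in\Omega}\sum_y q_\lambda(x\mid y)\,p_\psi(y) = \Pr_{a\sim p_\psi,\,\tilde a}[\tilde a\in\Omega],
\end{equation}
which is bounded above by the supremum in \eqref{eq:probflip}. There is no genuine obstacle here; the only thing requiring care is verifying that the diagonal-only dependence persists under the tensor product, which the single-qubit identity handles cleanly via induction.
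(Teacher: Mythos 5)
Your proof is correct and takes essentially the same approach as the paper: both arguments establish that the diagonal of $D_\lambda^{\otimes n'}[\ketbra{\psi}{\psi}]$ is exactly the classical bit-flip channel (flip probability $\lambda/2$) applied to the Born distribution $p_\psi$, so the bound holds with equality for $D = p_\psi$. The only cosmetic difference is bookkeeping \--- you iterate the single-qubit diagonal identity qubit by qubit, while the paper expands $D_\lambda^{\otimes n'}$ as a mixture over random subsets of qubits replaced by the maximally mixed state \--- and your induction is, if anything, the cleaner way to organize the same computation.
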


\noindent Note the probability on the right-hand side of \eqref{eq:probflip} is exponentially small when $\Omega\subset\brc{0,1}^{2n}$ is the set of strings $x$ for which $x_{n+1},\ldots,x_{2n} = 0$. We will now use this to show that the distribution over measurement outcomes from running a noisy quantum circuit that has query access to either $O_{\wt{f}}$ or the identity oracle $\mathrm{Id}$ gives very little information about which oracle the circuit has access to.

\begin{lemma}\label{lem:test_slowdown}
    Let $A$ be any $\lambda$-noisy quantum circuit which makes $N$ oracle queries. If $p_{\wt{f}}$ (respectively $p_\id$) is the distribution over the random string $s$ output by the circuit when the oracle is $O_{\wt{f}}$ (respectively the identity oracle $\mathrm{Id}$), then $\tvd(p_{\wt{f}}, p_\id) \le N\exp(-\Omega(\lambda n))$.
\end{lemma}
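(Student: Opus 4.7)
The plan is to combine Lemma~\ref{lem:proj} with a hybrid argument. The key structural observation is that by construction of the lift, $U_{\wt{f}}$ acts as the identity on any computational basis state $\ket{x}\ket{y}$ for which the input bits $x_{n+1},\ldots,x_{2n}$ are not all zero, since $\wt{f}(x) = 0$ in that case. Letting $\Omega\subset\brc{0,1}^{n'}$ denote the subset of $n'$-bit strings whose bits at these positions are zero and $\Pi$ the projector onto the span of $\brc{\ket{s}}_{s\in\Omega}$, this means $U_{\wt{f}} \Pi^\perp = \Pi^\perp$.

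I would first establish the state-dependent bound
\begin{equation*}
    \|U_{\wt{f}} \rho U_{\wt{f}}^\dagger - \rho\|_{\tr} \le O\bigl(\sqrt{\Tr(\Pi \rho)}\bigr)
\end{equation*}
for any mixed state $\rho$, by decomposing $\rho = \Pi\rho\Pi + \Pi\rho\Pi^\perp + \Pi^\perp\rho\Pi + \Pi^\perp\rho\Pi^\perp$, observing that conjugation by $U_{\wt{f}}$ fixes the $\Pi^\perp\rho\Pi^\perp$ term, bounding $\|\Pi\rho\Pi\|_{\tr} = \Tr(\Pi\rho)$ directly, and applying H\"older's inequality $\|AB\|_1 \le \|A\|_2 \|B\|_2$ with $A = \Pi\sqrt{\rho}$, $B = \sqrt{\rho}\Pi^\perp$ to bound each cross term by $\sqrt{\Tr(\Pi\rho)}$.

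Next, since every oracle query in a $\lambda$-noisy circuit is immediately preceded by a layer of depolarizing noise $D_\lambda^{\otimes n'}$, Lemma~\ref{lem:proj} bounds the mass on $\Omega$ of the pre-query state by $\sup_D \Pr[a\sim D, \tilde a]{\tilde a \in \Omega}$. For our $\Omega$, this supremum is attained by any $D$ concentrated on a string whose relevant bits are already zero, giving $\Tr(\Pi \rho) \le (1-\lambda/2)^n \le \exp(-\lambda n/2)$. Combined with the previous step, the trace distance between the state after applying $O_{\wt{f}}$ and the state after applying $\id$ at any oracle-query step is at most $\exp(-\Omega(\lambda n))$.

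I would then conclude via a direct hybrid argument. Define $A^{(0)},\ldots,A^{(N)}$ where $A^{(i)}$ uses $O_{\wt{f}}$ for the first $i$ oracle queries and $\id$ for the rest, and let $\sigma^{(i)}$ denote the corresponding pre-measurement state. Since $A^{(i)}$ and $A^{(i+1)}$ agree through the first $i$ queries, they produce the same pre-query state $\rho^{(i)}$ at step $i+1$ and then diverge by either $U_{\wt{f}}$ or $\id$ applied to $\rho^{(i)}$; by data processing for trace distance, $\|\sigma^{(i)} - \sigma^{(i+1)}\|_{\tr} \le \|U_{\wt{f}} \rho^{(i)} U_{\wt{f}}^\dagger - \rho^{(i)}\|_{\tr} \le \exp(-\Omega(\lambda n))$. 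Telescoping over the $N$ hybrids and passing through the final noisy measurement channel yields $\tvd(p_{\wt{f}}, p_\id) \le N\exp(-\Omega(\lambda n))$ as claimed. The main obstacle is the state-dependent nature of the per-query trace-distance bound; the depolarizing layer preceding each oracle query is exactly what converts it into a uniform one via Lemma~\ref{lem:proj}.
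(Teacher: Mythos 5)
Your proposal is correct and follows essentially the same route as the paper: the key lemma is the same (Lemma~\ref{lem:proj} applied to the subspace where the last $n$ input bits vanish, giving $\Tr(\Pi\rho)\le(1-\lambda/2)^n$ after each depolarizing layer), and the conclusion is reached by the same hybrid/telescoping argument (the paper packages it as Lemma~\ref{lem:hybrid_basic}). The only difference is cosmetic: you derive the per-query bound $\|U_{\wt{f}}\rho U_{\wt{f}}^\dagger-\rho\|_{\tr}\le O(\sqrt{\Tr(\Pi\rho)})$ via a $\Pi,\Pi^\perp$ block decomposition and H\"older's inequality, whereas the paper expands the depolarized state in its eigenbasis and bounds the trace norm of each rank-$\le 2$ difference by $\sqrt{2}$ times its Frobenius norm before applying Jensen --- both yield the same estimate.
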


\begin{proof}
    Let $n'$ denote the number of qubits on which $A$ operates. For convenience, we denote by $\wh{O}_{\wt{f}}$ the channel given by pre-composing $O_{\wt{f}}$ with $D^{\otimes 3n}_\lambda$. We will show that for all $n'$-qubit pure states $\sigma$, $\norm{(\wh{O}_{\wt{f}}\otimes D^{\otimes n'-3n}_\lambda - D^{\otimes n'}_\lambda)[\sigma]}_\tr$ is small so that we can apply Lemma~\ref{lem:hybrid_basic}.
    
    When $\Omega\subset\{0,1\}^{2n}$ is given by all strings whose last $n$ bits are $0$, then for any $a\in\Omega$, if $\tilde{a}$ is obtained by flipping each of the bits of $a$ independently with probability $\lambda/2$, then $\Pr{\tilde{a}\in\Omega} \le (1 - \lambda/2)^n \le \exp(-\lambda n/2)$. So by Lemma~\ref{lem:proj}, if $D^{\otimes n'}_\lambda [\sigma] = \sum_i \lambda_i \ket{v_i}\bra{v_i}$, then $\sum_i \lambda_i \norm{\Pi' v_i}^2 \le \exp(-\lambda n/2)$, where $\Pi'$ is the projection to the span of $\brc{\ket{x}\ket{y}\ket{w}}_{x\in\Omega, y\in\{0,1\}^n, w\in\{0,1\}^{n'-3n}}$. If we write every $v_i$ as $\sum_{x\in\{0,1\}^{2n},y\in\{0,1\}^n,w\in\{0,1\}^{n'-3n}} v_{i,x,y,w} \ket{x}\ket{y}\ket{w}$, then 
    \begin{equation}
        \sum_i \lambda_i \sum_{x\in\Omega,y,w} v_{i,x,y,w}^2 \le  \exp(-\lambda n/2).
    \end{equation}
    and we see that $O_{\wt{f}}\otimes \mathrm{Id}$ maps $\ket{v_i}\bra{v_i}$ to 
    \begin{multline}
        \sum_{x,x'\in\Omega,y,y',w,w'} v_{i,x,y,w} v_{i,x',y',w'}\ket{x,y\oplus f(g(x)),w}\bra{x',y'\oplus f(g(x')),w'} \\ 
        + \sum_{x\in\Omega,x'\not\in\Omega,y,y',w,w'} v_{i,x,y,w} v_{i,x',y',w'}\ket{x,y\oplus f(g(x)),w}\bra{x',y',w'} \\
        + \sum_{x\not\in\Omega,x'\in\Omega,y,y',w,w'} v_{i,x,y,w} v_{i,x',y',w'}\ket{x,y,w}\bra{x',y'\oplus f(g(x')),w'} \\
        + \sum_{x,x'\not\in\Omega,y,y',w,w'} v_{i,x,y,w} v_{i,x',y',w'}\ket{x,y,w}\ket{x',y',w'}.
    \end{multline}
    In particular, 
    \begin{align}
        \MoveEqLeft
        \norm{(O_{\wt{f}}\otimes \mathrm{Id} - \mathrm{Id})[\ket{v_i}\bra{v_i}]}_\tr \\
        &\le \sqrt{2} \norm{(O_{\wt{f}}\otimes \mathrm{Id} - \mathrm{Id})[\ket{v_i}\bra{v_i}]}_F \\
        &\le 2\sqrt{2}\Bigl(\sum_{x\in\Omega \ \text{or} \ x'\in\Omega, y,y',w,w'} v^2_{i,x,y,w}v^2_{i,x',y',w'}\Bigr)^{1/2} \\
        &\le 2\sqrt{2}\Bigl(1 - \Bigl(1 - \sum_{x\in\Omega,y,w} v^2_{i,x,y,w}\Bigr)^2\Bigr)^{1/2} \le 4\sqrt{\sum_{x\in\Omega,y,w} v^2_{i,x,y,w}}.
    \end{align}
    By Jensen's inequality, we can bound $\norm{(\wh{O}_{\wt{f}}\otimes D^{\otimes n'-3n}_\lambda - D^{\otimes n'}_\lambda)[\sigma]}_\tr$ by 
    \begin{equation}
        4\sum_i \lambda_i \sqrt{\sum_{x\in\Omega,y,w} v^2_{i,x,y,w}} \le 4\sqrt{\sum_i \lambda_i \sum_{x\in\Omega,y,w} v^2_{i,x,y,w}} \le 4\exp(-\lambda n/4).
    \end{equation}
    By taking the channels $\calE_1$ and $\calE_2$ in Lemma~\ref{lem:hybrid_basic} to be $\wh{O}_{\wt{f}}\otimes D^{\otimes n'-3n}_\lambda$ and $D^{\otimes n'}_\lambda$, we obtain the desired bound on $\tvd(p_{\wt{f}}, p_\id)$.
\end{proof}

\noindent We are ready to conclude the proof of Theorem~\ref{thm:generic_slowdown}. Roughly, because Lemma~\ref{lem:test_slowdown} tells us that running a noisy quantum circuit with oracle access gives negligible information about the underlying oracle, a $\NISQ$ algorithm with access to $O_{\wt{f}}$ is no more powerful than a classical algorithm with access to the corresponding classical oracle. The lower bound of Theorem~\ref{thm:generic_slowdown} then follows from the classical lower bound for Simon's problem.

\begin{proof}[Proof of Theorem~\ref{thm:generic_slowdown}]
    Let $\calT$ be the learning tree corresponding to a $\NISQ_\lambda$ algorithm that makes at most $N$ classical or quantum oracle queries to $O_{\wt{f}}$, as in Definition~\ref{def:tree_nisq}. By Lemma~\ref{lem:perturbchildren} and Lemma~\ref{lem:test_slowdown}, if we replace every noisy quantum circuit $A$ in the tree with a noisy quantum circuit $A'$ that makes queries to the identity oracle instead of to $O_{\wt{f}}$, then the new distribution over the leaves of $\calT$ is at most $N^2\exp(-\Omega(\lambda n))$-far in total variation from the original distribution $p_{O_{\wt{f}}}$; for $N = \exp(o(\lambda n))$, this quantity is $o(1)$. For convenience, denote this new distribution by $p'_f$.

    To apply Lemma~\ref{lem:lecam}, we wish to bound $\tvd(\E[f \ \text{1-to-1}]{p'_f},\E[f \ \text{2-to-1}]{p'_f})$. But note that because the quantum circuits $A'$ in the new learning tree are independent of the underlying function $f$, the learning tree is simply implementing a randomized classical query algorithm. We can thus think of $p'_f$ as a mixture over distributions $p'^r_f$ each corresponding to some fixing of the internal randomness $r$ of the algorithm (here the coefficients of the mixture are independent of $f$). It thus suffices to bound $\sup_r \tvd(\E[f \ \text{1-to-1}]{p'^r_f}, \E[f \ \text{2-to-1}]{p'^r_f})$. 
    
    Henceforth fix any $r$. The rest of the argument follows the standard proof of the classical lower bound for Simons' algorithm. The algorithm queries the classical oracle at some deterministic sequence of inputs $x_1,\ldots,x_a$, which we may assume without loss of generality are distinct and lie in $\Omega$. For any $y_1,\ldots,y_a$ which are all distinct, $(x_1,y_1),\ldots,(x_a,y_a)$ and $r$ determine some leaf node $\ell$ of the tree. The probability of this leaf node under $\E[f \ \text{1-to-1}]{p'^r_f}$ is $\frac{(2^n - a)!}{(2^n)!}$ and under $\E[f \ \text{2-to-1}]{p'^r_f})$ is $\frac{M}{2^n - 1}\frac{(2^n - a)!}{(2^n)!}$ where $M \triangleq 2^n - 1 - |\brc{x_i\oplus x_j \mid 1 \le i < j \le a}|$. For any $y_1,\ldots,y_a$ for which there is a collision, the probability of the corresponding leaf node under $\E[f \ \text{1-to-1}]{p'^r_f}$ is clearly $0$. We conclude that the total variation between these two mixtures is upper bounded by the probability that there is a collision among $f(x_1),\ldots,f(x_a)$ for a random 2-to-1 function $f$. The latter is at most
    \begin{equation}
        \sum^{a-1}_{i=0} \frac{i}{2^n - 1 - \binom{i}{2}} \le \frac{a^2}{2^{n+1} - 2 - a^2},
    \end{equation}
    so for $a \ll 2^{n/2}$, this quantity is $o(1)$. As $\min(\exp(\Omega(\lambda n)), 2^{n/2}) = \exp(\Omega(\lambda n))$, the theorem thus follows by Lemma~\ref{lem:lecam}. 
\end{proof}

\begin{remark}
    The reader may observe that apart from the classical lower bound for Simon's problem, our proof of the lower bound in Theorem~\ref{thm:generic_slowdown} makes very little use of the fact that $f$ is either a 2-to-1 or 1-to-1 function. In fact, the above argument shows more generally that for any search problem over a family of Boolean functions, the query complexity of any $\NISQ$ algorithm is essentially given by the classical query complexity for that problem.
\end{remark}

\section{Unstructured Search}
\label{sec:grover}

In this section we show that there is no quadratic speedup for unstructured search in $\NISQ$. Given $d\in\mathbb{N}$ and $i\in[d]$, we abuse notation and let $O_i$ denote both the classical oracle $O_i: [d]\to\brc{0,1}$ given by $O_i(x) = \bone{x = i}$ as well as the quantum oracle
\begin{equation}
    O_i: \ket{x}\ket{w} \mapsto (-1)^{\bone{x = i}} \ket{x}\ket{w} \ \ \forall \ x \in [d].
\end{equation} 

We remark that we have opted to work with the phase version of the classical oracle for convenience, since the two standard formulations of the oracle can simulate one another and so our lower bound will be unaffected.
We will also consider the identity oracle, which is classically given by $O_0(x) \triangleq x$ and quantumly given by $O_0: \ket{x}\ket{w}\mapsto \ket{x}\ket{w}$.  

Formally, we show the following:

\begin{theorem}\label{thm:main_grover}
    Let $\lambda\in[0,1]$. Any $\NISQ_\lambda$ algorithm which, given oracle access to $O_i$ for any $i\in[d]$, can determine $i$ with probability $2/3$ must have query complexity at least $\wt{\Omega}(d\lambda)$.
\end{theorem}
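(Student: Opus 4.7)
The plan is to apply Le Cam's two-point method (Lemma~\ref{lem:lecam}) within the learning tree formalism of Definition~\ref{def:tree_nisq}. I would reduce unstructured search to the distinguishing task of telling apart the identity oracle $O_0$ from $O_i$ for a uniformly random $i\in[d]$: any algorithm that identifies the marked element with constant probability also distinguishes these two scenarios with constant advantage, so a lower bound on queries for distinguishing implies the same bound for search. The objective is then to show that the mixtures $\E_i[p_{O_i}]$ and $p_{O_0}$ over leaves of the learning tree are within $o(1)$ in total variation whenever the total number of queries is $\wt o(d\lambda)$.

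Using Lemma~\ref{lem:perturbchildren}, it suffices to bound the per-node perturbation of the child distribution when $O_i$ is replaced by $O_0$, summed over all nodes of the tree and averaged over $i$. For classical-query nodes, the average perturbation is $\mathcal{O}(1/d)$ per query (since $O_i(x)\ne O_0(x)$ only when $i=x$), giving a contribution of $\mathcal{O}(N_c/d)$ across $N_c$ classical queries. For noisy-circuit-query nodes of depth $T$ with $T_q\le T$ internal oracle calls, I would invoke the hybrid argument of Lemma~\ref{lem:hybrid_basic} to reduce to estimating $\E_i\|(\widehat O_i-\widehat O_0)[\sigma]\|_\tr$ for states $\sigma$ reachable by the algorithm, where $\widehat O_s\triangleq O_s\circ D_\lambda^{\otimes n'}$. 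Using $O_i = I-2\ket{i}\!\bra{i}$, a short calculation shows $\|O_i\tau O_i - \tau\|_\tr \le 4\sqrt{\tau_{ii}}$ for any state $\tau$, so the task becomes upper-bounding $\E_i\sqrt{(D_\lambda^{\otimes n'}[\sigma])_{ii}}$ uniformly in $\sigma$. For this I would invoke Lemma~\ref{lem:proj}, which gives a noise-induced tail bound on the probability that the depolarized state concentrates at any individual basis state: this probability decays exponentially in both the number of qubits used to index $i$ (hence in $\log d$) and in $\lambda$, sharpening the naive $1/d$ bound from marginalization.

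The main obstacle is arranging the bookkeeping so that combining the per-query estimate with the number $T_q$ of queries per circuit and the number $N$ of circuit calls yields $\wt\Omega(d\lambda)$ rather than the Grover-like $\wt\Omega(\sqrt d)$ that a direct Cauchy--Schwarz would give. The key technical leverage comes from observing that in the NISQ model the useful depth of any circuit is effectively capped at $\wt{\mathcal{O}}(1/\lambda)$ layers---beyond this depth the noisy output distribution is exponentially close (in the excess depth) to a fixed distribution independent of the oracle, so additional oracle calls supply negligible distinguishing information---and trading this depth cap against the noise-sharpened tail bound from Lemma~\ref{lem:proj}. Putting the classical-query bound, the per-circuit bound obtained from the hybrid argument, and Lemma~\ref{lem:lecam} together, the requirement that the aggregated TVD exceed $1/3$ forces the total query complexity to be at least $\wt\Omega(d\lambda)$, as claimed.
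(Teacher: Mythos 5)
Your reduction from the $\NISQ$ lower bound to a statement about bounded-depth computation is sound and matches the paper: one uses the information decay of $\lambda$-noisy circuits (Lemmas~\ref{lem:info} and~\ref{lem:info_POVM}) to argue that any circuit making more than $\overline{T}=\wt{\mathcal{O}}(1/\lambda)$ oracle calls can be replaced by one that measures the maximally mixed state, so it suffices to lower-bound adaptive sequences of noiseless depth-$\overline{T}$ circuits. You also correctly identify the central obstacle. But your proposed resolution of that obstacle does not work, and this is where the real content of the proof lies.

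The problem is that any accounting scheme of the form ``bound the per-oracle-call perturbation in trace/TV distance and sum via Lemma~\ref{lem:perturbchildren} and Lemma~\ref{lem:hybrid_basic}'' is structurally capped at the BBBV bound. The averaged per-call perturbation $\E_i\bigl[4\sqrt{\tau_{ii}}\bigr]\le 4/\sqrt{d}$ is essentially tight, and Lemma~\ref{lem:proj} does not sharpen it: for $\Omega=\{i\}$ a single basis state it gives $\tau_{ii}\le(1-\lambda/2)^{\log d}=d^{-\Theta(\lambda)}$, which is \emph{weaker} than the $1/d$ you get from marginalizing over $i$, and it carries no extra factor that interacts usefully with the depth cap. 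Summing $N$ circuits of depth $\overline{T}$ then forces $N\overline{T}\gtrsim\sqrt{d}$ total queries, i.e.\ only the Grover-type $\Omega(\sqrt d)$ bound, which for constant $\lambda$ is exponentially (in the exponent) short of the claimed $\wt\Omega(d\lambda)$. No choice of bookkeeping fixes this, because TV distances genuinely add linearly while the quantity that actually accumulates across independent bounded-depth runs is quadratic in the per-run deviation.

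The missing idea is the paper's Theorem~\ref{thm:depth} and its proof: instead of summing TV perturbations, one tracks the \emph{likelihood ratio} $L_i(\vec z)=\prod_t\lvert\langle s_t|\phi_i(A_t)\rangle\rvert^2/\lvert\langle s_t|\phi_0(A_t)\rangle\rvert^2$ along root-to-leaf paths of the learning tree. Lemma~\ref{lem:momentY} shows the per-edge increments $Y_i(u_t,A_t,s_t)$ have mean $-\lVert\ket{\phi_i(A_t)}-\ket{\phi_0(A_t)}\rVert^2$ and second moment $4\lVert\ket{\phi_i(A_t)}-\ket{\phi_0(A_t)}\rVert^2$, and Lemma~\ref{lem:zalka} gives $\sum_i\lVert\ket{\phi_i(A_t)}-\ket{\phi_0(A_t)}\rVert^2\le 4T_t^2$; so for most $i$ the cumulative drift of $\log L_i$ over $N$ runs is only $O(NT^2/d)$, and Freedman's inequality (via the good/balanced/concentrated path decomposition) shows $L_i(\vec z)$ concentrates near $1$ unless $NT^2\gtrsim d$. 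This yields total query complexity $NT\gtrsim d/T$, and substituting $T=\overline{T}=\wt{\mathcal{O}}(1/\lambda)$ gives $\wt\Omega(d\lambda)$. Without this second-moment (martingale) analysis, your argument cannot exceed $\Omega(\sqrt d)$.
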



\noindent Our proof is composed of two parts. The first and main part is to establish a stronger result, namely a query complexity lower bound even for \emph{noiseless} bounded-depth quantum computation. The second part is to verify, essentially via an argument of \cite{aharonov1996limitations}, that this implies a lower bound for $\NISQ$.

\subsection{Proof preliminaries}

We will still work with the tree formalism from Definition~\ref{def:tree_nisq}, but because our focus now is on noiseless bounded-depth quantum computation, the definition simplifies somewhat:

\begin{definition}[Tree representation for bounded-depth algorithms] \label{def:tree_bounded_depth}
    Given oracle $O$, a noiseless depth-$T$ algorithm with access to $O$ can be associated with a pair $(\mathcal{T},\mathcal{A})$ as follows. The \emph{learning tree} $\mathcal{T}$ is a rooted tree, where each node in the tree encodes the transcript of all classical query and noisy quantum circuit results the algorithm has seen so far. The tree satisfies the following properties:
    \begin{itemize}[leftmargin=*,itemsep=0pt]
        \item Each node $u$ is associated with a value $p_O(u)$ corresponding to the probability that the transcript observed so far is given by the path from the root $r$ to $u$. In this way, $\calT$ naturally induces a distribution over its leaves; we abuse notation and denote this distribution by $p_O$. For the root $r$, $p_O(r) = 1$.
        \item At each non-leaf node $u$, we run a noiseless depth-$T$ quantum circuit $A$ with access to $O$. The children $v$ of $u$ are indexed by the possible $s\in\brc{0,1}^{n'}$ that could be obtained as a result. We refer to the edge between $u$ and $v$ as $(u,A,s)$. We denote by $\ket{\phi_O(A)}$ the output state of the circuit so that the probability of traversing $(u,A,s)$ from node $u$ to child $v$ is given by $|\braket{s|\phi_O(A)}|^2$. We define 
        \begin{equation}
            p_O(v) = p_O(u) \cdot |\braket{s | \phi_O(A)}|^2 \, .
        \end{equation}
        \item If the total number of queries to $O$ made along any root-to-leaf path is at most $N$, we say that the \emph{query complexity} of the algorithm is at most $N$.
    \end{itemize}    
    $\mathcal{A}$ is any classical algorithm that takes as input a transcript corresponding to any leaf node $\ell$ and attempts to determine the underlying oracle or predict some property thereof.
\end{definition}

\noindent We note that one minor distinction between the tree formalism for bounded-depth computation versus the one for $\NISQ$ is that we do not consider classical queries to $O$. The reason is that because the quantum circuits $A$ used at every node in Definition~\ref{def:tree_bounded_depth} are noiseless, we can simulate noiseless classical query access to $O$ using quantum query access.


In the sequel, for $O = O_i$ for $i \in \brc{0,\ldots,d}$, we will refer to $p_O$ and $\phi_O$ in Definition~\ref{def:tree_bounded_depth} as $p_i$ and $\phi_i$. 

We will make use of the following well-known bound:

\begin{lemma}[Eq. (7) in \cite{zalka1999grover}]\label{lem:zalka}
    For any quantum circuit $A$ for unstructured search that makes $T$ oracle queries, if $\ket{\phi_i(A)}$ denotes the output state when the underlying oracle is $O_i$, then
    \begin{equation}
        \sum^d_{i=1} \norm{\ket{\phi_i(A)} - \ket{\phi_0(A)}}^2 \le 4T^2.
    \end{equation}
\end{lemma}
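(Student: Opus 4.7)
The plan is to prove this via the standard BBBV-style hybrid argument, which quantifies how much ``information about $i$'' a query can accumulate in a single step. Let $U_0, U_1, \ldots, U_T$ denote the $A$-internal unitaries interleaved with the $T$ oracle calls, so that with oracle $O$ the circuit produces $\ket{\phi_O(A)} = U_T \, O \, U_{T-1} \, O \, \cdots \, O \, U_0 \ket{0}$. For each $i\in[d]$, I will define hybrid states where the first $t$ queries are answered by $O_i$ and the remaining $T-t$ queries by $O_0$, and telescope the difference between $\ket{\phi_i(A)}$ and $\ket{\phi_0(A)}$ into $T$ per-query differences.

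Concretely, let $\ket{\psi_{t-1}} \triangleq U_{t-1} O_0 U_{t-2} \cdots O_0 U_0 \ket{0}$ be the pre-$t$-th-query state under $O_0$ (which is the same for every $i$), and write $\ket{\psi_{t-1}} = \sum_{x,w} \alpha_{x,w,t-1}\ket{x}\ket{w}$. Since $(O_i - O_0)\ket{x}\ket{w} = -2\,\mathbf{1}[x=i]\ket{x}\ket{w}$, the $t$-th hybrid increment has norm
\begin{equation}
\|\ket{\delta_{i,t}}\|^2 \triangleq \|(O_i - O_0)\ket{\psi_{t-1}}\|^2 = 4 \sum_w |\alpha_{i,w,t-1}|^2 = 4 \, p_{i,t-1},
\end{equation}
using that the post-query unitaries $U_T,\ldots,U_t$ are norm-preserving. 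Here $p_{i,t-1}$ is the probability that measuring the query register of $\ket{\psi_{t-1}}$ yields $i$; in particular $\sum_{i=1}^d p_{i,t-1} \le 1$, so $\sum_{i=1}^d \|\ket{\delta_{i,t}}\|^2 \le 4$ for each $t$.

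The telescoping identity gives $\ket{\phi_i(A)} - \ket{\phi_0(A)} = \sum_{t=1}^T \ket{\delta_{i,t}}$, and applying the triangle inequality and Cauchy--Schwarz yields
\begin{equation}
\|\ket{\phi_i(A)} - \ket{\phi_0(A)}\|^2 \le T \sum_{t=1}^T \|\ket{\delta_{i,t}}\|^2.
\end{equation}
Summing over $i$ and exchanging the order of summation gives
\begin{equation}
\sum_{i=1}^d \|\ket{\phi_i(A)} - \ket{\phi_0(A)}\|^2 \le T \sum_{t=1}^T \sum_{i=1}^d \|\ket{\delta_{i,t}}\|^2 \le T \cdot T \cdot 4 = 4T^2,
\end{equation}
which is the desired inequality. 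There is no real obstacle: the one mildly subtle point is remembering to use the \emph{single} $O_0$-reference trajectory $\ket{\psi_{t-1}}$ (rather than $i$-dependent hybrids) when applying the per-step bound, so that the single-step marginals $p_{i,t-1}$ sum to at most $1$ over $i$; without this choice one would only get a bound depending on $d$ rather than a universal constant per step.
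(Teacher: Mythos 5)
Your proof is correct. Note that the paper does not actually prove this lemma: it imports it directly as Eq.~(7) of the cited Zalka reference, so there is no in-paper argument to compare against. What you have written is a complete, self-contained derivation via the standard BBBV hybrid argument, which is essentially the reasoning underlying the cited bound, and you correctly identify the one genuinely important choice: ordering the hybrids so that the state entering the differing query is the single $i$-independent reference trajectory $\ket{\psi_{t-1}}$ run entirely under $O_0$, which is what makes the per-step quantities $p_{i,t-1}$ sum to at most $1$ over $i$ and yields a bound independent of $d$. One small notational imprecision: the telescoping increments are not literally $\ket{\delta_{i,t}} = (O_i - O_0)\ket{\psi_{t-1}}$ but rather $W_{i,t}\ket{\delta_{i,t}}$, where $W_{i,t}$ is the remaining evolution after the $t$-th query (which includes further oracle calls, not only the internal unitaries $U_t,\dots,U_T$, and hence depends on $i$); since $W_{i,t}$ is unitary this changes nothing in the norm bounds, but the identity $\ket{\phi_i(A)} - \ket{\phi_0(A)} = \sum_t \ket{\delta_{i,t}}$ as written is not exactly right and should be stated with the propagators included. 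With that cosmetic fix the argument gives exactly $\sum_{i=1}^d \norm{\ket{\phi_i(A)} - \ket{\phi_0(A)}}^2 \le 4T^2$ as required.
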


\subsection{Lower bound against bounded-depth computation}

We now use these tools to prove the following query complexity lower bound. This will be the main component in our proof of Theorem~\ref{thm:main_grover}.

\begin{theorem}\label{thm:depth}
    There is an absolute constant $c > 0$ for which the following holds. Let $d,T\in\mathbb{N}$ with $T \le d$. Then no noiseless quantum algorithm of depth $T$ with query complexity at most $cd/T$ can, given oracle access to $O_i$ for any $i\in[d]$, output $i$ with probability $2/3$.
\end{theorem}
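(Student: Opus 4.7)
The plan is to prove this query-complexity lower bound via an information-theoretic argument combining Fano's inequality with Holevo's bound, crucially exploiting that in the noiseless depth-$T$ model each quantum circuit starts fresh from $\ket{0}^{\otimes n'}$.

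First I would apply Fano: if the algorithm correctly outputs a hidden index $L \in [d]$ (drawn uniformly) with probability at least $2/3$, then the measurement outcomes $Y_1, \ldots, Y_k$ along any root-to-leaf path of the learning tree of Definition~\ref{def:tree_bounded_depth} must satisfy $I(L; Y_1, \ldots, Y_k) \ge \Omega(\log d)$. By the chain rule, $I(L; Y_1, \ldots, Y_k) = \sum_t I(L; Y_t \mid Y_{<t})$. Conditioned on the history $Y_{<t} = y_{<t}$, the circuit $A_{y_{<t}}$ chosen by the algorithm is deterministic and independent of $L$, and it begins in the state $\ket{0}^{\otimes n'}$, so its output $\ket{\phi_L(y_{<t})}$ depends on $L$ only through the oracle $O_L$ acting inside the circuit. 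Holevo's bound then gives
\begin{equation}
I(L; Y_t \mid Y_{<t} = y_{<t}) \le S(\bar{\rho}(y_{<t})), \qquad \bar{\rho}(y_{<t}) \triangleq \mathbb{E}_L \ketbra{\phi_L(y_{<t})}{\phi_L(y_{<t})}.
\end{equation}

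The crux of the argument is to bound $S(\bar{\rho}(y_{<t}))$ by $O((T_t^2/d)\log d)$, where $T_t \le T$ is the query count of $A_{y_{<t}}$. Applying Lemma~\ref{lem:zalka} to $A_{y_{<t}}$ yields $\mathbb{E}_L \|\ket{\phi_L(y_{<t})} - \ket{\phi_0(y_{<t})}\|^2 \le 4T_t^2/d$, from which I would deduce $\braket{\phi_0(y_{<t})|\bar{\rho}(y_{<t})|\phi_0(y_{<t})} \ge 1 - O(T_t^2/d)$, and hence the top eigenvalue of $\bar{\rho}(y_{<t})$ is at least $1 - O(T_t^2/d)$. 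Combined with the observation that $\bar{\rho}(y_{<t})$ has rank at most $d$ (being a uniform mixture of $d$ pure states), a maximum-entropy computation\---spreading the remaining $O(T_t^2/d)$ mass uniformly across the other $d-1$ eigenvalues\---yields the claimed entropy bound.

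Putting it all together: summing over $t$ gives $I(L; Y_1, \ldots, Y_k) \le O(\log(d)/d) \cdot \mathbb{E}_{\text{path}}\bigl[\sum_t T_t^2\bigr]$. Since $T_t \le T$ and $\sum_t T_t \le N$ along every root-to-leaf path, I get $\sum_t T_t^2 \le TN$ and thus $I \le O(TN \log(d)/d)$. Comparing with the Fano lower bound $\Omega(\log d)$ shows $N \ge \Omega(d/T)$, which gives the theorem for a sufficiently small absolute constant $c$. I expect the main obstacle to be the third step: naive trace-distance bounds from Lemma~\ref{lem:zalka} would only yield $\Omega(\sqrt{d})$-type bounds on $N$. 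The improvement to $d/T$ comes from pairing Zalka's inequality with the rank-$d$ constraint on $\bar{\rho}(y_{<t})$, which itself is available precisely because the bounded-depth model forces the fresh initial state at every circuit; without such resetting between circuits the effective rank would grow unboundedly and the entropy bound would be much weaker.
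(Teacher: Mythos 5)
Your architecture (Fano $+$ chain rule $+$ Holevo, with Lemma~\ref{lem:zalka} feeding a top-eigenvalue/rank-$d$ entropy bound) is genuinely different from the paper's, which instead bounds $\tvd(p_0,\mathbb{E}_i p_i)$ via Le Cam's method by showing the likelihood ratio $L_i(\vec{z})$ concentrates near $1$ along $p_0$-typical paths (the good/balanced/concentrated decomposition plus Freedman's inequality). Unfortunately your route has a genuine gap at its central step, and it is exactly the step that the paper's martingale machinery exists to handle: adaptivity. In the chain rule, $I(L;Y_t\mid Y_{<t}=y_{<t})$ is the mutual information of the ensemble $\{p(\ell\mid y_{<t}),\,\ketbra{\phi_\ell(y_{<t})}{\phi_\ell(y_{<t})}\}$ weighted by the \emph{posterior} on $L$ given the transcript, not by the uniform prior. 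Holevo then bounds it by $S\bigl(\sum_\ell p(\ell\mid y_{<t})\ketbra{\phi_\ell}{\phi_\ell}\bigr)$. Your estimate $\braket{\phi_0|\bar\rho|\phi_0} \ge 1-O(T_t^2/d)$ uses Zalka's inequality, which controls only the \emph{uniform} average $\frac1d\sum_\ell\|\ket{\phi_\ell}-\ket{\phi_0}\|^2 \le 4T_t^2/d$; it says nothing about the posterior-weighted average $\sum_\ell p(\ell\mid y_{<t})\|\ket{\phi_\ell}-\ket{\phi_0}\|^2$, which can be larger by a factor of up to $d$ if the posterior has drifted toward the indices on which the circuit's output is far from the null output. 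Concretely, if after some rounds the posterior is supported on $\sqrt{d}$ indices, your per-round entropy bound degrades to $O((T_t^2/\sqrt{d})\log d)$ and the argument only yields $N=\Omega(\sqrt{d}/T)$. One can hope to argue that the posterior cannot concentrate without information already having been paid for, but that is an amortized/potential-function argument you have not supplied, and it is not automatic.

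For $t=1$ (or for a nonadaptive algorithm) your computation is correct and clean, and the rank-$d$ plus top-eigenvalue maximum-entropy bound $S(\bar\rho)\le H_2(\epsilon)+\epsilon\log(d-1)=O((T_t^2/d)\log d)$ is a nice observation. But to make the argument work for adaptive algorithms you would need to either (i) prove a per-round bound in terms of the posterior together with an amortization showing the total over all rounds is still $O(TN\log(d)/d)$, or (ii) abandon the posterior-conditioned decomposition in favor of a reference-measure argument as in the paper, i.e., compare everything against the fixed null oracle $O_0$ under the path distribution $p_0$. The latter is precisely where the paper's "$i$-good'' condition and Freedman's inequality come in: they control the low-probability transcripts on which individual likelihood-ratio factors (equivalently, on which your per-round information terms) would blow up. As written, the proposal does not establish the theorem.
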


\subsubsection{Likelihood ratio calculations}

\noindent To prove Theorem~\ref{thm:depth}, our goal is to bound $\tvd(p_0,\mathbb{E}_{i\sim \brk{d}}p_i)$. To that end, we will analyze the likelihood ratio between these two distributions. Given any path $\vec{z} = ((u_1,A_1,s_1),(u_2,A_2,s_2),\ldots,(u_T,A_T,s_T))$ in the tree, the likelihood ratio $L_i(\vec{z})$ between traversing that path when the underlying oracle is $O_i$ versus when it is $O_0$ is given by
\begin{align}
    L_i(\vec{z}) &= \prod^n_{t=1} \frac{|\iprod{s_t | \phi_i(A_t)}|^2}{|\iprod{s_t | \phi_0(A_t)}|^2} \\
    &\ge \prod^n_{t=1} \brk*{1 + \frac{\iprod{s_t | \phi_i(A_t)} - \iprod{s_t | \phi_0(A_t)}}{\iprod{s_t | \phi_0(A_t)}} + \frac{\iprod{\phi_i(A_t) | s_t} - \iprod{\phi_0(A_t) | s_t}}{\iprod{\phi_0(A_t) | s_t}} }
\end{align}
For convenience, define
\begin{align}
    Y_i(u_t,A_t,s_t) &= \frac{\iprod{s_t | \phi_i(A_t)} - \iprod{s_t | \phi_0(A_t)}}{\iprod{s_t | \phi_0(A_t)}} + \frac{\iprod{\phi_i(A_t) | s_t} - \iprod{\phi_0(A_t) | s_t}}{\iprod{\phi_0(A_t) | s_t}}\\
    &= 2 \,\mathrm{Re}\left[ \frac{\iprod{s_t | \phi_i(A_t)} - \iprod{s_t | \phi_0(A_t)}}{\iprod{s_t | \phi_0(A_t)}} \right]
\end{align}
so that we have
\begin{equation}
    L_i(\vec{z}) \ge \prod^n_{t=1} \left(1 + Y_i(u_t,A_t,s_t) \right). \label{eq:prodY}
\end{equation}
Our goal is to show that, with respect to the distribution over paths $\vec{z}$ when the underlying oracle is $O_0$, $\E[i\sim\brk{d}]{L_i(\vec{z})}$ is with high probability not too small. This readily implies the desired upper bound on $\tvd(p_0,\mathbb{E}_{i\sim \brk{d}}p_i)$, as the latter satisfies 
\begin{equation}
    \tvd(p_0,\mathbb{E}_{i\sim[d]} p_i) = \E[\vec{z}\sim p_0]{\max(0,1 - \E[i]{L_i(\vec{z})})} \le \E[\vec{z}\sim p_0, i]{\max(0,1 - L_i(\vec{z}))}. \label{eq:LRtvbound}
\end{equation}
The idea for showing this will be to argue that the successive partial products in \eqref{eq:prodY} give rise to a \emph{multiplicative sub-martingale}\footnote{Equivalently, the partial sums $\sum^m_{t=1} \log(1 +Y_i(u_t,A_t,s_t))$ for $m = 1,\ldots,n$ give rise to a sub-martingale.} with suitably bounded increments $1 + Y_i(u_t,A_t,s_t)$, so that we can apply off-the-shelf martingale concentration inequalities.

Key to bounding these increments are the following moment bounds on $Y_i(u,A,s)$, as a random variable in $s$.

\begin{lemma}\label{lem:momentY}
    For any edge $(u,A,s)$ in the tree and any $i\in[d]$, we have that
    \begin{itemize}
        \item $\E[s]*{Y_i(u,A,s)} = -\norm{\ket{\phi_i(A)} - \ket{\phi_0(A)}}_2^2$,
        \item $\E[s]*{Y_i(u,A,s)^2} \leq 4 \norm{\ket{\phi_i(A)} - \ket{\phi_0(A)}}_2^2$,
    \end{itemize}
    Here the expectations are with respect to the distribution over measurement outcomes when the underlying oracle is $O_0$.
\end{lemma}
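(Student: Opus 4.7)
The plan is to compute both moments directly by expanding the definition of $Y_i$ and leveraging Parseval/completeness of the computational basis. Throughout, write $a_s \triangleq \braket{s \mid \phi_0(A)}$ and $b_s \triangleq \braket{s \mid \phi_i(A)} - \braket{s \mid \phi_0(A)}$, so that $Y_i(u,A,s) = 2\,\mathrm{Re}[b_s/a_s]$. The expectation is over $s$ drawn from the Born distribution $|a_s|^2$ induced by the $O_0$-oracle, so any $s$ with $a_s = 0$ receives zero weight and can be ignored without ambiguity.

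For the first moment, I would simply compute
\begin{equation}
    \E_s[Y_i(u,A,s)] = 2\,\mathrm{Re}\biggl[\sum_s |a_s|^2 \cdot \frac{b_s}{a_s}\biggr] = 2\,\mathrm{Re}\biggl[\sum_s \overline{a_s}\, b_s\biggr] = 2\,\mathrm{Re}\bigl[\braket{\phi_0(A) \mid \phi_i(A)} - 1\bigr],
\end{equation}
where the last step uses completeness $\sum_s \ket{s}\bra{s} = \Id$. Since $\|\ket{\phi_i(A)} - \ket{\phi_0(A)}\|_2^2 = 2 - 2\,\mathrm{Re}[\braket{\phi_0(A) \mid \phi_i(A)}]$ (both states being unit vectors), this equals $-\|\ket{\phi_i(A)} - \ket{\phi_0(A)}\|_2^2$, as claimed.

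For the second moment, I would apply the elementary bound $(\mathrm{Re}[z])^2 \le |z|^2$ pointwise, giving
\begin{equation}
    \E_s[Y_i(u,A,s)^2] \le 4 \sum_s |a_s|^2 \cdot \left|\frac{b_s}{a_s}\right|^2 = 4 \sum_s |b_s|^2 = 4\,\|\ket{\phi_i(A)} - \ket{\phi_0(A)}\|_2^2,
\end{equation}
where the last equality is again Parseval applied to the vector $\ket{\phi_i(A)} - \ket{\phi_0(A)}$ expanded in the computational basis.

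There is no real obstacle here; the only subtlety worth flagging is that the ratio $b_s/a_s$ is undefined when $a_s = 0$, but such outcomes carry zero Born weight under $O_0$ and so drop out of both expectations. Everything else is a one-line manipulation, so I would simply present the two displays above as the complete proof.
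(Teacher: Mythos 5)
Your proof is correct and follows essentially the same route as the paper: the first moment is computed via completeness of the computational basis to get $2\,\mathrm{Re}[\braket{\phi_0(A)|\phi_i(A)}]-2 = -\norm{\ket{\phi_i(A)}-\ket{\phi_0(A)}}^2$, and the second moment uses $(\mathrm{Re}[z])^2\le|z|^2$ followed by Parseval. The remark about zero-weight outcomes with $a_s=0$ is a fine (and harmless) addition that the paper leaves implicit.
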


\begin{proof}
    Recalling that we observe outcome $s$ under this distribution with probability $|\braket{\phi_0(A)|s}|^2$, we have
    \begin{align}
        \E[s]{Y_i(u,A,s)} &= \sum_{s}   \iprod{\phi_0(A) | s} \left( \iprod{s | \phi_i(A)} - \iprod{s | \phi_0(A)} \right) \\
        &\qquad + \sum_{s}  \left( \iprod{\phi_i(A) | s} - \iprod{\phi_0(A) | s} \right) \iprod{s | \phi_0(A)} \\
        &= \bra{\phi_0(A)}\ket{\phi_i(A)} - 1 + \bra{\phi_i(A)}\ket{\phi_0(A)} - 1 \\
        &= -\norm{\ket{\phi_i(A)} - \ket{\phi_0(A)}}^2,
    \end{align}
    which establishes the first statement. For the second statement, we have
    \begin{align}
        \E[s]{Y_i(u,A,s)^2} &\le 4\E[s]*{\abs*{\frac{\bra{s}\ket{\phi_i(A)} - \bra{s}\ket{\phi_0(A)}}{\bra{s}\ket{\phi_0(A)}}}^2} \\
        &= 4 \sum_{s} \left(\iprod{\phi_i(A) | s } - \iprod{\phi_0(A) | s}\right) \left(\iprod{s | \phi_i(A)} - \iprod{s | \phi_0(A)}\right)\\
        &= 4 \norm{\ket{\phi_i(A)} - \ket{\phi_0(A)}}_2^2.\qedhere
    \end{align}
\end{proof}

\subsubsection{Good and balanced paths}

In this section we define two conditions on paths $\vec{z}$ of the tree (Definitions~\ref{def:igood} and~\ref{def:ibalanced}) under which we can show that Eq.~\eqref{eq:prodY} is not too small with high probability over paths for which these two conditions hold. We then prove that a random path in $\calT$ under $p_0$ will satisfy these conditions with high probability.

First, let $0 < \epsilon < 1/2$ be a small enough constant that we will set later. Given a path in the learning tree given by $\vec{z} = ((u_1,A_1,s_1),\ldots,(u_n,A_n,s_n))$, if each $A_t$ queries the oracle $T_t$ times, define the potential
\begin{equation}
    \tau(\vec{z}) \triangleq \sum_t T^2_t.
\end{equation}
Note that if every algorithm in the tree queries the oracle $T$ times, then $\tau(\vec{z}) = nT^2$; indeed, it may be helpful for the reader to focus on this case and think of $\tau(\vec{z})$ as $nT^2$ in the sequel. In general, H\"{o}lder's inequality implies the following basic fact:

\begin{lemma}\label{lem:holders}
    If $(\calT,\calA)$ specifies a noiseless quantum algorithm of depth $T$, then its query complexity is at least $\frac{1}{T}\max_{\vec{z}}\tau(\vec{z})$.
\end{lemma}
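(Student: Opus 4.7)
The plan is to observe that this is essentially a bound following from the fact that each circuit $A_t$ along a root-to-leaf path has depth at most $T$, which constrains how many oracle queries $A_t$ can individually make. Specifically, each invocation of the oracle $O$ within $A_t$ takes up at least one layer of circuit depth, so we must have $T_t \le T$ for every index $t$ along any path $\vec{z}$ in the learning tree. This is the only structural input needed.

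Given that bound, the inequality reduces to a one-line calculation: for each $t$ along a fixed path $\vec{z}$,
\begin{equation}
T_t^2 = T_t \cdot T_t \le T \cdot T_t,
\end{equation}
which is the degenerate case of H\"{o}lder's inequality with exponents $(1,\infty)$. Summing over $t$ gives $\tau(\vec{z}) = \sum_t T_t^2 \le T \sum_t T_t$. But $\sum_t T_t$ is precisely the total number of oracle queries made along the path $\vec{z}$, which by Definition~\ref{def:tree_bounded_depth} is at most the query complexity $N$ of the algorithm. Hence $\tau(\vec{z}) \le T \cdot N$ for every path, and taking the maximum over $\vec{z}$ yields the claimed bound $N \ge \frac{1}{T}\max_{\vec{z}} \tau(\vec{z})$.

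I do not expect any substantive obstacle here. The only sanity check worth spelling out is why $T_t \le T$: the circuit $A_t$ at an internal node is itself a depth-$T$ circuit, and the oracle $O$ is accessed as a unitary $U_O$ applied to the full register, so each application of $U_O$ must occupy at least one layer (it cannot be applied in parallel with itself on the same qubits). So the number of oracle calls inside a single $A_t$ cannot exceed the depth $T$. Everything else is just arithmetic.
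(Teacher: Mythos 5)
Your proof is correct and matches the argument the paper intends: the paper gives no explicit proof, attributing the lemma to H\"{o}lder's inequality, and your bound $\sum_t T_t^2 \le (\max_t T_t)(\sum_t T_t) \le T \cdot N$ is exactly that application with exponents $(1,\infty)$. The justification that $T_t \le T$ because each oracle call occupies at least one layer of a depth-$T$ circuit is the right reading of Definition~\ref{def:tree_bounded_depth}.
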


\noindent We are now ready to define our two conditions on paths:

\begin{definition}\label{def:igood}
    We say that an edge $(u,A,s)$ is \emph{$i$-good} if
    \begin{equation}
        Y_i(u,A,s) \ge -\epsilon.
    \end{equation}
    We say that a path $\vec{z}$ is \emph{$i$-good} if all of its constituent edges are $i$-good.
    
    Let $\Igood(\vec{z})$ (respectively $\Ibad(\vec{z})$) denote the set of indices $i\in[d]$ for which $\vec{z}$ is $i$-good (respectively not $i$-good).
\end{definition}

\begin{definition}\label{def:ibalanced}
    We say a path $\vec{z}$ is \emph{$i$-balanced} if its constituent edges $((u_1,A_1,s_1),\ldots,(u_n,A_n,s_n))$ satisfy
    \begin{equation}
        \sum^n_{t=1}\norm{\ket{\phi_i(A_t)} - \ket{\phi_0(A_t)}}^2 \le \max_{\vec{z}'} \frac{40000}{d}\tau(\vec{z}').
    \end{equation}
    Let $\Ibal(\vec{z})$ (respectively $\Iimbal(\vec{z})$) denote the set of indices $i\in[d]$ for which $\vec{z}$ is $i$-balanced (respectively not $i$-balanced).
\end{definition}

\noindent Intuitively, goodness of a path ensures that as one goes from one partial product of \eqref{eq:prodY} to the next, we never experience any significant multiplicative decreases. On the other hand, as we will see in the proof of Lemma~\ref{lem:applymartingale-gen} below, balancedness of a path ensures that the ``variance'' of these multiplicative changes is also not too large. These two conditions are important for applying off-the-shelf martingale concentration bounds like Freedman's inequality, which is governed by how large the changes can be in the worst case and how large they can be on average. 

We now argue that most paths are $i$-good and $i$-balanced for most $i\in[d]$ (Lemmas~\ref{lem:Igood} and \ref{lem:Ibal} below). To do this, we will need the following consequence of Lemma~\ref{lem:momentY}.

\begin{lemma}\label{lem:chebyshev-gen}
    For any edge $(u,A,s)$ in the tree and any $i\in[d]$, we have that $\Pr[s]{(u,A,s) \ \mathrm{not} \ i\mathrm{-good}} \le 12 \norm{\ket{\phi_i(A)} - \ket{\phi_0(A)}}_2^2 / \epsilon^2$. Here the probability is with respect to the distribution over measurement outcomes when the underlying oracle is $O_0$.
\end{lemma}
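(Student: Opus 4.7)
The plan is to deduce this lemma directly from the two moment bounds already established in Lemma~\ref{lem:momentY}, by applying a standard tail inequality. Write $\delta \triangleq \norm{\ket{\phi_i(A)} - \ket{\phi_0(A)}}_2^2$, so that Lemma~\ref{lem:momentY} gives $\E[s]{Y_i(u,A,s)} = -\delta$ and $\E[s]{Y_i(u,A,s)^2} \le 4\delta$, where the expectations are over the outcome distribution $|\braket{s|\phi_0(A)}|^2$ induced by the identity oracle $O_0$.

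The event that $(u,A,s)$ fails to be $i$-good is exactly $\{Y_i(u,A,s) < -\epsilon\}$. Since $\epsilon > 0$, any realization of $Y_i(u,A,s)$ that falls below $-\epsilon$ must satisfy $Y_i(u,A,s)^2 > \epsilon^2$. The plan is therefore to apply Markov's inequality to the nonnegative random variable $Y_i(u,A,s)^2$ and use the second-moment bound from Lemma~\ref{lem:momentY}:
\[
    \Pr[s]{Y_i(u,A,s) < -\epsilon} \;\le\; \Pr[s]{Y_i(u,A,s)^2 \ge \epsilon^2} \;\le\; \frac{\E[s]{Y_i(u,A,s)^2}}{\epsilon^2} \;\le\; \frac{4\delta}{\epsilon^2},
\]
which is comfortably within the claimed bound $12\delta/\epsilon^2$. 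An essentially equivalent variant would be to apply Chebyshev's inequality to $Y_i(u,A,s) - \E[s]{Y_i(u,A,s)} = Y_i(u,A,s) + \delta$, using the variance bound $\Var(Y_i) \le \E{Y_i^2} \le 4\delta$; splitting into the cases $\delta \le \epsilon/2$ and $\delta > \epsilon/2$ (in the latter trivially bounding the probability by $1 \le 2\delta/\epsilon \le 12\delta/\epsilon^2$ for the regime $\epsilon \le 1$) also gives the stated constant.

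There is no real obstacle here: the content is entirely in Lemma~\ref{lem:momentY}, and the present lemma is a one-line second-moment estimate. The only minor point worth verifying is that $Y_i(u,A,s)$ is well-defined on the support of the distribution, but this is immediate because $Y_i$ is only evaluated at outcomes $s$ with nonzero probability $|\braket{s|\phi_0(A)}|^2$, so the denominators $\braket{s|\phi_0(A)}$ appearing in its definition are nonzero whenever they matter. The slack constant (12 rather than the tighter 4) is chosen to accommodate whichever formulation of the concentration step is most convenient downstream when this lemma is combined with Lemma~\ref{lem:momentY} to control goodness along entire root-to-leaf paths.
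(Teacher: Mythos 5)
Your proof is correct and matches the paper's in substance: both are one-line second-moment tail bounds derived from Lemma~\ref{lem:momentY}. The paper applies Chebyshev's inequality to $Y_i$ after splitting on whether $\norm{\ket{\phi_i(A)} - \ket{\phi_0(A)}}_2^2 < \epsilon/2$ (essentially your second variant), whereas your primary route \--- Markov's inequality applied to $Y_i(u,A,s)^2$ \--- is marginally cleaner, avoiding the case split and yielding the sharper constant $4$ in place of $12$.
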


\begin{proof}
    Suppose that $\norm{\ket{\phi_i(A)} - \ket{\phi_0(A)}}^2 < \epsilon / 2$ (otherwise the claim vacuously holds). Then by Chebyshev's inequality,
    \begin{align}
        \Pr[s]{(u,A,s) \ \mathrm{not} \ i\mathrm{-good}} &= \Pr[s]{Y_i(u,A,s) < -\epsilon}\\
        &\leq \frac{\Var[s]{Y_i(u,A,s)}}{(\E[s]{Y_i(u,A,s)} + \epsilon)^2}\\
        &\leq \frac{3 \norm{\ket{\phi_i(A)} - \ket{\phi_0(A)}}_2^2}{(\epsilon - \norm{\ket{\phi_i(A)} - \ket{\phi_0(A)}}_2^2)^2}\\
        &\leq 12 \norm{\ket{\phi_i(A)} - \ket{\phi_0(A)}}_2^2 / \epsilon^2. \qedhere
    \end{align}
\end{proof}

\noindent 
We now combine Lemma~\ref{lem:zalka}, Lemma~\ref{lem:momentY}, and Lemma~\ref{lem:chebyshev-gen} to conclude that for a root-to-leaf path $\vec{z}$ sampled according to $p_0$, the probability that $\Igood(\vec{z})$ or $\Ibal(\vec{z})$ is small is low.

\begin{lemma}\label{lem:Igood}
    $\Pr[\vec{z} \sim p_0]*{|\Igood(\vec{z})| \le d - 4800\E[\vec{z}' \sim p_0]{\tau(\vec{z})} / \epsilon^2} \le 1/100$.
\end{lemma}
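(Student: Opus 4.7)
The plan is to bound the expectation $\E_{\vec{z}\sim p_0}[|\Ibad(\vec{z})|]$ and then conclude by Markov's inequality. Specifically, once I show
\begin{equation}
    \E_{\vec{z}\sim p_0}[|\Ibad(\vec{z})|] \le 48\,\E_{\vec{z}\sim p_0}[\tau(\vec{z})]/\epsilon^2,
\end{equation}
Markov's inequality at threshold $100$ yields $\Pr[|\Ibad(\vec{z})| \ge 4800\,\E[\tau(\vec{z})]/\epsilon^2] \le 1/100$, which is exactly the claimed bound since $|\Igood(\vec{z})| = d - |\Ibad(\vec{z})|$. So the task reduces to a single moment calculation.

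To bound $\E[|\Ibad(\vec{z})|] = \sum_{i\in[d]} \Pr[\vec{z} \text{ not } i\text{-good}]$, I first apply a union bound over the steps of the path: $\vec{z}$ fails to be $i$-good only if some edge $(u_t, A_t, s_t)$ in it is not $i$-good, so
\begin{equation}
    \Pr_{\vec{z}\sim p_0}[\vec{z} \text{ not } i\text{-good}] \le \E_{\vec{z}\sim p_0}\Bigl[\sum_t \mathbf{1}[(u_t,A_t,s_t) \text{ not } i\text{-good}]\Bigr].
\end{equation}
I then condition on the partial path up through time $t-1$. Given this history, the algorithm $A_t$ and its query count $T_t$ are both determined. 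Lemma~\ref{lem:chebyshev-gen} bounds the conditional probability that the edge at step $t$ is not $i$-good by $12\|\ket{\phi_i(A_t)} - \ket{\phi_0(A_t)}\|^2/\epsilon^2$. Summing this conditional bound over $i\in[d]$ and applying Lemma~\ref{lem:zalka} to the per-step circuit $A_t$ collapses the sum to $48\,T_t^2/\epsilon^2$. Summing over $t$ gives $48\,\tau(\vec{z})/\epsilon^2$ by the very definition of $\tau$, and taking the outer expectation over $\vec{z}\sim p_0$ gives the desired bound.

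The main point requiring care is adaptivity: different paths in the tree have different algorithms $A_t$ and different query counts $T_t$ at each layer, so one must condition on the path-so-far before invoking the per-step lemmas, and then apply the tower property of expectation to close up over the whole path. Beyond that, no concentration argument is needed; the bound is a straightforward union bound plus Markov's inequality, with Lemma~\ref{lem:zalka} playing the essential role of converting the $i$-sum over per-edge trace-distance proxies into the $\tau(\vec{z})$ potential.
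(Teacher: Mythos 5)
Your proposal is correct and follows essentially the same route as the paper's proof: bound $\E_{\vec{z}\sim p_0}[|\Ibad(\vec{z})|]$ by a union bound over edges, apply Lemma~\ref{lem:chebyshev-gen} per edge, sum over $i$ via Lemma~\ref{lem:zalka} to get $48\,T_t^2/\epsilon^2$ per step and hence $48\,\E[\tau(\vec{z})]/\epsilon^2$ overall, then finish with Markov's inequality at threshold $100$. Your explicit handling of adaptivity by conditioning on the partial path is a point the paper leaves implicit, but it does not change the argument.
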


\begin{proof}
    Note that
    \begin{align}
        \E[\vec{z} \sim p_0]{|\Ibad(\vec{z})|} &= \sum^d_{i=1} \Pr[z]{\vec{z} \ \mathrm{not} \ i\mathrm{-good}} \\
        &\le \sum^d_{i=1} \sum^n_{t=1} \Pr{(u_t,A_t,s_t) \ \mathrm{not} \ i\mathrm{-good}} \\
        &= \sum^d_{i=1} \sum^n_{t=1} \Pr[s_t]{(u_t,A_t,s_t) \ \mathrm{not} \ i\mathrm{-good}} \\
        &\le \sum^d_{i=1} \sum^n_{t=1} \E[\vec{z} \sim p_0]*{12\norm{\ket{\phi_i(A_t)} - \ket{\phi_0(A_t)}}^2 / \epsilon^2} \\
        &\le \E[z\sim p_0]{48\tau(\vec{z})/\epsilon^2},
    \end{align}
    where the penultimate step follows by Lemma~\ref{lem:chebyshev-gen}, and the last step follows by Lemma~\ref{lem:zalka}. The lemma follows by the fact that $|\Ibad(\vec{z})| + |\Igood(\vec{z})| = d$ and by Markov's inequality.
\end{proof}

\begin{lemma}\label{lem:Ibal}
    $\Pr[\vec{z}\sim p_0]*{|\Ibal(\vec{z})| < 99d/100} \le 1/100$.
\end{lemma}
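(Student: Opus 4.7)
The plan is to observe that the statement actually holds deterministically (with probability zero of failure), via a simple Markov/pigeonhole argument combining Zalka's bound (Lemma~\ref{lem:zalka}) with the potential $\tau$. The probabilistic bound $1/100$ is therefore quite slack, and only the averaging step does any real work.

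First I would fix an arbitrary root-to-leaf path $\vec{z} = ((u_1,A_1,s_1),\ldots,(u_n,A_n,s_n))$ and apply Lemma~\ref{lem:zalka} to each circuit $A_t$ separately. If $T_t$ denotes the number of oracle queries made by $A_t$, this gives
\begin{equation}
    \sum_{i=1}^d \norm{\ket{\phi_i(A_t)} - \ket{\phi_0(A_t)}}^2 \le 4 T_t^2.
\end{equation}
Summing over $t$ and swapping the order of summation yields
\begin{equation}
    \sum_{i=1}^d \sum_{t=1}^n \norm{\ket{\phi_i(A_t)} - \ket{\phi_0(A_t)}}^2 \le 4 \sum_{t=1}^n T_t^2 = 4 \tau(\vec{z}) \le 4 \tau_*,
\end{equation}
where I write $\tau_* \triangleq \max_{\vec{z}'} \tau(\vec{z}')$ for the quantity appearing in Definition~\ref{def:ibalanced}.

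Next I would apply Markov's inequality to the nonnegative sequence $\{\sum_t \norm{\ket{\phi_i(A_t)} - \ket{\phi_0(A_t)}}^2\}_{i\in[d]}$. By Definition~\ref{def:ibalanced}, every $i \in \Iimbal(\vec{z})$ contributes more than $\frac{40000}{d}\tau_*$ to the double sum, so
\begin{equation}
    |\Iimbal(\vec{z})| \cdot \frac{40000}{d}\tau_* \;\le\; \sum_{i=1}^d \sum_{t=1}^n \norm{\ket{\phi_i(A_t)} - \ket{\phi_0(A_t)}}^2 \;\le\; 4 \tau_*.
\end{equation}
Rearranging gives $|\Iimbal(\vec{z})| \le d/10000$, and hence $|\Ibal(\vec{z})| = d - |\Iimbal(\vec{z})| \ge d(1 - 10^{-4}) > 99d/100$. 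Since this inequality holds for every path $\vec{z}$, the event $\{|\Ibal(\vec{z})| < 99d/100\}$ is empty under $p_0$, and in particular has probability at most $1/100$.

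There is no substantive obstacle here; the only care required is in handling the definition of $\tau_*$ (the max over paths rather than the per-path $\tau(\vec{z})$), but since the bound $\tau(\vec{z}) \le \tau_*$ is trivial, this causes no issue. Compared with Lemma~\ref{lem:Igood}, which genuinely needs Chebyshev via Lemma~\ref{lem:chebyshev-gen}, this balancedness lemma is a pure deterministic consequence of Zalka's inequality.
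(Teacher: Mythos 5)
Your proof is correct, and it in fact establishes something slightly stronger than the lemma: the bound $|\Ibal(\vec{z})| \ge (1-10^{-4})d$ holds for \emph{every} root-to-leaf path, not merely with probability $99/100$. The paper's proof uses the same two ingredients \--- Lemma~\ref{lem:zalka} applied circuit-by-circuit and the potential $\tau$ \--- but routes them through an expectation over $\vec{z}\sim p_0$: it first bounds $\Pr[\vec{z}]{\vec{z}\ \text{not}\ i\text{-balanced}}$ via Markov's inequality for each $i$, sums over $i$ to get $\E[\vec{z}\sim p_0]{|\Iimbal(\vec{z})|} \le d/10000$, and then applies Markov's inequality a second time to $|\Iimbal(\vec{z})|$. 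Your version replaces both applications of Markov's inequality with a single deterministic pigeonhole count on each fixed path: since $\sum_{i}\sum_t \norm{\ket{\phi_i(A_t)}-\ket{\phi_0(A_t)}}^2 \le 4\tau(\vec{z}) \le 4\max_{\vec{z}'}\tau(\vec{z}')$ and each imbalanced index contributes more than $\frac{40000}{d}\max_{\vec{z}'}\tau(\vec{z}')$, at most $d/10000$ indices can be imbalanced. (The degenerate case $\max_{\vec{z}'}\tau(\vec{z}')=0$ is harmless, since then all output states coincide with $\ket{\phi_0(A_t)}$ and every index is balanced.) The reason this works is precisely that Definition~\ref{def:ibalanced} uses the worst-case potential $\max_{\vec{z}'}\tau(\vec{z}')$ as its threshold, which dominates the per-path $\tau(\vec{z})$ \--- your handling of that point is the only place where care is needed, and you handle it correctly. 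The deterministic statement is a clean observation; it buys nothing downstream in the proof of Theorem~\ref{thm:depth}, which only consumes the $1/100$ probability bound via a union bound, but it does make the contrast with Lemma~\ref{lem:Igood} (which genuinely requires the randomness of $s_t$ through Lemma~\ref{lem:chebyshev-gen}) more transparent.
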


\begin{proof}
    Note that 
    \begin{align}
        \E[\vec{z}\sim p_0]*{|\Ibal(\vec{z})|} &= \sum^d_{i=1} \left(1 - \Pr[\vec{z}]*{\vec{z} \ \text{not} \ i \text{-balanced}}\right) \\
        &\ge \sum^d_{i=1} \left(1 - \E[\vec{z}\sim p_0]*{\frac{\sum^n_{t=1} \norm{\ket{\phi_i(A_t)} - \ket{\phi_0(A_t)}}_2^2}{\max_{\vec{z}'} 40000\tau(\vec{z}')/d}}\right) \\
        &\ge d - d\E[\vec{z}\sim p_0]*{\frac{\sum^d_{i=1}\sum^n_{t=1} \norm{\ket{\phi_i(A_t)} - \ket{\phi_0(A_t)}}^2}{\max_{\vec{z}'}40000\tau(\vec{z}')}} \\
        &\geq 9999d/10000.
    \end{align}
    where the second step follows by Markov's inequality and the last step follows by Lemma~\ref{lem:zalka}. The lemma follows by the fact that $|\Ibal(\vec{z})| + |\Iimbal(\vec{z})| = d$ and by Markov's inequality.
\end{proof}

\subsubsection{Martingale concentration}

\noindent The paths that are both $i$-balanced and $i$-good are the ones over which the log-likelihood $\log L_i(\vec{z})$ will concentrate as a random variable in $z\sim p_0$. As alluded to in the discussion above, being $i$-balanced ensures bounded variance, while being $i$-good ensures bounded differences. Together these yield the following Bernstein-type concentration which is the main technical ingredient in the proof of Theorem~\ref{thm:depth}:

\begin{lemma}\label{lem:applymartingale-gen}
    For any $i\in[d]$, consider the following sequence of random variables
    \begin{equation}
        \brc*{X_t \triangleq \log\left(1 + Y_i(u_t,A_t,s_t)\right) \cdot \bone{(u_t,A_t,s_t) \ i\mathrm{-good}}}^n_{t=1}, 
    \end{equation}
    where the randomness is with respect to $p_0$. For any $\eta,\nu > 0$, we have
    \begin{multline}
        \Pr[\vec{z}]*{\sum^n_{t=1} X_t \le -13\sum^n_{t=1} \norm{\ket{\phi_i(A_t)} - \ket{\phi_0(A_t)}}_2^2 - \eta \ \text{and} \ \sum^n_{t=1} \norm{\ket{\phi_i(A_t)} - \ket{\phi_0(A_t)}}_2^2 \le \nu}\\
        \le \exp\left(-\frac{\eta^2}{16\nu + 4 \epsilon \eta/3}\right). \label{eq:martingale-gen}
    \end{multline}
\end{lemma}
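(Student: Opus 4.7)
The plan is to lower-bound $\sum_t X_t$ by decomposing it into a deterministic drift and a martingale fluctuation, with the latter controlled via Freedman's inequality after an upper truncation that yields bounded martingale differences.

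For the drift, since $\epsilon < 1/2$, on any $i$-good edge we have $Y_i \ge -\epsilon \ge -1/2$, so the Taylor estimate $\log(1+y) \ge y - y^2$ applies and gives $X_t \ge (Y_i - Y_i^2)\bone{i\text{-good}}$. Combined with the observation $(Y_i - Y_i^2)\bone{\text{not good}} \le 0$ (since $Y_i \le -\epsilon$ on non-good edges) and the second-moment bound of Lemma~\ref{lem:momentY}, this yields
\begin{equation}
    \mathbb{E}[X_t \mid \mathcal{F}_{t-1}] \ \ge \ \mathbb{E}[Y_i - Y_i^2 \mid \mathcal{F}_{t-1}] \ \ge \ -5\,\|\ket{\phi_i(A_t)} - \ket{\phi_0(A_t)}\|^2.
\end{equation}
To obtain bounded martingale increments, we truncate: set $\wh{X}_t \triangleq \min(X_t, 2\epsilon)$, so that $\wh{X}_t \in [-2\epsilon, 2\epsilon]$ (using $\log(1-\epsilon) \ge -2\epsilon$ for $\epsilon \le 1/2$) and $\sum_t X_t \ge \sum_t \wh{X}_t$ pointwise. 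The cost of truncation is controlled via Chebyshev on the upper tail of $Y_i$: since $\log(1+y) \le y$ for $y \ge 0$, and $\Pr[Y_i > 2\epsilon \mid \mathcal{F}_{t-1}] \le \|\phi_i - \phi_0\|^2/\epsilon^2$ by Lemma~\ref{lem:momentY}, Cauchy--Schwarz gives
\begin{equation}
    \mathbb{E}[(X_t - 2\epsilon)_+ \mid \mathcal{F}_{t-1}] \ \le\ \sqrt{\mathbb{E}[Y_i^2]\,\Pr[Y_i > 2\epsilon]} \ \le \ \tfrac{2}{\epsilon}\,\|\ket{\phi_i(A_t)} - \ket{\phi_0(A_t)}\|^2,
\end{equation}
so choosing the constant $\epsilon$ appropriately yields $\mathbb{E}[\wh{X}_t \mid \mathcal{F}_{t-1}] \ge -13\,\|\ket{\phi_i(A_t)} - \ket{\phi_0(A_t)}\|^2$. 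For the conditional second moment, the elementary bound $|\log(1+y)| \le 2|y|$ valid for $y \ge -1/2$ gives $\wh{X}_t^2 \le X_t^2 \le 4 Y_i^2 \bone{i\text{-good}}$, so $\mathbb{E}[\wh{X}_t^2 \mid \mathcal{F}_{t-1}] \le 16\,\|\ket{\phi_i(A_t)} - \ket{\phi_0(A_t)}\|^2$ by Lemma~\ref{lem:momentY}.

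Finally, let $\wh{D}_t \triangleq \wh{X}_t - \mathbb{E}[\wh{X}_t \mid \mathcal{F}_{t-1}]$; the martingale $\sum_t \wh{D}_t$ has increments of order $\epsilon$ and summed conditional quadratic variation at most $16\sum_t \|\phi_i - \phi_0\|^2$. Freedman's inequality, intersected with the event $\sum_t \|\phi_i - \phi_0\|^2 \le \nu$, then yields
\begin{equation}
    \Pr\!\left[\sum_t \wh{D}_t \le -\eta,\ \sum_t\|\phi_i - \phi_0\|^2 \le \nu\right] \ \le \ \exp\!\left(-\frac{\eta^2}{16\nu + 4\epsilon\eta/3}\right).
\end{equation}
Combining this with the pointwise decomposition
\begin{equation}
    \sum_t X_t \ \ge \ \sum_t \wh{X}_t \ = \ \sum_t \wh{D}_t + \sum_t \mathbb{E}[\wh{X}_t \mid \mathcal{F}_{t-1}] \ \ge \ \sum_t \wh{D}_t - 13\sum_t \|\phi_i - \phi_0\|^2
\end{equation}
shows that the event in the lemma's probability forces $\sum_t \wh{D}_t \le -\eta$, completing the proof. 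The main obstacle is carefully tracking constants through the truncation step so that the drift factor, the variance bound, and the range bound in Freedman's inequality align precisely with the $13$, $16\nu$, and $4\epsilon\eta/3$ appearing in the claimed bound.
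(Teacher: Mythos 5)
Your overall skeleton---lower-bound the conditional drift of $X_t$ by a constant multiple of $\norm{\ket{\phi_i(A_t)} - \ket{\phi_0(A_t)}}^2$, bound the conditional second moment, and apply Freedman's inequality---is the same as the paper's, and your drift computation is in fact cleaner: observing that $(Y_i - Y_i^2)\bone{\text{not good}} \le 0$ gives $\mathbb{E}[X_t \mid \mathcal{F}_{t-1}] \ge -5\norm{\ket{\phi_i(A_t)} - \ket{\phi_0(A_t)}}^2$ directly, whereas the paper detours through Cauchy--Schwarz to reach $-13$. The genuine gap is the upper truncation $\wh{X}_t = \min(X_t, 2\epsilon)$. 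Its cost, which you correctly compute as $\tfrac{2}{\epsilon}\norm{\ket{\phi_i(A_t)} - \ket{\phi_0(A_t)}}^2$ per step, fits inside the $-13\norm{\cdot}^2$ budget only if $\epsilon \ge 1/4$. But $\epsilon$ is not a free parameter of this lemma: it is the goodness threshold of Definition~\ref{def:igood}, and the downstream argument (Lemma~\ref{lem:Iconc} and the proof of Theorem~\ref{thm:depth}, which set $\epsilon = 1/135$) requires $\epsilon$ to be a \emph{small} constant so that the $4\epsilon\eta/3$ term in the exponent does not swamp $\eta^2$. With $\epsilon = 1/135$ your truncation penalty is roughly $270\norm{\cdot}^2$ per step, which destroys the $-13$ drift bound; so ``choosing the constant $\epsilon$ appropriately'' is not available, and as written the proof does not establish the lemma for the $\epsilon$ that is actually used.

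The fix is that the truncation is unnecessary. Freedman's inequality for the lower tail of a submartingale needs only a \emph{one-sided} bound on the increments, $Z_t \ge -R$, and the untruncated $X_t$ already satisfies $X_t \ge \log(1-\epsilon) \ge -2\epsilon$ (it equals $0$ on non-good edges and is at least $\log(1-\epsilon)$ on good ones). The paper accordingly applies Freedman directly to the uncentered submartingale differences $Z_t = X_t + 13\norm{\ket{\phi_i(A_t)} - \ket{\phi_0(A_t)}}^2 \ge -2\epsilon$, using $\log(1+z)^2 \le 2z^2$ for $z \ge -1/2$ to get a per-step variance proxy of $8\norm{\cdot}^2$, which yields exactly the denominator $16\nu + 4\epsilon\eta/3$. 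Two further constant-factor mismatches in your version: the bound $X_t^2 \le 4Y_i^2$ gives a variance proxy of $16\norm{\cdot}^2$ rather than $8\norm{\cdot}^2$, and centering to form $\wh{D}_t$ only guarantees $\wh{D}_t \ge -4\epsilon$ rather than $-2\epsilon$, so even setting the truncation issue aside you would obtain $\exp(-\eta^2/(32\nu + 8\epsilon\eta/3))$ instead of the stated bound. Dropping the truncation and the centering, and reusing your $-5\norm{\cdot}^2$ drift bound, recovers the lemma as stated.
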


\begin{proof}
        Note that for any $(u,A,s)$,
    \begin{align}
        \MoveEqLeft\E[s]*{\log(1 + Y_i(u,A,s))\cdot \bone{(u,A,s) \ i\mathrm{-good}}} \\
        &\ge \E[s]*{\left(Y_i(u,A,s) - Y_i(u,A,s)^2\right)\cdot \bone{(u,A,s) \ i\mathrm{-good}}} \\
        &\ge \E[s]*{Y_i(u,A,s)-Y_i(u,A,s)^2} - \E[s]*{Y_i(u,A,s)\cdot \bone{(u,A,s) \ \mathrm{not} \ i\mathrm{-good}}} \\
        &\ge -5\norm{\ket{\phi_i(A)} - \ket{\phi_0(A)}}_2^2 - \E[s]*{Y_i(u,A,s)\cdot \bone{(u,A,s) \ \mathrm{not} \ i\mathrm{-good}}} \\
        &\ge -5\norm{\ket{\phi_i(A)} - \ket{\phi_0(A)}}_2^2 - \E[s]*{Y_i(u,A,s)^2}^{1/2}\cdot \Pr[s]{(u,A,s) \ \mathrm{not} \ i\mathrm{-good}}^{1/2} \\
        &\ge -13 \norm{\ket{\phi_i(A)} - \ket{\phi_0(A)}}_2^2. \label{eq:bound-gen}
    \end{align}
    where in the first step we used the fact that $\log(1 + z) \ge z - z^2$ for $z \ge -1/2$, in the third step we used Lemma~\ref{lem:momentY}, in the fourth step we used Cauchy-Schwarz, and in the last step we used the last part of Lemma~\ref{lem:momentY} and Lemma~\ref{lem:chebyshev-gen}.
    
    Additionally,
    \begin{align}
        \E[s]*{\log(1 + Y_i(u,A,s))^2\cdot \bone{(u,A,s) \ i\mathrm{-good}}} &\le \E[s]*{2Y_i(u,A,s)^2\cdot \bone{(u,A,s) \ i\mathrm{-good}}} \\
        &\le \E[s]*{2Y_i(u,A,s)^2}  \\
        &= 8 \norm{\ket{\phi_i(A)} - \ket{\phi_0(A)}}_2^2, \label{eq:varbound-gen}
    \end{align}
    where in the first step we used the fact that $\log(1+z)^2 \le 2z^2$ for $z \ge -1/2$, and in the last step we used the second part of Lemma~\ref{lem:momentY}.
    
    For every $t$, define the random variable
    \begin{equation}
        Z_t \triangleq \log\left(1 + Y_i(u_t,A_t,s_t)\right)\cdot \bone{(u_t,A_t,s_t) \ i\mathrm{-good}} + 13 \norm{\ket{\phi_i(A_t)} - \ket{\phi_0(A_t)}}_2^2,
    \end{equation}
    where the randomness is with respect to $p_0$.
    By Eq.~\eqref{eq:bound-gen}, $\brc{Z_t}_t$ is a submartingale difference sequence satisfying $Z_t \ge \log(1 - \epsilon) \ge -2\epsilon$ given $0 < \epsilon < 1/2$, so the lemma follows by Freedman's inequality and Eq.~\eqref{eq:varbound-gen}.
\end{proof}

\noindent We will take $\eta$ to be a small constant to be tuned later, and 
\begin{equation}
    \nu = (40000/d)\max_{\vec{z}}\tau(\vec{z}). \label{eq:nudef}
\end{equation}
In light of Lemma~\ref{lem:applymartingale-gen}, we introduce one more property of paths $\vec{z}$ which, together with goodness (Definition~\ref{def:igood}) and balancedness (Definition~\ref{def:ibalanced}), ensures that $\E[i\sim{[d]}]{L_i(\vec{z})}$ is not too small.

\begin{definition}
    We say that a root-to-leaf path $\vec{z}$ is $i$-concentrated if the sum as considered in Lemma~\ref{lem:applymartingale-gen} is not too negative, that is,
    \begin{equation}
        \sum^n_{t=1} X_t > -13\sum^n_{t=1}\norm{\ket{\phi_i(A_t)} - \ket{\phi_0(A_t)}}^2 - \eta,
    \end{equation}
    and/or the path is not $i$-balanced. Let $\Iconc(\vec{z})$ (respectively $\Idiv(\vec{z})$) denote the set of indices $i\in[d]$ for which $\vec{z}$ is $i$-concentrated (respectively not $i$-concentrated).
\end{definition}

\subsubsection{Completing the argument}

\noindent We assume the query complexity of the algorithm is at most $cd/T$ for a small constant $0 < c < 1$ to be tuned later. Note that by Lemma~\ref{lem:holders}, this implies that 
\begin{equation}
    \max_{\vec{z}} \le cd. \label{eq:taubound}
\end{equation}

\begin{lemma}\label{lem:Iconc}
    $\Pr[\vec{z}\sim p_0]{|\Iconc(\vec{z})| < 99d/100} \le 100 \exp\left(-\frac{\eta^2}{640000c + 4\epsilon\eta/3}\right)$.
\end{lemma}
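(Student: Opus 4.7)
The plan is to combine the per-index concentration bound supplied by Lemma~\ref{lem:applymartingale-gen} with a straightforward Markov step, exactly mirroring how Lemmas~\ref{lem:Igood} and~\ref{lem:Ibal} were proved. Since $|\Iconc(\vec{z})| + |\Idiv(\vec{z})| = d$, the event $|\Iconc(\vec{z})| < 99d/100$ coincides with $|\Idiv(\vec{z})| > d/100$, so by Markov's inequality it suffices to show that $\mathbb{E}_{\vec{z}\sim p_0}[|\Idiv(\vec{z})|] \le d\exp(-\eta^2/(640000c + 4\epsilon\eta/3))$.

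By definition, $\vec{z}$ is not $i$-concentrated exactly when $\vec{z}$ is $i$-balanced and the partial-sum bound in Lemma~\ref{lem:applymartingale-gen} fails for index $i$. Thus
\begin{equation}
    \Pr[\vec{z}\sim p_0]{i \in \Idiv(\vec{z})} \le \Pr[\vec{z}]*{\sum^n_{t=1} X_t \le -13\sum^n_{t=1}\|\ket{\phi_i(A_t)} - \ket{\phi_0(A_t)}\|^2 - \eta \ \text{and} \ \vec{z} \ i\text{-balanced}}.
\end{equation}
The $i$-balanced condition says $\sum_t \|\ket{\phi_i(A_t)} - \ket{\phi_0(A_t)}\|^2 \le (40000/d)\max_{\vec{z}'}\tau(\vec{z}')$, so we may invoke Lemma~\ref{lem:applymartingale-gen} with $\nu = (40000/d)\max_{\vec{z}'}\tau(\vec{z}')$ to obtain
\begin{equation}
    \Pr[\vec{z}\sim p_0]{i\in \Idiv(\vec{z})} \le \exp\!\left(-\frac{\eta^2}{16\nu + 4\epsilon\eta/3}\right).
\end{equation}

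Next I would invoke the standing assumption that the algorithm makes at most $cd/T$ queries, which by Lemma~\ref{lem:holders} yields $\max_{\vec{z}'}\tau(\vec{z}') \le cd$, that is Eq.~\eqref{eq:taubound}. Substituting this gives $16\nu \le 640000c$, whence the per-index probability is at most $\exp(-\eta^2/(640000c + 4\epsilon\eta/3))$. Summing over $i\in[d]$ via linearity of expectation yields $\mathbb{E}_{\vec{z}\sim p_0}[|\Idiv(\vec{z})|] \le d\exp(-\eta^2/(640000c + 4\epsilon\eta/3))$, and Markov's inequality applied to $|\Idiv(\vec{z})|$ produces the stated bound.

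There is really no substantive obstacle here — all the martingale work has already been done in Lemma~\ref{lem:applymartingale-gen}, and the only remaining ingredients are the definition of $i$-concentrated paths, the choice of $\nu$ dictated by the definition of $i$-balanced, and the bound $\max_{\vec{z}}\tau(\vec{z}) \le cd$ from Lemma~\ref{lem:holders}. The one thing to be careful about is the logical decomposition of ``not $i$-concentrated'' into the conjunction of the partial-sum failure event and the $i$-balancedness event; once that is identified, the application of Lemma~\ref{lem:applymartingale-gen} is immediate and the rest is a mechanical Markov/union bound.
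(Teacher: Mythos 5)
Your proposal is correct and follows essentially the same route as the paper's proof: identify that ``not $i$-concentrated'' is the conjunction of the partial-sum failure with $i$-balancedness, apply Lemma~\ref{lem:applymartingale-gen} with $\nu = (40000/d)\max_{\vec{z}'}\tau(\vec{z}')$ together with the bound $\max_{\vec{z}}\tau(\vec{z})\le cd$ from Eq.~\eqref{eq:taubound}, then finish by linearity of expectation and Markov's inequality. The paper states this more tersely, but the ingredients and their order are identical.
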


\begin{proof}
    By Lemma~\ref{lem:applymartingale-gen}, our choice of $\nu$ in Eq. \eqref{eq:nudef}, and \eqref{eq:taubound}, for any $i\in[d]$ we have
    \begin{equation}
        \Pr{\vec{z} \ \text{is} \ i\text{-concentrated}} \ge 1 - \exp\left(-\frac{\eta^2}{640000c + 4\epsilon\eta / 3}\right).
    \end{equation}
    By linearity of expectation, $\E{|\Idiv(\vec{z})|} \le d\cdot \exp\left(-\frac{\eta^2}{640000c + 4\epsilon\eta / 3}\right)$, so the proof follows by the fact that $|\Iconc(\vec{z})| + |\Idiv(\vec{z})| = d$ and Markov's inequality.
\end{proof}

\noindent We are now ready to finish the argument.

\begin{proof}[Proof of Theorem~\ref{thm:depth}]
    By taking
    \begin{equation}
        \eta = 1/10 \qquad c = 10^{-10} \qquad \epsilon = 1/135,
    \end{equation}
    we can combine Lemma~\ref{lem:Igood},~Lemma \ref{lem:Ibal}, and Lemma \ref{lem:Iconc} to obtain
    \begin{align}
        \Pr[\vec{z}\sim p_0]*{|\Igood(\vec{z})| < (99/100) (N/G)} &\le 1/100,\\    \Pr[\vec{z}\sim p_0]*{|\Ibal(\vec{z})| < (99/100)(N/G)} &\le 1/100,\\
        \Pr[\vec{z}\sim p_0]*{|\Iconc(\vec{z})| < (99/100)(N/G)} &\le 1/100.
    \end{align}
    By union bound, with probability at least $0.97$ over $\vec{z}\sim p_0$, there are at least $0.97 (N/G)$ indices $i$ such that $\vec{z}$ is $i$-good, $i$-balanced, and $i$-concentrated.
    For an index $i$ that satisfies all three conditions for a path $\vec{z}$, we have
    \begin{align}
        \log(L_i(\vec{z}) &\ge \sum_t \log(1 +Y_i(u_t,A_t,s_t)) = \sum_t X_t \\
        &> -13\sum^n_{t=1}\norm{\ket{\phi_i(A_t)} - \ket{\phi_0(A_t)}}^2 - \eta \\
        &\ge -\frac{520000}{d}\max_{\vec{z'}}\tau(\vec{z}') - \eta \ge -1/10,
    \end{align}
    where in the second, third, and fourth steps we used that $\vec{z}$ is $i$-good, $i$-concentrated, and $i$-balanced respectively. Hence, $L_i(\vec{z}) \ge 9/10$ with probability at least $0.97^2$ given a random $\vec{z}\sim p_0$ and random $i\in[d]$. Therefore, by Eq.~\ref{eq:LRtvbound} above, we can bound the total variation distance by
    \begin{equation}
        0.97^2(1 - 0.905) + (1 - 0.97^2) \le 1/6.
    \end{equation}
    By Lemma~\ref{lem:lecam}, we conclude that provided the query complexity of the algorithm is at most $cd / T$, it cannot distinguish between the oracle $O_0$ and the oracle $O_i$ for a random choice of $i$ with constant advantage.
\end{proof}

\subsection{From bounded-depth to noisy computation}

We now show how to extract from Theorem~\ref{thm:depth} a lower bound against $\NISQ$. We begin with the following basic lemma, a proof of which we include in Appendix~\ref{app:defer_leminfo} for completeness, that quantifies the amount of information that is lost from running many layers of noisy computation:

\begin{lemma}[Lemma 8 from \cite{aharonov1996limitations}]\label{lem:info}
    Let $A$ be a $\lambda$-noisy depth-$T$ quantum circuit on $n$ qubits with output state $\rho$. Then $\mathcal{I}(\rho) \triangleq n - S(\rho) \le (1 - \lambda)^T\cdot n$, where $S(\cdot)$ denotes von Neumann entropy.
\end{lemma}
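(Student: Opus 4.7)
The plan is to show that (i) unitaries preserve the quantity $\mathcal{I}$ and (ii) every layer of $D_\lambda^{\otimes n}$ contracts $\mathcal{I}$ by a multiplicative factor of $1-\lambda$. Iterating over the $T$ noise layers, starting from the initial state $\ketbra{0^n}{0^n}$ (which satisfies $\mathcal{I}=n$), then yields the claimed bound $\mathcal{I}(\rho) \le (1-\lambda)^T n$.

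Claim (i) is immediate from $S(U\sigma U^\dagger)=S(\sigma)$, so the content reduces to establishing the per-layer contraction
\begin{equation}
\mathcal{I}(D_\lambda^{\otimes n}[\sigma]) \le (1-\lambda)\,\mathcal{I}(\sigma) \qquad \text{for every $n$-qubit state }\sigma.
\end{equation}
The first step is to view the single-qubit depolarizing channel as the classical mixture ``with probability $1-\lambda$ do nothing; with probability $\lambda$ discard the qubit and replace with $I/2$''. Tensoring over $n$ qubits,
\begin{equation}
D_\lambda^{\otimes n}[\sigma] \;=\; \sum_{S\subseteq[n]} (1-\lambda)^{|S|}\lambda^{n-|S|}\, \sigma_S \otimes \frac{I_{\bar S}}{2^{n-|S|}},
\end{equation}
where $\sigma_S \triangleq \tr_{\bar S}(\sigma)$. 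Applying concavity of the von Neumann entropy together with $S(I/2^k)=k$ gives
\begin{equation}
S(D_\lambda^{\otimes n}[\sigma]) \;\ge\; \mathbb{E}_S\!\left[S(\sigma_S) + (n-|S|)\right] \;=\; \mathbb{E}_S[S(\sigma_S)] + \lambda n,
\end{equation}
where each $i\in[n]$ is independently included in $S$ with probability $1-\lambda$. Rearranging,
\begin{equation}
\mathcal{I}(D_\lambda^{\otimes n}[\sigma]) \;\le\; (1-\lambda) n - \mathbb{E}_S[S(\sigma_S)].
\end{equation}

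The crux is then to lower-bound $\mathbb{E}_S[S(\sigma_S)]$ by $(1-\lambda)\,S(\sigma)$. For this I would invoke the quantum Han inequality, namely that for every $1\le k\le n$,
\begin{equation}
\binom{n}{k}^{-1}\sum_{|T|=k} S(\sigma_T) \;\ge\; \frac{k}{n}\,S(\sigma).
\end{equation}
This follows from strong subadditivity: first one gets the $k=n-1$ case by averaging the chain rule $S(\sigma)=\sum_i S(\pi(i)\mid \pi(<i))$ over a uniformly random ordering $\pi$ and applying SSA ($S(A\mid BC)\le S(A\mid B)$) termwise; the statement for general $k$ then follows by induction on $n$, applying the $k=n-1$ inequality once and the inductive hypothesis to each of the states $\sigma_{-j}$. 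Averaging the Han inequality against the binomial weights $\binom{n}{k}(1-\lambda)^k\lambda^{n-k}$ and using $\mathbb{E}[|S|]=(1-\lambda)n$ yields $\mathbb{E}_S[S(\sigma_S)]\ge (1-\lambda)S(\sigma)$, and plugging back in gives $\mathcal{I}(D_\lambda^{\otimes n}[\sigma]) \le (1-\lambda)(n-S(\sigma)) = (1-\lambda)\mathcal{I}(\sigma)$.

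The main obstacle I anticipate is the clean invocation of quantum Han's inequality: although it is a standard corollary of strong subadditivity, it is not always stated as a named result, and one must verify carefully that the binomial average over subsets really does couple to the Han bound as indicated. Once the per-layer contraction is in hand, the final bound follows by a trivial $T$-fold induction, alternating unitary invariance of $\mathcal{I}$ with the depolarization contraction established above.
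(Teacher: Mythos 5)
Your proposal is correct and follows essentially the same route as the paper's proof: reduce to the per-layer contraction $\mathcal{I}(D_\lambda^{\otimes n}[\sigma]) \le (1-\lambda)\mathcal{I}(\sigma)$ via the subset-mixture decomposition of the depolarizing channel, concavity of entropy, and the subset-average entropy inequality. The only cosmetic difference is that the paper cites that last inequality as Lemma 7 of Aharonov--Ben-Or (stated in terms of $\mathcal{I}$ rather than $S$, but equivalent to the quantum Han inequality you invoke and sketch from strong subadditivity).
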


\noindent We will also use the following standard operational characterization of $I(\rho)$:

\begin{lemma}[See e.g. Lemma 2 from \cite{aharonov1996limitations}]\label{lem:info_POVM}
    Given any $n$-qubit state $\rho$ and any POVM, the distributions $p,q$ induced by respectively measuring $\rho$ and $\Id/2^n$ with the POVM satisfy $\KL{p}{q} \le \mathcal{I}(\rho)$.
\end{lemma}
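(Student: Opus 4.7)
The plan is to derive this as an immediate consequence of the data processing inequality (DPI) for quantum relative entropy, with $\sigma = I/2^n$ as the reference state. Recall that the quantum relative entropy is $S(\rho \| \sigma) \triangleq \Tr(\rho \log \rho) - \Tr(\rho \log \sigma)$, and the DPI states that for any CPTP map $\calE$, one has $S(\calE(\rho) \| \calE(\sigma)) \le S(\rho \| \sigma)$.

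First I would compute $S(\rho \| I/2^n)$ explicitly. Since $\log(I/2^n) = -n \log 2 \cdot I$, we get
\begin{equation}
    S(\rho \| I/2^n) = \Tr(\rho \log \rho) + n \log 2 \cdot \Tr(\rho) = n \log 2 - S(\rho) = \mathcal{I}(\rho),
\end{equation}
where in the last step I adopt the convention (implicit in the definition of $\mathcal{I}$ used in Lemma~\ref{lem:info}) that entropies are measured in bits so that $\log 2 = 1$. This identifies the information quantity $\mathcal{I}(\rho)$ with a quantum relative entropy against the maximally mixed state.

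Next, I would model the POVM $\brc{M_i}$ as a quantum-to-classical channel $\calE$ sending a state $\tau$ to the classical distribution with weight $\Tr(M_i \tau)$ on outcome $i$. Applied to $\rho$ and to $I/2^n$, this channel produces exactly the distributions $p$ and $q$ in the lemma statement. Since the classical Kullback-Leibler divergence coincides with the quantum relative entropy between diagonal density matrices, $\KL{p}{q} = S(\calE(\rho) \| \calE(I/2^n))$. Applying DPI then yields
\begin{equation}
    \KL{p}{q} = S(\calE(\rho) \| \calE(I/2^n)) \le S(\rho \| I/2^n) = \mathcal{I}(\rho),
\end{equation}
which is the desired inequality.

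The only nontrivial ingredient is the DPI for quantum relative entropy, which is a standard deep result (originally due to Lindblad and Uhlmann) that I would simply cite. Everything else is a two-line computation, so I do not anticipate any real obstacle. If one wished to avoid invoking the full DPI, an equivalent route is to note that $\KL{p}{q} = \sum_i p_i \log(p_i/q_i)$ with $q_i = \Tr(M_i)/2^n$, apply the operator Jensen inequality (or Klein's inequality) to the concave function $-x \log x$ composed with the measurement, and recognize the result as $n - S(\rho)$; but the DPI route is cleanest.
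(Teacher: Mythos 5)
Your argument is correct. Note that the paper does not supply its own proof of this lemma \--- it simply cites it as Lemma 2 of \cite{aharonov1996limitations} \--- so there is nothing to compare against line by line; your route via the data processing inequality for quantum relative entropy, together with the identity $S(\rho \,\|\, \Id/2^n) = n - S(\rho) = \mathcal{I}(\rho)$ (with entropies in bits) and the observation that a POVM is a quantum-to-classical channel under which the relative entropy of the outcome distributions is exactly $\KL{p}{q}$, is the standard and complete way to establish the statement.
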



\begin{proof}[Proof of Theorem~\ref{thm:main_grover}]
    Let $\calT$ be the learning tree corresponding to a $\NISQ$ algorithm which has access to $O_i$ for some $0\le i \le d$ and has query complexity $N$, as in Definition~\ref{def:tree_nisq}. Let $\overline{T}$ be some choice of depth that we will tune later. We will convert $\calT$ to a learning tree $\wh{\calT}$ corresponding to a bounded-depth noiseless $\NISQ$ algorithm, as in Definition~\ref{def:tree_bounded_depth}.
    
    Define $\wh{\calT}$ as follows. For every non-leaf node $u$, if the algorithm makes a single classical query at input $j$, then replace the edge $(u,x,O_i(x))$ to its child $v$ by an edge $(u,A,s)$ where $A$ is a depth-1 quantum algorithm simulating the classical query. On the other hand, suppose that at $u$, the algorithm runs some $\lambda$-noisy quantum circuit $A$ on $\poly(d)$ qubits. If $A$ makes fewer than $\overline{T}$ oracle queries in total, then consider the noiseless quantum circuit $A'$ which simulates $A$ by applying depolarizing noise at each layer. If $A$ makes more than $\overline{T}$ queries, then replace $A$ with the quantum circuit $A'$ that simply measures the maximally mixed state in the computational basis, rather than the output state $\ket{\phi_i(A)}$. By Lemma~\ref{lem:info_POVM} and Pinsker's inequality, the total variation distance between the induced conditional distributions on children when $\ket{\phi_i(A)}$ gets measured versus when the maximally mixed state gets measured is at most $\sqrt{\frac{1}{2} \mathcal{I}(\ket{\phi_i(A)})}$, and by Lemma~\ref{lem:info} this is at most $(1 - \lambda)^{\overline{T}/2} \cdot \mathcal{O}(\sqrt{\log d})$.

    Let $p_i$ (respectively $\wh{p}_i$) denote the distribution over leaves when running the $\NISQ$ algorithm given by $\calT$ (respectively the noiseless algorithm given by $\wh{\calT}$). As the number of oracle queries is at most $N$, the depth of both trees is at most $N$, so we conclude that the total variation distance between $p_i$ and $\wh{p}_i$ is at most $N(1 - \lambda)^{\overline{T}/2}\cdot \mathcal{O}(\sqrt{\log d})$ by Lemma~\ref{lem:perturbchildren}. We will take $\overline{T} = C\lambda^{-1}(\log\log d + \log N)$ for constant $C > 0$ so that this quantity is an arbitrarily small constant.
    
    We conclude by Theorem~\ref{thm:depth} that if $N \le cd/\overline{T} = \Theta(d\lambda / (\log\log d + \log N))$, then the $\NISQ$ algorithm given by $\calT$ cannot solve unstructured search with probability $2/3$. This concludes the proof of our $\wt{\Omega}(d\lambda)$ lower bound.
\end{proof}

\section{Bernstein-Vazirani Problem}

In this section we show that a $\NISQ$ algorithm can solve the Bernstein-Vazirani problem~\cite{bernstein1997quantum} with $\mathcal{O}(\log n)$ queries, whereas it is known that any classical algorithm requires $\Theta(n)$ queries. As with the upper bound in Section~\ref{sec:robust_simon}, we will show that our algorithm is robust not just to local depolarizing noise, but also to arbitrary local noise that occurs with sufficiently small constant rate (see Remark~\ref{remark:BV}).

We begin by recalling the Bernstein-Vazirani problem on $n$ bits.  There is an unknown function $f : \{0,1\}^n \to \{0,1\}$ of the form $f(x) = s \cdot x\,\,(\text{mod }2)$, where $s \in \{0,1\}^n$ is often called the \emph{hidden string}.  The goal is to determine the hidden string.  In the quantum context, the classical oracle is rendered into a unitary $O_f$ which acts as 
\begin{equation}
    O_f: |x\rangle \otimes |y\rangle \mapsto |x\rangle \otimes |y \oplus f(x)\rangle
\end{equation} for $|x\rangle$ a state on $n$ qubits and $|y\rangle$ a state on one qubit.  
In the noiseless quantum setting, the best quantum algorithm can find the hidden string $s$ in $\Theta(1)$ queries
~\cite{bernstein1997quantum}.

We prove the following result on the $\NISQ$ complexity of the Bernstein-Vazirani problem: 
\begin{theorem}
\label{thm:BV1}
Let $0 \le \lambda < 1/24$.  Then there is a $\NISQ_\lambda$ algorithm which can solve the Bernstein-Vazirani problem with probability at least $1 - \delta$ using $\mathcal{O}(\frac{1}{1 - 24\lambda}\, \log(n/\delta))$ queries.
\end{theorem}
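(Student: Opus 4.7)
The plan is to show that the standard Bernstein-Vazirani circuit, repeated $N = \mathcal{O}(\log(n/\delta)/(1-24\lambda))$ times with a per-bit majority vote, solves the problem under local depolarizing noise $\lambda < 1/24$. The circuit is constant-depth in the $\mathrm{NQC}_\lambda$ model: initialize $\ket{0^{n+1}}$, apply $I^{\otimes n}\otimes X$ to put the ancilla in $\ket{1}$, apply $H^{\otimes n+1}$, query $O_f$, apply $H^{\otimes n+1}$, and measure. Noiselessly this deterministically outputs $s$ on the first $n$ qubits; with depolarizing noise inserted after each of the $\mathcal{O}(1)$ unitary layers, the output is a randomized approximation that we amplify by repetition.

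The core technical step is lower bounding the per-bit bias of the noisy output. Writing $\rho_{\mathrm{final}}$ for the state just before measurement, the probability that bit $i$ equals $s_i$ is $\tfrac{1}{2}(1+(-1)^{s_i}\mathrm{tr}(Z_i\rho_{\mathrm{final}}))$. I would compute this via the Heisenberg picture, evolving the observable $Z_i$ backward through the noisy circuit. Three facts suffice: every layer of $D_\lambda^{\otimes n+1}$ sends a Pauli $P$ to $(1-\lambda)^{|P|}P$ where $|P|$ is its Pauli weight; Hadamards swap $X$ and $Z$; and the classical-oracle unitary conjugates an $X$-type string $X_S$ (for $S\subseteq[n+1]$) to $X_{S'}X_{n+1}^{t(S)}$ where $S' = S\cap[n]$ and $t(S) = \mathbf{1}[n+1\in S]\oplus(s\cdot S')$. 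Composing these rules through the $\mathcal{O}(1)$ layers yields a closed form $\mathrm{tr}(Z_i\rho_{\mathrm{final}}) = (-1)^{s_i}(1-\lambda)^{k(s_i)}$ for an explicit constant $k$ equal to the total Pauli weight accumulated along the backward evolution, so bit $i$ is correct with probability at least $\tfrac{1}{2}(1+(1-\lambda)^{k})$.

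With a per-bit bias $\gamma = (1-\lambda)^{k}/2$ in hand, I would finish by repeating the circuit $N$ times independently and taking the coordinate-wise majority. By Hoeffding's inequality the majority on coordinate $i$ errs with probability at most $\exp(-2N\gamma^2)$, so a union bound over $i\in[n]$ gives total success probability $\ge 1-\delta$ provided $N = \Omega(\gamma^{-2}\log(n/\delta))$. Applying Bernoulli's inequality $(1-\lambda)^{2k}\ge 1-2k\lambda$ and matching $2k$ to the value $24$ produced by the Pauli calculation yields the claimed $\mathcal{O}(\log(n/\delta)/(1-24\lambda))$ query count. Robustness to the adversarial local noise model promised by Remark~\ref{remark:BV} follows because the argument only uses the probability that no qubit is corrupted on any of the $k$ relevant layers, which is at least $(1-\lambda)^k$ under any independent local noise process.

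The main obstacle is the third fact used above: because $O_f$ is not Clifford, depolarizing noise does not commute through it, and one cannot simply push all noise to the end of the circuit. The Pauli weight accumulated along the backward evolution therefore depends nontrivially on whether $s\cdot S'$ and $\mathbf{1}[n+1\in S]$ agree, and the bookkeeping must be done layer by layer through the oracle. Verifying that this bookkeeping produces an exponent consistent with the $(1-24\lambda)^{-1}$ denominator --- rather than $(1-c\lambda)^{-1}$ for a larger $c$ --- is where the care concentrates.
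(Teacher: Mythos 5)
Your proposal is correct and proves the theorem, but the core per-bit analysis takes a genuinely different route from the paper's. Both proofs use the same outer structure: run the (constant-depth) noisy Bernstein--Vazirani circuit $N$ times, take a coordinate-wise majority vote, and finish with Hoeffding plus a union bound over the $n$ coordinates, with the $1/(1-24\lambda)$ factor emerging from linearizing $(1-\lambda)^{\Theta(1)}$ in the exponent. Where you differ is in lower-bounding the per-bit success probability. The paper conditions on \emph{which} qubits are corrupted across the constantly many noise layers (treating noise as ``with probability $\lambda$ an arbitrary single-qubit channel is applied'') and shows that the uncorrupted qubits effectively run a smaller noiseless BV instance whose output bits equal the corresponding bits of $s$ exactly; hence bit $i$ is correct with probability at least $(1-\lambda)^6$ (qubit $i$ survives four layers, the ancilla survives the two pre-oracle layers). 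This argument is what makes Remark~\ref{remark:BV} (robustness to adversarial local noise) free. Your Heisenberg-picture calculation is specific to depolarizing (Pauli) noise but is sound: since $f$ is linear, $O_f$ is a product of CNOTs (so it \emph{is} Clifford, contrary to your worry --- though your stated conjugation rule for $X$-strings is exactly right), the back-propagated observable stays a single Pauli string of weight at most $2$ throughout, and each noise layer contributes a scalar $(1-\lambda)^{|P|}$. With the circuit as you describe it the accumulated exponent is at most $k=10$, so $2k\le 20<24$ and the claimed query count follows; in fact your bound $\tfrac12(1+(1-\lambda)^{k})$ is sharper than the paper's $(1-\lambda)^6$, since the bias $\tfrac12(1-\lambda)^{k}$ stays bounded away from zero for all $\lambda<1$ --- this realizes exactly the improvement the paper relegates to a footnote. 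The one genuine slip is the last sentence: the adversarial-noise extension does \emph{not} follow from your argument, because the Pauli-contraction identity $D_\lambda^{\otimes n}[P]=(1-\lambda)^{|P|}P$ is special to depolarizing noise; to get Remark~\ref{remark:BV} you would need to switch to the paper's corruption-tracking argument. Since the theorem itself concerns $\NISQ_\lambda$ with depolarizing noise, this does not affect your proof of the stated result.
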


\subsection{Proof preliminaries}

\noindent Let us establish some notation to be used in the proof.
\begin{definition}[Permutation operators] For $n > 0$, let $S_n$ be the permutation group on $m$ objects.  To each $\pi$ in $S_n$ we associate an operator acting on $(\mathbb{C}^2)^{\otimes n}$ defined by
\begin{equation}
\pi \big(|\psi_{1}\rangle \otimes |\psi_2\rangle \otimes \cdots \otimes |\psi_n\rangle\big) = |\psi_{\pi^{-1}(1)}\rangle \otimes |\psi_{\pi^{-1}(2)}\rangle \otimes \cdots \otimes |\psi_{\pi^{-1}(n)}\rangle\,,\quad \forall \, |\psi_1\rangle,\,|\psi_2\rangle,...,\,|\psi_n\rangle \in \mathbb{C}^2
\end{equation}
which extends by multilinearity to all of $(\mathbb{C}^2)^{\otimes m}$.
\end{definition}
\noindent We have a similar definition for permutations acting on bit strings.
\begin{definition}[Permutations acting on bit strings] Again letting $S_n$ be the permutation group on $n$ objects, to each $\pi$ in $S_n$ we associate a function $\pi : \{0,1\}^n \to \{0,1\}^n$ defined by
\begin{equation}
\pi( s_1 s_2 \cdots s_m) = s_{\pi^{-1}(1)} s_{\pi^{-1}(2)} \cdots s_{\pi^{-1}(m)}\,, \quad \forall s_1 s_2 \cdots s_m \in \{0,1\}^n\,. 
\end{equation}
\end{definition}
\noindent Moreover, if $f : \{0,1\}^n \to \{0,1\}$ is the unknown function in the Bernstein-Vazirani problem, then we define $f_\pi := f \circ \pi$.

\begin{figure}
    \centering
    \includegraphics[width=0.83\textwidth]{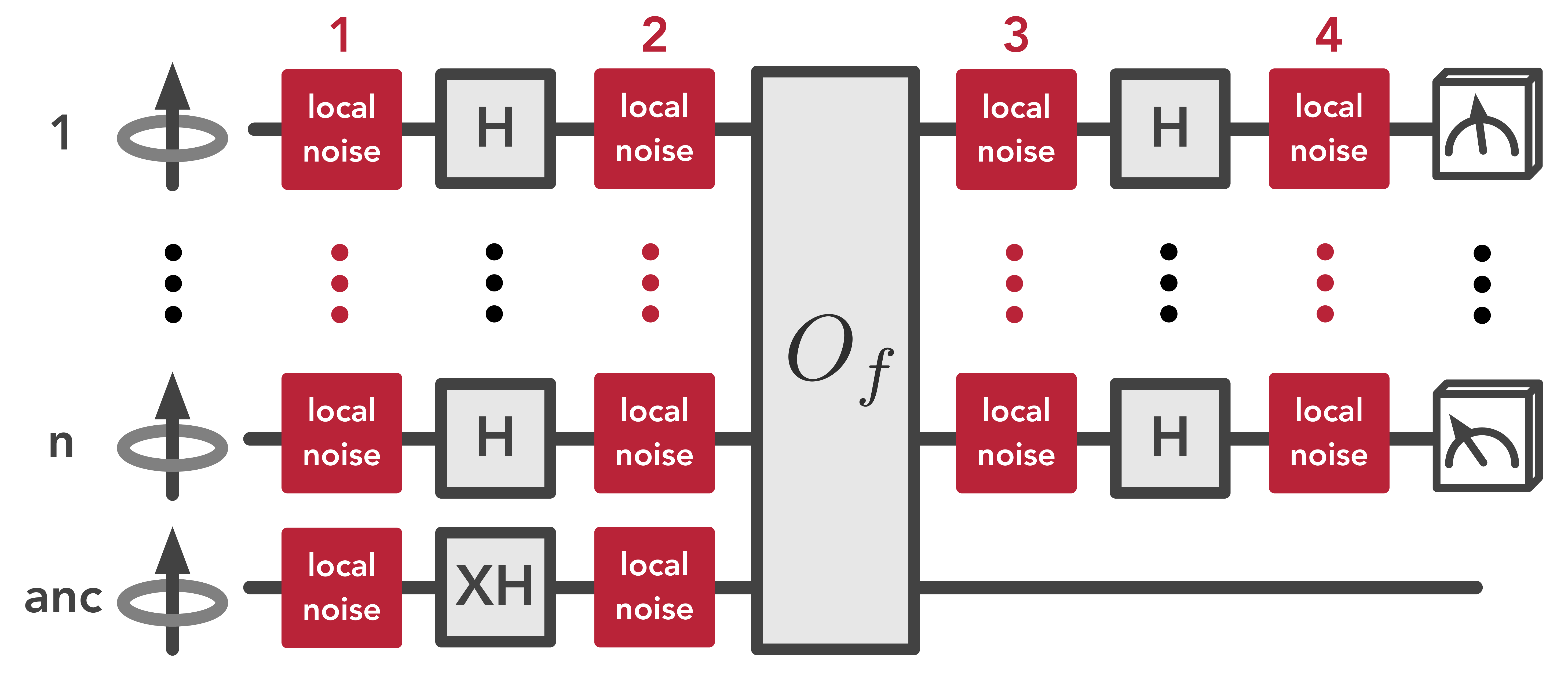}
    \caption{Bernstein-Vazirani algorithm in the presence of arbitrary noise (each box labeled by ``local noise'' denotes that with probability $\lambda$, an arbitrary, adversarially chosen single-qubit operation is applied). We have labeled the layers of noise for ease of reference in the proof.}
    \label{fig:BV}
\end{figure}

\paragraph{Bernstein-Vazirani algorithm.} We conclude this subsection by reviewing how the original Bernstein-Vazirani algorithm~\cite{bernstein1997quantum} works, see Figure~\ref{fig:BV}. One begins by preparing the initial state $|+\rangle^{\otimes n} \otimes |-\rangle$, and then acting on it with the oracle.  In so doing, we obtain the state
\begin{equation}
\frac{1}{2^{n/2}}\sum_{x \in \{0,1\}^n} (-1)^{s \cdot x} |x\rangle \otimes |-\rangle\,.
\end{equation}
Tracing out the $|-\rangle$ qubit, we can then apply the Hadamards $H^{\otimes n}$ to the first $n$ qubits to obtain
\begin{equation}
\frac{1}{2^n}\sum_{x,y \in \{0,1\}^n} (-1)^{(s + y)\cdot x} |y\rangle = |s\rangle
\end{equation}
which gives us the hidden string $s$.

\subsection{Proof of Theorem~\ref{thm:BV1}}

With these notations at hand, we are ready to prove Theorem~\ref{thm:BV1}. Suppose we (noisily) initialize in the state $|+\rangle^{\otimes n} \otimes |-\rangle$.  We call the last qubit the \emph{ancilla qubit}.  
There are two cases depending on whether or not the ancilla qubit is corrupted prior to the oracle application in the algorithm. We handle these two cases separately in the following two lemmas.

\begin{lemma}\label{lem:not_decohere}
    If the ancilla qubit is not corrupted prior to application of the oracle in the Bernstein-Vazirani algorithm, then for every $i\in[n]$, with probability $(1- \lambda)^4$ the $i$th output bit is given by $s_i$  and otherwise is given by a possibly incorrect bit.
\end{lemma}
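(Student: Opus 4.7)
The plan is to exploit the product structure of the Bernstein--Vazirani circuit. Conditional on the ancilla being in $\ket{-}$ at the moment the oracle is invoked, the oracle acts on the $n$-qubit data register as the tensor product $\bigotimes_{j=1}^n Z^{s_j}$, by the usual phase-kickback calculation $O_f(\ket{x}\ket{-}) = (-1)^{s \cdot x}\ket{x}\ket{-}$. Since the only other gates in the circuit are the single-qubit Hadamards, and the noise model applies an independent (possibly adversarial) single-qubit channel to each qubit at each layer, the marginal state of qubit $i$ at every stage of the computation depends only on the noise events that occur on qubit $i$ itself. This decoupling is the entire point.

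With that in hand, I would trace through the ideal (noiseless) evolution on a single qubit $i$: starting from $\ket{0}$, the first Hadamard yields $\ket{+}$; the oracle (via phase kickback, using the hypothesis on the ancilla) applies $Z^{s_i}$ to give $H\ket{s_i}$; the final Hadamard yields $\ket{s_i}$; a computational-basis measurement returns $s_i$ with certainty. Let $E_i$ denote the event that none of the four noise layers in Figure~\ref{fig:BV} acts on qubit $i$. Since the four noise events on qubit $i$ are independent and each fails to fire with probability $1-\lambda$, we have $\Pr[E_i] = (1-\lambda)^4$. On $E_i$, and using the ancilla hypothesis, qubit $i$ undergoes exactly the noiseless evolution above, so the $i$th output bit equals $s_i$; on the complementary event, the bit is arbitrary. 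This yields the claim.

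The only point that needs care is verifying that adversarial noise on qubits $j \neq i$ and on the ancilla \emph{after} the oracle cannot leak into qubit $i$. This is immediate by induction through the circuit: at each step, the joint state on (data $\otimes$ ancilla) can be written as a tensor product of single-qubit mixed states on the data qubits tensored with a state on the ancilla; since the oracle (conditioned on ancilla in $\ket{-}$) acts as $\bigotimes_j Z^{s_j}$ on the data, and all other gates and noise channels are local, this tensor-product form is preserved. In particular, the reduced state on qubit $i$ at every step is a function only of the operations and noise applied to qubit $i$, independent of what is done on the other qubits and on the ancilla after the oracle. I do not anticipate any serious obstacle beyond this bookkeeping.
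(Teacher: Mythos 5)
Your proof is correct and rests on the same two facts as the paper's: with the ancilla uncorrupted, the oracle acts on the data register as $\bigotimes_j Z^{s_j}$ by phase kickback, and every other gate and noise event is qubit-local, so the four noise layers touching qubit $i$ (each avoided independently with probability $1-\lambda$) are the only things that can disturb its output. The paper reaches the same conclusion by conditioning on the corruption pattern and explicitly tracing out the corrupted qubits and the ancilla to exhibit the residual pure state on the uncorrupted block; your reduced-density-matrix bookkeeping is essentially a cleaner packaging of that same computation.
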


\begin{proof}
    When we reach the step of the Bernstein-Vazirani algorithm where we apply the oracle, suppose that $k$ qubits out of the first $n$ have been corrupted in the first two layers of noise (see Figure~\ref{fig:BV}). Further suppose that the qubits are located at $a_1,...,a_k$, where $\{a_1,...,a_k\} \subset \{1,...,n\}$.  Picking some permutation $\pi \in S_n$ such that $\pi(i) = a_i$ for $i = 1,...,k$, we can write the state of the system as
    \begin{equation}
    \frac{1}{2^n} \sum_{\substack{z,z' \in \{0,1\}^k \\ x,x' \in \{0,1\}^{n-k}}} \beta_{z,z'} \, \pi(|z\rangle \langle z'| \otimes |x\rangle \langle x'|)\pi^\dagger \otimes |-\rangle \langle -|\,
    \end{equation}
    for some coefficients $\brc{\beta_{z,z'}}$ satisfying $\sum_z \beta_{z,z} = 1$.
    The rest of the protocol proceeds as follows.  We apply $O_f$ to get
    \begin{equation}
    \frac{1}{2^n} \sum_{\substack{z,z' \in \{0,1\}^k \\ x,x' \in \{0,1\}^{n-k}}} \beta_{z,z'} \, \pi(|z\rangle \langle z'| \otimes |x\rangle \langle x'|)\pi^\dagger \otimes |-\rangle \langle -|\,(-1)^{f_\pi(zx) + f_\pi(z'x')}\,.
    \end{equation}
    Next we trace out the ancilla to find
    \begin{equation}
    \label{E:traced1}
        \frac{1}{2^n} \sum_{\substack{z,z' \in \{0,1\}^k \\ x,x' \in \{0,1\}^{n-k}}} \beta_{z,z'} (-1)^{f_\pi(zx) + f_\pi(z'x')} \, \pi(|z\rangle \langle z'| \otimes |x\rangle \langle x'|)\pi^\dagger \,.
    \end{equation}
    Following this we apply a third layer of local noise, apply $H^{\otimes n}$, and then apply a fourth layer of local noise again.  Suppose that this procedure corrupts any number of the already corrupted qubits at positions $a_1,...,a_k$, as well as $\ell$ qubits at positions different from $a_1,...,a_k$.  Suppose that the positions of these $\ell$ qubits are $a_{k+1},...,a_{k+\ell}$ where $\{a_1,...,a_k, a_{k+1},...,a_{k + \ell}\} \subset \{1,...,n\}$.  Since the local noise and $H^{\otimes n}$ act qubit-wise, if we only want to track the uncorrupted qubits we can do as follows: at the outset we trace out the $k + \ell$ qubits which are to be corrupted, and then we apply $H^{\otimes (n-k-\ell)}$ to the residual qubits.
    
    We implement this procedure presently. Defining $\sigma \in S_n$ by $\sigma(i) = a_i$ for $i = 1,...,k + \ell$, we can rewrite~\eqref{E:traced1} as
    \begin{equation}
    \label{E:traced2}
    \frac{1}{2^n} \sum_{\substack{z,z' \in \{0,1\}^k \\ w,w' \in \{0,1\}^{\ell} \\ y,y' \in \{0,1\}^{n-k-\ell}}} \beta_{z,z',w,w'} (-1)^{f_\sigma(zwy) + f_\sigma(z'w'y')} \, \sigma(|z\rangle \langle z'| \otimes |w\rangle \langle w'|\otimes |y\rangle \langle y'|) \sigma^\dagger\,
    \end{equation}
    for some coefficients $\brc{\beta_{z,z',w,w'}}$ satisfying $\sum_{z,w} \beta_{z,z,w,w} = 1$.
    Letting $\{b_1,...,b_{n-k-\ell}\} = \{1,...,n\}\setminus \{a_1,...,a_{k + \ell}\}$ be the uncorrupted registers, where we choose $b_1 < b_2 < \cdots < b_{n-k-\ell}$\,, tracing out everything but the $b_1,...,b_{n-k-\ell}$ registers we find the residual pure state
    \begin{equation}
    \frac{1}{\sqrt{2^{n-k-\ell}}} \sum_{y \in \{0,1\}^{n-k-\ell}} \, (-1)^{y \cdot [s]_{b_1,...,b_{n-k-\ell}}} \, |y\rangle
    \end{equation}
    where $[s]_{b_1,...,b_{n-k-\ell}}$ denotes the $b_1,...,b_{n-k-\ell}$ bits of the hidden string $s$.
    Applying $H^{\otimes (n-k-\ell)}$ we find
    \begin{equation}
    \frac{1}{2^{n-k-\ell}} \sum_{y,y'\in \{0,1\}^{n-k-\ell}} \, (-1)^{y \cdot ([s]_{b_1,...,b_{n-k-\ell}} + y')} \, |y'\rangle\,.
    \end{equation}
    Finally, measuring in the computational basis, the probability of measuring $|[s]_{b_1,...,b_{n-k-\ell}}\rangle$ is equal to one.
    
    All in all, we have seen that if we perform the usual Bernstein-Vazirani algorithm, we obtain the hidden bit string $s$ but with a fraction of its bits corrupted, corresponding precisely to the qubits that were corrupted by one of the four layers of local noise. 
\end{proof}



\noindent We can now 
conclude the proof of Theorem~\ref{thm:BV1}.

\begin{proof}[Proof of Theorem~\ref{thm:BV1}]
By Lemma~\ref{lem:not_decohere}, we see that for each bit $i$ of the $n$ output bits, the probability of being $\ket{s_i}$ is at least $(1 - \lambda)^6$, which happens if the ancilla qubit is never corrupted in either of the two layers of noise prior to the application of the oracle (with probability $(1 - \lambda)^2$), and if additionally the former of the two possible events in Lemma~\ref{lem:not_decohere} happens (with probability $(1 - \lambda)^4$). Let $f(\lambda) \triangleq (1 - \lambda)^6$ and note that for $\lambda \le 1/10$, $f(\lambda) > 1/2$.

Let $X_i$ be a random variable which equals zero if $s_i$ is obtained correctly with our procedure, and equal to one otherwise.  Letting $Y_i$ be the average of $M$ i.i.d.~copies of $X_i$, then the Chernoff-Hoeffding bound tells us that
\begin{equation}
\text{Prob}\left(Y_i^{(j)} \geq \frac{1}{2}\right) \leq \exp\left(- 2 M \left(\frac{1}{2} - f(\lambda)\right)^2\right) \,.
\end{equation}
This is an upper bound on the probability that if we repeat the Bernstein-Vazirani algorithm $M$ times and employ the majority votes strategy on the $i$th site to determine $s_i$, then we will fail.  The probability that we fail for at least one of the $n$ sites is upper bounded by
\begin{equation}
\text{Prob}\left(\max_i \,Y_i \geq \frac{1}{2}\right) \leq \sum_{i=1}^n \exp\left(- 2 M \left(\frac{1}{2} - f(\lambda)\right)^2\right) \leq n \, \exp\left(- 2 M \left(\frac{1}{2} - f(\lambda)\right)^2\right)\,.
\end{equation}
So if we want the right-hand side to be at most $\delta$, then we can pick some $M$ such that
\begin{equation}
\label{E:Mbound1}
M = O\left(\frac{1}{(1 - 2\,f(\lambda))^2} \, \log(n/\delta)\right)\,.
\end{equation}
Since $(1 - 2f(\lambda))^2$ is an alternating series in $\lambda$, we can lower bound it by its first two terms as
\begin{equation}
(1 - 2 f(\lambda))^2 \geq 1 - 24 \lambda\,.
\end{equation}
Then~\eqref{E:Mbound1} can be written in a slightly simplified form as
\begin{equation}
\label{E:Mbound2}
M = \left(\frac{1}{1 - 24 \lambda} \, \log(n/\delta)\right)
\end{equation}
for $\lambda < 1/24$, as claimed.\footnote{We have not attempted to optimize this threshold for $\lambda$ for general local noise channels or particular choices of local noise channels.  However, we note that with more careful bookkeeping one can show, e.g.~when the local noise is depolarizing noise, that for any $\lambda$ bounded away from $1$ the above algorithm can solve the Bernstein-Vazirani problem with $\mathcal{O}(\log(n/\delta))$ queries.}
\end{proof}

\begin{remark}\label{remark:BV}
As with the proof of Theorem~\ref{thm:superpoly1}, our proof considers a stronger noise model where at every layer, every qubit is independently corrupted with probability $\lambda$ by a (potentially adversarially chosen) single-qubit channel.
While our definition of $\NISQ$ focuses on local depolarizing noise, the quantum advantage in solving the Bernstein-Vazirani problem holds even when the local noise could be adversarially chosen.
\end{remark}

\section{Shadow Tomography}

In this section we show that relative to a natural \emph{quantum oracle}, there is also an exponential separation even between $\NISQ$ and $\BPP^{\QNC^0}$.
The task we consider that witnesses this separation has been studied previously in the context of separations between algorithms with and without quantum memory \cite{huang2021information,chen2021exponential,huang2022quantum} and is based on \emph{shadow tomography}, i.e. predicting properties of an unknown state to which one has access via a state oracle. 
Informally, for an unknown state $\rho$, one would like to predict $|\Tr(Q\rho)|$ for all Pauli operators $Q\in\brc{\Id,X,Y,Z}^{\otimes n}$ given copies of $\rho$. 

We will show that $\NISQ_\lambda$ algorithms with access to such an oracle require $1/(1 - \lambda)^{\Omega(n)}$ copies of $\rho$ to estimate all of these observables to within constant error. On the other hand, existing upper bounds \cite{huang2021information} imply that $\BPP^{\QNC^0}$ algorithms only require $\mathcal{O}(n)$ copies of $\rho$.

\subsection{A quantum oracle preparing an unknown state}

Before stating this separation formally in Theorem~\ref{thm:quantum_oracle} below, we first formalize how to extend the oracle model from Section~\ref{subsec:nisq} to quantum oracles. We consider noisy quantum circuits with two registers: an $n$-qubit $\mathsf{state}$ register for loading in new copies of $\rho$, and a separate workspace register on at most $\poly(n)$ qubits.

\begin{definition}\label{def:quantum_oracle}
    Let $\rho$ be an $n$-qubit state.
    We consider an oracle $O_\rho$ given as a CPTP map that traces out $n$ qubits in the $\mathsf{state}$ register and prepares the state $\rho$ in the $\mathsf{state}$ register:
    \begin{equation}
        O_\rho(\sigma) \triangleq \rho \otimes \Tr_{\mathsf{state}}(\sigma),
    \end{equation}
    for any integer $n' \geq n$ and any $n'$-qubit state $\sigma$.
\end{definition}

\noindent For the purposes of showing a lower bound in this oracle model, we will assume that in between oracle queries, the algorithm can perform arbitrary noiseless quantum computation, and at the end it can perform a noiseless measurement in the computational basis. The only noise that gets applied is local depolarizing noise after any call to $O_{\rho}$. Note that this is a stronger model of computation than a $\lambda$-noisy quantum algorithm or a $\NISQ_\lambda$ algorithm, which merely makes the lower bound we show even stronger. Furthermore, as there is no notion of a classical oracle in this setting, it is not necessary to work with the tree formalism of Definition~\ref{def:tree_nisq}.

\subsection{Exponential lower bound}

We are now ready to state the main oracle separation of this section:

\begin{theorem}\label{thm:quantum_oracle}
    Let $\rho = \frac{1}{2^n}(\Id + s\cdot P)$ for some $s\in\brc{0, 1}$ and $n$-qubit Pauli operator $P$.
    Given access to the oracle $O_{\rho}$ in Definition~\ref{def:quantum_oracle}, no $\NISQ_\lambda$ algorithm can determine either $s$ or $P$ with constant advantage unless it makes $\Omega((1 - \lambda)^{-n})$ oracle queries. 
    
    On the other hand, there is an algorithm in $\BPP^{\QNC^0}$ that, given $\mathcal{O}(n)$ oracle queries, can determine both $s$ and $P$ with high probability.
    In fact, even if $\rho$ is an arbitrary state, there is an algorithm in $\BPP^{\QNC^0}$ that, given $\mathcal{O}(n)$ oracle queries, can estimate $|\Tr(P\rho)|$ for all $n$-qubit Pauli operators $P$ to within small constant error with high probability.
\end{theorem}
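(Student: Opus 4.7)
\textbf{The upper bound} follows directly from the Bell-sampling protocol of~\cite{huang2021information}. The plan is to pair up oracle calls: use two queries to produce $\rho \otimes \rho$, apply a Bell measurement (one Hadamard and one CNOT per pair of corresponding qubits) to each of the $n$ qubit pairs, and output the resulting $n$-qubit Pauli label. The induced distribution on outcomes has probabilities proportional to $\Tr(P\rho)^2$, and because the measurement circuit has depth $\mathcal{O}(1)$ and no noise is involved outside the oracle, the whole protocol sits in $\BPP^{\QNC^0}$. Standard concentration then yields $|\Tr(P\rho)|$ to constant accuracy for every $P$ from $\mathcal{O}(n)$ samples; in the structured case $\rho = (I + sP^*)/2^n$, the induced distribution is supported on $\{I, P^*\}$, so a constant number of samples determines both $s$ and $P^*$.

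\textbf{For the lower bound}, the plan is a short hybrid argument based on the fact that local depolarizing noise suppresses the Pauli signature of $\rho$ by a factor of $(1-\lambda)^n$. Since $D_\lambda$ multiplies each non-identity single-qubit Pauli by $1-\lambda$,
\begin{equation}
    D_\lambda^{\otimes n}[\rho] = \frac{1}{2^n}\bigl(I + s(1-\lambda)^n P\bigr),
\end{equation}
so that $\bigl\|D_\lambda^{\otimes n}[\rho] - I/2^n\bigr\|_\tr = s(1-\lambda)^n$. The effective per-query channel $\mathcal{E}_\rho(\sigma) \triangleq D_\lambda^{\otimes n}[\rho] \otimes \Tr_\mathsf{state}(\sigma)$ overwrites the state register independently of the input $\sigma$, so for any two states $\rho, \rho'$,
\begin{equation}
    \sup_\sigma \|(\mathcal{E}_\rho - \mathcal{E}_{\rho'})[\sigma]\|_\tr = \|D_\lambda^{\otimes n}[\rho - \rho']\|_\tr.
\end{equation}
Applying Lemma~\ref{lem:hybrid_basic} to the pair $\mathcal{E}_{I/2^n}$ (for $s = 0$) and $\mathcal{E}_{(I+P^*)/2^n}$ (for $s = 1$, with $P^*$ any fixed non-identity Pauli) in the stronger noise-only-on-the-oracle model used in the theorem then gives, for any $N$-query algorithm, that the final measurement distributions $p_0$ and $p_{P^*}$ satisfy $\tvd(p_0, p_{P^*}) \le N(1-\lambda)^n$.

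\textbf{Conclusion via Le Cam.} Averaging over a uniformly random non-identity Pauli $P^*$ and using convexity of total variation distance gives $\tvd(p_0, \mathbb{E}_{P^*}[p_{P^*}]) \le N(1-\lambda)^n$, and Le Cam's two-point method (Lemma~\ref{lem:lecam}) then implies that when $N = o((1-\lambda)^{-n})$, no algorithm can distinguish $s = 0$ from a uniformly random $s = 1$ instance with constant advantage, which precludes determining either $s$ or $P$ and yields the claimed $\Omega((1-\lambda)^{-n})$ bound. The only real obstacle I anticipate is bookkeeping: Lemma~\ref{lem:hybrid_basic} is stated for noiseless circuits interleaved with two fixed channel applications, so one must verify it carries over to the quantum-oracle variant considered here. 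Because $O_\rho$ overwrites the state register (tracing out its previous contents before preparing $\rho$), the per-query channel is input-independent and the proof of Lemma~\ref{lem:hybrid_basic} goes through essentially verbatim.
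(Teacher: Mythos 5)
Your proposal is correct and follows essentially the same route as the paper: the upper bound cites the Bell-sampling protocol of \cite{huang2021information}, and the lower bound is the same hybrid argument via Lemma~\ref{lem:hybrid_basic} using the fact that local depolarizing noise damps the Pauli component of $\rho$, followed by Le Cam. The one small point to fix is that $D_\lambda^{\otimes n}$ multiplies $P$ by $(1-\lambda)^{|P|}$ rather than $(1-\lambda)^{n}$, so to get the $\Omega((1-\lambda)^{-n})$ bound you should take the hard instances to be Paulis $P^*$ acting nontrivially on all $n$ qubits, exactly as the paper does when it instantiates Lemma~\ref{lem:hybrid_pauli}.
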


\noindent We remark that \cite{huang2022foundations} gives an upper bound of $(1 - \lambda)^{-\Theta(n)}$ for $\NISQ_\lambda$ algorithms, so our exponential lower bound is qualitatively best possible.
For the proof of Theorem~\ref{thm:quantum_oracle} we consider the output state of any noisy quantum circuit and argue that for any Pauli operator $P$, the distance between the output state when the unknown state $\rho$ is given by $\frac{1}{2^n}(\Id + P)$ (likewise when it is given by $\frac{1}{2^n}(\Id - P)$) versus when it is given by $\frac{\Id}{2^n}$ is exponentially small unless if the circuit makes exponentially many oracle queries:

\begin{lemma}\label{lem:hybrid_pauli}
    For any $\lambda > 0$.
    Let $P\in\brc{\Id,X,Y,Z}^{\otimes n}$. Any $\NISQ_\lambda$ algorithm that has oracle access to the state oracle $O_\rho$ for either $\rho = \frac{1}{2^n}(\Id + P)$ or $\rho = \frac{\Id}{2^n}$ and can distinguish which oracle it has access to with at least $2/3$ probability must make $\Omega((1 - \lambda)^{-|P|})$ queries, where $|P|$ denotes the number of non-identity components in $P$.
\end{lemma}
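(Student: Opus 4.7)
The approach will be to apply the basic hybrid argument of Lemma~\ref{lem:hybrid_basic}. I will set $\rho_0 = \Id/2^n$ and $\rho_1 = (\Id + P)/2^n$, and introduce the CPTP maps $\mathcal{E}_s$ that model a single query to $O_{\rho_s}$ together with the mandatory layer of local depolarizing noise on the $n$-qubit $\mathsf{state}$ register. Explicitly, for any input state $\sigma$ on the combined state and workspace registers,
\begin{equation}
\mathcal{E}_s[\sigma] = D_\lambda^{\otimes n}(\rho_s) \otimes \Tr_{\mathsf{state}}(\sigma).
\end{equation}
The $T$-query algorithm then becomes a depth-$T$ circuit that interleaves arbitrary (possibly noiseless) unitaries with applications of $\mathcal{E}_s$, fitting exactly into the template of Lemma~\ref{lem:hybrid_basic}.

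The quantitative heart of the argument will be the uniform trace-distance bound $\|(\mathcal{E}_1 - \mathcal{E}_0)[\sigma]\|_{\tr} = (1-\lambda)^{|P|}$. To obtain it, I will invoke the Pauli-diagonal action of the depolarizing channel: $D_\lambda$ fixes $\Id$ and multiplies every non-identity single-qubit Pauli by $1-\lambda$, so $D_\lambda^{\otimes n}(P) = (1-\lambda)^{|P|} P$ and hence $D_\lambda^{\otimes n}(\rho_1 - \rho_0) = (1-\lambda)^{|P|} P / 2^n$. Combining $\|P\|_{\tr} = 2^n$ with multiplicativity of the trace norm under tensor products yields the claimed equality uniformly in $\sigma$. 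Feeding this into Lemma~\ref{lem:hybrid_basic} gives $\tvd(p_1, p_0) \le T \,(1-\lambda)^{|P|}$ for the output distributions $p_0, p_1$ under the two oracles, so any constant distinguishing advantage forces $T = \Omega((1-\lambda)^{-|P|})$, as desired.

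The only subtlety I anticipate is ensuring that the full $\NISQ$ (and strengthened) computational model is faithfully captured by this hybrid picture. Intermediate measurements and classical branching can be absorbed into the circuit via the principle of deferred measurement, and any additional noise present in the model \--- on the workspace register, or interleaved within the intermediate unitaries \--- only weakens the algorithm relative to the cleaner hybrid model in which the bound is applied. Consequently I do not expect a genuine conceptual obstacle: the entire argument reduces to the short Pauli-shrinkage calculation above, whose tightness depends crucially on the fact that the oracle overwrites the state register and thus decouples its effect from whatever the algorithm has stored in the workspace, so that the trace-norm estimate is uniform in the input $\sigma$.
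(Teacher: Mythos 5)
Your proposal is correct and follows essentially the same route as the paper's proof: both reduce to Lemma~\ref{lem:hybrid_basic} via the single calculation $D_\lambda^{\otimes n}[P] = (1-\lambda)^{|P|}P$, using the fact that the oracle overwrites the state register so the trace-norm bound is uniform in $\sigma$. The only cosmetic difference is that you place the depolarizing layer on the state register alone whereas the paper applies it to all $n'$ qubits and then discards the workspace factor, which yields the identical bound.
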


\begin{proof}
    Suppose the circuit operates on $n'$ qubits.
    For convenience, for $s\in\brc{0,1}$ denote $O_{\frac{1}{2^n}(\Id + s\cdot P)}$ by $O_s$.
    We would like to show that for all $n'$-qubit states $\sigma$, $\norm{D_\lambda^{\otimes n'}[(O_1 - O_0)[\sigma]]}_\tr$ is small so that we can apply Lemma~\ref{lem:hybrid_basic}.
    But note that for any $Q\in\brc{X,Y,Z}$, $D_\lambda[Q] = (1 - \lambda)Q$, whereas $D_\lambda[\Id] = \Id$. We conclude that
    \begin{equation}
        \norm{D^{\otimes n'}_\lambda[(O_1 - O_0)[\sigma]]}_\tr  = \norm*{D^{\otimes n'}_{\lambda}\!\!\left[\frac{P}{2^{n}}\otimes \Tr_{\mathsf{state}}(\sigma)\right]}_\tr \le \norm*{D^{\otimes n}_\lambda\!\!\left[\frac{P}{2^{n}}\right]}_\tr = (1 - \lambda)^{|P|}.
    \end{equation}
    By taking the channels $\calE$ in Lemma~\ref{lem:hybrid_basic} to be $D^{\otimes n'}_\lambda \circ O_1$ and $D^{\otimes n'}_\lambda \circ O_0$, we conclude that no algorithm given by alternately querying the oracle followed by depolarizing noise, and running arbitrary noiseless quantum computation, and finally measuring in the computational basis can distinguish whether the underlying oracle is $O_+$ or $O_-$ with at least $2/3$ probability unless it makes $\Omega((1 - \lambda)^{-|P|})$ queries.
    
    As this model is a stronger model of computation than $\NISQ_\lambda$ (note that there is no notion of a classical oracle in this setting), this implies the claimed lower bound for $\NISQ_\lambda$.
\end{proof}

\noindent Theorem~\ref{thm:quantum_oracle} follows immediately from Lemma~\ref{lem:hybrid_pauli}:

\begin{proof}[Proof of Theorem~\ref{thm:quantum_oracle}]
    The second part of the theorem was shown in \cite[Theorem 2]{huang2021information}. The $\NISQ_\lambda$ part of the theorem follows by applying Lemma~\ref{lem:hybrid_pauli} with $P$ ranging over all $n$-qubit Pauli operators which act nontrivially on all qubits. The Lemma implies that regardless of which $P$ is the one defining the oracle, the algorithm is unable to distinguish whether it has access to $O_\rho$ or $O_{I/2^n}$ without making $\Omega((1-\lambda)^{-n})$ queries.
\end{proof}

\section{Separating \texorpdfstring{$\NISQ$}{NISQ} and \texorpdfstring{$\BPP^{\QNC}$}{BPPQNC} from \texorpdfstring{$\BQP$}{BQP}}
\label{app:shuffle}

While the relatively simple oracle from Section~\ref{subsec:lift_lower} allowed us to separate $\NISQ$ and $\BQP$, it is insufficient even to separate $\BPP^{\QNC^0}$ from $\BQP$, as Simon's algorithm can also be implemented in the former. Here we show how to simultaneously separate $\NISQ$ and $\BPP^{\QNC}$ from $\BQP$, at the cost of relying on a more involved oracle construction, namely the shuffling Simon's problem from \cite{chia2020need}.

We first recall the setup of the shuffling Simon's problem.

\begin{definition}[Shuffling]\label{def:shuffling}
    Given a function $f: \brc{0,1}^n\to\brc{0,1}^n$, consider any sequence of functions $f_0,\ldots,f_d: \brc{0,1}^{(d+2)n}\to \brc{0,1}^{(d+2)n}$ where $f_0,\ldots,f_{d-1}$ are 1-to-1 functions, and where $f_d$ is chosen as follows. Define $S_d\subset\brc{0,1}^{(d+2)n}$ to be the set of all $f_{d-1}\circ\cdots\circ f_0(x)$ for $x$ ranging over the lexicographically first $2^n$ bitstrings in $\brc{0,1}^{(d+2)n}$. Then define
    \begin{equation}
        f_d(x) = \begin{cases} 
            f((f_{d-1}\circ\cdots \circ f_0)^{-1}(x)) & \text{if} \ x\in S_d \\
            \vec{0} & \text{otherwise}
        \end{cases},
    \end{equation}
    where we identify the lexicographically first $2^n$ bitstrings in $\brc{0,1}^{(d+2)n}$ with $\brc{0,1}^n$ in the natural way.\footnote{In \cite{chia2020need}, they define $f_d(x) = \perp$ for all $x\not\in S_d$ for a special output symbol $\perp$. The shuffling Simon's problem under our definition is strictly more difficult, so the lower bound they prove immediately translates to a lower bound in our setting.} 
    Let $\mathbf{SHUF}(f,d)$ denote the set of all sequences $(f_0,\ldots,f_d)$ that can be obtained in this way, and let $\calD(f,d)$ denote the distribution over $\mathbf{SHUF}(f,d)$ induced by sampling $f_0,\ldots,f_{d-1}$ uniformly at random from the set of 1-to-1 functions.
\end{definition}

\noindent We now describe the quantum oracle associated to $\mathbf{SHUF}(f,d)$.

\begin{definition}[Quantum shuffling oracle]
    Given $(f_0,\ldots,f_d)\in\mathbf{SHUF}(f,d)$, we define the associated quantum channel $\calF$ as follows. Given input state
    \begin{equation}
        \ket{\phi} = \bigotimes^d_{i=0} \ket{i,x_i}\ket{y_i},
    \end{equation}
    where $x_0,\ldots,x_d\in\brc{0,1}^{(d+2)n}$,
    we define
    \begin{equation}
        \calF\ket{\phi} \triangleq \bigotimes^d_{i=0} \ket{i,x_i}\ket{y_i \oplus f_i(x_i)}.
    \end{equation}
    The \emph{shuffling oracle} $O_{f,d}$ then sends $\ket{\phi}$ to the mixed state
    \begin{equation}
        O_{f,d}(\ket{\phi}\bra{\phi}) \triangleq \E[\calF\sim\calD(f,d)]*{\calF|\phi\rangle \langle \phi|\calF}.
    \end{equation}
\end{definition}

\noindent We are now ready to describe the shuffling Simon's problem.

\begin{definition}[$d$-Shuffling Simon's Problem: $\SSP{d}$]
    Let $f: \brc{0,1}^n\to\brc{0,1}^n$ be a random 2-to-1 function with probability 1/2, and a random 1-to-1 function otherwise. Given quantum access to the oracle $O_{f,d}$ as well as classical oracle access to $f$, the problem is to decide which of these two cases we are in.
\end{definition}

\noindent The lower bound against $\BPP^{\QNC}$ is immediate from \cite{chia2020need}:

\begin{theorem}[Corollary 1.3 from \cite{chia2020need}]
    For $d = \log^{\omega(1)}(n)$, $\SSP{d}\in\BQP^O$ but $\SSP{d}\not\in(\BPP^{\QNC})^{O}$, where $O$ denotes the shuffling oracle $O_{f,d}$ of the underlying function $f$.
\end{theorem}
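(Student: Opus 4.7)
The plan is to essentially invoke Corollary 1.3 of \cite{chia2020need}, but let me describe how I would organize the two directions.

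For the upper bound $\SSP{d}\in\BQP^O$, I would implement an encoded version of Simon's algorithm. The key point is that a $\BQP$ machine with quantum access to $O_{f,d}$ can apply $\calF$ coherently: starting from a uniform superposition $\frac{1}{2^{n/2}} \sum_{x\in\brc{0,1}^n}\ket{x}$ on the first register (padded with zeros so it lies in the lexicographically first $2^n$ bitstrings of $\brc{0,1}^{(d+2)n}$), one sequentially uncomputes the intermediate shuffles by querying $f_0,\ldots,f_{d-1}$ to obtain a superposition carrying $f_d\circ \cdots \circ f_0(x) = f(x)$ in a final register. Since $f_0,\ldots,f_{d-1}$ are $1$-to-$1$ on the relevant subset, the intermediate registers can be uncomputed without collisions, yielding the same state as standard Simon's on $f$. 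Measuring after Hadamards recovers elements of $s^\perp$ (resp.\ uniformly random strings) in the $2$-to-$1$ (resp.\ $1$-to-$1$) case, and $\mathcal{O}(n)$ repetitions suffice.

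For the lower bound $\SSP{d}\notin(\BPP^{\QNC})^O$, I would directly appeal to the machinery of \cite{chia2020need}. At a high level, their argument establishes that any $\BPP^{\QNC^i}$ algorithm interleaves classical queries to $f$ with quantum subroutines of depth at most $\log^i(n)$. Each such subroutine can make at most $\log^i(n)$ sequential quantum queries to $O_{f,d}$, so it cannot coherently compose more than $\log^i(n)$ of the layers $f_0,\ldots,f_d$. Because the random shuffles $f_0,\ldots,f_{d-1}$ map the ``meaningful'' inputs into a negligible fraction of $\brc{0,1}^{(d+2)n}$, any query whose input is not of the form $f_{j-1}\circ\cdots\circ f_0(x)$ returns the junk value $\vec{0}$, revealing nothing about $f$. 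Formally one sets up a hybrid/compressed-oracle argument showing that for $d = \log^{\omega(1)}(n)$, the output distribution of the quantum subroutine is exponentially close to that of the same subroutine run on the oracle with $f$ replaced by an arbitrary function with the same ``type'' ($1$-to-$1$ vs $2$-to-$1$). Standard classical lower bounds for Simon's problem then preclude the remaining classical queries from deciding the problem.

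The anticipated obstacle is not the BQP upper bound (which is essentially bookkeeping) but formalizing the lower bound's hybrid argument across the interleaving of classical and shallow-quantum phases: one needs to show that the shallow quantum subroutine's transcript carries only $\exp(-\Omega(n))$ bits of information about the underlying $f$, and that this bound composes across the polynomially many such subroutines and classical queries without catastrophic accumulation. This is exactly the technical content of \cite{chia2020need}, so the proof would simply reduce $\SSP{d}$ (under our slightly strengthened Definition~\ref{def:shuffling}, where $f_d$ outputs $\vec{0}$ rather than $\perp$ off of $S_d$) to their formulation\--- noting that our version is strictly harder since the algorithm is given less information\--- and quote the bound.
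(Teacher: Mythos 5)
Your proposal matches the paper's treatment: the paper states this result as a direct import of Corollary 1.3 from \cite{chia2020need} with no proof of its own, handling the only delta (outputting $\vec{0}$ rather than $\perp$ off of $S_d$) exactly as you do, by observing that the modified problem gives the algorithm strictly less information so the lower bound transfers verbatim. Your sketches of the $\BQP$ upper bound and the shallow-depth lower bound are consistent with the cited work, so there is nothing further to add.
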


\noindent The main result of this section is to show that $\SSP{d}$ also yields an oracle separation between $\BQP$ and $\NISQ_\lambda$.

\begin{theorem}\label{thm:nisq_ssp}
    Any $\NISQ_\lambda$ algorithm that solves $\SSP{d}$ with constant advantage must make $\exp(\Omega(n\cdot \min(1/2,\lambda d)))$ oracle queries.
\end{theorem}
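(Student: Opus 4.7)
The plan is to follow the template of Theorem~\ref{thm:generic_slowdown}: combine a hybrid argument that reduces noisy quantum queries to statistically uninformative ones with the classical query lower bound for Simon's problem. The new ingredient needed here, compared to Theorem~\ref{thm:generic_slowdown}, is that the informativeness of a single noisy query to $O_{f,d}$ must be shown to decay as $\exp(-\Omega(n\lambda d))$ rather than just $\exp(-\Omega(\lambda n))$, reflecting the additional $d$ layers of shuffling.

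Specifically, I would first reformulate the $\NISQ_\lambda$ algorithm as a learning tree $\calT$ in the sense of Definition~\ref{def:tree_nisq}, with both classical queries to $f$ and quantum queries to $O_{f,d}$. The main technical lemma to prove, analogous to Lemma~\ref{lem:test_slowdown}, is: for any $\lambda$-noisy quantum circuit $A$ making a single query to $O_{f,d}$, averaged over $(f_0,\ldots,f_d)\sim\calD(f,d)$, the distribution $p_{f,d}$ over measurement outcomes is within total variation $\exp(-\Omega(n\lambda d))$ of the distribution $p_{\id}$ obtained when $O_{f,d}$ is replaced by the identity channel. The intuition is that $f_d$ acts nontrivially only on the set $S_d\subseteq\brc{0,1}^{(d+2)n}$ of size $2^n$, which over the randomness of $f_0,\ldots,f_{d-1}$ is a uniformly random subset; for the noisy query to have output amplitude on $S_d$, the algorithm must coherently prepare a state supported on the sequentially-defined subsets $S_0, S_1, \ldots, S_d$ through the chain $f_{d-1}\circ\cdots\circ f_0$, and this preparation is damped by a factor of roughly $(1-\lambda)^n$ per layer due to local depolarizing noise acting on the $n$ qubits of information that must be preserved across the chain. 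I would formalize this via a hybrid argument over the $d$ layers, iteratively applying a generalization of Lemma~\ref{lem:proj} to bound the surviving amplitude on each random $S_i$ after local depolarizing noise.

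Once this key lemma is established, the remainder of the argument mirrors the end of the proof of Theorem~\ref{thm:generic_slowdown}. By Lemma~\ref{lem:perturbchildren}, if the algorithm makes $N$ total queries then replacing every quantum query in $\calT$ by an identity-oracle query changes the distribution over leaves by at most $N\exp(-\Omega(n\lambda d))$ in total variation. The resulting algorithm derives all its information about $f$ from its classical oracle; by the standard classical birthday-bound lower bound for Simon's problem (applied as in the proof of Theorem~\ref{thm:generic_slowdown}), distinguishing 2-to-1 from 1-to-1 $f$ with constant advantage requires $\Omega(2^{n/2})$ classical queries. An application of Lemma~\ref{lem:lecam} then yields the claimed lower bound, since $\min\bigl(\exp(\Omega(n\lambda d)),\, 2^{n/2}\bigr) = \exp(\Omega(n\min(1/2,\lambda d)))$ after absorbing the $\ln 2$ factor into the big-Omega; in particular, the saturation at $n/2$ in the exponent is exactly the classical Simon's lower bound kicking in once the noise limit exceeds it.

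The hard part will be the multi-layer hybrid argument inside the key lemma. The single-layer density argument from Theorem~\ref{thm:generic_slowdown} exploited the nice product structure of the useful subset $\Omega = \brc{0,1}^n\times\brc{0^n}$, under which the bit-wise noise probabilities factor cleanly; here the intermediate subsets $S_i = (f_{i-1}\circ\cdots\circ f_0)(\text{first } 2^n \text{ strings})$ are random subsets of $\brc{0,1}^{(d+2)n}$ of density $2^{-(d+1)n}$ with no such algebraic structure, and one must track how coherent quantum amplitude on these sequentially-defined random subsets decays across layers of noisy computation, all averaged over the random $f_0,\ldots,f_{d-1}$. I expect this would combine the density-based hybrid argument of Section~\ref{subsec:lift_lower} with variants of the bounded-depth techniques developed in \cite{chia2020need} to lower-bound $\BPP^{\QNC^0_d}$ algorithms on this same oracle construction.
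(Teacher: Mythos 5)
Your overall skeleton (learning tree, perturbation via Lemma~\ref{lem:perturbchildren}, reduction to the classical Simon's lower bound, Le Cam, and the final $\min(\exp(\Omega(\lambda dn)), 2^{n/2})$ bookkeeping) matches the paper's proof, and you correctly identify that everything hinges on a single-query indistinguishability lemma with decay $\exp(-\Omega(\lambda dn))$. But the mechanism you propose for that key lemma is not the right one, and as sketched it would not go through. You picture the $\exp(-\Omega(\lambda dn))$ as accumulating \emph{sequentially}: a factor of roughly $(1-\lambda)^n$ per shuffling layer, formalized by a hybrid over the $d$ layers tracking amplitude on ``sequentially-defined subsets $S_0,\ldots,S_d$.'' This does not match how the oracle works: a single query to $O_{f,d}$ applies all of $f_0,\ldots,f_d$ in parallel on separate registers, the intermediate functions $f_0,\ldots,f_{d-1}$ are 1-to-1 on the \emph{entire} domain $\brc{0,1}^{(d+2)n}$ (there are no intermediate ``useful subsets'' to decay on), and only $f_d$ is restricted to the sparse set $S_d$. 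There is consequently no natural layer-by-layer hybrid to run, and no reason each ``layer'' should cost $(1-\lambda)^n$. Relatedly, your key lemma compares $O_{f,d}$ to the identity \emph{channel}; that comparison cannot be exponentially small, since the shuffling functions act nontrivially everywhere. The correct comparison (Lemma~\ref{lem:test_ssp_slowdown}) is between $O_{f,d}$ and $O_{f',d}$ for two base functions $f,f'$: the two channels differ only on the span of $S_d$, so only the amplitude on $S_d$ matters.

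The observation you are missing is that the entire $\exp(-\Omega(\lambda dn))$ decay is a \emph{width} effect from a single layer of noise, not a depth effect. Over the randomness of $f_0,\ldots,f_{d-1}$, the set $S_d$ is a uniformly random subset of $\brc{0,1}^{(d+2)n}$ of size $2^n$, and by a Gilbert--Varshamov argument (Lemma~\ref{lem:sepwhp}) its elements are pairwise at Hamming distance $\Omega(dn)$ with overwhelming probability. For such a well-separated sparse set, one layer of qubit-wise depolarizing noise on the $(d+2)n$-qubit query register already annihilates the amplitude on $S_d$: after independent bit flips, the string either must not move at all (probability $(1-\lambda/2)^{(d+2)n}$) or must jump to another point of $S_d$ at distance $\Omega(dn)$, which is exponentially unlikely (Lemma~\ref{lem:classical_anti}, Corollary~\ref{lem:offspan}, fed through Lemma~\ref{lem:proj}). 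This holds even if the algorithm somehow already knew $S_d$ exactly, which is why no argument about the difficulty of chaining through $f_0,\ldots,f_{d-1}$ is needed. Once this is in place the rest of your outline is correct and the proof closes exactly as in Theorem~\ref{thm:generic_slowdown} via Lemma~\ref{lem:hybrid_basic} and Lemma~\ref{lem:lecam}.
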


\noindent The lower bound against $\NISQ$ follows along similar lines to the argument in Section~\ref{subsec:lift_lower}. The primary difference is that the subset $\Omega$ from Section~\ref{subsec:lift_lower} was chosen to consist of all strings where the last $n$ bits are $0$, whereas the subset $S_d$ of the domain on which $f_d$ acts nontrivially in Definition~\ref{def:shuffling} is a random subset. We will argue that with high probability over the randomness of this subset, depolarizing noise will kill off most of a states' component within the subspace corresponding to this subset.

\noindent We begin with a simple lemma showing that random small subsets of the hypercube are well-separated. This lemma is merely a re-interpretation of the Gilbert-Varshamov bound.

\begin{lemma}\label{lem:sepwhp}
    Let $\Omega$ be a uniformly random subset of $\brc{0,1}^M$ of size $S$, where $|S| \le 2^{M-1}$. Then the probability that there exist distinct $x,y\in\Omega$ that are $\frac{M}{2}(1 - \sqrt{2\log_2(S^2/\delta)/M})$-close in Hamming distance is at most $\delta$.
\end{lemma}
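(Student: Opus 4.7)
The plan is to apply a union bound over all pairs of distinct points in $\Omega$. Since there are $\binom{S}{2} \le S^2/2$ such pairs, it suffices to show that for any two uniformly random distinct points $x,y \in \{0,1\}^M$, the probability that their Hamming distance falls below $r \triangleq \frac{M}{2}\left(1 - \sqrt{2\log_2(S^2/\delta)/M}\right)$ is at most $2\delta/S^2$.

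To bound the per-pair probability, I would observe that by the symmetry of the uniform measure on size-$S$ subsets, any fixed pair of distinct elements of $\Omega$ is distributed as two independent uniform samples from $\{0,1\}^M$ conditioned on being unequal. The difference $x \oplus y$ is then uniform over $\{0,1\}^M \setminus \{0^M\}$, so $d(x,y)$ has the distribution of a $\mathrm{Bin}(M, 1/2)$ random variable conditioned on being nonzero. A standard Hoeffding/Chernoff bound shows that an unconditioned $\mathrm{Bin}(M, 1/2)$ random variable lies at or below $M/2 - t$ with probability at most $\exp(-2t^2/M)$ for any $t \ge 0$, and the conditioning on being nonzero inflates this by at most the factor $2^M/(2^M - 1) \le 2$.

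Substituting $t = \frac{M}{2}\sqrt{2\log_2(S^2/\delta)/M}$ yields $\exp(-2t^2/M) = \exp(-\log_2(S^2/\delta)) = (\delta/S^2)^{1/\ln 2}$. Because $1/\ln 2 > 1$ and we may assume $S^2 \ge \delta$ (otherwise the claim is vacuous since even a Hamming distance of zero suffices), this quantity is strictly smaller than $\delta/S^2$, which comfortably absorbs the factor of $2$ from conditioning on distinctness and the factor of $\binom{S}{2} \le S^2/2$ from the union bound, giving a total of at most $\delta$.

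The argument is essentially the Gilbert--Varshamov bound applied to random codes, and I do not foresee any serious obstacle; the only point that requires a moment of care is reconciling the $\log_2$ in the lemma statement with the $\ln$ that naturally arises from Hoeffding's inequality, and the factor $1/\ln 2 > 1$ in the exponent provides exactly the slack needed. An alternative, essentially equivalent route would be to bound the per-pair probability directly via the Hamming-ball volume estimate $|B_r| \le 2^{M H_2(r/M)}$ and the inequality $H_2(1/2 - x) \le 1 - 2x^2/\ln 2$, but the Hoeffding form above is cleaner to write up.
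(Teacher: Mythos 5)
Your proof is correct and follows essentially the same route as the paper's: both arguments amount to a union bound over pairs of random codewords combined with a lower tail bound on the Hamming distance of a uniformly random pair, i.e., the standard Gilbert--Varshamov random-coding calculation. The only cosmetic differences are that the paper runs the union bound sequentially over points sampled without replacement and packages the tail bound as a Hamming-ball volume estimate $|B(x,r)|\le 2^{M H(r/M)}$ together with $H(p)\le 2\sqrt{p(1-p)}$, whereas you union bound over unordered pairs and invoke Hoeffding on $\mathrm{Bin}(M,1/2)$ directly; the two tail bounds give the same exponent, and your bookkeeping (the factor $2^M/(2^M-1)\le 2$ from conditioning on distinctness and the slack from $1/\ln 2>1$) checks out.
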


\begin{proof}
    Given $x\in\brc{0,1}^M$ and $0 \le r\le M$, let $B(x,r)$ denote the Hamming ball around $x$ of radius $r$. Note that 
    \begin{equation}
        |B(x,r)| = \sum^r_{i=0} \binom{M}{i} \le 2^{M\cdot H(r/M)},
    \end{equation}
    where $H(\cdot)$ denotes the binary entropy function. So given a subset $\Omega'\subset\brc{0,1}^M$ of size $S'$, the probability that a random point from $\brc{0,1}^M \backslash \Omega'$ is at least $r$-far in Hamming distance from every point of $\Omega'$ is 
    \begin{equation}
        1 - \frac{2^{M\cdot H(r/M)}\cdot S'}{2^M - S'}.
    \end{equation}
    As we can sample a random subset $\Omega$ of $\{0,1\}^M$ of size $S$ by randomly sampling $S$ points from $\{0,1\}^M$ without replacement, we conclude that the probability that these points are all $r$-far in Hamming distance is
    \begin{align}
        \prod^{S-1}_{t=0}\left(1 - \frac{2^{M\cdot H(r/M)}\cdot t}{2^M - t}\right) &\ge 1 - 2^{M\cdot H(r/M)}\sum^{S-1}_{t=0} \frac{t}{2^M - t} \ge 1 - S^2 \cdot 2^{- M(1 - H(r/M))} \\
        &\ge 1 - S^2 \cdot 2^{-M(1 - 2\sqrt{r/M - r^2/M^2})}.
    \end{align}
    By taking $r = \frac{M}{2}(1 - \sqrt{2\log_2(S^2/\delta)/M})$, we find that the probability that all points in $\Omega$ are at least $r$ far in Hamming distance is at least $1 - \delta$ as claimed.
\end{proof}

\noindent We would like to apply Lemma~\ref{lem:proj} to $\Omega$ consisting of strings which are separated in Hamming distance. To that end, we prove the following expansion result for such subsets of the hypercube.

\begin{lemma}\label{lem:classical_anti}
    Let $3/10 \le q < 1/2$. Let $\Omega\subset\brc{0,1}^n$ be a subset such that all strings are at least $qn$ apart in Hamming distance. Then for any distribution $D$ over $\brc{0,1}^n$ and any $0 < \lambda \le 1$, if $\tilde{a}$ is obtained by sampling $a\sim D$ and independently flipping each bit of $a$ with probability $\lambda/2$, then $\Pr{\tilde{a} \in \Omega} \le \exp(-\Omega(\lambda n))$.
\end{lemma}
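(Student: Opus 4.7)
My plan is to condition on $a \sim D$ and bound $\Pr{\tilde{a}\in\Omega \mid a}$ uniformly in $a$, so that the overall bound follows by averaging. Writing $\tilde{a} = a \oplus \eta$ where $\eta$ has iid $\mathrm{Ber}(\lambda/2)$ coordinates, and setting $\Omega' \triangleq a \oplus \Omega$ (which inherits the $qn$-separation of $\Omega$ since XOR is a Hamming isometry), the task will reduce to bounding
\begin{equation}
    \Pr[\eta]{\eta \in \Omega'} = \sum_{c \in \Omega'} p^{|c|}(1-p)^{n-|c|}, \qquad p \triangleq \lambda/2.
\end{equation}

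The key observation I will use is that, by the triangle inequality, any two elements of $\Omega'$ of Hamming weight strictly less than $qn/2$ would lie within distance $qn$ of each other, contradicting the separation hypothesis. Hence at most one element $c^* \in \Omega'$ has weight below $qn/2$, and I will split the sum at this threshold. The contribution from $c^*$ is at most $(1-p)^n \le \exp(-\lambda n/2)$, since $p^k(1-p)^{n-k}$ is maximized at $k=0$ whenever $p \le 1/2$. For the remaining contribution, the same monotonicity shows every term with $|c| \ge qn/2$ is at most $p^{qn/2}(1-p)^{n-qn/2}$, and the Hamming sphere-packing bound gives $|\Omega'| \le 2^n / V(n, qn/2) \le \mathrm{poly}(n) \cdot 2^{n(1-H(q/2))}$, where $V(n,r)$ is the volume of a Hamming ball of radius $r$ and $H(\cdot)$ denotes binary entropy.

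Combining these two bounds, the heavy contribution will be at most $\mathrm{poly}(n)$ times $\exp\bigl(n[(1-H(q/2))\ln 2 + f(p)]\bigr)$, where $f(p) \triangleq (q/2)\ln p + (1-q/2)\ln(1-p)$. I will then verify that $f$ is concave with maximum at $p = q/2$ and value $f(q/2) = -H(q/2)\ln 2$, so the bracketed exponent is uniformly at most $(1 - 2H(q/2))\ln 2$. Here the hypothesis $q \ge 3/10$ enters crucially: a direct computation gives $H(3/20) > 1/2$, so $1 - 2H(q/2) \le -c_q$ for some constant $c_q > 0$ depending only on $q$. This renders the second contribution $\exp(-\Omega(n))$, and since $\lambda \le 1$ we have $n \ge \lambda n$, so both contributions combine to the desired $\exp(-\Omega(\lambda n))$.

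The main subtlety worth flagging is that the naive route of bounding $\Pr{|\eta| \ge qn/2}$ by Chernoff's inequality alone is not strong enough: the KL divergence $D(q/2 \,\|\, p)$ vanishes quadratically as $p \uparrow q/2$, so Chernoff would degrade to $\exp(-o(\lambda n))$ in that regime. Invoking the sphere-packing bound on $|\Omega'|$ together with the maximum-at-threshold of $p^k(1-p)^{n-k}$ sidesteps this issue and works uniformly in $p \in (0, 1/2]$, thanks precisely to the entropy inequality $H(q/2) > 1/2$, which holds whenever $q > 2 H^{-1}(1/2) \approx 0.22$ and is comfortably guaranteed by $q \ge 3/10$.
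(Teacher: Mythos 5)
Your proof is correct, and it follows the same basic decomposition as the paper's: by the triangle inequality at most one element of $\Omega$ can lie within Hamming distance $qn/2$ of $a$, that element contributes at most $(1-\lambda/2)^n \le \exp(-\lambda n/2)$, and the remaining elements are handled by a union bound with each term bounded by its value at the threshold weight $qn/2$. Where you genuinely diverge is in how the union-bound term is controlled, and your version is the more careful one. The paper bounds each far term by $(\lambda/2)^{qn/2}(1-\lambda/2)^{n(1-q/2)} \le 2^{-qn/2}$, multiplies by $|\Omega|$, and asserts the product is $\exp\bigl(n[2(1/2-q)^2 - (q/2)\ln 2]\bigr)$; this implicitly requires $|\Omega| \le \exp(2n(1/2-q)^2)$, a bound on the code size that is never justified and does not follow from the separation hypothesis alone. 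Indeed, the Gilbert--Varshamov bound produces codes of minimum distance $3n/10$ with rate $1 - H(3/10) \approx 0.119$, which exceeds $2(1/5)^2/\ln 2 \approx 0.115$, while the sphere-packing bound $2^{n(1-H(q/2))}$ is far too weak to rescue the crude $2^{-qn/2}$ estimate (for $q=3/10$ their product is exponentially \emph{large}). The paper's numbers do hold for the particular $\Omega$ arising in its application, where the size--separation tradeoff of Lemma~\ref{lem:sepwhp} supplies exactly the missing cardinality bound, but not for the lemma as stated. Your route --- optimizing $p^{qn/2}(1-p)^{n(1-q/2)}$ over $p$ to obtain the sharp value $2^{-nH(q/2)}$ at $p=q/2$, and pairing it with sphere-packing --- yields $\mathrm{poly}(n)\cdot 2^{n(1-2H(q/2))}$, which is exponentially small precisely when $H(q/2) > 1/2$, i.e.\ $q > 2H^{-1}(1/2) \approx 0.22$; this is what the hypothesis $q \ge 3/10$ is really buying. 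In short: same skeleton, but your quantitative handling of the main term closes a gap in the paper's own write-up and proves the lemma in the full generality in which it is stated.
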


\begin{proof}
    Consider any fixed string $a\in\brc{0,1}^n$. If the Hamming distance from $a$ to any element of $\Omega$ is at least $qn/2$, then conditioned on $a$, the probability that $\tilde{a}\in\Omega$ is at most $(\lambda/2)^{qn/2} (1 - \lambda/2)^{n(1-q/2)}\cdot |\Omega| \le 2^{-qn/2}\cdot |\Omega|$. On the other hand, if the Hamming distance from $a$ to some element $x\in\Omega$ is some $d\le qn/2$, then by triangle inequality its Hamming distance to any other element of $\Omega$ is at least $qn - d \ge qn/2$. Conditioned on such an $a$, the probability that $\tilde{a}\in\Omega$ can be bounded by
    \begin{equation}
        \Pr{\tilde{a} = x \mid a} + (\lambda/2)^{qn/2} (1 - \lambda/2)^{n(1-q/2)} \cdot (|\Omega - 1|) \le (1 - \lambda/2)^n + 2^{-qn/2}\cdot |\Omega|.
    \end{equation}
    Putting everything together, we conclude that
    \begin{equation}
        \Pr{\tilde{a}\in\Omega} \le \exp\left(n\cdot \brk*{2(1/2 - q)^2 - (qn/2)\ln 2}\right) + \exp(-\lambda n / 2).
    \end{equation}
    Note that if $q \ge 0.28$, this expression is at most $\exp(-Cn)$ for an absolute constant $C > 0$.
\end{proof}

\noindent Combining Lemma~\ref{lem:proj} and Lemma~\ref{lem:classical_anti} immediately yields the following estimate:

\begin{corollary}\label{lem:offspan}
    Let $3/10 \le q < 1/2$. Given $\Omega\subset\brc{0,1}^n$ such that all strings in $\Omega$ are at least $qn$ apart in Hamming distance, let $\Pi$ denote the projection to the span of $\brc{\ket{x}}_{x\in\Omega}$. Then for any $0 < \lambda \le 1$ and any $n$-qubit pure state $\ket{\psi}$,
    \begin{equation}
        \Tr\left(\Pi D_{\lambda}[|\psi\rangle \langle \psi|]\right) \le \exp(-\Omega(\lambda n)).\label{eq:projection}
    \end{equation}
\end{corollary}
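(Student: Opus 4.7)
The plan is to combine Lemma~\ref{lem:proj} and Lemma~\ref{lem:classical_anti} directly. First I would invoke Lemma~\ref{lem:proj} with $n' = n$ and the subset $\Omega$ as given, which yields the bound
\begin{equation}
    \Tr\bigl(\Pi D_\lambda^{\otimes n}[\ket{\psi}\bra{\psi}]\bigr) \le \sup_D \Pr_{a \sim D,\, \tilde{a}}[\tilde{a} \in \Omega],
\end{equation}
where the supremum ranges over probability distributions on $\brc{0,1}^n$ and $\tilde{a}$ is obtained from $a$ by flipping each bit independently with probability $\lambda/2$.

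Next I would apply Lemma~\ref{lem:classical_anti}, whose hypothesis (strings in $\Omega$ pairwise $qn$-separated with $3/10 \le q < 1/2$) matches exactly the hypothesis of the corollary. Crucially, the conclusion of Lemma~\ref{lem:classical_anti} gives the bound $\Pr[\tilde{a}\in\Omega]\le \exp(-\Omega(\lambda n))$ for \emph{every} distribution $D$, with the implicit constant depending only on $q$ (and on the absolute constant $C$ in the proof of Lemma~\ref{lem:classical_anti}). Thus we may take the supremum over $D$ without affecting the exponential rate, and combining with the previous display gives
\begin{equation}
    \Tr\bigl(\Pi D_\lambda^{\otimes n}[\ket{\psi}\bra{\psi}]\bigr) \le \exp(-\Omega(\lambda n)),
\end{equation}
which is the claim.

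There is essentially no obstacle: the work has been done in the two lemmas, and the only thing to verify is that the bound in Lemma~\ref{lem:classical_anti} is uniform in the input distribution, which is manifest from its statement and proof. One small notational point to address in writing this up is that the corollary uses $D_\lambda$ in place of $D_\lambda^{\otimes n}$; I would simply note this is the standard shorthand for the $n$-qubit local depolarizing channel, matching the usage of Lemma~\ref{lem:proj}.
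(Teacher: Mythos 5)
Your proposal is correct and is exactly the paper's argument: the paper states this corollary as an immediate combination of Lemma~\ref{lem:proj} (with $n'=n$) and Lemma~\ref{lem:classical_anti}, with no additional steps. Your observation that the bound in Lemma~\ref{lem:classical_anti} is uniform over the distribution $D$, so the supremum is harmless, is the only point that needs checking, and you have handled it.
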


\noindent We now show the analogue of Lemma~\ref{lem:test_slowdown} for $\SSP{d}$:

\begin{lemma}\label{lem:test_ssp_slowdown}
    Let $A$ be any $\lambda$-noisy quantum circuit which makes $N$ oracle queries to $O_{f,d}$ for some $f: \brc{0,1}^n\to\brc{0,1}^n$. If $p_{f,d}$ is the distribution over the random string $s$ output by the circuit when the oracle is $O_{f,d}$, then $\tvd(p_{f,d}, p_{f',d}) \le \exp(-\Omega(\lambda dn))$ for any $f,f': \brc{0,1}^n\to\brc{0,1}^n$.
\end{lemma}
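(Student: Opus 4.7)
The plan is to mirror the proof of Lemma~\ref{lem:test_slowdown}, with the ``heavy'' subset $\Omega$ from that argument now replaced by the random subset $S_d \subseteq \brc{0,1}^{(d+2)n}$ determined by the shuffling functions $f_0,\ldots,f_{d-1}$. The key observation is that $\calF_f$ and $\calF_{f'}$ agree on any input state whose $d$th register is supported outside $V_{S_d} \triangleq \mathrm{span}\brc{\ket{x}}_{x\in S_d}$, because $f_d$ reduces to the constant zero map off of $S_d$. So if we couple $O_{f,d}$ and $O_{f',d}$ to share the same draw of $f_0,\ldots,f_{d-1}$ (which we may since the marginal distributions under $\calD(f,d)$ and $\calD(f',d)$ are identical), the only way the oracle can distinguish $f$ from $f'$ is through the $V_{S_d}$-component of the state on the $d$th register.

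First I would invoke Lemma~\ref{lem:sepwhp} with $M = (d+2)n$ and $S = 2^n$ to conclude that, with probability at least $1 - \exp(-\Omega(n))$ over the shared draw of $(f_0,\ldots,f_{d-1})$, every pair of distinct elements of $S_d$ lies at Hamming distance at least $0.3(d+2)n$ (this requires $d$ to exceed a small absolute constant; for smaller $d$ the claimed bound is already implied by the lifting estimate driving the proof of Lemma~\ref{lem:test_slowdown}). Conditioned on this event, Corollary~\ref{lem:offspan} applied on the $(d+2)n$ qubits of the $d$th register and extended to the (generally mixed) marginal state there by convexity yields
\begin{equation}
    \tr\bigl[(\Pi_{V_{S_d}}\otimes I)\,D_\lambda^{\otimes n'}[\ketbra{\psi}{\psi}]\bigr] \le \exp\bigl(-\Omega(\lambda(d+2)n)\bigr)
\end{equation}
for every pure state $\ket{\psi}$ of the circuit. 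An almost verbatim repetition of the per-query trace-distance calculation from the proof of Lemma~\ref{lem:test_slowdown} (via a spectral decomposition of the depolarized state followed by Cauchy--Schwarz and Jensen) then gives $\norm{(O_{f,d}\circ D_\lambda^{\otimes n'})[\sigma]-(O_{f',d}\circ D_\lambda^{\otimes n'})[\sigma]}_\tr \le \exp(-\Omega(\lambda d n))$ for every pure state $\sigma$, and a final application of Lemma~\ref{lem:hybrid_basic} across the $N$ alternations of noise and oracle in the circuit, absorbing the $\exp(-\Omega(n))$ failure probability of the separation event via the triangle inequality, delivers the claimed bound on $\tvd(p_{f,d},p_{f',d})$ (up to an $N$ factor that can be hidden in the implicit constant when $N$ is subexponential in $\lambda d n$, as is the case in the intended application in Theorem~\ref{thm:nisq_ssp}).

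The main obstacle is that the set $S_d$ to which Corollary~\ref{lem:offspan} is applied is itself random and coupled to the oracle randomness, so one has to condition on the Hamming-separation event before invoking the depolarization estimate and then recombine the bounds. A secondary technicality is that the marginal state on the $d$th register is generally mixed, necessitating the convexity extension of Corollary~\ref{lem:offspan} alluded to above; the $0.3$ Hamming-fraction threshold also forces $d$ to exceed an absolute constant, but this complementary small-$d$ regime is already handled by the lifting analysis used in the proof of Theorem~\ref{thm:generic_slowdown}.
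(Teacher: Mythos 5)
Your proposal is correct and follows essentially the same route as the paper: couple the two oracles through the shared draw of $f_0,\ldots,f_{d-1}$ (so only $f_d$ differs), use Lemma~\ref{lem:sepwhp} to get Hamming separation of the random set $S_d$, apply Corollary~\ref{lem:offspan} to bound the mass that the depolarized state places on $\mathrm{span}\brc{\ket{x}}_{x\in S_d}$, redo the rank-two/Frobenius-norm calculation from Lemma~\ref{lem:test_slowdown}, and finish with Lemma~\ref{lem:hybrid_basic}. The one quantitative difference is that the paper invokes Lemma~\ref{lem:sepwhp} with $\delta = 2^{-cdn}$, so the separation event fails with probability $\exp(-\Omega(dn))$ rather than your $\exp(-\Omega(n))$, which is what allows that additive failure term to be absorbed into the stated $\exp(-\Omega(\lambda dn))$ bound uniformly over $\lambda \le 1$ (with your weaker choice one only gets $\exp(-\Omega(\min(1,\lambda d)\,n))$, which still suffices for Theorem~\ref{thm:nisq_ssp} but not for the lemma as literally stated).
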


\begin{proof}
    For convenience, define $n' \triangleq (d+2)n$, and suppose that $A$ operates on $n''$ qubits. Let $f: \brc{0,1}^n\to\brc{0,1}^n$ be an arbitrary Boolean function. We would like to show that for all $n''$-qubit pure states $\sigma$, $\norm{(O_{f,d}\otimes \mathrm{Id} - \mathrm{Id})[D^{\otimes n''}_\lambda[\sigma]]}_\tr$ is small so that we can apply Lemma~\ref{lem:hybrid_basic}.

    Given $\calF\in\mathbf{SHUF}(f,d)$, let $\Pi_\calF$ denote the projection to the span of $\brc{\ket{x}}_{x\in S_d}$ for the set $S_d$ given by $\calF$; we will occasionally denote $S_d$ by $S_d(\calF)$ to emphasize the dependence on $\calF$. For $\calF\sim \calD(f,d)$, observe that $S_d$ is a uniformly random subset of $\{0,1\}^{n'}$ of size $2^n$, so by Lemma~\ref{lem:sepwhp}, we have with probability at least $1 - 2^{-cdn}$ that all strings in $S_d$ are at least Hamming distance $\Delta \triangleq \frac{n'}{2}\big(1 - \frac{(4 + 2cd)n}{n'}\big)^2$ apart. By taking $c$ a sufficiently small absolute constant, we note that $\Delta \ge 3n'/10$ provided $d$ is larger than some absolute constant.
    
    Let $E\subseteq \mathbf{SHUF}(f,d)$ denote the set of $\calF$ for which this is the case, so that $|E|/|\mathbf{SHUF}(f,d)| \ge 1 - 2^{-c}$. By Lemma~\ref{lem:offspan} (applied to $n'$-qubit states instead of $n$-qubit states), $\Tr(\Pi_\calF  D^{\otimes n'}_\lambda[\rho]) \le \exp(-\lambda n'/2) + \exp(-Cn')$ for an absolute constant $C > 0$ for all $\calF\in E$ and $n'$-qubit pure states $\rho$.
    
    If $D_\lambda^{\otimes n''}[\sigma] = \sum_j \lambda_j \ket{\phi_j}\ket{\phi_j}$ for
    \begin{equation}
        \ket{\phi_j} = \sum_{\vec{x},\vec{y}} v_{j,\vec{x},\vec{y}} \left(\bigotimes^d_{i=0} \ket{i,x_i}\ket{y_i}\right) \otimes \ket{w_{j,\vec{x},\vec{y}}},
    \end{equation}
    then for any $\calF\in\mathbf{SHUF}(f,d)$,
    \begin{equation}
        \sum_j \lambda_j \sum_{\vec{x}: x_d\in S_d(\calF), \vec{y}} v^2_{j,\vec{x},\vec{y}} \le \exp(-\lambda n'/2) + \exp(-Cn').
    \end{equation}
    We see that $O_{f,d}\otimes\mathrm{Id}$ maps $\ket{\phi_j}\bra{\phi_j}$ to
    \begin{multline}
        \mathbb{E}_{\calF\sim\calD(f,d)}\Bigg[\sum_{\vec{x},\vec{x}',\vec{y},\vec{y}'} v_{j,\vec{x},\vec{y}} v_{j,\vec{x}',\vec{y}'}\bigotimes^{d-1}_{i=0} \ket{x_i,y_i\oplus f_i(x_i)}\bra{x'_i,y'_i\oplus f_i(x_i)} \\
        \otimes \left(\ket{x_d,y_d\oplus f_d(x_d)} \cdot \bone{x_d\in S_d(\calF)} + \ket{x_d,y_d}\cdot \bone{x_d \not\in S_d(\calF)}\right) \\ 
        \left(\bra{x'_d,y'_d\oplus f_d(x'_d)} \cdot \bone{x'_d\in S_d(\calF)} + \bra{x'_d,y'_d}\cdot \bone{x'_d \not\in S_d(\calF)}\right) \otimes \ket{w_{j,\vec{x},\vec{y}}}\bra{w_{j,\vec{x}',\vec{y}'}} \label{eq:Ofd}
        \Bigg]
    \end{multline}
    Noting that the only dependence on $f$ in the above expression is in the definition of $f_d$, we see that for any $f,f': \brc{0,1}^n\to\brc{0,1}^n$,
    \begin{multline}
        (O_{f,d}\otimes\mathrm{Id} - O_{f',d}\otimes\mathrm{Id})[\ket{\phi_j}\bra{\phi_j}] = \mathbb{E}_{\calF\sim\calD(f,d)}\Bigg[\sum_{\vec{x},\vec{x}',\vec{y},\vec{y}'} v_{j,\vec{x},\vec{y}} v_{j,\vec{x}',\vec{y}'}\bigotimes^{d-1}_{i=0} \ket{x_i,y_i\oplus f_i(x_i)}\bra{x'_i,y'_i\oplus f_i(x_i)} \\
        \otimes \Big\{\bra{x_d,y_d\oplus f_d(x_d)}\ket{x'_d,y'_d\oplus f'_d(x'_d)}\cdot \bone{x_d,x'_d\in S_d(\calF)} \\
        + \bra{x_d,y_d\oplus f_d(x_d)}\ket{x'_d,y'_d}\cdot \bone{x_d\in S_d(\calF), x'_d\not\in S_d(\calF)} \\
        + \bra{x_d,y_d}\ket{x'_d,y'_d\oplus f'_d(x'_d)}\cdot \bone{x_d\not\in S_d(\calF), x'_d\in S_d(\calF)}\Big\} \otimes \ket{w_{j,\vec{x},\vec{y}}}\bra{w_{j,\vec{x}',\vec{y}'}} \Bigg]
    \end{multline}
    where we define $f'_d(x) = f'((f_{d-1}\circ\cdots\circ f_0)^{-1}(x))$ for $x\in S_d(\calF)$ and $f'_d(x) = \vec{0}$ otherwise.
    This is a mixed state of rank at most 2, so its trace norm is at most $\sqrt{2}$ times its Frobenius norm, which we can bound via triangle inequality by
    \begin{equation}
        \mathbb{E}_{\calF\sim\calD(f,d)}\Bigg[2\sqrt{\sum_{\substack{\vec{x},\vec{x}',\vec{y},\vec{y}': \\ x_d\in\Omega \ \text{or} \ x'_d\in\Omega}} v^2_{j,\vec{x},\vec{y}} v^2_{j,\vec{x}',\vec{y}'}} \le 2\sqrt{1 - \Bigg(1 - \sum_{\vec{x}: x_d\in S_d, \vec{y}} v^2_{j,\vec{x},\vec{y}}\Bigg)^2} \le 2\sqrt{2\sum_{\vec{x}: x_d\in S_d, \vec{y}} v^2_{j,\vec{x},\vec{y}}}\Bigg].
    \end{equation}
    Letting $j\sim\lambda$ denote sampling from the distribution which places mass $\lambda_j$ on index $j$, we conclude that 
    \begin{align}
        \norm{(O_{f,d}\otimes\mathrm{Id} - O_{f',d}\otimes\mathrm{Id})[D^{\otimes n''}_\lambda[\sigma]]}_\tr &\le 4\mathbb{E}_{j\sim\lambda, \calF\sim\calD(f,d)}\Bigg[\sqrt{\sum_{\vec{x}: x_d\in S_d(\calF),\vec{y}} v^2_{j,\vec{x},\vec{y}}}\Bigg] \\
        &\le 4\mathbb{E}_{\calF\sim\calD(f,d)}\Bigg[\sqrt{\mathbb{E}_{j\sim\lambda}\Bigg[\sum_{\vec{x}: x_d\in S_d(\calF),\vec{y}} v^2_{j,\vec{x},\vec{y}}}\Bigg]\Bigg]  \\
        &\le 4(\exp(-\lambda n'/4) + \exp(-Cn'/2) + 2^{-cdn}) = \exp(-\Omega(\lambda dn)).
    \end{align}
    The lemma follows by taking the channels $\calE$ in Lemma~\ref{lem:hybrid_basic} to be $\E[f \ \text{2-to-1}]{(O_{f,d}\otimes\mathrm{Id})\circ D^{\otimes n''}_\lambda}$ and $\E[f \ \text{1-to-1}]{(O_{f,d}\otimes\mathrm{Id})\circ D^{\otimes n''}_\lambda}$.
\end{proof}

\begin{proof}[Proof of Theorem~\ref{thm:nisq_ssp}]
    Let $\calT$ be the learning tree corresponding to a $\NISQ_\lambda$ algorithm that makes at most $N$ classical queries to $f$ or quantum oracle queries to $O_{f,d}$. By Lemma~\ref{lem:perturbchildren} and Lemma~\ref{lem:test_ssp_slowdown}, if we replace every noisy quantum circuit $A$ in the tree with a noisy quantum circuit $A'$ that makes queries to $O_{g,d}$ for $g: \brc{0,1}^n\to\brc{0,1}^n$ the identity function, then the new distribution over the leaves of $\calT$ is at most $N^2\exp(-\Omega(\lambda dn))$-far in total variation from the original distribution $p_{O_{f,d}}$; for $N = \exp(o(\lambda d n))$, this quantity is $o(1)$. For convenience, denote this new distribution $p'_f$. 

    To apply Lemma~\ref{lem:lecam}, we wish to bound $\tvd(\E[f \ \text{1-to-1}]{p'_f}, \E[f \ \text{2-to-1}]{p'_f})$. But note that because the noisy quantum circuits $A'$ in the new learning tree are independent of the underlying function $f$, the learning tree is simply implementing a randomized classical query algorithm. As in the proof of Theorem~\ref{thm:generic_slowdown}, if $p'^r_f$ denotes the distribution $p'_f$ conditioned on some internal randomness $r$ of the algorithm, it suffices to bound $\sup_r \tvd(\E[f \ \text{1-to-1}]{p'_f}, \E[f \ \text{2-to-1}]{p'_f})$. The bound for this can be proven in an identical fashion as in Theorem~\ref{thm:generic_slowdown}, so we conclude that this quantity is $o(1)$ if there are $o(2^{n/2})$ classical queries along any root-to-leaf path of $\calT$.
\end{proof}

\section{Deferred Proofs}
\label{app:defer}

\subsection{Proof of Lemma~\ref{lem:info}}
\label{app:defer_leminfo}

Given $S\subseteq[n]$ and mixed $n$-qubit state $\sigma$, let $\sigma|_S$ denote the restriction of $\sigma$ to the subsystem indexed by $S$. That is, 
\begin{equation}
    \sigma_{!S} = \left(\Tr(\sigma|_{S^c})\cdot \Id/2^{|S^c|}\right) \otimes \sigma|_S. \label{eq:decohere}
\end{equation}

\begin{proof}
    Von Neumann entropy is invariant under unitary transformation, so it suffices to show that for any mixed state $\sigma$, $I(D_\lambda[\sigma]) \le (1 - \lambda)\cdot I(\sigma)$. Because $D_\lambda[\sigma] = \E[S\sim\mu]{\sigma_{!S}}$ for $\mu$ the distribution over $S\subseteq[n]$ which includes each element of $[n]$ independently with probability $\lambda$. So by concavity of entropy, additivity of entropy for tensor products, and \eqref{eq:decohere},
    \begin{equation}
        I(D_\lambda[\sigma]) \le \sum^n_{k=0} \lambda^{n-k} (1 - \lambda)^k \sum_{S\subset[n]: |S| = k} I(\sigma|_S) \le \sum^n_{k=0} \binom{n}{k}\frac{k}{n} \lambda^{n-k}(1-\lambda)^k = (1 - \lambda)I(\sigma),
    \end{equation}
    where in the last step we used Lemma~\ref{lem:avg} below.
\end{proof}

\noindent The proof above uses the following fact:

\begin{lemma}[Lemma 7 from \cite{aharonov1996limitations}]\label{lem:avg}
    For any density matrix $\sigma$ on $n$ qubits and any $0\le k < n$,
    \begin{equation}
        \binom{n}{k}^{-1}\sum_{S\subset[n]: |S| = k} I(\sigma|_S) \le \frac{k}{n} I(\sigma).
    \end{equation}
\end{lemma}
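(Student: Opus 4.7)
The plan is to reduce the statement to a quantum analogue of Han's inequality for von Neumann entropy, and then to derive that from strong subadditivity via a quantum chain-rule argument. Recalling that $\mathcal{I}(\sigma|_S) = |S| - S(\sigma|_S)$, the inequality to be shown rearranges to
\begin{equation}
    s_k \triangleq \binom{n}{k}^{-1}\!\!\!\!\sum_{S\subset[n]: |S|=k}\!\!\! S(\sigma|_S) \;\ge\; \frac{k}{n}\, S(\sigma)\,.
\end{equation}
Since $s_n = S(\sigma)$, it suffices to prove the monotonicity $\frac{s_{k+1}}{k+1} \le \frac{s_k}{k}$ for all $1 \le k < n$ and telescope from $k$ up to $n$.

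To establish this monotonicity, I would first prove the local ``averaged Han'' inequality for any fixed subset $T\subseteq[n]$ of size $k+1$:
\begin{equation}
    k \cdot S(\sigma|_T) \;\le\; \sum_{i \in T} S(\sigma|_{T\setminus\{i\}})\,. \label{eq:localhan}
\end{equation}
Then I would sum \eqref{eq:localhan} over all $T$ with $|T| = k+1$. On the left side I obtain $k \binom{n}{k+1}\, s_{k+1}$. On the right side, each $S\subset[n]$ with $|S| = k$ appears as $T\setminus\{i\}$ in exactly $n-k$ ways (one for each $i\in[n]\setminus S$), giving $(n-k)\binom{n}{k}\, s_k$. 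Using $(k+1)\binom{n}{k+1} = (n-k)\binom{n}{k}$, the inequality simplifies to $\frac{k}{k+1} s_{k+1} \le s_k$, which is the desired monotonicity.

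The main work, then, is proving \eqref{eq:localhan}. Here I would mimic the classical proof of Han's inequality, substituting strong subadditivity (SSA) for the classical fact that conditioning reduces entropy. Writing $T = \{i_1,\ldots,i_{k+1}\}$ in some order and defining quantum conditional entropy via $S(A|B) \triangleq S(AB) - S(B)$, the quantum chain rule gives
\begin{equation}
    S(\sigma|_T) = \sum_{j=1}^{k+1} S(\sigma|_{\{i_j\}} \mid \sigma|_{\{i_1,\ldots,i_{j-1}\}})\,.
\end{equation}
For each $j$, SSA applied to the tripartition $(\{i_j\},\, \{i_1,\ldots,i_{j-1}\},\, T\setminus\{i_1,\ldots,i_j\})$ gives $S(\sigma|_{\{i_j\}} \mid \sigma|_{\{i_1,\ldots,i_{j-1}\}}) \ge S(\sigma|_{\{i_j\}} \mid \sigma|_{T\setminus\{i_j\}})$. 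Summing and using $S(\sigma|_{\{i_j\}} \mid \sigma|_{T\setminus\{i_j\}}) = S(\sigma|_T) - S(\sigma|_{T\setminus\{i_j\}})$ yields
\begin{equation}
    S(\sigma|_T) \;\ge\; \sum_{j=1}^{k+1}\bigl(S(\sigma|_T) - S(\sigma|_{T\setminus\{i_j\}})\bigr) = (k+1) S(\sigma|_T) - \sum_{i\in T} S(\sigma|_{T\setminus\{i\}})\,,
\end{equation}
which rearranges to \eqref{eq:localhan}.

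The main obstacle is ensuring the quantum chain-rule manipulation is valid despite the fact that quantum conditional entropy can be negative: the key point is that SSA, unlike its classical counterpart, does not rely on any non-negativity of conditional entropies and applies verbatim. Once \eqref{eq:localhan} is in hand, the counting/double-counting step and the telescoping are straightforward bookkeeping, and the bound $s_k \ge \frac{k}{n} S(\sigma)$ is immediate.
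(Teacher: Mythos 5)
Your proof is correct. Note that the paper itself does not prove this lemma at all: it is imported verbatim as Lemma~7 of the cited work of Aharonov and Ben-Or, so there is no in-paper argument to compare against. Your route \--- rewriting $I(\sigma|_S)=|S|-S(\sigma|_S)$ so the claim becomes the quantum Han inequality $s_k \ge \tfrac{k}{n}S(\sigma)$, proving the local inequality $k\,S(\sigma|_T)\le\sum_{i\in T}S(\sigma|_{T\setminus\{i\}})$ from strong subadditivity via the chain rule, and then double counting and telescoping \--- is the standard derivation and matches the spirit of the original reference, which also reduces the statement to subadditivity-type entropy inequalities. Every step checks out: the SSA application $S(\{i_j\}\mid\{i_1,\ldots,i_{j-1}\})\ge S(\{i_j\}\mid T\setminus\{i_j\})$ is the correct instance of monotonicity of conditional entropy, the count that each $k$-set arises as $T\setminus\{i\}$ exactly $n-k$ times is right, and the identity $(k+1)\binom{n}{k+1}=(n-k)\binom{n}{k}$ gives the claimed monotonicity of $s_k/k$. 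You are also right that negativity of quantum conditional entropy is harmless here since only SSA and the telescoping identity are used. The only cosmetic gap is the boundary case $k=0$ permitted by the lemma statement, where $s_k/k$ is undefined; there both sides vanish and the claim is trivial, so nothing is lost.
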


\subsection{Proof of Lemma~\ref{lem:proj}}
\label{app:defer_lemproj}

\begin{proof}
    Given $S\subseteq[n]$ and $z\in\brc{0,1}^{|S|}, y\in\brc{0,1}^{n-|S|}$, let $z\circ_S y\in\brc{0,1}^n$ denote the string whose $i$-th entry is $z_i$ if $i\in S$ and $y_i$ otherwise.

    Let $\ket{\psi} = \sum_{x\in\brc{0,1}^{n'}} c_x\ket{x}$, and for convenience let $\rho_{\psi} \triangleq D^{\otimes n'}_\lambda[|\psi\rangle \langle \psi|]$. For any $S\subseteq[n']$, we have
    \begin{equation}
        \Tr_S(\ketbra{\psi}{\psi}) = \sum_{y,y'\in\brc{0,1}^{2n-|S|}} \Bigg(\sum_{w\in\brc{0,1}^{|S|}} c_{w\circ_S y} c_{w\circ_S y'}\Bigg) \ketbra{y}{y'}.
    \end{equation}
    For convenience, denote the coefficient $\sum_w c_{w\circ_S y} c_{w\circ_S y'}$ by $C^S_{y,y'}$. If $S$ is sampled by including every $i\in[n']$ independently with probability $\lambda/2$, then
    \begin{equation}
        \rho_\psi = \mathbb{E}_S\Bigl[\sum_{z\in\brc{0,1}^{|S|}, y,y'\in\brc{0,1}^{n'-|S|}} \frac{1}{2^{|S|}} \ketbra{z}{z} \otimes_S C^S_{y,y'} \ketbra{y}{y'}\Bigr].
    \end{equation}
    Note that for any $x\in\Omega$,
    \begin{align}
         \bra{x}\rho_\psi\ket{x} &= \mathbb{E}_{S\sim\mu}\Bigl[\frac{1}{2^{|S|}}\sum_{z,y} C^S_{y,y}\cdot \bone{z\circ y = x}\Bigr] \\
         &= \mathbb{E}_{S\sim\mu}\Bigl[\frac{1}{2^{|S|}}C^S_{x_{[n']\backslash S}, x_{[n']\backslash S}}\Bigr] \\
         &= \mathbb{E}_{S\sim\mu}\Bigl[\frac{1}{2^{|S|}}\sum_{w\in\brc{0,1}^{|S|}} c^2_{w\circ_S x_{[n]\backslash S}}\Bigr].
    \end{align}
    Note that this expression is precisely $\Pr[a\sim D,\tilde{a}]{\tilde{a} = x}$, so $\Tr(\Pi\rho_\psi)$ is the probability that $\tilde{a}\in\Omega$.
\end{proof}








\section{Simon's Problem and Noisy Parity}
\label{app:bkw}

In this section we describe a path towards separating $\NISQ$ and $\BPP$ using the original Simon's problem, i.e. without modifications based on quantum error-correcting codes like in Section~\ref{sec:robust_simon}. We begin by recalling the problem of learning parity with noise and a classical result on the computational complexity of this problem.

\begin{definition}
    For $\eta \in [0,1/2)$ and $n\in\mathbb{N}$, an instance of \emph{noisy parity with noise rate $\eta$} is given by an unknown string $s\in\brc{0,1}^n$ and samples $(x_1,y_1),\ldots,(x_N,y_N)$ where each $x_i$ is sampled uniformly from $\brc{0,1}^n$ and each $y_i$ is independently either $\iprod{x_i,s} \ \mathrm{mod} \ 2$ with probability $1 - \eta$ or $1 - \iprod{x_i,s} \ \mathrm{mod} \ 2$ otherwise. We say that an algorithm solves noisy parity with noise rate $\eta$ with $n$ samples if, given such an instance, it recovers $S$ with probability at least $2/3$ over its internal randomness and the randomness of the samples.
\end{definition}

\noindent While is widely conjectured (see e.g. \cite{blum1993cryptographic}) that this problem is computationally hard, the following seminal result implies that it is possible to do somewhat better than brute force:

\begin{theorem}[Theorem 2 and Section 3.3 from~\cite{blum2003noise}]\label{thm:bkw}
    If the unknown string $s$ in noisy parity is promised to be a subset of the first $\mathcal{O}(\log n \cdot \log \log n)$ bits, then there is an algorithm for noisy parity with noise rate $\eta$ which has runtime and sample complexity $\poly(n)\cdot (1 - 2\eta)^{-\sqrt{\log n}}$.
\end{theorem}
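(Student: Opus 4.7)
The plan is to invoke the Blum--Kalai--Wasserman (BKW) algorithm, specialized to exploit the promise that the unknown parity $s$ is supported on only $k = O(\log n \cdot \log\log n)$ bits. At a high level, BKW iteratively cancels blocks of the ``irrelevant'' coordinates (positions $k+1,\ldots,n$) by pairing samples whose $x$-vectors agree on a chosen block and XOR-ing them, producing new samples whose $x$-vectors vanish on that block. After several such rounds, the surviving samples have $x$-vectors supported only on the promised $k$ coordinates, which reduces to a much smaller noisy-parity instance solvable by either exhaustive search or bitwise majority vote.

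More concretely, I would partition the $n-k$ irrelevant coordinates into $a$ blocks of size $b$, with $a \cdot b \approx n-k$ and $a = \Theta(\sqrt{\log n})$. In round $i$, I would draw a fresh pool of samples, bucket them by the restriction of $x$ to block $i$, and XOR everything in each bucket against a pivot to produce new samples whose $x$-vector is zero on blocks $1,\ldots,i$. Each XOR multiplies the bias $(1-2\eta)$, so after all $a$ rounds the surviving samples encode a noisy-parity instance over the $k$ promised bits with effective noise $\tilde\eta$ satisfying $1 - 2\tilde\eta = (1-2\eta)^{2^a}$. One then recovers $s$ on the $k$-bit support either by enumerating all $2^k = n^{O(\log\log n)}$ candidate strings and picking the one that best correlates with the reduced samples, or by running $k$ single-bit majority votes after further reduction.

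The sample-complexity bookkeeping is standard. Distinguishing the correct $s$ from the other $2^k$ candidates requires $\Theta(k/(1-2\tilde\eta)^2) = \mathrm{poly}(n)\cdot (1-2\eta)^{-O(\sqrt{\log n})}$ reduced samples, since $2^a = \Theta(\sqrt{\log n})$. The birthday-style bucketing in each of the $a$ rounds inflates the starting pool by a multiplicative factor of roughly $2^b \cdot \mathrm{poly}(n)$, so one must choose $b$ (and hence $a$) so that $2^b$ remains polynomial in $n$; balancing these constraints against the promised support size $k$ yields the claimed $\mathrm{poly}(n)\cdot (1-2\eta)^{-\sqrt{\log n}}$ bound on both time and samples. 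Throughout, one must also verify that conditioning on bucketing events does not bias the residual label distribution, which follows from the symmetry of the Bernoulli noise under XOR.

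The main obstacle is the parameter balancing: each additional BKW round roughly shrinks the residual support but squares the bias, so one must carefully trade off the block size $b$ (which controls the per-round pool $2^b$) against the number of rounds $a$ (which controls the noise blow-up $(1-2\eta)^{2^a}$). The promise that $\mathrm{supp}(s)$ has size $O(\log n \cdot \log\log n)$ is exactly what makes the final enumeration step affordable; without it, brute-force recovery would dominate and one would fall back to the usual $2^{\Omega(n/\log n)}$ BKW runtime. A secondary subtlety is ensuring that the pivot-and-XOR step in each round genuinely produces independent samples of the claimed form\--- this is handled either by analyzing the process with a fresh batch at each round or by a careful union bound over the dependencies introduced by reusing a pivot.
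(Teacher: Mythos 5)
The paper does not prove this statement; it imports it verbatim from Blum--Kalai--Wasserman, so the relevant comparison is with what the cited argument actually does. Your high-level plan (BKW-style block cancellation followed by recovery on the promised support) is the right family of ideas, but the instantiation has a genuine gap in exactly the place you flag as the main obstacle: the parameter balancing. You propose to run the $a$ cancellation rounds over the $n-k$ \emph{irrelevant} coordinates with $a\cdot b \approx n-k$ and $a = \Theta(\sqrt{\log n})$, which forces $b = \Theta(n/\sqrt{\log n})$ and hence a per-round bucketing cost of $2^b = 2^{\Theta(n/\sqrt{\log n})}$ samples --- exponential, not $\mathrm{poly}(n)$. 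Moreover, cancelling the irrelevant coordinates is unnecessary: since $s$ is supported on the first $k = O(\log n\cdot\log\log n)$ bits, the label $\langle x,s\rangle$ depends only on the first $k$ bits of $x$, so one simply projects every sample onto those coordinates and obtains a fresh noisy-parity instance in dimension $k$. You also conflate $a$ with $2^a$: having set $a = \Theta(\sqrt{\log n})$ you later assert $2^a = \Theta(\sqrt{\log n})$, but the bias after $a$ rounds of pairwise XOR is $(1-2\eta)^{2^a}$, and with your choice of $a$ this is $(1-2\eta)^{2^{\Theta(\sqrt{\log n})}}$, vastly worse than the claimed bound.

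The correct balancing works in dimension $k$: take $a = \tfrac{1}{2}\log_2\log_2 n + O(1)$ blocks of size $b = k/a = \Theta(\log n)$, so that $2^b = \mathrm{poly}(n)$ (polynomial bucketing overhead per round) while $2^a = \Theta(\sqrt{\log n})$ (so the surviving bias is $(1-2\eta)^{\Theta(\sqrt{\log n})}$ and the final estimation costs $\mathrm{poly}(n)\cdot(1-2\eta)^{-\Theta(\sqrt{\log n})}$ samples and time). Finally, your fallback of enumerating all $2^k = n^{\Theta(\log\log n)}$ candidate strings is itself superpolynomial and would defeat the purpose; the recovery step must instead exploit the block structure, e.g.\ majority vote on samples reduced to standard basis vectors, or enumeration over a single block of size $b$ at a time, costing only $2^b = \mathrm{poly}(n)$ per block.
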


\noindent Now consider Simon's problem on $n$ qubits, with the promise that the unknown period $s\in\brc{0,1}^n$ is supported on the first $k \triangleq \mathcal{O}(\log n \cdot \log \log n)$ bits and has Hamming weight at most $\log n$. Note that because there are $\binom{\log n \cdot \log\log n}{\log n}$ such periods, the complexity of this special case of Simon's problem for classical algorithms is super-polynomial in $n$.

We now sketch an argument showing that under a plausible strengthening of Theorem~\ref{thm:bkw}, there is a $\NISQ$ algorithm that solves this special case of Simon's efficiently. Note that if one runs Simon's algorithm restricted to these qubits using a $\lambda$-noisy quantum circuit with $\lambda \in (0,1)$ an absolute constant, then with probability $(1 - \lambda)^{\mathcal{O}(\log n)} \ge 1/\poly(n)$ none of the qubits $i\in[n]$ for which $s_i \neq 0$ will be decohered over the course of Simon's algorithm, in which case the classical string $x\in\brc{0,1}^n$ satisfies $\iprod{x,s}\equiv 0 \ \mathrm{mod} \ 2$. Otherwise, if one of the first $k$ qubits gets decohered at any point, the classical string $x$ satisfies $\iprod{x,s}\equiv 0 \ \mathrm{mod} \ 2$ with probability $1/2$. In particular, if one runs this noisy quantum circuit $N$ times, the result is an instance of noisy parity with noise rate $\eta$ satisfying $1 - 2\eta \ge 1 / \poly(n)$. In particular, if one could improve the dependence on $\eta$ in Theorem~\ref{thm:bkw} from $(1 - 2\eta)^{-\sqrt{\log n}}$ to $\poly(1/(1 - 2\eta))$, then we would find that $\NISQ$ algorithms can efficiently solve this special case of Simon's problem. Formally, we have concluded the following:

\begin{theorem}
    Suppose there is an algorithm for noisy parity with noise rate $\eta$, where the unknown string $s$ is promised to be a subset of the first $\mathcal{O}(\log n \cdot \log \log n)$ bits, with runtime $\poly(n, (1 - 2\eta)^{-1})$. Then for $O$ the oracle for Simon's problem under the promise that the unknown period is supported on the first $\mathcal{O}(\log n \cdot \log\log n)$ bits and has Hamming weight at most $\log n$, $\BPP^O \subsetneq \NISQ^O$.
\end{theorem}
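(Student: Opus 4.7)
The argument splits naturally into a classical lower bound and a $\NISQ$ upper bound.

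\textbf{Lower bound.} The restricted promise allows $\binom{k}{\le \log n} = n^{\omega(1)}$ distinct candidate periods $s$, where $k = \Theta(\log n \log\log n)$. I would adapt the standard classical lower bound for Simon's problem (cf.\ the proof of Theorem~\ref{thm:generic_slowdown}): any classical algorithm making $T$ queries observes a collision among the sampled images with probability at most $O(T^2/2^n)$; conditioned on no collision, the algorithm's transcript is essentially independent of the hidden period drawn uniformly from the $n^{\omega(1)}$-sized family, so it cannot distinguish the 1-to-1 and 2-to-1 cases with constant advantage unless $T = n^{\omega(1)}$, which is superpolynomial in $n$.

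\textbf{Upper bound.} I would design a $\NISQ_\lambda$ algorithm that repeatedly runs a $\lambda$-noisy implementation of Simon's circuit restricted to the first $k$ qubits (plus an ancilla register for the oracle output), collects samples, and feeds them into the hypothesized noisy-parity learner. The circuit has $O(1)$ depth and operates on $\mathrm{polylog}(n)$ qubits. Let $\mathcal{E}$ denote the event that none of the at most $\log n$ qubits in $\mathrm{supp}(s) \subseteq [k]$ is depolarized at any of the $O(1)$ noise layers; then $\Pr[\mathcal{E}] \ge (1-\lambda)^{O(\log n)} \ge 1/\mathrm{poly}(n) =: p$. Conditioned on $\mathcal{E}$, the measurement outcome $x$ satisfies $\langle x, s\rangle \equiv 0 \pmod 2$ just as in the noiseless analysis; conditioned on $\overline{\mathcal{E}}$, tracking the action of a single-qubit depolarizing channel inside $\mathrm{supp}(s)$ through the remaining Hadamards renders $\langle x, s\rangle$ an unbiased coin flip. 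Labeling each run with $y_i = 0$ therefore yields i.i.d.\ pairs $(x_i, y_i)$ with $\Pr[y_i \ne \langle x_i, s\rangle] = (1-p)/2 =: \eta$, so $1 - 2\eta \ge 1/\mathrm{poly}(n)$. Invoking the hypothesized learner in time $\mathrm{poly}(n, (1-2\eta)^{-1}) = \mathrm{poly}(n)$ recovers $s$; two further classical queries at $0^n$ and at $s$ then decide 1-to-1 versus 2-to-1 by checking whether $f(0^n) = f(s)$.

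\textbf{Main obstacle.} The central subtlety is that the marginal distribution of $x_i$ is not uniform over $\{0,1\}^k$: conditioned on $\mathcal{E}$ it is uniform over the codimension-one subspace $\{x : \langle x, s\rangle = 0\}$, and conditioned on $\overline{\mathcal{E}}$ it is a state-dependent mixture. The hypothesis of Theorem~\ref{thm:bkw} (and its strengthening invoked in the theorem statement) presumes i.i.d.\ uniform $x_i$, so a clean reduction requires either (i) appealing to a version of noisy parity that tolerates arbitrary but sufficiently mixing sample marginals, or (ii) an explicit randomization step that re-expresses each $(x_i, y_i)$ as a fresh sample with approximately uniform first coordinate, without knowing $s$. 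Option (ii) can be approached by linearly combining pairs of returned strings $x_i \oplus x_j$, which remain orthogonal to $s$ conditioned on $\mathcal{E}\cap\mathcal{E}'$, at the cost of inflating the effective noise rate by a constant factor. Verifying that any such adaptation preserves the $\mathrm{poly}((1-2\eta)^{-1})$ runtime dependence in the hypothesis is the crux of the formalization.
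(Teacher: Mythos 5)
Your proposal matches the paper's argument essentially step for step: the classical lower bound comes from the superpolynomial number $\binom{\Theta(\log n\log\log n)}{\le \log n}$ of candidate periods (your collision bound should have this count, not $2^n$, in the denominator, but the conclusion is unchanged), and the $\NISQ$ upper bound runs noisy Simon's on the first $k$ bits, conditions on the event that no qubit in the support of $s$ is depolarized (probability $(1-\lambda)^{\mathcal{O}(\log n)} \ge 1/\mathrm{poly}(n)$), and feeds the resulting biased-parity samples to the hypothesized learner. The ``main obstacle'' you flag \--- that the marginal of the $x_i$ is not uniform, so the samples are not literally an instance of noisy parity as defined \--- is a genuine subtlety, but the paper's own proof is explicitly only a sketch and passes over it in silence, so your write-up is, if anything, more careful than the original on this point.
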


\end{document}